\newcommand{\dashunder}[2][\mathop]{#1{\mathpalette\df@under{{\dashfill}{#2}}}}
\newcommand{\df@under}[2]{\df@@under#1#2}
\newcommand\df@@under[3]{%
   \vbox{
    \offinterlineskip
    \ialign{
     ##\cr
     #2{#1}\cr
      \noalign{\kern-9pt}
      $\m@th#1#3$\cr
    }
  }%
}
\newcommand{\dashover}[2][\mathop]{#1{\mathpalette\df@over{{\dashfill}{#2}}}}
\newcommand{\df@over}[2]{\df@@over#1#2}
\newcommand\df@@over[3]{%
  \vbox{
    \offinterlineskip
    \ialign{
     ##\cr
     #2{#1}\cr
      \noalign{\kern2pt}
      $\m@th#1#3$\cr
    }
  }%
}
\newcommand{\dashfill}[1]{%
  \kern-.5pt
  \xleaders\hbox{\kern.5pt \vrule height.4pt width \dash@width{#1}\kern .5pt}\hfill
  \kern-.5pt
}
\newcommand{\dash@width}[1]{%
  \ifx#1\displaystyle
    2pt
  \else
    \ifx#1\textstyle
      1.5pt
    \else
      \ifx#1\scriptstyle
        1.25pt
      \else
        \ifx#1\scriptscriptstyle
          1pt
        \fi
      \fi
    \fi
  \fi
}
\tikzset{
	on each segment/.style={
		decorate,
		decoration={
			show path construction,
			moveto code={},
			lineto code={
				\path [#1]
				(\tikzinputsegmentfirst) -- (\tikzinputsegmentlast);
			},
			curveto code={
				\path [#1] (\tikzinputsegmentfirst)
				.. controls
				(\tikzinputsegmentsupporta) and (\tikzinputsegmentsupportb)
				..
				(\tikzinputsegmentlast);
			},
			closepath code={
				\path [#1]
				(\tikzinputsegmentfirst) -- (\tikzinputsegmentlast);
			},
		},
	},
	end arrow/.style={postaction={decorate,decoration={
				markings,
				mark=at position 1.0 with {\arrow[#1]{stealth}}
	}}},
}
\tikzset{
	>=stealth',
	pil/.style={
		->,
		shorten <=0pt,
		shorten >=0pt,},
}
\def\a{2.5}
\def\s{0.9}
\def\l{0.1}
\definecolor{dgreen}{rgb}{0.0, 0.5, 0.0}
\newcommand{\td}{\tilde}
\newcommand{\f}{\mathsf{F}}
\newcommand{\g}{\mathsf{G}}
\newcommand{\A}{\mathsf{a}}
\newcommand{\h}{\mathcal{H}}
\newcommand{\e}{\mathcal{E}}
\newcommand{\ri}{\mathrm{i}}
\newcommand{\nod}{\mathrm{nod}}
\newcommand{\symm}{\mathrm{sym}}
\newcommand{\Aut}{\mathrm{Aut}}
\newcommand{\Auti}{\mathsf{Aut}}
\newcommand{\Dih}{\mathrm{Dih}}
\pgfplotsset{compat=1.13}
\newcolumntype{M}[1]{>{\centering\arraybackslash}m{#1}}
\numberwithin{equation}{section}
\title{Folding transformations for $q$-Painlev\'e equations}
\author[1,2,3]{M. Bershtein \thanks{mbersht@gmail.com}}
\author[2,3]{A. Shchechkin \thanks{A.Shchechkin@skoltech.ru}}
\affil[1]{Landau Institute for Theoretical Physics, Chernogolovka, Russia}
\affil[2]{Skolkovo Institute of Science and Technology, Moscow, Russia}
\affil[3]{National Research University Higher School of Economics, Moscow, Russia}
\date{}
\newtheorem{thm}{Theorem}
\newtheorem{prop}{Proposition}[section]
\newtheorem{lemma}{Lemma}[section]
\newtheorem{corol}[lemma]{Corollary}
\theoremstyle{definition}
\newtheorem{Remark}{Remark}[section]
\newtheorem{defin}{Definition}[section]
\newtheorem{Example}{Example}[section]
\begin{document}
	
	\maketitle

\begin{abstract}
	Folding transformation of the Painlev\'e  equations is an algebraic (of degree greater than 1) transformation between solutions of 
	different equations. In 2005 Tsuda, Okamoto and Sakai classified folding transformations of differential Painlev\'e equations.
	These transformations are in correspondence with automorphisms of affine Dynkin diagrams. 

	We give a complete classification of folding transformations of the $q$-difference Painlev\'e equations,
	these transformations are in correspondence with certain subdiagrams of the affine Dynkin diagrams (possibly with automorphism).
	The method is based on Sakai's approach to Painlev\'e equations through rational surfaces.
\end{abstract}
	
	\tableofcontents

       \newpage

\section{Introduction}

The paper is devoted to the classification of folding transformations between solutions of 
\(q\)-difference Painlev\'e equations. 
We first explain the problem and illustrate it by a simple instructive example. 
 
\paragraph{$q$-Painlev\'e equations.}
The \(q\)-difference Painlev\'e equations are systems of two difference equations on two functions which we denote by \(F, G\).
The celebrated Sakai's classification \cite{Sakai01} assigns to each \(q\)-difference Painlev\'e equation two affine root systems.
These systems are called the symmetry and the surface root systems, and are denoted by \(\Phi^a_{\text{sym}}\) and \(\Phi^a_{\text{surf}}\)
respectively. The list of all pairs, together with arrows of degenerations is given in the Fig. \ref{Fig:qPainleve}.
\begin{figure}[!h]
 	\includegraphics[scale=0.78]{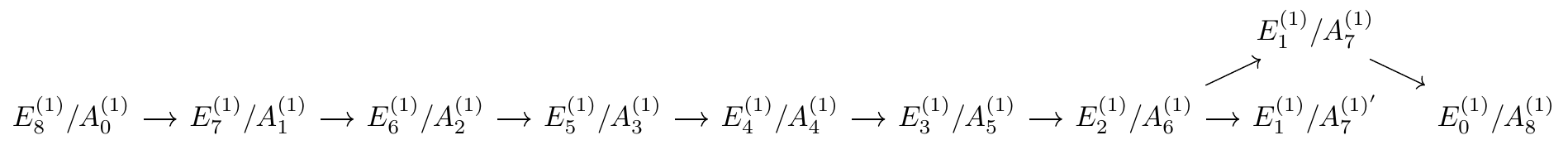}
	\caption{Symmetry/surface classification scheme for \(q\)-difference Painlev\'e equations }
	\label{Fig:qPainleve}
\end{figure}

\noindent Here we use the notation $E_5^{(1)}=D_5^{(1)},
E_4^{(1)}=A_4^{(1)},E_3^{(1)}=(A_2+A_1)^{(1)},E_2^{(1)}=(A_1+A_1)^{(1)},E_1^{(1)}=A_1^{(1)}$. 

The symmetry group of a Painlev\'e equation is an extended affine Weyl group \(W^{ae}\) of the root system \(\Phi^a_{\text{sym}}\),
translations in $W^{ae}$ generate $q$-difference dynamics. 
The \(F,G\) are then coordinates on (some chart of) the surface \(\mathcal{X}_{\vec{a}}\) called the space of initial conditions  \cite{Sakai01}.
This surface depends on set of parameters \(a_0,\dots,a_r\) which are called multiplicative root variables; here \(r+1\) is the rank of
\(\Phi^a_{\text{sym}}\). The Weyl group acts on $a_j$ as 
\begin{equation} \label{eq:intro:a_act}
	s_i (a_j)=a_j a_i^{-C_{ij}}, \quad 	\pi (a_i)=a_{\pi(i)},
\end{equation}
where \(s_0,\dots, s_r\) are simple reflections, \(C\) is a Cartan matrix of \(\Phi^a_{\text{sym}}\) and \(\pi\) is an external automorphism.

\begin{Example}\label{ex:Intr1} The dynamics of the \(q\)-Painlev\'e equation is usually denoted by an underline and an overline,
$\ldots \mapsto \underline{X}\mapsto X \mapsto \overline{X}\mapsto \ldots$. The standard $q$-Painlev\'e equation of the
symmetry/surface type $D_5^{(1)}/A_3^{(1)}$ together with symmetry type Dynkin diagram and transformation of root variables
are given below
\begin{tabular}{m{3.5cm}m{7cm}m{4.5cm}}
\begin{center}
	\begin{tikzpicture}[elt/.style={circle,draw=black!100,thick, inner sep=0pt,minimum size=2mm},scale=1]
		\path 	(-cos{60},sin{60}) 	node 	(a1) [elt] {}
		(-cos{60},-sin{60}) 	node 	(a0) [elt] {}
		( 0,0) node  	(a2) [elt] {}
		( 1,0) 	node  	(a3) [elt] {}
		( 1+cos{60},-sin{60}) 	node 	(a4) [elt] {}
		( 1+cos{60},sin{60})	node 	(a5) [elt] {};
		\draw [black,line width=1pt] (a1) -- (a2) -- (a3) -- (a4) (a3) -- (a5) (a2) -- (a0);
		\node at ($(a0.west) + (-0.2,0)$) 	{$\underline{\alpha_{0}}$};			    
		\node at ($(a1.west) + (-0.2,0)$) 	{$\alpha_{1}$};
		\node at ($(a2.south east) + (0.1,-0.2)$)  {$\alpha_{2}$};
		\node at ($(a3.south west) + (-0.1,-0.2)$)  {$\alpha_{3}$};
		\node at ($(a4.east) + (0.2,0)$)  {$\alpha_{4}$};	
		\node at ($(a5.south east) + (0.2,0)$)  {$\alpha_{5}$};		
		\draw[->] (a0) edge[bend right=60] node[]{} (a5);
			\draw[->] (a1) edge[bend right=60] node[]{} (a4);
			\draw[->] (a4) edge[bend left=60] node[fill=white]{$\pi$} (a0);
			\draw[->] (a5) edge[bend right=60] node[fill=white]{$\pi$} (a1);
			\draw[<->] (a2) edge[bend left=30] node[]{} (a3);
		
		\end{tikzpicture}
		\end{center}
	&
\begin{equation}\label{eq:qPVIF}
\begin{aligned}
	&F\underline{F}=a_1^{-1}\frac{(G-a_3^{-1})(G-a_5^{-1}a_3^{-1})}{(G-1)(G-a_4)}, \\	
	&G\overline{G}=a_4 \frac{(F-a_2 )(F-a_0a_2)}{(F-1)(F-a_1^{-1})},
\end{aligned}
\end{equation}
&
\begin{equation}\label{eq:intro:bar a}
\begin{aligned}
	&\overline{a_{0,1,4,5}}=a_{0,1,4,5}, \\ &\overline{a_2}=q a_2, \, \overline{a_3}=q^{-1}a_3, \\ &q=a_0a_1a_2^2a_3^2a_4a_5.
\end{aligned}
\end{equation}
\\
\end{tabular}

This equation is a \(q\)-analog of the Painlev\'e VI equation. Here \(a_2\) can be considered as a time, $a_3$ could be replaced by $q$,
while \(a_0,a_1,a_4,a_5\) are the four parameters of the equation.

The symmetry group is $W^{ae}_{D_5}$, it is generated by the simple reflections \(s_0,\dots,s_5\) and an order $4$ external automorphism
\(\pi\) which permutes the \(s_i\) (arrows on above Dynkin diagram). The variables \(a_i\) are multiplicative root variables and
the action of $W^{ae}_{D_5}$ on them is given by
\eqref{eq:intro:a_act}. The action of this group on the coordinates \(F,G\) is given by
\begin{center}
	\begin{tabular}{|c|c|c | c | c | c | c | c|}
		\hline
		coord. & $s_0$ & $s_1$ &
		$s_2$ & $s_3$ &
		$s_4$ & $s_5$ & $\pi$ \\
		\hline
		$F$ &$F$ & $a_1 F$ & $a_2^{-1}F$ & $F \frac{G-1}{G-a_3^{-1}}$ &
		$F$ & $F$ & $1/G$\\
		\hline
		$G$& $G$ & $G$ & $G \frac{F-1}{F-a_2}$ & $a_3G$ & $a_4^{-1}G$ & $G$ & $F/a_2$\\
		\hline
	\end{tabular}
\end{center}
\end{Example}

\paragraph{Foldings of Painlev\'e equations.}
A folding transformation \cite{TOS05} is an algebraic map between solutions of Painlev\'e equations which goes through
the quotient of the space of initial conditions. In other words, there is a certain \(\vec{a}\), subgroup \(H \in \mathrm{Aut}(\mathcal{X}_{\vec{a}})\) and birational map between the quotient \(\mathcal{X}_{\vec{a}}/H\) and space of initial conditions of another Painlev\'e equation \(\mathcal{Y}_{\vec{\A}}\). 
Such \(H\) should then be a subgroup of \(W^{ae}\) commuting with certain Painlev\'e dynamics, we call it folding subgroup.
       
\begin{Example}[Continuation of Example \ref{ex:Intr1}] \label{ex:Intr2}

	Consider an element \(w=s_0s_1s_4s_5\) of order \(2\). It acts as 
	\begin{equation}
		w\colon a_{0,1,4,5}\mapsto a_{0,1,4,5}^{-1},  \, a_2\mapsto a_2 a_0 a_1, \, a_3\mapsto a_3 a_4 a_5, \qquad
		F\mapsto a_1 F, \;\; G\mapsto a_4^{-1}G.
	\end{equation}
	It is easy to see that \(w\) commutes with the $q$-Painlev\'e dynamics \eqref{eq:qPVIF}.
	
	If $a_0=a_1=a_4=a_5=-1$, then \(w\) preserves the multiplicative root variables. 
	Equations \eqref{eq:qPVIF} become       
	\begin{equation}\label{qPVIex}
		F\underline{F}=-\frac{G^2-q^{-1}a_2^2}{G^2-1}, \qquad
		G\overline{G}=-\frac{F^2-a_2^2}{F^2-1},
	\end{equation}
	where we used that $q=a_2^2a_3^2$ in this case. Two equations in \eqref{qPVIex} now look almost the same and this leads
	to the possibility of creating a half-step (or ``extracting a root'') of this dynamics. We denote this square root
	by dashed over-(under-) line and define it by
	\begin{equation}\label{qPVIprr a}
		\dashover{a_2}=q^{1/2}a_2=a_2^2a_3, \qquad \dashover{G}=F, \dashover{F}=\overline{G}.
	\end{equation}
	Then equations \eqref{qPVIex} can be rewritten as
	\begin{align}\label{qPVIprr}
		\dashover{F}\dashunder{F}=-\frac{F^2-a_2^2}{F^2-1}, \qquad \dashover{G}=F.
	\end{align}

	Let us introduce coordinates $\f=F^2$, $\g=FG$, which are invariant under $w$ and give a map of degree $2$
	from \(\mathcal{X}_{\vec{a}}\).
	Then from \eqref{qPVIprr} we have
	\begin{align}\label{qPIIID8}
		\dashover{\g}\g=-\frac{\f(\f-a_2^2)}{\f-1}, \qquad \f\dashunder{\f}=\g^2,
	\end{align}
	which is a $q$-Painlev\'e equation with symmetry/surface type $A_1^{(1)}/A_7^{(1)'}$ (or parameterless $q$-Painlev\'e III)
	with $\A_1=a_2^2, \A_0=a_3^2$.
\end{Example}       


\paragraph{Motivation.}

Folding transformations lead to algebraic relations between solutions of Painlev\'e equations. For example, one has a relation
between special solutions. The points fixed under the action of \(H\) on the surface \(\mathcal{X}_{\vec{a}}\) correspond to certain
algebraic solutions of the Painlev\'e equations. On the other hand, in the quotient space \(\mathcal{X}_{\vec{a}}/H\) these fixed points
become singularities which, after the resolution, give exceptional divisors. Such divisors correspond to Ricatti solutions, so the folding
transformation relates the equation having an algebraic solution to that having Riccati solution. This was the original observation in \cite{TOS05}.
	
On the other hand, there was huge progress in the last 10 years in the Painlev\'e theory in relation to the formulas for generic solutions.
Namely, it was proposed in \cite{GIL12}, \cite{GIL13} that Painlev\'e equations can be solved in terms of certain special functions that
appeared in mathematical physics. These functions themselves have two definitions: conformal blocks in 2d conformal field theory 
\cite{BPZ:1984} and partition functions in 4d \(\mathcal{N}=2\) supersymmetric gauge theory \cite{Nekrasov:2003}. The equivalence
between these two definitions is famous AGT correspondence \cite{AGT:2010}. Later the formulas for generic solutions were extended
to the case of \(q\)-difference Painlev\'e equations for \(E_r^{(1)}\), \(r \leq 4\) symmetry \cite{BS16}, \cite{JNS17}, \cite{BGT:2019}, \cite{MN18}.

Therefore, folding transformations lead to nontrivial algebraic relations between partition functions. And the folding transformation
discussed in Example \ref{ex:Intr2} indeed has such meaning. The corresponding relation is (roughly speaking) a combination of
the identity found in the conformal field theory \cite{Poghossian:2009},\cite{FLNO:2009},\cite{HJS:2010} and bilinear relations from the
gauge theory \cite{NY1:2005},\cite{NY2:2005}. This was (again, roughly speaking) observed in \cite{BS18}, \cite{Shch2020}, and
our original motivation was to find more folding transformations in order to find new relations for partition functions. For the \(E_6^{(1)}, E_7^{(1)}, E_8^{(1)}\) we do not have formulas for generic solutions, so the folding transformation gives particular family of solutions.

\medskip

The classification of folding transformations is a problem also interesting in itself. From the first glance, the task seems
to be wild, there are many vectors \(\vec{a}\) for which the stabilizer \(W^{ae}_{\vec{a}}\) is nontrivial, this stabilizer could
act on the surface of initial conditions \(\mathcal{X}_{\vec{a}}\). But there is a huge restriction --- namely, for given \(\vec{a}\) the subgroup of stabilizer 
\((W^{ae}_{\vec{a}})_{\vec{a}}\) generated by reflections acts trivially on the surface \(\mathcal{X}_{\vec{a}}\)
(with some exceptions for small \(\Phi^a_{\text{sym}}\), namely for the cases \(E_1^{(1)},E_2^{(1)},E_3^{(1)}\)). 
It was shown in \cite{TOS05}, using this observation, that all folding transformations of \emph{differential} Painlev\'e equations come
from the subgroups of external automorphisms \(\widehat{\Omega}\subset W^{ae}\). More geometrically, \(\widehat{\Omega}\) is a subgroup
of \(W^{ae}\) preserving fundamental alcove. It is well known that  \(\widehat{\Omega}\simeq P/Q\), where \(P\) and \(Q\)
are weight and root lattices for the finite root system \(\Phi_{\text{sym}}\).
	
Recall that differential Painlev\'e equations correspond to the additive cases in Sakai's classification \cite{Sakai01}, contrary to
\(q\)-difference equations, which correspond to the multiplicative cases. 
In a multiplicative case situation becomes richer, it appears that nontrivial folding transformations could come
from the product of reflections from $W^a$, and even from the finite Weyl group \(W\). This is similar to the known fact in the theory of Lie groups
that the centralizer of a semisimple element of the adjoint group can be non-connected, contrary to simply connected case,
or Lie algebra case (see e.g. \cite{Hum95}). And the group of connected components can be identified with a subgroup of the group \(\Omega \), where \(\Omega\) is the subgroup of \(W\) which maps fundamental alcove to a parallel one. It is easy to see that the groups
\(\Omega\) and \(\widehat{\Omega}\) are isomorphic.

Remarkably, it appears that any folding subgroup \(H\) can be embedded as
\begin{equation}\label{eq:intro:H subset}
	H \subset \widehat{\Omega} \ltimes \prod_{j} \Omega_{\Phi_j},
\end{equation} 
\vspace{-3mm}

where a reducible root system \(\sqcup_j \Phi_j\) corresponds to a subdiagram \(I \subset \Delta^{a}\) in the affine Dynkin diagram of \(\Phi^a\).
Only special subdiagrams and special subgroups of \eqref{eq:intro:H subset} give (essentially different) foldings.
We proceed now to the more precise statement.

\paragraph{Main result.}
	
It follows from formula \eqref{eq:intro:H subset} that any folding subgroup \(H\) is solvable. So it is convenient to classify
first the cyclic folding subgroups and then to construct general folding subgroups from the cyclic ones.

Clearly it is sufficient to classify folding subgroups in \(H\subset W^{ae}\) up to conjugacy. Moreover, since the map from \(W^{ae}
\rightarrow \mathrm{Aut}\mathcal{X}_{\vec{a}}\) can have a kernel, it is sufficient to classify the images of \(H\)
in \(\mathrm{Aut}\mathcal{X}_{\vec{a}}\).

\begin{thm}

	a) There are 24 nonequivalent cyclic folding subgroups. They are given in Table \ref{Tab:intro}.
	
	b) There are also 9 nonequivalent non-cyclic folding subgroups. They are given in Table \ref{Tab:intro:2}.
\end{thm}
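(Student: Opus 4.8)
The plan is to reduce the classification to a finite computation governed by the structure theorem \eqref{eq:intro:H subset}, and then to carry out that computation case by case over Sakai's list (Fig.~\ref{Fig:qPainleve}). First I would make precise the notion of \emph{folding subgroup}: a subgroup $H\subset W^{ae}$ together with a choice of $q$-Painlev\'e dynamics (a translation $T\in W^{ae}$) such that $H$ commutes with $T$, such that $H$ stabilizes a suitable $\vec a$ (so that it descends to an honest subgroup of $\Aut\mathcal{X}_{\vec a}$), and such that the quotient $\mathcal{X}_{\vec a}/H$, after minimal resolution, is again a Sakai surface (i.e.\ a space of initial conditions of some $q$-Painlev\'e equation $\mathcal{Y}_{\vec\A}$). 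One then declares two foldings equivalent if they are conjugate in $W^{ae}$ and induce the same image in $\Aut\mathcal{X}_{\vec a}$, as the excerpt already stipulates. The key structural input, assumed from the discussion preceding the theorem, is the embedding $H\subset\widehat\Omega\ltimes\prod_j\Omega_{\Phi_j}$ attached to a subdiagram $I\subset\Delta^a$; in particular $H$ is solvable, so it suffices to classify cyclic $H$ first and then assemble general $H$ from cyclic pieces.

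For part (a), the cyclic case, I would proceed as follows. Fix $\Phi^a_{\text{sym}}$ from the list. Enumerate subdiagrams $I\subset\Delta^a$ up to the action of $\widehat\Omega$ (external automorphisms), and for each, enumerate cyclic subgroups of $\widehat\Omega\ltimes\prod_j\Omega_{\Phi_j}$ up to conjugacy — each $\Omega_{\Phi_j}$ being the finite group $P_j/Q_j$, which is cyclic or a small product, so this is a finite bookkeeping problem. For each candidate cyclic $H=\langle w\rangle$ I would then: (i) solve the fixed-point equations $w(a_i)=a_i$ using the explicit action \eqref{eq:intro:a_act} to find the admissible locus of $\vec a$, checking it is nonempty and generic enough; (ii) compute the action of $w$ on the surface $\mathcal{X}_{\vec a}$ — concretely on the coordinates $F,G$ by composing the tabulated generator actions (as in the $D_5^{(1)}$ table in Example~\ref{ex:Intr1}) — and determine the induced order (quotient by the kernel of $W^{ae}\to\Aut\mathcal{X}_{\vec a}$); (iii) resolve $\mathcal{X}_{\vec a}/\langle w\rangle$ and read off its symmetry/surface root data, identifying it in Sakai's list; (iv) verify $w$ commutes with some translation $T$ so that a genuine ``root of the dynamics'' exists, mimicking the passage \eqref{qPVIex}–\eqref{qPIIID8}. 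Cases that fail (e.g.\ the quotient is not a Sakai surface, or $w$ acts trivially, or the admissible locus is empty, or — as flagged in the excerpt — the exceptional small-rank cases $E_1^{(1)},E_2^{(1)},E_3^{(1)}$ where reflections can act nontrivially and must be handled separately) get discarded. Tallying the survivors up to equivalence should yield the number $24$ and populate Table~\ref{Tab:intro}.

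For part (b), the non-cyclic case, I would use solvability: any folding subgroup has a composition series with cyclic quotients, and by \eqref{eq:intro:H subset} it sits in $\widehat\Omega\ltimes\prod_j\Omega_{\Phi_j}$, a group whose order is bounded and whose subgroup lattice is finite and explicitly describable for each entry in Sakai's list. So I would enumerate non-cyclic subgroups $H$ of these (finitely many, small) groups up to conjugacy, retain those generated by elements already appearing as cyclic foldings (a necessary condition, since every cyclic subgroup of a folding subgroup that acts nontrivially is itself a folding subgroup), and then for each survivor repeat steps (i)–(iv) above: nonempty fixed locus $\{\vec a: H\cdot\vec a=\vec a\}$, faithful-enough action on $\mathcal{X}_{\vec a}$, $\mathcal{X}_{\vec a}/H$ a Sakai surface, compatibility with a translation. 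This produces Table~\ref{Tab:intro:2} with $9$ entries.

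The main obstacle I anticipate is step (iii) in both parts: identifying the quotient surface $\mathcal{X}_{\vec a}/H$ (after resolving the singularities created by $H$-fixed points) as a specific Sakai surface. This requires understanding exactly which $(-1)$- and $(-2)$-curve configuration arises — equivalently, how the surface and symmetry root lattices transform under the rational quotient map — and doing so uniformly across a long list rather than ad hoc. The cleanest route is probably to track the anti-canonical curve and the Picard lattice with its intersection form through the quotient-and-resolve operation, using that both $\mathcal{X}_{\vec a}$ and $\mathcal{Y}_{\vec\A}$ are generalized Halphen surfaces with $K_{\mathcal{X}}^{\perp}$ controlling the symmetry system; the root-system arithmetic (index of sublattices, which $\Omega_{\Phi_j}$ acts) should then pin down the target type. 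A secondary but real difficulty is ensuring the \emph{completeness} of the classification, i.e.\ that no folding subgroup escapes the embedding \eqref{eq:intro:H subset} and that ``acts trivially via reflections'' genuinely holds outside the three exceptional $E_r^{(1)}$; this is where the analogy with centralizers of semisimple elements in (non-simply-connected) algebraic groups, cited in the excerpt, has to be turned into a rigorous lemma rather than left as motivation.
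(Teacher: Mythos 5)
Your outline follows the same overall route as the paper (reduce to subgroups of $\widehat\Omega\ltimes\prod_j\Omega_{\Phi_j}$, enumerate cyclic candidates via fixed-point equations, then verify each one geometrically by resolving the quotient and identifying the Sakai type through the Picard lattice and anticanonical curve). But there is a genuine gap, and you have located it yourself without closing it: the embedding \eqref{eq:intro:H subset} is the whole content of the completeness claim, and you take it as ``assumed from the discussion preceding the theorem.'' Without it the enumeration is not a classification. The paper proves it as Theorem \ref{thm:stab}, and the argument is not just the Lie-theoretic analogy: one uses the hypothesis $|q|\neq 1$ to build a $W^{ae}$-equivariant map $\mathcal{A}\to V_\Phi$, $x_i=(\mathrm{Re}\log a_i)/(\mathrm{Re}\log q)$, conjugates the image into the fundamental alcove so that Lemma \ref{lemma:stabadd} applies and singles out the subdiagram $I$ (those $i$ with $|a_i|=1$), and then runs a \emph{second} alcove argument on the phases $a_i=\exp(2\pi\ri y_i)$ inside the finite root systems $\Phi_j$, invoking Corollary \ref{corol:Humphereys} to conclude that the stabilizer is $\Omega_{[\vec a]}\ltimes\prod W^o_j$ with the reflection part $W^o_j$ killed by Lemma \ref{lemma:reflection}. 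This two-step reduction (real parts, then phases) is the multiplicative phenomenon that makes the $q$-case richer than the additive one, and it is exactly the ``rigorous lemma'' you defer.

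A second, smaller omission: your enumeration runs over all subdiagrams $I$ and all of $\prod_j\Omega_{\Phi_j}$, but for components $\Phi_j$ of type $D_n$, $E_6$, $E_7$ the group $\Omega_{\Phi_j}$ consists of elements that are \emph{not} products of commuting $\Omega_{A_m}$'s on the nose; the paper's Lemma \ref{lemma:detoa} conjugates them (by explicit elements $g_j\in W_{\Phi_j}$, shown to commute with $\widehat\Omega_I$) into type-$A$ configurations supported on a smaller subdiagram $I'\subset I$. Without this step your list would contain redundant candidates attached to non-$A$ subdiagrams, and the deduplication ``up to conjugacy'' that yields exactly $24$ and $9$ would require ad hoc identifications. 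Once these two reductions are in place, your steps (i)--(iv) coincide with the paper's selection rules plus the geometric verification of Sections \ref{sec:geom_gs}--\ref{sec:geom_answ}, and the counts follow from the tables.
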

        
\begin{table}

	\begin{tabular}{|m{1.5cm}|m{4.5cm}m{4.5cm}m{4.5cm}|}
	\hline
	
	\begin{center}
	\Large $E_8^{(1)}$
 	\end{center}		
 	
 	&
 	
 	\vspace{0.2cm}
 	
 	$3A_2: E_8^{(1)}\xrightarrow{{}\quad 3 \quad {}} E_6^{(1)}$
 	
 	\includegraphics[scale=0.9]{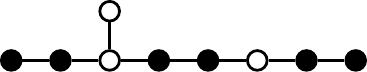}
 	
 	&
 	
 	\vspace{0.2cm}
 	
 	$4A_1: E_8^{(1)}\xrightarrow{{}\quad 2 \quad {}} E_7^{(1)}$
 	
 	\includegraphics[scale=0.9]{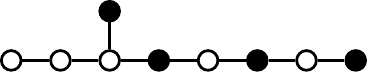}

 	&
 	
 	\vspace{0.2cm}
 	
 	$A_1+2A_3: E_8^{(1)}\xrightarrow{{}\quad 4 \quad {}} D_5^{(1)}$
 	
 	\includegraphics[scale=0.9]{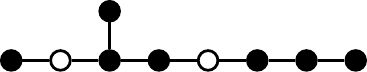}
 	
 	\\
 	\hline

         &
         
         \vspace{0.2cm}
        
        $\pi: E_7^{(1)}\xrightarrow{{}\quad 2 \quad {}} E_8^{(1)}$
      
        \includegraphics[scale=0.9]{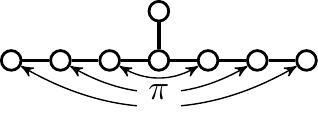}
        
         &
         
         $3A_1: E_7^{(1)}\xrightarrow{{}\quad 2 \quad {}} E_8^{(1)}$
         
        \includegraphics[scale=0.9]{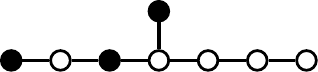}
         
         &
        
         $4A_1: E_7^{(1)}\xrightarrow{{}\quad 2 \quad {}} D_5^{(1)}$
        
         \includegraphics[scale=0.9]{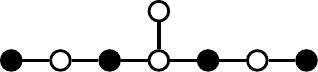}
        
        \\
         
         \vspace{-3cm}
        \begin{center}
	\Large $E_7^{(1)}$
 	\end{center}
         
         & 
         
         $3A_2: E_7^{(1)}\xrightarrow{{}\quad 3 \quad {}} E_3^{(1)}$
         
      \includegraphics[scale=0.9]{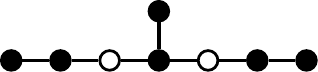}
         
         & 
         
         $2A_3: E_7^{(1)}\xrightarrow{{}\quad 4 \quad {}} E_7^{(1)}$
         
       \includegraphics[scale=0.9]{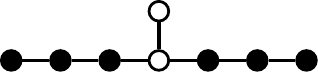}

         &
         
         \vspace{0.2cm}
         
         $\pi \ltimes 5A_1: E_7^{(1)}\xrightarrow{{}\quad 4 \quad {}} E_7^{(1)}$
         
        \includegraphics[scale=0.9]{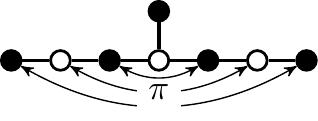}
        
         \\
         
         \hline
 	
 	 \begin{center}
	\Large $E_6^{(1)}$
 	\end{center}     
 	
 	&
 	
 	\begin{tikzpicture}
 	
 	\node at (-2.25,0.75) { $\pi: E_6^{(1)}\xrightarrow{{}\, 3 \, {}} E_8^{(1)}$};
 	
 	\node at (0,0) {\includegraphics[scale=0.9]{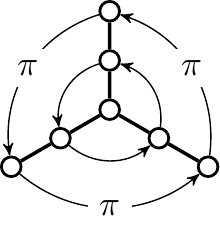}};
 	
 	\end{tikzpicture}
 	
 	&
 	
 	\begin{tikzpicture}
 	
 	\node at (-1.75,0.75) { $2A_2: E_6^{(1)}\xrightarrow{{}\, 3 \, {}} E_8^{(1)}$};
 	
 	\node at (0,0) {\includegraphics[scale=0.9]{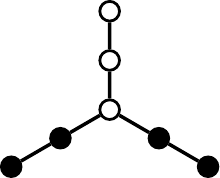}};
 	
 	\end{tikzpicture}
 	
 	&
 	
 	\begin{tikzpicture}
 	
 	\node at (-1.75,0.75) { $4A_1: E_6^{(1)}\xrightarrow{{}\, 2 \, {}} E_3^{(1)}$};
 	
 	\node at (0,0) {\includegraphics[scale=0.9]{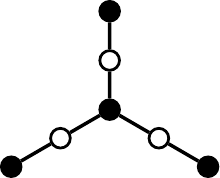}};
 	
 	\end{tikzpicture}
 	
 	\\
 	
 	\hline
 	
 	&
 	
 	 $\pi^2: D_5^{(1)}\xrightarrow{{}\, 2 \, {}} E_7^{(1)}$
 	
 	\includegraphics[scale=0.75]{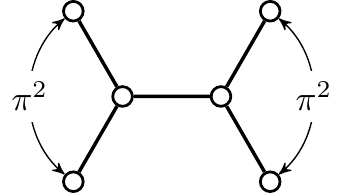}

 	&
 	
 	$2A_1: D_5^{(1)}\xrightarrow{{}\, 2 \, {}} E_7^{(1)}$
 	
 	\includegraphics[scale=0.75]{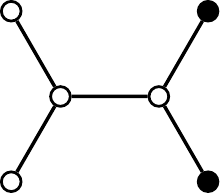}
 	
 	&
 	
 	\hspace{-1cm}
 	\begin{tikzpicture}
 	
 	\node at (-2.25,0) { $\pi: D_5^{(1)}\xrightarrow{{}\, 4 \, {}} E_8^{(1)}$};
 	
 	\node at (0,0) {\includegraphics[scale=0.75]{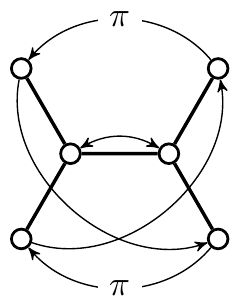}};
 	
 	\end{tikzpicture}
 	
 	\\
 	
 	\vspace{-4cm}
 	
 	\begin{center}
	\Large $D_5^{(1)}$
 	\end{center}	
 	
 	&
 	
 	 $A_1+A_3: D_5^{(1)}\xrightarrow{{}\, 4 \, {}} E_8^{(1)}$
 	
 	\includegraphics[scale=0.75]{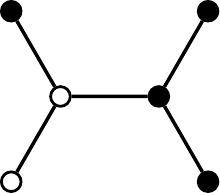}
 	
 	&
 	
 	$\pi^2\ltimes 3A_1: D_5^{(1)}\xrightarrow{{}\, 4 \, {}} E_8^{(1)}$
 	
 	\includegraphics[scale=0.75]{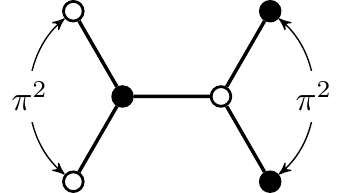}
 	
 	&
 	
 	 $4A_1: D_5^{(1)}\xrightarrow{{}\, 2 \, {}} A_1^{(1)}$
 	 
 	 \includegraphics[scale=0.75]{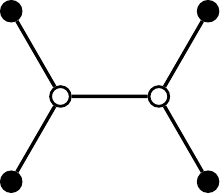}
 	
 	\\
 	
 	\hline
 	
 	  &
  	    
  	    &  
  	    
  	    \hspace{-4cm}
  	    \begin{tikzpicture}
 	
 	\node at (-2.5,0) { $\pi^3: E_3^{(1)}\xrightarrow{{}\, 2 \, {}} E_6^{(1)}$};
 	
 	\node at (0,0) {\includegraphics[scale=0.75]{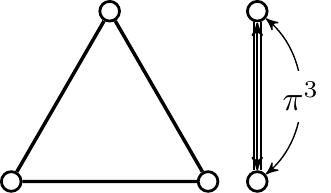}};
 	
 	\end{tikzpicture}
 	
 	&
 	
 	\hspace{-2cm}
 	\begin{tikzpicture}
 	
 	\node at (-2.5,0) { $A_1: E_3^{(1)}\xrightarrow{{}\, 2 \, {}} E_6^{(1)}$};
 	
 	\node at (0,0) {\includegraphics[scale=0.75]{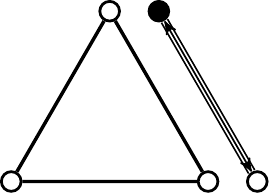}};
 	
 	\end{tikzpicture}
 	
 	\\
 	
 	\vspace{-3cm}
 	
 	\begin{center}
	\Large $E_3^{(1)}$
 	\end{center}
 	
 	&
 	
 	&
 
         \hspace{-4cm}
         \begin{tikzpicture}
 	
 	\node at (-2.5,0) { $\pi^2: E_3^{(1)}\xrightarrow{{}\, 3 \, {}} E_7^{(1)}$};
 	
 	\node at (0,0) {\includegraphics[scale=0.75]{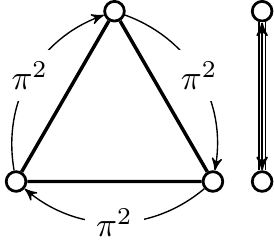}};
 	
 	\end{tikzpicture}
 	
 	& 
 	
 	\hspace{-2cm}
 	\begin{tikzpicture}
 	
 	\node at (-2.5,0) { $A_2: E_3^{(1)}\xrightarrow{{}\, 3 \, {}} E_7^{(1)}$};
 	
 	\node at (0,0) {\includegraphics[scale=0.75]{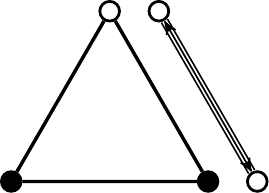}};
 	
 	\end{tikzpicture}
 	
 	\\
 	
 	\hline
 	
 	\begin{center}
	\Large $A_1^{(1)}$
 	\end{center}
 	
 	&
 	
 	&  
 	
 	\vspace{2mm}
 	
 	{$\pi^2: A_1^{(1)}\xrightarrow{{}\quad 2 \quad {}} D_5^{(1)}$
 	
 	\includegraphics[scale=1]{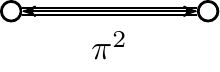}}

 	&
 	
 	\vspace{2mm}
 	
 	{$\sigma: A_1^{(1)}\xrightarrow{{}\quad 2 \quad {}} D_5^{(1)}$
 	
 	\includegraphics[scale=1]{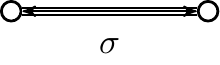}}

 	\\
 	
 	\hline
 	
	\end{tabular}
\caption{Cyclic folding subgroups \label{Tab:intro}}
\end{table}

\begin{table}

	\begin{tabular}{|m{1.5cm}|m{7cm}m{6.5cm}|}
	\hline
&   
\begin{tikzpicture}


\node at (-2.25,0) {\parbox{6cm}{$C_2 \ltimes C_4$\\[2mm]$E_7^{(1)}\longrightarrow E_8^{(1)}$}};

\node at (0,0)  {\includegraphics[scale=1]{e7_p11111.pdf}};
\end{tikzpicture}

&  
\begin{tikzpicture}


\node at (-2,0) {\parbox{4cm}{$C_2 \ltimes C_4$\\[2mm] $E_7^{(1)}\longrightarrow E_8^{(1)}$}};

\node at (0,0) {\includegraphics[scale=1]{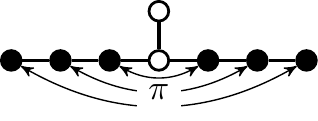}};         

\end{tikzpicture}

\\	
	
\vspace{-2cm}	
\begin{center}
\Large $E_7^{(1)}$ 
\end{center}
&  
\begin{tikzpicture}


\node at (-2.25,0)
{\parbox{6cm}{$C_2\times C_2$\\[2mm]
	$E_7^{(1)}\longrightarrow E_7^{(1)}$}};

\node at (0,0) {\includegraphics[scale=1]{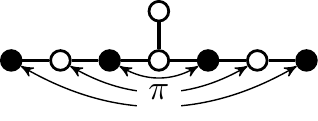}};

\end{tikzpicture}
&      

\begin{tikzpicture}


\node at (-2,0) {\parbox{4cm}{$C_2\times 4A_1$\\[2mm] $E_7^{(1)}\longrightarrow E_7^{(1)}$}};

\node at (0,0) {\includegraphics[scale=1]{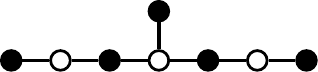}}; 
 
\end{tikzpicture} 
 
 \\	
\hline

& 
\begin{tikzpicture}

\node at (-2.25,0)
{\parbox{4cm}{$C_2\times C_2 \times C_2$\\[2mm] $D_5^{(1)}\longrightarrow E_7^{(1)}$}};

\node at (0,0) {\includegraphics[scale=0.75]{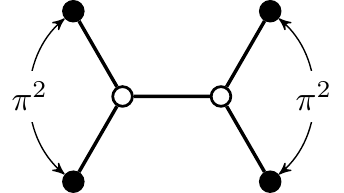}};

\end{tikzpicture}

&
\begin{tikzpicture}

\node at (-2,0)
{\parbox{4cm}{$C_2 \times C_2$\\[2mm] $D_5^{(1)}\longrightarrow D_5^{1)}$}};

\node at (0,0) {\includegraphics[scale=0.75]{d5_1111.pdf}};
\end{tikzpicture}
\\
\vspace{-2cm}	
\begin{center}
\Large $D_5^{(1)}$ 
\end{center}
&      

\begin{tikzpicture}

\node at (-2.25,0)
{\parbox{4cm}{$C_2\times C_2 $\\[2mm] $D_5^{(1)}\longrightarrow D_5^{(1)}$}};

\node at (0,0)
{\includegraphics[scale=0.75]{d5_pp1111.pdf}};
\end{tikzpicture}
 	& 
\begin{tikzpicture}

\node at (-2,0)
{\parbox{4cm}{$C_2\times C_2$\\[2mm] $D_5^{(1)}\longrightarrow E_8^{(1)}$}};

\node  at (0,0)
{\includegraphics[scale=0.75]{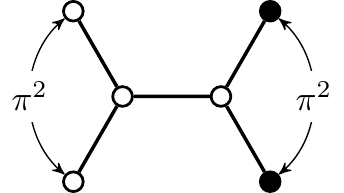}};
\end{tikzpicture} 
 \\
\hline

\begin{center}
\Large $A_1^{(1)}$ 
\end{center}
&             &      
\hspace{-4cm}\begin{tikzpicture}

\node at (-2,0)
{\parbox{4cm}{$C_2\times C_2$\\[2mm] $A_1^{(1)}\longrightarrow E_7^{(1)}$}};

\node at (0,0)
{\includegraphics[scale=1]{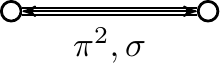}}; 
\end{tikzpicture}
\\
\hline
	
\end{tabular}

\caption{Non-cyclic folding subgroups \label{Tab:intro:2}}
\end{table}
       
Let us explain the notation used in the tables. For each folding subgroup, we draw nodes of the corresponding subset in black.
We label folding transformation via types of the root system generated by simple roots corresponding to \(I\).
If \(H\) contains (at least as a factor) an external automorphism \(\pi\in\widehat{\Omega}\), we also write its action
on the diagram and add \(\pi \ltimes\) in the notation of the folding transformation. In particular, if the folding subgroup is a
subgroup of \(\widehat{\Omega}\), then we draw all vertices in white and denote the corresponding folding transformation by \(\pi\).

For each folding subgroup, we draw an arrow whose source is the symmetry of the initial Painlev\'e equation and whose target is the Painlev\'e
equation after the folding. We write the degree of the transformation above the arrow.

In Table \ref{Tab:intro:2} we also write the isomorphism class of the group \(H\). Note that contrary to differential foldings, in this
case the group \(H\) can be non-commutative.

\begin{Example}
	The folding transformation used in the Example \ref{ex:Intr2} in this notation corresponds to the subdiagram \(4A_1 \subset D_5^{(1)}\).
	The corresponding folding subgroup is generated by \(w=s_{0}s_1s_4s_5 \in \Omega_{A_1}^4\).
\end{Example}

\begin{Remark}\label{rem:intro:selection}
The diagrams represented in Table \ref{Tab:intro} are specified by the element $\pi\in \widehat{\Omega}$ and the coloring which satisfies
the following combinatorial properties. Let \(I\) denote subgraph of \(\Delta^{a}\) with black vertices, \(\{\Delta_j\}\) be the set
of connected components of \(I\), and \(n_j\) be a number of vertices in \(\Delta_j\). Then:

\begin{itemize}
	\item Each \(\Delta_j\) is a Dynkin diagram of type \(A\). Coloring is invariant with respect to $\pi$,
	if some \(\Delta_j\) is $\pi$-invariant as a set, then \(\pi\) acts trivially on it.
	\item Number of $\pi$ orbits on the set of vertices $\Delta^a \setminus I$ is at least $2$.
	\item In each $\pi$-orbit choose one component $\Delta_j$, called distinguished. Number the vertices of each distinguished 
	component successively from $1$ to $n_j$. Then there is a choice of integer numbers \(m_j, \, 0<m_j\leq n_j,\quad
	\gcd (m_j,n_j+1)=1\), for each distinguished component such that, for any white vertex $c$,
	\begin{equation}
		\sum_{b\in I_D\cap N(c)} \frac{i_b m_{j(b)}}{n_{j(b)}+1}\in \mathbb{Z}, 
	\end{equation}
    where the sum runs over the set of black vertices belonging to the set of distinguished components $I_D$ and adjacent to \(c\), $j(b)$ is such that
    \(b\in \Delta_{j(b)}\), and $i_b$ is the number of $b$ in the component $\Delta_{j(b)}$.
\end{itemize}
\end{Remark}

In the main text we present more details about folding transformations given in tables \ref{Tab:intro},\ref{Tab:intro:2}.
In particular, we present the corresponding subgroups \(H \subset W^{ae}\), give explicit formulas for the coordinate transformation
\(\f=\f(F,G), \g=\g(F,G)\), and compute the automorphism group of the image.

\paragraph{Content of the paper.}

The main text of the paper consists of two parts: the first one is devoted to the algebra and combinatorics of the affine Weyl group
(Secs. \ref{sec:class_gp}, \ref{sec:class_answ}) and the second one is devoted to the geometry of rational surfaces (Secs. \ref{sec:geom_gs},
\ref{sec:geom_answ}). In each of the two parts, we first explain the results and give some examples, and then we give some tables with
the explicit data.

Section \ref{sec:class_gp} is devoted to the classification. Theorem \ref{thm:stab} reduces the problem to the subgroups
of \ref{eq:intro:H subset}. In Section \ref{ssec:selection} we obtain selection rules as in Remark \ref{rem:intro:selection},
and in Section \ref{ssec:answers algebraic} we give the results. 

In Section \ref{ssec:invlat} we discuss the normalizer subgroup \(N=N(H,W^{ae}) \subset W^{ae}\) of \(H\). The group \(N\)
is important since it acts after the folding. In particular, it should include the Painlev\'e dynamics. 
Here we restrict ourselves to the case of cyclic \(H\) and denote its generator by $w$. 
Roughly speaking the answer is the following. 
Let \(Q^{a}_{\text{fold sym}}\) be the sublattice in \(Q^a_{\text{sym}}\) invariant under \(w\). 
Then it appears that \(Q^{a}_{\text{fold sym}}\)  is the lattice for the affine root system \(\Phi^{a}_{\text{fold sym}}\) and the group \(N\) on the image is a finite extension of the affine Weyl group \(W^a_{\text{fold sym}}\). 
Details on the computation of \(N\) in each case are given in Sec.~\ref{sec:class_answ}. 

As we saw in Example \ref{ex:Intr2}, for special values of the parameters it may be possible to extract a root of the dynamics.
This phenomenon is called a projective reduction of the Painlev\'e equation \cite{KNT11}, \cite{KN13}. Projective reductions quite
often appear in examples of folding transformations, we also discuss this Sec.~\ref{ssec:invlat}.

In the geometrical part, we find the type of a Painlev\'e equation after the folding\footnote{It is natural to believe that resulting dynamics should also be of a discrete Painlev\'e type.
Indeed by Castelnuovo theorem (see e.g. \cite[Cor. V.5]{Beauville:1996}) the quotient 
\(\mathcal{X}_{\vec{a}}/H\) is rational surface. Moreover, the standard Painlev\'e  detectors \cite{Grammaticos:2004}: singularity confinement and algebraic entropy and are also preserved after the folding.}
and compute explicit formulas for the coordinates \(\f=\f(F,G), \g=\g(F,G)\). To do this we have to take the quotient
\(\mathcal{X}_{\vec{a}}/H\), resolve its singularities, and then (probably after some blowdown) identify the result with
the space \(\mathcal{Y}_{\vec{\A}}\) of initial conditions of the resulting Painlev\'e equation. This scheme is described
in Sec.~\ref{sec:geom_gs} and details for each example are given in Sec.~\ref{sec:geom_answ}. For the convenience of the reader,
note that tables with the main results are given in Sec.~\ref{ssec:geom answers}. In Sec.~\ref{ssec:nodal} (again, with
details in Sec.~\ref{sec:geom_answ}) we compute the symmetry group from the geometry of \(\mathcal{Y}_{\vec{\A}}\) and observe that it is in
complete agreement with the computation of \(N\) mentioned above.

Finally, in Section \ref{Sec:Further} we shortly discuss two questions that are natural after completion of the classification.
The first question is the continuous limit. Unfortunately, we do not have a complete answer, so we will limit ourselves to one example.
The second question is a comparison of our classification with previously known examples. There are 4 folding transformations of
\(q\)-difference equations in \cite{RGT00} (see also \cite{RNGT09}), we found them in our classification.
The fifth example appeared in \cite{BS18}.

\paragraph{Notation.} Here we list some notation and abbreviations which we used through the main text. Of course this not the full
list, but rather a list necessary to avoid confusion.
\begin{itemize}
	\item \(C_n\): the cyclic group of order \(n\),
	\item  \(\zeta_n\): some primitive root of \(1\) of order \(n\),
	\item \(\ri=\sqrt{-1}\), \(\zeta=\zeta_3\),
	\item \(s_{i_1\dots i_k}=s_{i_1}\cdot\ldots \cdot s_{i_k}\), where \(s_{i_1}\dots s_{i_k}\) are simple reflections.
	\item $t_{\lambda}$ are translations in affine Weyl group, corresponding to weight $\lambda\in P$.
	\item \(\alpha_{i_1\dots i_k}=\alpha_{i_1}+\ldots+\alpha_{i_k}\), where \(\alpha_{i_j}\) are simple roots. 
	\item $\omega_i$, $\omega_{\alpha_i}$: fundamental weight, corresponding to the simple root $\alpha_i$. 
	\item \(F,G,a_i,q, E_i\): coordinates, root variables, imaginary root variable, exceptional divisors
	on the surface from which folding transformation acts.
	\item \(\f,\g,\A_i, \mathsf{q}, \e_i\): coordinates, root variables, imaginary root variable, 
	exceptional divisors on the surface obtained after folding transformation. 
	\item \(E_{i_1\dots i_k}=E_{i_1}+\dots+E_{i_k}\), \(H_{FG}=H_F+H_G\). 
\end{itemize}

\paragraph{Acknowledgements.}
We are grateful to Oleg Lisovyy who drew our attention to folding transformations, Konstantin Shramov for explanations on
the intersection theory and Pavlo Gavrylenko for many very useful discussions. 

We are especially grateful to Anton Dzhamay, who participated in the early stage of this project. Our understanding of Sakai's
approach is due to him, he also suggested that nodal curves and projective reduction play an important role in the study
of folding transformations.

M.B. was partially supported by the HSE University Basic Research Program.
	
\newpage

\section{Classification of folding transformations} \label{sec:class_gp}

\subsection{Main problems} \label{ssec:main_probl}
	
\paragraph{Sakai's geometric approach.}
	
Celebrated Sakai's classification \cite{Sakai01} relates $q$-Painlev\'e equations to the families of rational surfaces $\mathcal{X}$.
These surfaces are blowups of $\mathbb{P}^1\times \mathbb{P}^1$ in $8$ points (or, equivalently, $\mathbb{P}^2$ in $9$ points).
We will usually denote coordinates on \(\mathbb{P}^1\times \mathbb{P}^1\) by \(F\) and \(G\).
	
Each family is parametrized by a collection of root variables $\vec{a}=(a_0,\dots,a_r)\in \mathcal{A} \subset (\mathbb{C^*})^{r+1}$
\footnote{For $E_2^{(1)}$ there is dependence on square roots of these variables.}
, which correspond to the set $\{\alpha_0,\dots,\alpha_r\}=\Delta^a$ of simple roots of certain affine root system $\Phi^a$. The positions of 8 blowup
	points are given in terms of \(\vec{a}\). Symmetries of the family $\mathcal{X}$ are given by affine extended Weyl group $W^{ae}$
	of root system $\Phi^a$, thus type of this system in called symmetry type\footnote{\label{footnote:A1/A7}
	For the symmetry/surface type $E_1^{(1)}/A_7^{(1)}$ the symmetry group is slightly bigger
	and for $E_2^{(1)}/A_6^{(1)}$ --- smaller, these cases will require more care below.}.
	All these $\Phi^a$ are of $E$-type: $\Phi^a=E_r^{(1)},\, r=1\ldots 8$, see Fig. \ref{Fig:qPainleve} above.
	
	Group $W^{ae}$ is a semidirect product \(\widehat{\Omega}\ltimes W^a\) of affine Coxeter group $W^a$ and group of external
	automorphisms $\widehat{\Omega}$. Hence $W^{ae}$ is generated by simple reflections \(s_i\), \(i=0,\dots,r\) and permutations
	of simple roots \(\pi\). Action of $w$ on root variables is multiplicative, namely 
	\begin{equation}
		s_i(a_j) = a_j a_i^{-C_{ij}}, \, s_i\in W^a,  \qquad \pi(a_i)=a_{\pi(i)}, \, \pi\in\widehat{\Omega},
	\end{equation}
	where $C_{ij}$ is Cartan matrix of $\Phi^a$.
	
 	Group $W^{ae}$ acts between different members of the family $\mathcal{X}$, nontrivially acting on $\mathcal{X}_{\vec{a}}$ 
	\begin{equation}
	W^{ae} \ni w: \quad \mathcal{X}_{\vec{a}} \rightarrow \mathcal{X}_{w(\vec{a})}, \quad (F,G)\in \mathcal{X}_{\vec{a}} \mapsto  
	(w(F),w(G))\in \mathcal{X}_{w(\vec{a})}.
	\end{equation}
	Explicit formulas for the action are given in Section \ref{sec:geom_answ}. More details about Sakai's approach can be found in
	review \cite{KNY15}, we usually follow \emph{loc. cit.} in notations and conventions.
        
\paragraph{Folding: definition.}

We see that generally, the symmetry changes $\vec{a}$. However, under restrictions on parameters $\vec{a}$ symmetry can act as
an automorphism of corresponding $\mathcal{X}_{\vec{a}}$ and such automorphism can be nontrivial.
This phenomenon is what we call folding
        
\begin{defin} 
	\emph{Folding transformation} of surface is an element $w\in W^{ae}$ and family $\mathcal{A}_w\subset \mathcal{A}$ of root variable
	tuples such that for any (generic) $\vec{a}\in \mathcal{A}_w$  we have
	\begin{equation}
		w(\vec{a})=(\vec{a}),\quad w((F,G))\neq (F,G).
	\end{equation}	
		
	\emph{Folding subgroup} is a subgroup $H \subset W^{ae}$ and family $\mathcal{A}_H$ of root variable tuples
	such that for any (generic) $\vec{a}\in \mathcal{A}_H$  we have an embedding $H \rightarrow \mathrm{Aut}(\mathcal{X}_{\vec{a}})$.
\end{defin} 
Without loss of generality, we can take $\mathcal{A}_w$ ($\mathcal{A}_H$) to be an irreducible component of the invariant
subset $\mathcal{A}^{w}$ ($\mathcal{A}^H$). We will see below that for given \(w\) there is often only one component
\(\mathcal{A}_w\) which gives nontrivial folding transformation. If there are several such components, then they give folding transformations,
equivalent under a certain symmetry. Hence we will label folding transformations by group element \(w \in W^{ae}\).
	
Informally we can say that the folding subgroup is a subgroup in which any nontrivial element is folding transformation.
Individual folding transformations correspond to the cyclic folding subgroups \(H\). As already mentioned in the Introduction we
concentrate first on these cyclic groups and then construct all folding subgroups from cyclic ones.
	
For the folding subgroup, we can take the quotient $\mathcal{X}_{\vec{a}}/H$. The normalizer subgroup $N$ of $H$ in $W^{ae}$ acts on
the family $\mathcal{A}_H$. Therefore it is natural to ask for the birational model $Y$ of the quotient $\mathcal{X}_{\vec{a}}/H$
and map from $N$ to the corresponding affine Weyl group. If $\mathcal{X}$ corresponds to root system $\Phi^a$ and $\mathcal{Y}$
corresponds to root system $\Phi'^a$ we say that we get folding transformation from symmetry type $\Phi^a$ to symmetry type $\Phi'^a$.

The first key point in the classification of folding transformations is the following lemma and corollary. 
\begin{lemma}\label{lemma:reflection}
	        For $a_i=1$ we have $s_i(F,G)=(F,G)$. 
\end{lemma}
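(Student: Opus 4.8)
The plan is to recall the explicit description of the action of a simple reflection $s_i$ on the surface $\mathcal{X}_{\vec a}$ in Sakai's setup and to observe that it degenerates to the identity precisely when the associated root variable $a_i$ equals $1$. Concretely, each simple reflection $s_i$ of $W^{ae}$ acts on $\mathcal{X}_{\vec a}$ as a birational (in fact automorphism) map whose formula in the coordinates $(F,G)$ is of the standard Cremona/elementary type attached to the $(-2)$-curve, or chain of curves, in $\mathcal{X}_{\vec a}$ corresponding to $\alpha_i$. In all cases in Fig.~\ref{Fig:qPainleve} this formula is a rational function of $F,G$ whose coefficients are monomials in the root variables, and the ``nontrivial part'' of the transformation carries an overall factor $(a_i-1)$ (equivalently, is conjugate to multiplication by a function that becomes $1$ when $a_i=1$). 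Setting $a_i=1$ therefore collapses $s_i$ to the identity on $\mathcal{X}_{\vec a}$.

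The cleanest way I would carry this out is root-theoretically rather than case-by-case. The reflection $s_i$ is the composition of the blow-downs contracting the exceptional configuration meeting $\alpha_i$, an automorphism of the intermediate surface, and the blow-ups back; the intermediate automorphism is trivial exactly when the two relevant base points coincide, and the condition for this coincidence, read off from Sakai's point configuration (the positions of the $8$ blow-up points being explicit monomials in $\vec a$), is $a_i = 1$. In other words, $a_i$ is (up to normalization) the cross-ratio-type quantity measuring the distance between the two points that $s_i$ swaps, so $a_i=1$ forces $s_i$ to be an identity. The example worked out in Example~\ref{ex:Intr1}, where the table gives $s_1(F,G)=(a_1F,G)$, $s_4(F,G)=(F,a_4^{-1}G)$, etc., already exhibits this: each entry reduces to $(F,G)$ upon setting the corresponding $a_i=1$, and the remaining entries $s_2,s_3$ (which act by genuine Cremona transformations $F\mapsto F(G-1)/(G-a_3^{-1})$ and so on) likewise degenerate to the identity.

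The main obstacle is uniformity: there is no single universal formula for $s_i$ valid across all symmetry types $E_1^{(1)},\dots,E_8^{(1)}$, so to be fully rigorous one must either (i) invoke a general structural statement about how simple reflections act on the Sakai surfaces in terms of the point configuration — which is the conceptual heart and the step I would spend the most care on — or (ii) fall back on inspecting the explicit action formulas (the ones promised in Section~\ref{sec:geom_answ}) reflection by reflection. Approach (i) is preferable: once one knows that $s_i$ is realized as ``contract, move the point to its $s_i$-image, blow up,'' and that the displacement vector is governed by $\log a_i$, the claim $a_i=1\Rightarrow s_i=\mathrm{id}$ is immediate. The only subtlety to watch is the small cases flagged in footnotes~\ref{footnote:A1/A7} ($E_1^{(1)}/A_7^{(1)}$ and $E_2^{(1)}/A_6^{(1)}$), where the symmetry group is slightly non-standard; there one checks the statement directly from the (few) explicit formulas. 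Finally, I would phrase the conclusion so that it immediately feeds the intended corollary — that the reflection subgroup of the stabilizer $(W^{ae}_{\vec a})_{\vec a}$ acts trivially on $\mathcal{X}_{\vec a}$ — since that corollary is what makes the classification tractable.
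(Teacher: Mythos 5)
Your main route (inspect the explicit action of each $s_i$ on $(F,G)$ and observe it collapses to the identity at $a_i=1$) is exactly the paper's proof, which is a case-by-case check against the formulas of Section~\ref{sec:geom_answ}; and your preferred alternative (i) parallels the paper's separate ``geometric proof'' in Section~\ref{ssec:Picard}, though the paper runs that argument through the period map rather than through the elementary-transformation description: if $\chi(\alpha)=1$ then $\alpha$ is effective (a nodal curve, by Proposition~\ref{prop:nodal}), $s_\alpha$ then fixes $\alpha$ and all of the Picard lattice, hence is the trivial automorphism. So in substance you are on the paper's track twice over.

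There is, however, one concrete oversight. Your blanket claim that every entry of the action table ``reduces to $(F,G)$ upon setting the corresponding $a_i=1$'' is false for two reflections: $s_0$ in type $E_7^{(1)}/A_1^{(1)}$ acts by $F\mapsto 1/G$, $G\mapsto 1/F$, and $s_1$ in type $E_8^{(1)}/A_0^{(1)}$ acts by $F\mapsto G$, $G\mapsto F$, neither of which involves the root variable at all and neither of which degenerates to the identity. The lemma survives there only because the values $a_0=1$ (resp.\ $a_1=1$) are excluded from the family $\mathcal{A}$ --- the anticanonical curve \eqref{E7curve} (resp.\ \eqref{E8curve}) degenerates --- so the hypothesis is vacuous; this is the content of Remark~\ref{rem:aexcept}. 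You flagged the wrong exceptional cases ($E_1^{(1)}$ and $E_2^{(1)}$, where the issue is a different one, namely the nonstandard symmetry group relevant to the Corollary, not to the Lemma itself). As written, your uniform mechanism would assert something false for those two reflections; the fix is a one-line appeal to the excluded parameter values, but it does need to be said.
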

\begin{proof}
	One can just check the statement of the lemma case-by-case, using formulas for action of simple reflections from Section
	\ref{sec:geom_answ} or \cite{KNY15}
\end{proof}

\begin{Remark}\label{rem:aexcept}
	According to formulas from Section \ref{sec:geom_answ} the simple reflections $s_0$ for $E_7^{(1)}/A_1^{(1)}$
	and $s_1$ for $E_8^{(1)}/A_0^{(1)}$ naively violate the statement of the lemma. However, this is not the case,
	because value $1$ for root variables
	$a_0$ for $E_7^{(1)}/A_1^{(1)}$ and $a_1$ for $E_8^{(1)}/A_0^{(1)}$ is excluded from the family $\mathcal{X}$,
	since it leads to another geometry, see defining equations \eqref{E7curve} for $E_7^{(1)}/A_1^{(1)}$  case
	and \eqref{E8curve} for $E_8^{(1)}/A_0^{(1)}$ case. 	
\end{Remark}

	There is also a geometric proof of the Lemma \ref{lemma:reflection}, see Section \ref{ssec:Picard}.

\begin{corol}\label{corol:reflection}
	Let root \(\alpha\) belong to component of $\Phi^a$ different from \(A_1^{(1)}\), let \(s_\alpha\) be the corresponding reflection. If 
	$s_{\alpha}(\vec{a})=\vec{a}$ then $s_{\alpha}((F,G))=(F,G)$.
\end{corol}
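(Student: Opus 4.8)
The plan is to reduce Corollary \ref{corol:reflection} to Lemma \ref{lemma:reflection} by a conjugation argument inside the extended affine Weyl group $W^{ae}$. The point is that $s_\alpha$ is not a simple reflection in general, but it is conjugate to one, and conjugation commutes with the action on the surface family in a controlled way.

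First I would recall that any real root $\alpha$ of $\Phi^a$ is $W^a$-conjugate to a simple root, i.e. $\alpha = w(\alpha_i)$ for some $w\in W^a$ and some $i\in\{0,\dots,r\}$; consequently $s_\alpha = w s_i w^{-1}$. Since $w$ acts on the family as a collection of isomorphisms $\mathcal{X}_{\vec b}\to\mathcal{X}_{w(\vec b)}$ compatibly with the multiplicative action on root variables, the hypothesis $s_\alpha(\vec a)=\vec a$ translates, after applying $w^{-1}$, into $s_i(\vec b)=\vec b$ where $\vec b = w^{-1}(\vec a)$. In particular the relevant coordinate satisfies $b_i = 1$: indeed $s_i(b_i)=b_i b_i^{-C_{ii}}=b_i^{-1}$ since $C_{ii}=2$, so $s_i(\vec b)=\vec b$ forces $b_i^{-1}=b_i$, i.e. $b_i=\pm 1$; and because $\alpha$ — hence $\alpha_i$ — lies in a component which is not of type $A_1^{(1)}$, the root $\alpha_i$ is linked to some other simple root $\alpha_k$, so the equation $s_i(b_k)=b_k b_i^{-C_{ik}}=b_k$ with $C_{ik}\in\{-1,-2\}$ forces $b_i=1$ (and not $b_i=-1$).

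Now I would apply Lemma \ref{lemma:reflection} at the parameter $\vec b$: since $b_i=1$ we get $s_i(F',G')=(F',G')$ for $(F',G')\in\mathcal{X}_{\vec b}$. Finally, transport this back through the isomorphism $w\colon\mathcal{X}_{\vec b}\to\mathcal{X}_{\vec a}$. Writing $(F,G)=w(F',G')$, we compute
\begin{equation}
s_\alpha(F,G) = w s_i w^{-1}(F,G) = w s_i (F',G') = w(F',G') = (F,G),
\end{equation}
which is exactly the claim. The only book-keeping is to make sure the chain of isomorphisms is applied at the correct parameter values, i.e. that $w^{-1}$ sends $\mathcal{X}_{\vec a}$ to $\mathcal{X}_{\vec b}$, $s_i$ fixes $\mathcal{X}_{\vec b}$ (using $s_i(\vec b)=\vec b$), and $w$ sends it back to $\mathcal{X}_{\vec a}$; all of this is automatic from the functoriality of the $W^{ae}$-action described in Section \ref{ssec:main_probl}.

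The main obstacle is the exceptional-case bookkeeping flagged in Remark \ref{rem:aexcept}: one must check that the hypothesis "$\alpha$ not in an $A_1^{(1)}$ component" genuinely rules out the degenerate values of root variables for which Lemma \ref{lemma:reflection} fails (the $s_0$ for $E_7^{(1)}/A_1^{(1)}$ and $s_1$ for $E_8^{(1)}/A_0^{(1)}$ cases). Since by assumption $\alpha_i$ has a neighbour in its component, the offending value $b_i=1$ is not a priori excluded from the family in the same way, so one should verify that in all non-$A_1^{(1)}$ components Lemma \ref{lemma:reflection} applies without caveat — which it does, precisely because the excluded-value phenomenon of Remark \ref{rem:aexcept} only occurs for the isolated $A_1^{(1)}$ surface/symmetry pairs. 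Everything else is formal.
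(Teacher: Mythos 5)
Your proof is correct and follows essentially the same route as the paper's: conjugate $s_\alpha$ to a simple reflection $s_i$, use the existence of an adjacent simple root (guaranteed by the non-$A_1^{(1)}$ hypothesis) to force the root variable $a_i=1$, and then invoke Lemma \ref{lemma:reflection}. The paper's proof is just a terser version of yours; your only slip is the claim that $C_{ik}=-2$ would also force $b_i=1$ (it would only give $b_i^2=1$), but this is moot since all symmetry systems here are simply laced and the paper likewise uses $C_{ij}=-1$.
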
	
In other words the condition of the lemma holds if \(\Phi^a\) is different from \(E_1^{(1)}\), \(E_2^{(1)}\) and \(E_3^{(1)}\).

\begin{proof}[Proof of the Colorrary]
	Since any roots is conjugated to simple one it is sufficient to consider case \(\alpha=\alpha_i\). It follows from the
	assumptions of the corollary that there exists simple root \(\alpha_j\) such that \(C_{ij}=-1\). Hence \(s_i(a_j)=a_j\)
	implies \(a_i=1\) and it remains to use Lemma \ref{lemma:reflection}.
\end{proof}

        \paragraph{Folding on equations.}
        
    $q$-Painlev\'e equations correspond to translations in affine Weyl group. Recall that affine Weyl groups have a structure of semidirect product 
	\begin{equation}\label{eq:Waff semidirect}
	W^a=Q \rtimes W, \quad W^{ae}=P\rtimes W. 
	\end{equation}
	Here \(W\) is (finite) Weyl group corresponding to (finite) root system \(\Phi\), $Q$ is a root lattice of \(\Phi\) and
	$P$ is a  weight lattice of \(\Phi\). We denote by $\bar{w}\in W$ the finite part of $w\in W^{a(e)}$. We denote fundamental weight corresponding to simple root \(\alpha_i\) by \(\omega_i\).

    For any \(\lambda \in P\) let \(t_\lambda\) be the corresponding element of \(W^{ae}\). 
    Its action on root variables is given by 
    \( t_\lambda a_i= q^{(\lambda,\alpha_i)}a_i\).
Here \(q\) corresponds to imaginary root \(\delta \in \Phi^a\), namely 
\begin{equation}\label{eq:marks def}
	\delta=\sum_{i=0}^r \mathfrak{m}_i \alpha_i, \quad q=\prod_{i=0}^r a_i^{\mathfrak{m}_i}.
\end{equation}
Coefficients \(\mathfrak{m}_i\) of $\delta$ decomposition into sum of simple roots $\alpha_i$ we call marks of $\alpha_i$.
        
The action of translation \(t_\lambda\) on coordinates \(F,G\) is very nontrivial.
The $q$-Painlev\'e equation is an equation on the section of the bundle \(\mathcal{X}\rightarrow \mathcal{A}\)
equivariant under the action of  \(t_\lambda\). Namely, $q$-Painlev\'e equations are equations of the form
\begin{equation}
	t_\lambda(F(\vec{a}),G(\vec{a}))=(F(t_\lambda(\vec{a})),G(t_\lambda(\vec{a})))
\end{equation}
on functions \((F(\vec{a}),G(\vec{a}))\) (which actually take values in surface \(\mathcal{X}_{\vec{a}}\)). In a sense, the Painlev\'e equation is defined by the family of surfaces and chosen translation \(t_\lambda\).

\begin{Remark}\label{rem:projective reduction}
	There is a slight extension of this construction. Namely, for certain  values of root variables \(\vec{a}\), there is 
	certain \(w\in W^{ae}, \, w^k \in P\) which act as a translation by \(q^{1/k}\) for certain \(k\in \mathbb{Z}_{>0}\).
	Such phenomenon is called projective reduction, see e.g. \cite{KNT11}, \cite{KN13}.
	Projective reductions quite often appear in the study of the folding transformations, see end of Sec.~\ref{ssec:invlat} below. 	
\end{Remark}

In order to have \emph{folding transformation (or folding subgroup) of \(q\)-difference Painlev\'e equation} we need translation
\(t_\lambda\) which will act on the quotient $\mathcal{X}_{\vec{a}}/H$. Hence \(t_\lambda\) belongs to the normalizer subgroup \(N\).
We will call folding transformation $(w,\mathcal{A}_w)$ \emph{true folding} if there is translation in \(N\) and the family
$\mathcal{A}_w$ contains points with arbitrary \(q\). Similarly we define \emph{true folding} subgroup. 
    
The true folding condition requires that family $\mathcal{A}_w$ (correspondingly $\mathcal{A}_H$) is at least two
dimensional since one parameter is \(q\), and another parameter is shifted by translation.
The normalizer subgroup \(N\) could contain several linear independent translations, for this case one folding transformation
(or folding subgroup) gives transformation for several $q$-Painlev\'e equations.

Below we give a list of all up to conjugations in $W^{ae}$ possible cyclic true folding subgroups.
To be more precise we classify the images of the folding subgroups in the automorphism group of \(\mathcal{X}_{\vec{a}}\),
this is important since as we saw in Lemma \ref{lemma:reflection} the map from $W^{ae}$ to automorphism group can have a kernel.
Our approach is based on the combinatorics of the affine Weyl group, the only geometric input is Lemma~\ref{lemma:reflection}.
Then, in Section~\ref{sec:geom_answ} we will see that all transformations from the list found by such approach give non-equivalent true
folding transformations.

\subsection{Stabilizer}
\paragraph{Additive case.}
Let \(\Phi\) be a finite simply laced irreducible root system, \(\Delta=\{\alpha_1,\dots,\alpha_r\}\) set of its simple roots, \(W_\Phi\)
be the corresponding Weyl group. Let \(V_\Phi\) be a real vector space with coordinates \(x_1,\dots,x_l\) with action of group
\(W_\Phi\) by the formula \(s_i(x_j) = x_j - C_{ij}x_j\). In more invariant terms \(V_\Phi\) is a space dual to the real vector
space generated by \(\alpha_1,\dots,\alpha_r\).
	
There is a standard action of the extended affine Weyl group \(W^{ae}_\Phi\) on \(V_{\Phi}\), namely for \(\lambda\in P\) translations 
\(t_\lambda\) act as \(x_i\mapsto x_i+(\lambda,\alpha_i)\). Subgroup \(W^{a}_\Phi\subset W^{ae}_\Phi\) is generated by reflections.
The reflection hyperplane for \(s_0\) is given by the equation \(\sum_{i=1}^r \mathfrak{m}_i x_i=1\), where marks \(\mathfrak{m}_i\) are defined by the expansion
of the highest root \(\theta=\sum_{i=1}^r \mathfrak{m}_i \alpha_i\).
	
\begin{Remark}
	In the standard convention \(\alpha_1\dots,\alpha_r\) are roots of finite root system and \(\alpha_0=-\theta+\delta\) is additional
	``affine'' root. Here as usual \(\delta\) is an imaginary root and \(\theta\) is highest root. Hence \(\sum_{i=0}^r \mathfrak{m}_i \alpha_i=\delta\),
	where \(\mathfrak{m}_0=1\) in agreement with formula \eqref{eq:marks def}.
	
	In the paper, we follow conventions of \cite{KNY15} in the numbering of simple roots in \(\Phi^a\). This convention in two cases disagrees
	with one above, namely for \(E_7^{(1)}\) affine root is \(\alpha_7\) and for \(E_8^{(1)}\) affine root is \(\alpha_8\). All statements
	in this section will be given in the invariant form, so this discrepancy will not lead to confusion.
\end{Remark}
	
The (closed) \emph{fundamental alcove} \(\mathrm{Alc}\subset V_\Phi\) is the set bounded by the hyperplanes of simple reflections in
\(W^a\). In coordinates \(\mathrm{Alc}\) is defined by the equations \(x_i\geq 0\), \(1\leq i \leq r\) and \(\sum_{i=1}^r \mathfrak{m}_i x_i\leq 1\).
Below we will identify facets of the fundamental alcove with the set \(\Delta^a\) of simple roots of the affine root system. 
The fundamental alcove is the fundamental domain for the action of \(W^a\) on \(V_\Phi\), see e.g. \cite[Th. 4.8]{Hum90}.
Let \(\widehat{\Omega}\subset W^{ae}\) denotes subgroup which preserves \(\mathrm{Alc}\). Clearly we have decomposition
\(W^{ae} \simeq \widehat{\Omega} \ltimes W^a\). Hence \(\widehat{\Omega}\simeq P/Q\).  Elements of the group \(\widehat{\Omega}\) permutes
facets of \(\mathrm{Alc}\), i.e. perform permutation of the set of \(\alpha_0,\dots,\alpha_r\), this permutation is automorphism
of affine Dynkin diagram. 

The following lemma is standard.
	
\begin{lemma}\label{lemma:stabadd}
	Let \(\vec{x}\in \mathrm{Alc}\), \(I\subset\Delta^a\) denotes the set of facets containing \(\vec{x}\) and \(W^\circ_I\) denotes
	subgroup of \(W^{a}\) generated by \(s_i\), \(i \in I\). Then the stabilizer of \(\vec{x}\) in \(W^{ae}\) has the
	form \(W^{ae}_{\vec{x}}=\widehat{\Omega}_{\vec{x}} \ltimes W^\circ_I\), where \(\widehat{\Omega}_{\vec{x}}\) is stabilizer of \(\vec{x}\) in \(\widehat{\Omega}\). 
\end{lemma}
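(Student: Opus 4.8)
The plan is to show that the stabilizer of $\vec{x}$ decomposes according to the semidirect product structure $W^{ae}_\Phi \simeq \widehat{\Omega}\ltimes W^a_\Phi$, by first handling the purely reflective part and then peeling off the action of $\widehat{\Omega}$. The key input is the fact, recalled just above from \cite[Th.~4.8]{Hum90}, that $\mathrm{Alc}$ is a strict fundamental domain for the action of $W^a_\Phi$ on $V_\Phi$.

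First I would establish the claim for the reflection subgroup $W^a_\Phi$: the stabilizer $(W^a_\Phi)_{\vec{x}}$ equals $W^\circ_I$, the subgroup generated by those $s_i$, $i\in I$, whose wall contains $\vec{x}$. One inclusion is immediate, since each such $s_i$ fixes its wall pointwise and hence fixes $\vec{x}$. For the reverse inclusion, suppose $v\in W^a_\Phi$ fixes $\vec{x}$. Since $\mathrm{Alc}$ is a fundamental domain and both $\vec{x}$ and $v(\vec{x})=\vec{x}$ lie in $\mathrm{Alc}$, a standard argument (e.g.\ the second part of \cite[Th.~4.8]{Hum90}, or an induction on the length of $v$ in the Coxeter generators $s_0,\dots,s_r$) shows that $v$ is a product of generators each of which fixes $\vec{x}$; concretely, if $v\neq 1$ one writes $v=s_i v'$ with $\ell(v')<\ell(v)$ and shows $s_i$ must fix $\vec{x}$, i.e.\ $i\in I$, then applies induction to $v'$. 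Hence $v\in W^\circ_I$.

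Next I would incorporate $\widehat{\Omega}$. Write an arbitrary element of $W^{ae}_\Phi$ as $w=\pi v$ with $\pi\in\widehat{\Omega}$ and $v\in W^a_\Phi$, and suppose $w(\vec{x})=\vec{x}$. Since $\pi$ preserves $\mathrm{Alc}$ and $v(\vec{x})\in\mathrm{Alc}$ (as $\vec{x}\in\mathrm{Alc}$ and $\mathrm{Alc}$ is $W^a_\Phi$-stable? — no: $\mathrm{Alc}$ is a fundamental domain, so $v(\vec{x})$ need not lie in $\mathrm{Alc}$). The cleaner route: because $\mathrm{Alc}$ is a fundamental domain for $W^a_\Phi$, the point $v(\vec{x})$ lies in the $W^a_\Phi$-orbit of $\vec{x}$, and $\pi(v(\vec{x}))=\vec{x}\in\mathrm{Alc}$ forces, after applying $\pi^{-1}$, that $v(\vec{x})=\pi^{-1}(\vec{x})\in\mathrm{Alc}$; two points of $\mathrm{Alc}$ in the same $W^a_\Phi$-orbit are equal, so $v(\vec{x})=\vec{x}$ and therefore also $\pi(\vec{x})=\vec{x}$. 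Thus $v\in (W^a_\Phi)_{\vec{x}}=W^\circ_I$ by the previous step and $\pi\in\widehat{\Omega}_{\vec{x}}$, giving $W^{ae}_{\vec{x}}\subseteq \widehat{\Omega}_{\vec{x}}\ltimes W^\circ_I$; the reverse inclusion is obvious. Finally, to see this is genuinely a semidirect product inside $W^{ae}_{\vec{x}}$, note $W^\circ_I$ is normal in $\widehat{\Omega}_{\vec{x}}\ltimes W^\circ_I$ because $\widehat{\Omega}_{\vec{x}}$ normalizes $W^a_\Phi$ and permutes the generators $s_i$ among themselves, carrying the wall-set $I$ of $\vec{x}$ to itself since it fixes $\vec{x}$; hence it preserves $W^\circ_I$.

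The main obstacle is the orbit/fundamental-domain bookkeeping in the step combining $\pi$ and $v$: one must use the strictness of the fundamental domain (distinct points of $\mathrm{Alc}$ lie in distinct $W^a_\Phi$-orbits) correctly, since $v(\vec{x})$ itself need not lie in $\mathrm{Alc}$ and only $\pi^{-1}(\vec{x})$ does. Everything else is formal once the fundamental-domain property and the identification of $(W^a_\Phi)_{\vec{x}}$ with a parabolic (standard Coxeter-theoretic) subgroup are in hand, which is why the lemma is flagged as standard.
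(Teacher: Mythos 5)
Your proof is correct and follows essentially the same route as the paper: identify the stabilizer of $\vec{x}$ in $W^a$ with $W^\circ_I$ via \cite[Th.~4.8]{Hum90}, then for $w=\pi v$ observe that $v(\vec{x})=\pi^{-1}(\vec{x})\in\mathrm{Alc}$ and invoke the strict fundamental-domain property to force $v(\vec{x})=\vec{x}$. The brief mid-paragraph hesitation is resolved correctly, and the added remark on why $\widehat{\Omega}_{\vec{x}}$ normalizes $W^\circ_I$ is a harmless elaboration the paper leaves implicit.
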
	

\begin{proof} 
	It follows from \cite[proof of Th. 4.8]{Hum90} that stabilizer of \(\vec{x}\) in \(W^a\) is \(W^\circ_I\). If \(\pi w (\vec{x})=\vec{x}\) for \(\pi\in\widehat{\Omega}\), \(w\in W^{a}\) then \(w (\vec{x})\in \mathrm{Alc}\) hence \(w (\vec{x})=\vec{x}\) (see \cite[proof of Th. 4.8]{Hum90}).
\end{proof}

We will need an analog of this statement for the reducible root system \(\Phi=\sqcup_{j=1}^l \Phi_j\). Here and below in such occasions we
assume roots systems \(\Phi_j\) are irreducible, and use notations like \(\Delta^a_j, W_j^{ae}, \mathrm{Alc}_j\) for the corresponding set
of affine simple roots, affine Weyl group, and fundamental alcove. Let \(\mathrm{Aut}_\Phi\) be a group of external automorphisms of \(\Phi\).
Any element \(\sigma \in \mathrm{Aut}_\Phi \) acts on simple roots \(\Delta_j\) as permutation or moves them to roots \(\Delta_j'\) of
isomorphic component. The fundamental alcove \(\mathrm{Alc}_\Phi\) is a product \( \prod \mathrm{Alc}_{j} \), the set of facets of
\(\mathrm{Alc}_\Phi\) is a union \(\sqcup \Delta^a_j\). We can extend finite and affine Weyl groups of \(\Phi\) by \(\mathrm{Aut}_\Phi\)
\begin{equation}
	\widetilde{W}_\Phi =\mathrm{Aut}_\Phi \ltimes \prod_{j=1}^l W_{j},\quad 
	\widetilde{W}^{ae}_\Phi = \mathrm{Aut}_\Phi \ltimes \prod_{j=1}^l W^{ae}_{j}.
\end{equation}

\begin{lemma}\label{lemma:stabadd:reducible} Let \(\vec{x}\in \mathrm{Alc}_\Phi\), and \(I,W_I^\circ\) are as in Lemma \ref{lemma:stabadd}.
Then the stabilizer of \(\vec{x}\) in \(\widetilde{W}^{ae}_\Phi\) has the form
\(\widetilde{W}^{ae}_{\Phi,x}=\big(\mathrm{Aut}_\Phi\ltimes \prod \widehat{\Omega}_j\big)_{\vec{x}} \ltimes W^\circ_I\).
\end{lemma}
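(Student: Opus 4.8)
The plan is to reduce to the argument behind Lemma \ref{lemma:stabadd} by first regrouping the iterated semidirect product and then running the fundamental-domain argument for the whole product alcove $\mathrm{Alc}_\Phi=\prod_j\mathrm{Alc}_j$. First I would use the decompositions $W^{ae}_j=\widehat{\Omega}_j\ltimes W^a_j$. Since $\mathrm{Aut}_\Phi$ permutes the factors $\widehat{\Omega}_j$ (and the $W^a_j$) among isomorphic components, and each $\widehat{\Omega}_j$ normalizes $W^a_j$, the group $G_0:=\mathrm{Aut}_\Phi\ltimes\prod_j\widehat{\Omega}_j$ is a well-defined subgroup and one obtains
\[
\widetilde{W}^{ae}_\Phi=G_0\ltimes\prod_j W^a_j ,
\]
with $\prod_j W^a_j$ a normal subgroup (it is normal in $\prod_j W^{ae}_j$ and $\mathrm{Aut}_\Phi$ permutes its factors) and $G_0\cap\prod_j W^a_j=\{1\}$.

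Next I would record two facts. (i) $\mathrm{Alc}_\Phi$ is a fundamental domain for $\prod_j W^a_j$ acting on $V_\Phi=\prod_j V_{\Phi_j}$, being a product of the fundamental domains $\mathrm{Alc}_j$ (\cite[Th. 4.8]{Hum90}); moreover the stabilizer of $\vec x=(x_1,\dots,x_l)$ in $\prod_j W^a_j$ is $\prod_j(W^a_j)_{x_j}=\prod_j W^\circ_{I_j}=W^\circ_I$, where $I_j\subset\Delta^a_j$ is the set of facets of $\mathrm{Alc}_j$ containing $x_j$ and $I=\sqcup_j I_j$. (ii) $G_0$ preserves $\mathrm{Alc}_\Phi$: elements of $\prod_j\widehat{\Omega}_j$ do so as in Lemma \ref{lemma:stabadd}, and an element of $\mathrm{Aut}_\Phi$ permutes the facets $\sqcup_j\Delta^a_j$ (permuting isomorphic components together with the relevant diagram isomorphism, which preserves the marks), hence maps $\mathrm{Alc}_\Phi$ to itself. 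Given (i)--(ii), the argument is verbatim that of Lemma \ref{lemma:stabadd}: for $g\in\widetilde{W}^{ae}_{\Phi,x}$ write $g=g_0 w$ with $g_0\in G_0$, $w\in\prod_j W^a_j$; then $w(\vec x)=g_0^{-1}(\vec x)\in\mathrm{Alc}_\Phi$, so $w(\vec x)=\vec x$ by the fundamental domain property, whence $w\in W^\circ_I$ and $g_0(\vec x)=\vec x$. Conversely every such product fixes $\vec x$.

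It remains to see this is a semidirect product decomposition. The subgroup $W^\circ_I$ is normalized by $(G_0)_{\vec x}$: any $g_0\in G_0$ fixing $\vec x$ permutes the facets in $I$ (a facet through $\vec x$ goes to a facet through $g_0(\vec x)=\vec x$), hence conjugates $\{s_i: i\in I\}$ into itself; and $(G_0)_{\vec x}\cap W^\circ_I=\{1\}$ inside $G_0\ltimes\prod_j W^a_j$. This yields $\widetilde{W}^{ae}_{\Phi,x}=\big(\mathrm{Aut}_\Phi\ltimes\prod_j\widehat{\Omega}_j\big)_{\vec x}\ltimes W^\circ_I$, as claimed. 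The only point requiring care --- more bookkeeping than genuine difficulty --- is the legitimacy of the iterated semidirect-product regrouping defining $G_0$ and the fact that $\mathrm{Aut}_\Phi$ preserves $\mathrm{Alc}_\Phi$ even when it moves a component to an isomorphic one; once these are in place, the irreducible proof applies word for word.
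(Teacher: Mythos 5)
Your proof is correct and follows exactly the route the paper intends: the paper simply states that the argument of Lemma~\ref{lemma:stabadd} carries over with \(\widehat{\Omega}\) replaced by \(\mathrm{Aut}_\Phi\ltimes\prod\widehat{\Omega}_j\), and your write-up supplies precisely the bookkeeping (the regrouping \(\widetilde{W}^{ae}_\Phi=G_0\ltimes\prod_j W^a_j\), the fundamental-domain property of the product alcove, and the preservation of \(\mathrm{Alc}_\Phi\) by \(\mathrm{Aut}_\Phi\)) that this replacement tacitly requires.
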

The proof of this lemma is the same as proof of Lemma \ref{lemma:stabadd}, with replacement of \(\widehat{\Omega}\)
to \(\mathrm{Aut}_\Phi\ltimes \prod \widehat{\Omega}_j\).


Let \(\Omega \in W\) denotes subgroup which maps fundamental alcove \(\mathrm{Alc}\) into parallel one. It is easy to see that
any element \(\pi \in \widehat{\Omega}\) has unique decomposition \(\pi= t_{-\omega_\pi}\bar{\pi}\), where \(\bar{\pi} \in \Omega\).
Hence the groups \(\widehat{\Omega}\) and  \(\Omega\) are isomorphic. For reducible system \(\Phi=\sqcup_{j=1}^l \Phi_j\) we denote 
\(\widetilde{\Omega}_\Phi=\mathrm{Aut}_\Phi\ltimes \prod \Omega_j\), this is subgroup of \(\widetilde{W}_\Phi\) which maps 
 alcove \(\mathrm{Alc}\) into parallel one.

\begin{corol}\label{corol:Humphereys}
	Let \(\Phi=\sqcup_{j=1}^l \Phi_j\) and \(\vec{x} \in \mathrm{Alc}\), then the group
	\(\{w\in \widetilde{W}_\Phi| w\vec{x}-\vec{x}\in P\}\) has the form
	\(\widetilde{\Omega}_{(\vec{x})}\ltimes W^\circ_I\) where \(I,W_I^\circ\) are as in Lemma \ref{lemma:stabadd} and \(\widetilde{\Omega}_{(\vec{x})}\) is the subgroup of \(\widetilde{\Omega}\).
\end{corol}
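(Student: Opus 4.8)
The plan is to transport the stabilizer description of Lemma~\ref{lemma:stabadd:reducible} across the ``linear part'' homomorphism. Besides the decomposition $\widetilde W^{ae}_\Phi=\big(\mathrm{Aut}_\Phi\ltimes\prod_j\widehat\Omega_j\big)\ltimes W^a_\Phi$ there is the semidirect decomposition $\widetilde W^{ae}_\Phi=P\rtimes\widetilde W_\Phi$ with $P=\prod_jP_j$ (the analogue for the reducible extended system of $W^{ae}=P\rtimes W$), the two being compatible because $\mathrm{Aut}_\Phi$ permutes isomorphic components and hence normalizes both $P$ and $\prod_jW_j$. Write $\rho\colon\widetilde W^{ae}_\Phi\to\widetilde W_\Phi$ for the quotient by the normal subgroup $P$. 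It is a group homomorphism, it is the identity on $\widetilde W_\Phi$, and on the subgroup $\widehat\Omega_\Phi:=\mathrm{Aut}_\Phi\ltimes\prod_j\widehat\Omega_j$ (which preserves $\mathrm{Alc}$) it restricts to the isomorphism $\pi=t_{-\omega_\pi}\bar\pi\mapsto\bar\pi$ onto $\widetilde\Omega_\Phi$ already recalled above for a single irreducible component.

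First I would check that $\rho$ restricts to an \emph{isomorphism} of the stabilizer $\widetilde W^{ae}_{\Phi,\vec x}$ onto $G:=\{w\in\widetilde W_\Phi\mid w\vec x-\vec x\in P\}$, the latter being a subgroup because $\widetilde W_\Phi$ preserves $P$. Injectivity on $\widetilde W^{ae}_{\Phi,\vec x}$ is immediate: $\ker\rho=P$, and the only translation fixing $\vec x$ is the identity. For the image, if $v$ fixes $\vec x$ write $v=t_\mu\,\rho(v)$ with $\mu\in P$; then $\vec x=v\vec x=\rho(v)\vec x+\mu$, so $\rho(v)\vec x-\vec x=-\mu\in P$. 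Conversely, given $w\in G$ put $\lambda:=w\vec x-\vec x\in P$; then $t_{-\lambda}w$ fixes $\vec x$ and $\rho(t_{-\lambda}w)=w$. Hence $\rho\big(\widetilde W^{ae}_{\Phi,\vec x}\big)=G$.

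It then remains only to apply $\rho$ to the factorization $\widetilde W^{ae}_{\Phi,\vec x}=(\widehat\Omega_\Phi)_{\vec x}\ltimes W^\circ_I$ of Lemma~\ref{lemma:stabadd:reducible}. Since $\rho$ is an isomorphism on the left-hand side, it carries this to a semidirect decomposition $G=\rho\big((\widehat\Omega_\Phi)_{\vec x}\big)\ltimes\rho(W^\circ_I)$; the first factor is by definition the subgroup $\widetilde\Omega_{(\vec x)}\subset\widetilde\Omega_\Phi$, and $\rho$ maps $W^\circ_I$ isomorphically onto its image, which we keep denoting $W^\circ_I$. This is exactly the asserted form, so the corollary follows.

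I do not expect a genuine obstacle here; the one point I would be careful to phrase precisely is the status of the right-hand factors. The group $W^\circ_I$ is defined in Lemma~\ref{lemma:stabadd} inside the \emph{affine} group $W^a_\Phi$, and when $\vec x$ lies on an affine facet $\alpha_{0_j}$ it contains the corresponding affine reflection, which is not a linear map; so what literally appears in the corollary is the isomorphic image $\rho(W^\circ_I)$ (the restriction of $\rho$ is injective there, as that subgroup fixes $\vec x$ and so meets $\ker\rho=P$ trivially). Likewise $\widetilde\Omega_{(\vec x)}$ must be read as $\rho$ applied to the stabilizer of $\vec x$ in $\widehat\Omega_\Phi$: it can be strictly smaller than $\{w\in\widetilde\Omega_\Phi\mid w\vec x-\vec x\in P\}$, because $\mathrm{Alc}$ may contain distinct points congruent modulo $P$ (already for $A_1$, at the two endpoints of the segment). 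With these conventions the statement is just Humphreys' alcove-stabilizer description (Lemmas~\ref{lemma:stabadd} and \ref{lemma:stabadd:reducible}) transported through the isomorphism $\widehat\Omega_\Phi\cong\widetilde\Omega_\Phi$.
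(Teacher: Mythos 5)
Your argument is correct and is essentially the paper's own proof: both rest on the observation that $w\vec{x}-\vec{x}\in P$ holds precisely when some $t_\lambda w$ with $\lambda\in P$ lies in the stabilizer of $\vec{x}$, after which Lemma~\ref{lemma:stabadd:reducible} is applied; you merely package this as the quotient-by-$P$ isomorphism $\rho$ restricted to the stabilizer. Your closing caveats (that $W^\circ_I$ and $\widetilde\Omega_{(\vec x)}$ must be read as images under the linear-part map, and that $\widetilde\Omega_{(\vec x)}$ need not equal $\{w\in\widetilde\Omega_\Phi\mid w\vec x-\vec x\in P\}$) are accurate and match the remark the paper makes immediately after the corollary.
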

\begin{proof}
	 The condition \(w\vec{x}-\vec{x}\in P\) is equivalent to existence of \(\lambda \in P\) such that \(t_{\lambda} w\) belong
	 the the stabilizer of \(\vec{x}\) in \(\widetilde{W}^{ae}\). Hence the statement follows from Lemma \ref{lemma:stabadd:reducible}. 
\end{proof}

	The logic here is similar to \cite[Sec.~2.10]{Hum95}. Note that \(\widetilde{\Omega}_{(\vec{x})}\) is not stabilizer of \(\vec{x}\) in \(\widetilde{\Omega}\) but rather finite part of stabilizer in \(\mathrm{Aut}_\Phi \ltimes \prod_{j=1}^l \widehat{\Omega}_{j}\).

\paragraph{Multiplicative case.} Recall now that our parameters \(\vec{a}=(a_0,\dots,a_r)\) are multiplicative root variables.
For any root \(\alpha=\sum_{i=0}^r k_i \alpha_i\)
we assign root variable \(\prod_{i=0}^r a_i^{k_i}\).
	
\begin{thm}\label{thm:stab}
	Let \(|q|\neq 1\), $\vec{a}\in \mathcal{A}$. Then there exist \(I \subset \Delta^a\) and \(g \in W^{ae}\) such that for stabilizer of
	\(\vec{a}''=g (\vec{a})\) we have 
	\begin{equation}\label{eq:Stab:inclusion}
		W^{ae}_{\vec{a}''} = \Omega_{[\vec{a}]} \ltimes \prod\nolimits_{j=1}^l W^o_{j}.
	\end{equation}
	Here \(\Phi_j\), \(j=1,\dots,l\) are finite root systems corresponding to connected components of the subgraph whose vertex set is \(I\)
	in the graph of affine Dynkin diagram \(\Phi^a\), \(W^o_{j}\) is group generated by reflections by roots of \(\Phi_j\) such that
	corresponding multiplicative root variables equal to 1. The  \(\Omega_{[\vec{a}]}\) is a certain subgroup in
	\(\widehat{\Omega}_I\ltimes \prod\nolimits_{j=1}^l \Omega_{\Phi_j}\), where \(\widehat{\Omega}_I\) is a subgroup of
	\(\widehat{\Omega}\) preserving \(I\).
\end{thm}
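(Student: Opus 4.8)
The plan is to reduce the multiplicative statement to the additive Corollary \ref{corol:Humphereys} by taking a logarithm, and then to handle the reflection subgroup carefully using Corollary \ref{corol:reflection}. First I would pass from the torus $\mathcal{A}\subset(\mathbb{C}^*)^{r+1}$ to its cocharacter/character picture: fix a coordinate $q$ with $|q|\neq 1$, write each root variable as $a_i = q^{x_i}\cdot(\text{root of unity or bounded phase factor})$ in an appropriate sense; more precisely, use the fact that $|q|\neq 1$ to split $(\mathbb{C}^*)^{r+1}$ as (real part in the $\log|q|$ direction) $\times$ (compact part). The Weyl group $W^{ae}$ acts on the noncompact part through its action on $V_\Phi$ exactly by the additive formula $t_\lambda: x_i\mapsto x_i+(\lambda,\alpha_i)$, $s_i(x_j)=x_j-C_{ij}x_i$, because the imaginary root variable $q=\prod a_i^{\mathfrak{m}_i}$ is fixed by all of $W^{ae}$. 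So the image $\vec{x}\in V_\Phi$ of $\vec{a}$ under $\log|\cdot|$ (normalized by $\log|q|$) is a well-defined point, and after applying a suitable $g\in W^a$ we may assume $\vec{x}\in\mathrm{Alc}$.

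Next, let $I\subset\Delta^a$ be the set of facets of $\mathrm{Alc}$ containing $\vec{x}$, i.e. the set of $i$ with the $i$-th defining inequality of $\mathrm{Alc}$ tight at $\vec{x}$ — equivalently $|a_i''|=1$ for $i\in I$ and the other root-variable absolute values are $\neq 1$. Any $w\in W^{ae}_{\vec{a}''}$ fixes $\vec{a}''$, hence in particular fixes $|\vec{a}''|$, hence fixes $\vec{x}$; conversely the subgroup of $W^{ae}$ fixing $\vec{x}$ is, by Lemma \ref{lemma:stabadd} (applied with $\Phi$ the finite root system underlying $\Phi^a$, and noting $\widehat{\Omega}_{\vec{x}}$ becomes via $\pi=t_{-\omega_\pi}\bar\pi$ a subgroup of $\Omega$), of the form $\Omega_{[\vec{a}]}\ltimes W^\circ_I$ where $W^\circ_I$ is the full parabolic generated by all $s_i$, $i\in I$, and $\Omega_{[\vec{a}]}\subset\widehat{\Omega}_I$. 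Thus $W^{ae}_{\vec{a}''}\subset\Omega_{[\vec{a}]}\ltimes W^\circ_I$. The remaining content is the reverse: to pin down exactly which elements of the parabolic $W^\circ_I$ actually fix $\vec{a}''$ (not merely its absolute value), and to show this is the reflection subgroup $\prod_j W^o_j$ generated by those reflections $s_\alpha$, $\alpha\in\Phi_j$, with $\prod a_i^{k_i}=1$.

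For that last point I would argue component by component. Fix a component $\Phi_j$; its simple roots lie in $I$, so $|a_i''|=1$ for all those $i$, i.e. the restriction of $\vec{a}''$ to $\Phi_j$ lives on the compact torus, on which the finite Weyl group $W_j$ acts linearly (additively, writing $a_i''=e^{2\pi i \theta_i}$ with $\theta\in(\mathbb{R}/\mathbb{Z})\otimes(\text{weight space})$). The stabilizer of a point in a torus under a finite Weyl group is generated by the reflections fixing it — this is the classical fact (Steinberg) that centralizers of semisimple elements in simply connected groups are connected, equivalently that isotropy groups for the $W$-action on $T$ are reflection subgroups; one can also prove it directly via the affine-Weyl-group argument from \cite[Th. 4.8]{Hum90} applied in each residue. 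Hence the part of $W_j$ fixing $\vec{a}''$ is $W^o_j$ as claimed, and the component group (the obstruction coming from $\widehat{\Omega}_j$, i.e. the part of $\Omega_{\Phi_j}$ that is realized) gets absorbed into $\Omega_{[\vec{a}]}$. Assembling over $j$ and combining with the $\widehat{\Omega}_I$-part gives \eqref{eq:Stab:inclusion}.

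The main obstacle I expect is precisely the stabilizer-in-a-torus step: unlike the alcove statement in $V_\Phi$, the compact torus $T=P^\vee\otimes(\mathbb{R}/\mathbb{Z})$ is not simply connected, so a point can be fixed by an element of $W_j$ that is \emph{not} a product of reflections fixing it — these are exactly the elements that contribute to $\Omega_{[\vec{a}]}$ rather than to $\prod_j W^o_j$, and the care lies in showing (i) that the reflection part is always exactly $\prod_j W^o_j$ and (ii) that whatever extra symmetry survives is conjugate into $\widehat{\Omega}_I\ltimes\prod_j\Omega_{\Phi_j}$, for which the translation $g\in W^{ae}$ (chosen to move $\vec{x}$ into $\mathrm{Alc}$, and possibly adjusted further to normalize the compact coordinates modulo the lattice) must be selected judiciously. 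This is where Corollary \ref{corol:Humphereys} does the real work: applied to the point $\vec{x}\in\mathrm{Alc}$ together with the decomposition of $I$ into components, it already produces the group $\widetilde\Omega_{(\vec{x})}\ltimes W^\circ_I$ in the correct shape; one then intersects with the condition of fixing the compact coordinates as well, which only shrinks $W^\circ_I$ to $\prod_j W^o_j$ and $\widetilde\Omega_{(\vec{x})}$ to $\Omega_{[\vec{a}]}$.
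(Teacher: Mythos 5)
Your proposal is correct and follows essentially the same two-stage route as the paper: first the logarithmic map to \(V_\Phi\) and Lemma \ref{lemma:stabadd} to fix the absolute values and identify \(I\), then the phases \(a_i''=e^{2\pi \ri y_i}\) on the (non--simply-connected) torus of the reducible system \(\sqcup_j\Phi_j\), handled by Corollary \ref{corol:Humphereys}, with the non-reflection part of the torus stabilizer absorbed into \(\Omega_{[\vec{a}]}\) exactly as you describe. The only slip is cosmetic: Corollary \ref{corol:Humphereys} is applied to the phase vector \(\vec{y}''\) (which is defined only modulo \(\oplus_j P_j\)), not to \(\vec{x}\in\mathrm{Alc}\), whose genuine stabilizer is already supplied by Lemma \ref{lemma:stabadd}.
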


Through the paper, we assume that \(q\) is generic so the condition \(|q|\neq 1\) holds. It is easy to see that if Theorem \ref{thm:stab}
holds fo generic \(q\) then it holds for any \(q\) which is not a root of unity.

\begin{proof}
	There is a map from \(\mathcal{A} \rightarrow V_{\Phi}\) given  by the formula
	\(x_i=(\mathrm{Re} \log a_i)/(\mathrm{Re} \log q)\), \(i=1,\dots,r\).
	This map is equivariant with respect to the action of \(W^{ae}\). Indeed, it is easy to check for
	reflections in \(W\) and for translation in \(P\). 
	
	Now fix \(\vec{a}\in \mathcal{A}\), let \(\vec{x} \in V_{\Phi}\) denotes its image. Due to equivariance we have inclusion
	of stabilizers \(W^{ae}_{\vec{a}}\subset W^{ae}_{\vec{x}}\).
	There is \(g'\in W^{a}\) such that \(\vec{x}'=g' (\vec{x})\in \mathrm{Alc}\), let \(\vec{a}'=g' (\vec{a})\).
	As in  Lemma \ref{lemma:stabadd} we introduce \(I\) and we have 
	\begin{equation}
		W^{ae}_{\vec{x}'}=\widehat{\Omega}_{\vec{x}'} \ltimes W^\circ_I =
		\widehat{\Omega}_{\vec{x}'} \ltimes \prod\nolimits_{j=1}^l W_{\Phi_j}.  
	\end{equation}
	Since \(I\) cannot coincide with \(\Delta^a\) the root systems \(\Phi_j\) are of finite type.
	Clearly \( \widehat{\Omega}_{\vec{x}'} \subset \widehat{\Omega}_{I}\).
	
	It follows from definition of \(I\) that \(|a_i'|=1\) iff  \(\alpha_i \in I\). Denote \(a_i'=\exp(2 \pi \ri y_i)\) for real
	\(y_i\). One can consider collection \(y_i\) as an element of \(V_{\sqcup \Phi_j}=\oplus_{j=1}^l V_{\Phi_j}\), note that
	\(\vec{y}\) is defined up to the integer shift \(\vec{y}\mapsto \vec{y}+\lambda\) where \(\lambda \in 	 \oplus_{j=1}^l P_j\). 
	Then there is \(g''\in \prod_{j=1}^l W_{\Phi_j}\) and  \(\lambda \in \oplus_{j=1}^l Q_j\) such that
	\(\vec{y}''=g'' (\vec{y})+\lambda\) belongs to the product of fundamental alcoves
	\(\prod_j \mathrm{Alc}_j\). If some \(w\in W^{ae}_{\vec{x}'}\) preserves all \(a_i''\) for \(\alpha_i\in I\),
	then \(w(\vec{y}'')-\vec{y}''\in P_{\sqcup \Phi_j}\). 
	
	The group \(\widehat{\Omega}_{\vec{x}'}\) acts on \(\sqcup \Phi_j\) as a subgroup of \(\mathrm{Aut}_{\sqcup \Phi_j}\). Hence
	by Corollary \ref{corol:Humphereys}, the stabilizer of set of all \(a_i''\) for \(\alpha_i\in I\) has the form 
	\begin{equation}
		\Big(\widehat{\Omega}_{\vec{x}'} \ltimes \prod\nolimits_{j=1}^l \Omega_{\Phi_j} \Big)_{(\vec{y}'')} \ltimes W^\circ
	\end{equation}
	where \(W^\circ\) is generated by reflections through hyperplanes containing \(\vec{y}''\). 
		
%
%
%
	
	Condition that hyperplane contains \(\vec{y}''\) is equivalent to the condition that corresponding multiplicative root variable
	is equal to 1. Hence the defined above group \(W^\circ\)  preserves the whole \(\vec{a}''\). Therefore we get the
	formula~\eqref{eq:Stab:inclusion}, where \(\Omega_{[\vec{a}]}\) is stabilizer of \(\vec{a}''\) in
	\(\Big(\widehat{\Omega}_{\vec{x}'} \ltimes \prod\nolimits_{j=1}^l \Omega_{\Phi_j} \Big)_{(\vec{y}'')}\).
\end{proof}

Due to Lemma \ref{lemma:reflection} subgroup \(W^\circ\) acts trivially on \(\mathcal{X}_{\vec{a}}\).
Hence, the folding transformations (and folding subgroups) could come only from the group \(\widehat{\Omega}\ltimes \prod\nolimits \Omega_{\Phi_j}\).\footnote{It is not true to say that any folding subgroup can be conjugated to the subgroup of \(\widehat{\Omega}\ltimes \prod\nolimits \Omega_{\Phi_j}\). But this is true up to factors from \(W^\circ\), i.e.
image of any folding subgroup to the automorphism group of \(\mathcal{X}_{\vec{a}}\) can be conjugated to the image of subgroup in \(\widehat{\Omega}\ltimes \prod\nolimits \Omega_{\Phi_j}\).}
	
\subsection{Reduction to type A }
If one of the systems \(\Phi_j\) is of $A$-type, say \(A_n\) then the group \(\Omega\) is cyclic group of order \(n+1\), \(\Omega_{A_n}\simeq C_{n+1}\). The generator of this cyclic group is 		$\bar{\pi}_{n+1}=s_{n} s_{n{-}1}\ldots s_2 s_1$, where we used standard consecutive numbering of the simple roots.

It appears that it is sufficient to consider the case that all root systems \(\Phi_j\) are of type \(A\). 
\begin{lemma}	\label{lemma:detoa}
	For  finite irreducible simply laced root system \(\Phi\) the group \(\Omega_{\Phi}\in W_{\Phi}\) can be conjugated into
	the subgroup of \(\prod_{j=1}^{l'} \Omega_{\Phi'_{j}}\), where \(\Phi'_{j}\) are \(A\)-type roots systems corresponding to connected components
	\(\Delta'_{j}\) in Dynkin diagram \(\Delta\). 
\end{lemma}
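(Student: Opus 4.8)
The plan is to reduce at once to the interesting types and then proceed case by case. Since \(\Phi\) is irreducible and simply laced it is of type \(A_n\), \(D_n\) \((n\ge 4)\), or \(E_6,E_7,E_8\). If \(\Phi=A_n\) then \(\Delta\) is already connected of type \(A\) and one takes \(\Delta'=\Delta\); if \(\Phi=E_8\) then \(\Omega_\Phi\cong P/Q\) is trivial; in both cases there is nothing to prove. So it remains to handle \(\Phi=D_n\) and \(\Phi=E_6,E_7\). Since \(\Omega_\Phi\cong\widehat{\Omega}_\Phi\cong P/Q\) is abelian and generated by the classes of the minuscule fundamental weights, and conjugation is a homomorphism, it is enough to conjugate one generator at a time: for each minuscule node \(i\) the generator is \(\bar\pi_i=t_{\omega_i}\pi_i\in\Omega_\Phi\), it has order equal to the order of \(\omega_i\) in \(P/Q\), and \(\bar\pi_i(\mathrm{Alc})=\mathrm{Alc}+\omega_i\). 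One must of course use the same subdiagram \(\Delta'\) for all generators, which matters only for \(D_n\), where \(\Omega_\Phi\) is not cyclic.

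For each remaining type I would write \(\bar\pi_i\) explicitly in a standard coordinate model — \(D_n\) as even signed permutations of \(\mathbb{R}^n\); \(E_6,E_7\) via their usual realizations — and read off its characteristic polynomial, equivalently its conjugacy class (Carter/admissible diagram) in \(W_\Phi\). On the other side, for an \(A_m\)-subdiagram \(\Delta'_j\) of the affine diagram \(\Delta^a\) of \(\Phi\) (one is allowed to include the affine node \(\alpha_0\), so that \(\Omega_{A_m}\) may involve the reflection in the highest-root wall) the generator of \(\Omega_{A_m}\) is the Coxeter element \(\bar\pi_{m+1}=s_m\cdots s_1\) of the corresponding rank-\(m\) subsystem, whose eigenvalues are the nontrivial \((m{+}1)\)-st roots of unity on its span and \(1\) on the complement; products of these over the components, together with their powers, sweep out a cyclic subgroup of \(\prod_j\Omega_{\Phi'_j}\). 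The matches to aim for are: for \(E_6\) (where \(\Omega\cong C_3\)), take \(\Delta'=3A_2\), obtained by deleting the mark-\(3\) node from \(\widetilde{E}_6\), with target the product of the three \(A_2\)-Coxeter elements (order \(3\), characteristic polynomial \((t^2{+}t{+}1)^3\)); for \(D_n\) with \(n\) odd (where \(\Omega\cong C_4\)) one needs a component carrying an order-\(4\) element, so one uses the \(A_3\) formed by the branch node and the two fork ends together with a complementary \(A\)-type subdiagram of \(\widetilde{D}_n\); for \(D_n\) with \(n\) even (where \(\Omega\cong C_2\times C_2\)) every generator is a product of orthogonal reflections and is carried by an \(kA_1\)-subdiagram with \(k\le\tfrac{n}{2}\) (for \(D_4\) the four leaves of \(\widetilde{D}_4\) are pairwise non-adjacent, so no conjugation is even needed). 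In each case conjugacy of the two sides is confirmed by matching the characteristic polynomial together with the refined conjugacy invariant, and the conjugating element is then exhibited directly from the coordinate model.

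The step I expect to cost the most work is making the \(D_n\) analysis uniform in \(n\) and in the parity. One has to check that the \(A\)-type subdiagrams required above genuinely occur in \(\Delta^a\) — which holds because \(\widetilde{D}_n\) has a long type-\(A\) tail and two forks, and because adjoining \(\alpha_0\) enlarges the available type-\(A\) pieces — that the three nontrivial elements of \(\Omega_{D_n}\) for \(n\) even are the expected products of orthogonal reflections (in particular none of them is \(-1\), since \(-1\) sends \(\mathrm{Alc}\) to \(-\mathrm{Alc}\), which is not a translate of \(\mathrm{Alc}\)), and that isospectral Weyl-group elements of these specific shapes are actually conjugate. Conceptually the tightest point is that \(\Omega_{D_n}\cong C_4\) for \(n\) odd really does force the use of an \(A_3\)-type component — this is exactly why type \(A\) subsystems suffice but cannot be shrunk further. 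All of this is routine but somewhat lengthy and is best organized as a table with one row per type; the only nontrivial ingredients are the description of \(\widehat\Omega\) via minuscule weights and the standard fact, already used in Lemma~\ref{lemma:stabadd} and Corollary~\ref{corol:Humphereys}, that conjugacy of elements of a Weyl group is detected by their action on the appropriate Euclidean space.
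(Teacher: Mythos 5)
There is a genuine gap, and it is concentrated in the exceptional cases. Your target for \(E_6\) is wrong: the generator of \(\Omega_{E_6}\cong C_3\) is \emph{not} in the class \(3A_2\). You can see this from your own verification criterion: the affine diagram automorphism of \(E_6^{(1)}\) fixes the central node and permutes the two rings of three nodes cyclically, so its linear part has eigenvalues \(1,1,\zeta,\zeta,\zeta^{-1},\zeta^{-1}\) on the reflection representation, i.e.\ characteristic polynomial \((t-1)^2(t^2+t+1)^2\) and a two\-dimensional fixed space — this is the class \(2A_2\), realized by \(s_1s_2\cdot s_4s_5\) on a \(2A_2\) subdiagram of the \emph{finite} \(E_6\) diagram (which is exactly what the paper exhibits, via \eqref{Omega_E6}). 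An element of class \(3A_2\) has no nonzero fixed vectors, so the characteristic polynomials do not match. Relatedly, your decision to allow the affine node \(\alpha_0\) (which you need in order to fit \(3A_2\) into \(E_6\)) proves a different statement from the lemma: the lemma requires the components \(\Delta'_j\) to lie in the finite diagram \(\Delta\), and this is essential for the way the lemma is used afterwards — each \(\Phi_j\) there is already a proper subdiagram of the ambient affine diagram, and the replacement must stay inside it. You also never state a target for \(E_7\) (the answer is \(3A_1\subset\Delta\), as the flip of the affine \(E_7\) diagram has linear part with characteristic polynomial \((t-1)^4(t+1)^3\); see \eqref{Omega_E7}).

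Two further points need repair even where your class identifications are right. First, for \(D_{2k}\), where \(\Omega\cong C_2\times C_2\), conjugating each generator separately into \(\prod_j\Omega_{\Phi'_j}\) does not conjugate the \emph{group} \(\Omega\) into that subgroup; you need a single conjugating element \(g\) working for all three nontrivial elements simultaneously (the paper's proof produces one \(g\) with \(w_1=g^{-1}w_1'g\) and \(w_2^{\pm}=g^{-1}w_2'g\)). Using "the same subdiagram" is not the same thing. Second, matching characteristic polynomials does not determine conjugacy in \(W(D_n)\) (the paper explicitly recalls the splitting of classes with all cycles even and positive) nor in \(W(E_7)\), so "the refined conjugacy invariant" is precisely where the work lies; the paper avoids this by writing the conjugators explicitly in the signed-permutation model for \(D_n\) and as explicit words for \(E_6,E_7\). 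Your overall strategy (reduce to \(D_n,E_6,E_7\) and argue case by case in coordinate models) is the same as the paper's, but as written the \(E_6\) case fails and the \(D_{2k}\) case is incomplete.
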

	
\begin{proof}
	We will use the following numbering of roots in root diagrams of \(D_n, E_6, E_7\)
	\begin{center}
		\begin{tikzpicture}[elt/.style={circle,draw=black!100,thick, inner sep=0pt,minimum size=2mm},scale=0.75]
			\path 	(-3,0) 	node 	(a1) [elt] {}
			(-2,0) 	node 	(a2) [elt] {}
			(-1,0) 	node 	(a3)  {}
			( 0,0) node  	(a4)  {}
			( 1,0) 	node  	(a5) [elt] {}
			( 1+cos{60},-sin{60}) 	node 	(a6) [elt] {}
			( 1+cos{60},sin{60})	node 	(a7) [elt] {};
			\draw [black,line width=1pt] (a1) -- (a2)  (a5) --(a6) (a5) -- (a7);
			\node at ($(a1.south west) + (-0,-0.3)$) 	{$\alpha_{1}$};
			\node at ($(a2.south west) + (0,-0.3)$)  {$\alpha_{2}$};
			\node at ($(a3) + (0,0)$)  {$\dots$};
			\node at ($(a4) + (0,0)$)  {$\dots$};
			\node at ($(a5.east) + (0.7,0)$)  {$\alpha_{n-2}$};
			\node at ($(a6.east) + (0.7,-0.1)$)  {$\alpha_{n-1}$};	
			\node at ($(a7.east) + (0.5,+0)$)  {$\alpha_{n}$};		
		\end{tikzpicture}
		\quad
		\begin{tikzpicture}[elt/.style={circle,draw=black!100,thick, inner sep=0pt,minimum size=2mm},scale=0.75]
			\path 	(-2,0) 	node 	(a1) [elt] {}
			(-1,0) 	node 	(a2) [elt] {}
			( 0,0) node  	(a3) [elt] {}
			( 1,0) 	node  	(a4) [elt] {}
			( 2,0) 	node 	(a5) [elt] {}
			( 0,1)	node 	(a6) [elt] {};
			\draw [black,line width=1pt ] (a1) -- (a2) -- (a3) -- (a4) -- (a5)   (a3) -- (a6) 		;
			\node at ($(a1.south) + (0,-0.3)$) 	{$\alpha_{1}$};
			\node at ($(a2.south) + (0,-0.3)$)  {$\alpha_{2}$};
			\node at ($(a3.south) + (0,-0.3)$)  {$\alpha_{3}$};
			\node at ($(a4.south) + (0,-0.3)$)  {$\alpha_{4}$};	
			\node at ($(a5.south) + (0,-0.3)$)  {$\alpha_{5}$};		
			\node at ($(a6.west) + (-0.3,0)$) 	{$\alpha_{6}$};	
		\end{tikzpicture}
		\quad
		\begin{tikzpicture}[elt/.style={circle,draw=black!100,thick, inner sep=0pt,minimum size=2mm},scale=0.75]
			\path 	(-3,0) 	node 	(a1) [elt] {}
			(-2,0) 	node 	(a2) [elt] {}
			( -1,0) node  	(a3) [elt] {}
			( 0,0) 	node  	(a4) [elt] {}
			( 1,0) 	node 	(a5) [elt] {}
			( 2,0)	node 	(a6) [elt] {}
			( 0,1)	node 	(a0) [elt] {};
			\draw [black,line width=1pt ] (a1) -- (a2) -- (a3) -- (a4) -- (a5) --  (a6) (a4) -- (a0);
			\node at ($(a1.south) + (0,-0.2)$) 	{$\alpha_{1}$};
			\node at ($(a2.south) + (0,-0.2)$)  {$\alpha_{2}$};
			\node at ($(a3.south) + (0,-0.2)$)  {$\alpha_{3}$};
			\node at ($(a4.south) + (0,-0.2)$)  {$\alpha_{4}$};	
			\node at ($(a5.south) + (0,-0.2)$)  {$\alpha_{5}$};		
			\node at ($(a6.south) + (0,-0.2)$) 	{$\alpha_{6}$};	
			\node at ($(a0.north) + (0,0.2)$) 	{$\alpha_{0}$};		
		\end{tikzpicture}	
	\end{center}

	For \(\Phi=E_8\) the group \(\Omega\) is trivial and there is nothing to prove. For \(E_6\) and \(E_7\) we have the formulas  
	\begin{align}
		&E_6:\Omega=\{e,s_{123543263} s_{1245} s_{123543263}^{-1}, 	s_{123543263} s_{2154} s_{123543263}^{-1}\}\simeq C_3 \label{Omega_E6}\\
		&E_7:\Omega=\{e,s_{123405645342} s_{031} 	s_{123405645342}^{-1}\}\simeq C_2 \label{Omega_E7}
	\end{align}

	For the \(D\)-type recall that Weyl group \(W_{D_n}\) is isomorphic to the group of linear transformations of the vector
	space with basis \(\epsilon_1,\dots,\epsilon_n\) which permutes basis vectors and change signs of even number of them,
	simple roots are expressed in terms of such basis as $\alpha_i=e_i-e_{i+1}, i=1\ldots n-1$, $\alpha_n=e_{n-1}+e_n$.
	Each element of \(W_{D_n}\) can be decomposed into product of cycles \((i_1,\dots,i_k)\), each of them is either positive or
	negative. Here cycle of length \(k\) is said to be positive if its \(k\)-th power is identity and negative if its \(k\)-th power
	is minus identity. See e.g. \cite[Sec.~7]{Car72} for details. Two elements \(w_1,w_2\in W_{D_n}\) are conjugated if they have
	the same cyclic type (including signs) except the case of all cycles are even and positive, in this case there are two conjugacy
	classes, see \cite[Prop 25]{Car72}.

	Let \(n=2k+1\). The group \(\Omega_{D_{2k+1}}\) is isomorphic to \(C_4\) and generated by the element 
	\begin{equation}
		w=\begin{pmatrix}
				\epsilon_1 & \epsilon_2 & \epsilon_3 &\dots& \epsilon_{k+1}  & \dots & \epsilon_{2k} & \epsilon_{2k{+}1}\\ 
				\epsilon_{2k{+}1} & -\epsilon_{2k} & -\epsilon_{2k{-}1} &\dots & -\epsilon_{k+1} &\dots  & -\epsilon_{2} & -\epsilon_{1}
			\end{pmatrix}.
	\end{equation} 
	We have \(w=g^{-1}w' g\), where 
	\begin{align}
		w'&= 
		\begin{pmatrix}
			\epsilon_1 & \epsilon_2 & \ldots & \epsilon_{2k{-}3} & \epsilon_{2k{-}2} & \epsilon_{2k{-}1}  & \epsilon_{2k} & \epsilon_{2k{+}1}\\ 
			\epsilon_{2} & \epsilon_{1} & \ldots & \epsilon_{2k{-}2} & \epsilon_{2k{-}3} & -\epsilon_{2k{+}1}  & -\epsilon_{2k} & \epsilon_{2k{-}1}
		\end{pmatrix},\\
	 g&=\begin{pmatrix}
	 	\epsilon_1 & \epsilon_2 &  \epsilon_3 &  \dots & \epsilon_k & \epsilon_{k{+}1} &  \epsilon_{k+2} & \dots & \epsilon_{2k{-}1} & \epsilon_{2k} & \epsilon_{2k{+}1} \\ 
	 	\epsilon_{2k-1} & \epsilon_1 & \epsilon_3 &  \dots & \epsilon_{2k-3} & (-1)^k \epsilon_{2k} & -\epsilon_{2k-2} & \dots & -\epsilon_4 & -\epsilon_2 & -\epsilon_{2k{+}1},
	 \end{pmatrix}
	\end{align} 
	Note that \(w'=s_{13\dots 2k{-}3 \;2k\; 2k{-}1\; 2k{+}1}
	\in \prod\nolimits \Omega_{\Phi'_{j}}\) for the subset \(I'=\{1,3,\dots,2k{-}3,2k{-}1,2k,2k{+}1\}\).
	The corresponding root system is \((k-1)A_1+A_3\).
	
	Let \(n=2k\). The group \(\Omega_{D_{2k}}\) is isomorphic to \(C_2\times C_2\) and has three nontrivial elements \(w_1, w_2^+, w_2^-\)
	\begin{equation}
		w_1=
		\begin{pmatrix}
			\epsilon_1 & \epsilon_2 & \dots & \epsilon_{2k{-}1} & \epsilon_{2k}\\ 
			-\epsilon_{1} & \epsilon_{2} & \dots & \epsilon_{2k{-}1} & -\epsilon_{2k}
		\end{pmatrix},\quad
		w_2^\pm=
		\begin{pmatrix}
			\epsilon_1 & \epsilon_2 & \dots & \epsilon_{2k{-}1} & \epsilon_{2k}\\ 
			\pm \epsilon_{2k} & -\epsilon_{2k{-}1} & \dots & -\epsilon_{2} & 	
			\pm \epsilon_{1}
		\end{pmatrix}.
	\end{equation} 
	Let us define \(g, w_1',w_2' \) by the formulas 
	\begin{align}
		w_1'= \begin{pmatrix}
			\epsilon_1 &  \dots & \epsilon_{2k{-}2} & \epsilon_{2k{-}1} & \epsilon_{2k}\\ 
			\epsilon_{1} &  \dots & \epsilon_{2k{-}2} & -\epsilon_{2k{-}1} & -\epsilon_{2k}
		\end{pmatrix},\quad
		w_2'= \begin{pmatrix}
			\epsilon_1 & \epsilon_2 &\dots & \epsilon_{2k{-}3} & \epsilon_{2k{-}2} & \epsilon_{2k{-}1} & \epsilon_{2k}\\ 
			\epsilon_{2} & \epsilon_1 & \dots & \epsilon_{2k{-}2} & \epsilon_{2k{-}3} & \epsilon_{2k} & \epsilon_{2k{-}1}
		\end{pmatrix},
		\\
		g=\begin{pmatrix}
			\epsilon_1 & \epsilon_2 &  \dots &  \epsilon_{k-1} &  \epsilon_{k}  &\dots  & \epsilon_{2k{-}2} & \epsilon_{2k{-}1} & \epsilon_{2k} \\ 
			\epsilon_{2k{-}1} & \epsilon_{2k{-}3} &  \dots  & \epsilon_{1} & -\epsilon_2 &\dots  & -\epsilon_{2k{-}4} & -\epsilon_{2k{-}2} & (-1)^{k} \epsilon_{2k}
		\end{pmatrix},
	\end{align} 
 Then we have \(w_1=g^{-1}w_1' g\), and \(w_2^+=g^{-1}w_2' g\) for even \(k\), \(w_2^-=g^{-1}w_2' g\) for odd \(k\). Note that \(w_1'=s_{2k{-}1\, 2k}\) and \(w_2'=s_{13\dots 2k{-1}}\). So \(w_1',w_2' \in \prod\nolimits \Omega_{\Phi'_{j}}\)
 for the subset \(I'=\{1,3,\dots,2k{-}3,2k{-}1,2k\}\), the corresponding root system is \((k+1)A_1\).
\end{proof}

	As was shown in previous subsection all folding subgroups come from subgroups of the groups of the form 
	\(\widehat{\Omega}_I\ltimes \prod\nolimits_{j=1}^l \Omega_{\Phi_j}\) corresponding to certain \(I\subset \Delta^a\). For any \(\Phi_j\) which is of type \(D_n, E_6, E_7\) we can pick \(g_j \in W_j\) as in proof of Lemma~\ref{lemma:detoa}. The following Lemma says that the choice of \(g_j\) is actually canonical 
	
	\begin{lemma}
		Let \(\Phi\) be of type \(D_n, E_6, E_7\), \(g \in W_\Phi\) as in proof of Lemma~\ref{lemma:detoa}. 
		Let \(\sigma \in \mathrm{Aut}_\Phi\). Then \(\sigma g=g \sigma\).
	\end{lemma}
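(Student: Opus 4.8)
The plan is to verify the commutation relation $\sigma g = g\sigma$ separately for each of the three types $D_n$, $E_6$, $E_7$, exploiting the fact that in each case $\mathrm{Aut}_\Phi$ is a small finite group (generated by a single diagram automorphism of order $2$, except that $D_4$ would have order $6$ but does not occur here, and $E_7$, $E_8$ have trivial outer automorphism group so nothing is needed there — actually $E_7$ has trivial $\mathrm{Aut}_\Phi$, so the only genuinely nontrivial cases are $D_n$ and $E_6$). First I would reduce to checking the identity on generators of $\mathrm{Aut}_\Phi$, since $g$ is a fixed element and the relation $\sigma g = g\sigma$ for all $\sigma$ in a generated group follows from it holding on generators. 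For $E_6$, where $\mathrm{Aut}_{E_6}\simeq C_2$ is generated by the diagram flip $\tau$ exchanging $\alpha_1\leftrightarrow\alpha_5$ and $\alpha_2\leftrightarrow\alpha_4$ and fixing $\alpha_3,\alpha_6$, I would simply compute $\tau g \tau^{-1}$ using the explicit word $g = s_{123543263}$ from \eqref{Omega_E6}: applying $\tau$ to each letter gives $s_{543521241}$ (renaming indices under $\tau$), and one checks by a short Weyl-group computation, or by tracking the action on the root lattice, that this equals $g$. Equivalently, one observes that the subgroup $\Omega_{E_6}$ is $\tau$-invariant (indeed $\tau$ acts on $C_3$), and $g$ was chosen so that $g^{-1}\Omega_{E_6}g \subset \prod\Omega_{\Phi'_j}$ with the target canonically determined, which forces $g$ to intertwine the $\tau$-actions; I would make this precise.

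For the $D_n$ cases I would work in the standard orthogonal model with basis $\epsilon_1,\dots,\epsilon_n$, where $\mathrm{Aut}_{D_n}\simeq C_2$ (for $n\neq 4$) is generated by the diagram flip $\tau$ which, on the orthogonal model, acts by the sign change $\epsilon_n\mapsto -\epsilon_n$ (fixing $\epsilon_1,\dots,\epsilon_{n-1}$), i.e. it is the reflection that is an outer element of $W_{D_n}$. The key step is then a direct check, case $n=2k+1$ and case $n=2k$ separately, that the explicit permutation-with-signs $g$ written in the proof of Lemma~\ref{lemma:detoa} commutes with this sign change on $\epsilon_n$. Looking at the formulas for $g$: in both parity cases the last column sends $\epsilon_{2k+1}\mapsto -\epsilon_{2k+1}$ (for $n=2k+1$) or $\epsilon_{2k}\mapsto (-1)^k\epsilon_{2k}$ (for $n=2k$), and no other column involves $\epsilon_n$ on either side; hence conjugating $g$ by the sign-flip on $\epsilon_n$ leaves $g$ unchanged, because the sign flip commutes with any signed permutation that maps $\epsilon_n$ to $\pm\epsilon_n$ and maps the span of $\epsilon_1,\dots,\epsilon_{n-1}$ to itself. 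I would spell out this last sentence as the crux: writing $g = g_0 \cdot (\epsilon_n\mapsto c\,\epsilon_n)$ where $g_0$ preserves the hyperplane $\epsilon_n=0$ pointwise-setwise and $c=\pm1$, one has $\tau g \tau = \tau g_0 \tau \cdot \tau(\epsilon_n\mapsto c\epsilon_n)\tau = g_0 \cdot (\epsilon_n \mapsto c\epsilon_n) = g$ since $\tau$ fixes $g_0$ (as $g_0$ only moves $\epsilon_1,\dots,\epsilon_{n-1}$ among themselves, up to signs not involving $\epsilon_n$) and commutes with the diagonal map on $\epsilon_n$.

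The main obstacle I anticipate is purely bookkeeping: one must make sure the diagram automorphism $\tau$ of $D_n$ (resp. $E_6$) is being represented by the \emph{correct} geometric transformation — for $D_n$ it is the element of $O(n)\setminus SO(n)$ flipping the sign of a single coordinate, which does \emph{not} lie in $W_{D_n}$, and one must check that the formulas for $g$ in Lemma~\ref{lemma:detoa} are genuinely $\tau$-equivariant and not merely equivariant up to an inner automorphism. A clean way to avoid sign-chasing errors is the following structural argument, which I would present as an alternative: the element $g$ conjugates $\Omega_\Phi$ into $\prod_j\Omega_{\Phi'_j}$, and both $\Omega_\Phi$ and the target product are $\mathrm{Aut}_\Phi$-stable subgroups on which $\mathrm{Aut}_\Phi$ acts; moreover $g$ can be characterized (up to an element of the centralizer of $\Omega_\Phi$ inside $\prod_j\Omega_{\Phi'_j}$, which one checks is trivial) as the unique such conjugator compatible with the chosen numbering of the components $\Delta'_j$, and since $\mathrm{Aut}_\Phi$ preserves this numbering data, $\sigma g \sigma^{-1}$ satisfies the same characterization, whence $\sigma g\sigma^{-1}=g$. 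I would close by remarking that for $\Phi = E_7$ the statement is vacuous since $\mathrm{Aut}_{E_7}$ is trivial, and that the $D_4$ case does not arise in the subdiagrams $I$ relevant to the classification, so $\mathrm{Aut}_{D_n}$ is always $C_2$ in the situations we use.
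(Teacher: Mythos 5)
Your handling of $D_n$ with $n>4$, of $E_7$, and of $E_6$ coincides with the paper's proof: for $n>4$ the unique nontrivial diagram automorphism is the sign change $\epsilon_n\mapsto-\epsilon_n$, and since $g$ sends $\epsilon_n$ to $\pm\epsilon_n$ and permutes $\epsilon_1,\dots,\epsilon_{n-1}$ (up to signs) among themselves, the commutation is immediate; $E_7$ is vacuous; and $E_6$ is settled by a direct computation with the explicit word $g=s_{123543263}$. (Your transcription of $\tau(g)$ is off --- with $\tau\colon 1\mapsto5,\ 2\mapsto4,\ 3\mapsto3,\ 6\mapsto6$ one gets $\tau(g)=s_{543123463}$, not $s_{543521241}$ --- but verifying that this equals $g$ in $W_{E_6}$ is exactly the computation the paper invokes.)

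The genuine gap is your dismissal of $D_4$. You claim that ``the $D_4$ case does not arise in the subdiagrams $I$ relevant to the classification,'' but it does: for instance $I=\{\alpha_0,\alpha_1,\alpha_2,\alpha_3\}\subset D_5^{(1)}$ and $I=\{\alpha_0,\alpha_3,\alpha_4,\alpha_5\}\subset E_7^{(1)}$ are connected components of type $D_4$, and the relevant elements of $\widehat{\Omega}_I$ ($\pi^2$, resp.\ $\pi$) act on them by swapping two of the three outer nodes. The lemma is stated for all $D_n$, and the paper's own proof explicitly reserves $D_4$ (together with $E_6$) for a direct computation. The reason $D_4$ cannot be absorbed into your general argument is that $\mathrm{Aut}_{D_4}\simeq S_3$: only one of its involutions is the sign change $\epsilon_4\mapsto-\epsilon_4$, while the others exchange $\alpha_1$ with $\alpha_3$ or $\alpha_4$ and are not realized by a sign change on a single coordinate, so the ``$g$ preserves the line $\mathbb{R}\epsilon_n$ and the complementary hyperplane'' argument does not apply to them. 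Your alternative structural argument does not close this hole either: the set of elements conjugating $\Omega_\Phi$ into $\prod_j\Omega_{\Phi'_j}$ is a full coset of the normalizer of the image, and since $\prod_j\Omega_{\Phi'_j}$ is abelian it centralizes (hence normalizes) that image, so $g$ is far from being characterized uniquely by the conjugation property. You need to check $\sigma g=g\sigma$ for the extra $D_4$ automorphisms by an explicit computation, as the paper does.
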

	\begin{proof}
		For \(\Phi=D_n\), \(n>4\) the only automorphism of Dynkin is given by 
	\(
			\sigma= \begin{pmatrix}
	 \epsilon_1 & \epsilon_2 &\dots & \epsilon_{n{-}1} & \epsilon_{n}\\ 
		\epsilon_{1} & \epsilon_2 & \dots & \epsilon_{n-1} &  -\epsilon_{n}
	\end{pmatrix},
	\) 
	and clearly commutes with \(g\). For \(E_7\) there is no automorphisms of \(\Delta\). It remains to check cases \(D_4\) and \(E_6\), this is a direct computation.
	\end{proof}
	
	Due to lemma the element \(g= \prod g_j\) commutes with the group \(\widehat{\Omega}_I\). Then using \(g\)
	we conjugate the group \(\widehat{\Omega}_I\ltimes \prod\nolimits_{j=1}^l \Omega_{\Phi_j}\) to
	\(\widehat{\Omega}_I\ltimes \prod\nolimits_{j=1}^{l'} \Omega_{\Phi'_j}\), where subset \(I'\subset I\) such that
	all corresponding root systems \(\Phi'_{j}\) are of \(A\) type. 	
	
	Therefore all folding subgroups come from from subgroups of the groups of the form \(\widehat{\Omega}\ltimes \prod\nolimits_{j'=1}^{l'} \Omega_{\Phi'_{j}}\)  where all \(\Phi'_{j}\) are of \(A\) type.
%
%
%

\subsection{Selection rules} \label{ssec:selection}
We will see that to find disctinct (true) folding transformations, it is sufficient to consider rather special subsets \(I=\sqcup I_j\)
and special elements of the group \(\widehat{\Omega}\ltimes \prod\nolimits \Omega_{\Phi_j}, \, \Phi_j=A_{n_j}\). 
More precisely, corresponding selection rules will follow from the invariant set condition.
Note that such selection rules are only \emph{necessary}. As we mentioned before in order to show that all folding candidates will give us foldings and these foldings are not equivalent 
we will use geometry in Sections \ref{sec:geom_gs}, \ref{sec:geom_answ}.

It is instructive to consider at first two extreme cases.	

\paragraph{Case 1.} Let  $w=\pi\in \widehat{\Omega}$. In this case, we only have to check that the invariant set
\(\mathcal{A}_w\) is at least two dimensional (true folding condition). This condition means that the number of orbits of \(\pi\) on
the set \(\Delta^a\) (vertices of the affine Dynkin diagram) should be at least 2. 
We list folding candidates satisfying this condition in Section \ref{ssec:answers algebraic}.
We list folding candidates with one orbit in Appendix \ref{sec:ntrf}.

\paragraph{Case 2.} Let \(w\in \prod\nolimits_{j=1}^l \Omega_{\Phi_j}\). 
Recall that any element  of \(\Omega_{A_n}\) has the form \(\bar{\pi}_{n{+}1}^k\),  \(0\leq k \leq n\), where
$\bar{\pi}_{n{+}1}=s_{n} s_{n-1}\ldots s_2 s_1$.

\begin{lemma} \label{lemma:Omega_p1}  
	For $A_n$ let \(\gcd(k,n+1)=d\) and \(m=(n+1)/d-1\). Then \(\bar{\pi}_{n{+}1}^k\) is conjugated to \(\prod_{j=1}^{d} \bar{\pi}_{m+1}^{(j)}\),
	where \(\bar{\pi}_{m+1}^{(j)}=s_{(j-1)(m+1)+m}\cdot \dots \cdot s_{(j-1)(m+1)+1}\).
\end{lemma}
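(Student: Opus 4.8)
The plan is to work inside the Weyl group $W_{A_n}$ realized as the symmetric group $S_{n+1}$ acting on coordinate vectors $\epsilon_1,\dots,\epsilon_{n+1}$, where $s_i$ is the transposition $(i,i{+}1)$. First I would identify the element $\bar\pi_{n+1}=s_n s_{n-1}\cdots s_1$ explicitly as a permutation: reading the product right to left, one checks that it sends $\epsilon_1\mapsto\epsilon_2\mapsto\cdots\mapsto\epsilon_{n+1}\mapsto\epsilon_1$, i.e.\ it is the $(n{+}1)$-cycle $(1\,2\,\dots\,n{+}1)$ (up to the standard orientation convention). Hence $\bar\pi_{n+1}^k$ is the $k$-th power of an $(n{+}1)$-cycle.

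The second step is the standard fact about powers of a long cycle: if $\gcd(k,n+1)=d$, then $(1\,2\,\dots\,n{+}1)^k$ splits into $d$ disjoint cycles, each of length $(n+1)/d = m+1$. Concretely the orbits are the residue classes of $\{1,\dots,n+1\}$ modulo $d$. So as a permutation $\bar\pi_{n+1}^k$ has cycle type $((m{+}1)^d)$. Two elements of $S_{n+1}$ with the same cycle type are conjugate, and conjugation in $S_{n+1}$ is realized by an element of $W_{A_n}$; therefore $\bar\pi_{n+1}^k$ is conjugate in $W_{A_n}$ to any other permutation of cycle type $((m{+}1)^d)$, in particular to the "block-diagonal" element that cyclically permutes $\{1,\dots,m{+}1\}$, then $\{m{+}2,\dots,2m{+}2\}$, and so on, in $d$ consecutive blocks of size $m+1$.

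The third step is to recognize that this block-diagonal element is exactly $\prod_{j=1}^d \bar\pi_{m+1}^{(j)}$ as defined in the statement: the factor $\bar\pi_{m+1}^{(j)} = s_{(j-1)(m+1)+m}\cdots s_{(j-1)(m+1)+1}$ is, by the same computation as in step one but shifted by $(j-1)(m+1)$, the $(m{+}1)$-cycle on the indices $\{(j{-}1)(m{+}1)+1,\dots,(j{-}1)(m{+}1)+(m{+}1)\}$, and these $d$ cycles act on disjoint blocks hence commute. Note $d\cdot(m+1) = n+1$, so the blocks exactly tile $\{1,\dots,n+1\}$, and each $\bar\pi_{m+1}^{(j)}$ is the Coxeter element / generator of $\Omega_{A_m}$ for the sub-root-system $A_m$ supported on the $j$-th block, consistent with the notation $\Omega_{\Phi'_j}$ used earlier in the subsection. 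Assembling steps one through three gives the claimed conjugacy.

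The only point requiring a little care — and the mild obstacle — is bookkeeping the orientation convention for $\bar\pi_{n+1}$ (whether $s_n\cdots s_1$ is the cycle $(1\,2\,\dots\,n{+}1)$ or its inverse) and making sure the conjugating element can be chosen in $W_{A_n}\cong S_{n+1}$ rather than needing a graph automorphism; since a cycle and its inverse have the same cycle type, they are conjugate within $S_{n+1}$, so this does not affect the statement and the convention can be fixed silently. Everything else is the elementary theory of cycle types in the symmetric group.
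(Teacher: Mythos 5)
Your proof is correct and follows exactly the paper's argument: identify both elements as products of $d$ disjoint cycles of length $m+1$ under $W_{A_n}\cong S_{n+1}$ and invoke the fact that conjugacy classes in $S_{n+1}$ are determined by cycle type. You merely spell out the cycle computation and the orientation bookkeeping in more detail than the paper does.
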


In other words, \(\bar{\pi}_{m+1}^{(j)}\) is a standard generator of \(\Omega_{A_{m}}\), where \(A_m\) has simple roots
\(\alpha_{(j-1)(m+1)+i}, \, i=1\ldots m\). We can also say, that after removing simple roots $\alpha_i$
with $m+1|i$, the Dynkin diagram of $A_n$ breaks up into $d$ subdiagrams $A_m$. 

\begin{proof}
	Under isomorphism $W(A_n)=S_{n+1}$ both \(\bar{\pi}_{n{+}1}^k\) and \(\prod_{j=1}^{d} \bar{\pi}_{m+1}^{(j)}\) are products of \(d\)
	cycles of length \(m+1\). It remains to note that conjugacy classes in \(S_{n+1}\) are labeled by the cyclic types.
\end{proof}
  
It follows from the lemma that it is sufficient to consider elements of the form \(w=\prod_{j=1}^{l} \bar{\pi}^{(j)}_{n_l+1}\) for
standard generators $\bar{\pi}^{(j)}_{n_l+1}$ of $\Phi_j=A_{n_j}$. 
The choice of \(\bar{\pi}_{n_l+1}\) depends on the orientation of Dynkin diagram of $\Phi_{j}$, but up to conjugacy, we can pick any orientation.
 
Below we will visualize the set \(I\) as a coloring of affine Dynkin diagram, namely vertices corresponding to \(\alpha_i\in I\) we color in black and other vertices we color in white. Let us now find condition on the invariant subset \(\mathcal{A}_w\).

For the root system \(A_n\) we have in terms of the root variables
\begin{multline}
	s_{n} s_{n-1}\ldots s_2 s_1 \Big(a_1,a_2,a_3,\dots,a_n\Big)=
	s_{n} s_{n-1}\ldots s_2  \Big(a_1^{-1},a_1a_2,a_3,\dots,a_n\Big)\\
	=s_{n} s_{n-1}\ldots s_3  \Big(a_2,(a_1 a_2)^{-1},a_1a_2 a_3,a_4,\dots,a_n\Big)=\dots=\Big(a_2,a_3,\dots,a_n,(a_1\dots a_n)^{-1}\Big).
\end{multline}
Hence  \(w\)-invariance condition on black vertices $b$ give
\begin{equation}\label{a_b_c2}
	b\in \Phi_j: a_b=\exp \left(2 \pi \ri \frac{m_j}{n_j+1}\right) 
\end{equation}
for some $0\leq m_j \leq n_j$. Moreover if \(\gcd(m_j,n_j+1)=d>1\) then the reflections \(\frac{n_j+1}{d}i,\, i=1\ldots d-1\)
act trivially and corresponding roots can be excluded from \(I\). So we can assume coprimeness of $m_j$ and $n_j+1$.

Action of $w$ will multiply white vertex \(c \in \Delta^{a}\setminus I\) root variable by root variables
of adjacent ($\in N(c)$) and black ($\in I$) vertices. Namely, we have 
\begin{equation}
	w(a_{c})= a_c\prod_{b \in I \cap N(c)} \exp \left( 2 \pi \ri \frac{i_b m_b}{n_{b}+1}\right),
\end{equation}
where \(n_{b}=n_j\), \(m_{b}=m_j\) for \(\Phi_j \ni b\) and \(i_{b}\in 1 \dots n_{b}\) is the number of vertex \(b\) in \(\Phi_j\) in the
numbering used in the definition of \(\bar{\pi}\). The \(w\)-invariance condition (on white vertices)
reads that product in r.h.s. of above formula equals $1$. 

\medskip 
We summarize discussion above as a set of selection rules on the coloring of affine Dynkin diagram:
\begin{itemize}
	\item Each connected component of \(I\) is Dynkin diagram of \(A\) type: \(I=\sqcup \Delta_j\) where \(\Delta_j\) corresponds to root system \(\Phi_j=A_{n_j}\).
	\item Number vertices of each \(\Delta_j\) successively from $1$ to $n_j$, starting from either end.
	Then for all \(\Delta_j\) exist integer number \(m_j, \, 0{<}m_j{\leq}n_j,\, \gcd (m_j,n_j+1)=1\), such that for any white vertex $c$
	\begin{equation}\label{selection_rule_c2}
	\sum_{b\in I\cap N(c)} \frac{i_b m_{j(b)}}{n_{j(b)}+1}\in \mathbb{Z}, 
	\end{equation}
    where the sum runs over the set of black vertices adjacent to \(c\), $j(b)$ is number of $A$-component of $b$ i.e. \(b\in \Delta_{j(b)}\), and $i_b$ is number of $b$ in \(\Delta_{j(b)}\).
	\item There are at least two white vertices: $|\Delta^a\setminus I|\geq 2$.
\end{itemize}
The last item is necessary to have at least two dimensional \(\mathcal{A}_w\), namely to have a true folding.

It is sufficient to classify colorings up to the action of the group \(\widehat{\Omega}\) because such action corresponds to conjugation of $w$.
We give a complete list of colorings that satisfy the above selection rules in Section \ref{ssec:answers algebraic}. 
Additionally, in Appendix \ref{sec:ntrf} we give complete list of colorings with $|\Delta^a\setminus I|=1$.

Group element will be given by $w=\prod_{j=1}^l \bar{\pi}^{(j)}$, where $\bar{\pi}^{(j)}\in \Omega_j$. The order of \(w\) is least common multiple of $n_j+1$.
The values $m_j$ determine connected component $\mathcal{A}_w$ of the invariant set $\mathcal{A}^w$ by
\eqref{a_b_c2}. We will see below that for given $w$ there are
either one such component or several (usually two) components related by the action of normalizer \(N(\langle w \rangle, W^{ae})\).
So it is sufficient to consider only one component. 
          
\begin{Example}\label{ex:colorings}
Let us consider case $\Phi^a=E_7^{(1)}$. In this example we consider four subsets \(I \subset \Delta^a\), they are given in Fig. \ref{fig:excolor}.

First consider $I=3A_1$, it is given on the upper left diagram of Fig. \ref{fig:excolor}. Here there is no choice for \(m_j,i_b\) namely  $m_j=1$ and $i_b=1$. Summands from \eqref{selection_rule_c2} are denoted on the figure as blue fractions.
For all white nodes, we compute sums \eqref{selection_rule_c2}, they are written in green. All of them are integer, so the selection rule is satisfied and we have folding transformation $w=s_0s_1s_3$.

Consider $I=A_1+A_3$ as on upper right diagram of Fig. \ref{fig:excolor}. We choose here $m_j=1$ for both components $A_3$ and again write summands near black nodes and sums near black ones. One of the sums is not
an integer (it is written in red) so the selection rule fails. It is easy to see that if we set $m_j=3$ for \(A_3\) component then the selection rule also fails.

Consider now $I=2A_3$, the corresponding diagrams are in the bottom of Fig. \ref{fig:excolor}. In the left diagram we choose 
$m_j=1$ for left $A_3$ and $m_j=3$ in right $A_3$ and failed. However, in the right diagram, taking both
$m_j=1$ we succeed and obtain folding $w=s_{321}s_{765}$. The corresponding irreducible component in \(\mathcal{A}_w\) is defined by
$a_1=a_2=a_3=a_5=a_6=a_7=\ri$.  Another component could be obtained by taking both $m_j=3$, in this case we obtain $-\ri$ on black vertices.
These components are related by normalizer.

\begin{figure}[h]

\begin{tikzpicture}[elt/.style={circle,draw=black!100,thick, inner sep=0pt,minimum size=2mm},scale=1]
			
			\begin{scope}
			
			\path 	(-3,0) 	node 	(a1) [elt,fill] {}
			(-2,0) 	node 	(a2) [elt] {}
			( -1,0) node  	(a3) [elt,fill] {}
			( 0,0) 	node  	(a4) [elt] {}
			( 1,0) 	node 	(a5) [elt] {}
			( 2,0)	node 	(a6) [elt] {}
			( 3,0)	node 	(a7) [elt] {}
			( 0,1)	node 	(a0) [elt,fill] {};
			\draw [black,line width=1pt ] (a1) -- (a2) -- (a3) -- (a4) -- (a5) --  (a6) -- (a7) (a4) -- (a0);
			
			\node at ($(a1.south) + (0,-0.2)$) 	{\small $-1$};
			\node at ($(a3.south) + (0,-0.2)$) 	{\small $-1$};
			
			\node at ($(a0.west) + (-0.2,0)$) 	{\small $-1$};

			\node at ($(a1.north) + (0,0.2)$) 	{\small \color{blue} $\frac12$};
			\node at ($(a2.north) + (0,0.2)$) 	{\small \color{dgreen} $1$};
			\node at ($(a3.north) + (0,0.2)$) 	{\small \color{blue} $\frac12$};
			
			\node at ($(a5.north) + (0,0.2)$) 	{\small \color{dgreen} $0$};
			\node at ($(a6.north) + (0,0.2)$) 	{\small \color{dgreen} $0$};
			\node at ($(a7.north) + (0,0.2)$) 	{\small \color{dgreen} $0$};
			
			\node at ($(a4.south) + (0,-0.2)$) 	{\small \color{dgreen} $1$};
			\node at ($(a0.north) + (0,0.2)$) 	{\small \color{blue} $\frac12$};
	
	\end{scope}
			
				\begin{scope}[xshift=7.5cm]
			
			\path 	(-3,0) 	node 	(a1) [elt,fill] {}
			(-2,0) 	node 	(a2) [elt,fill] {}
			( -1,0) node  	(a3) [elt,fill] {}
			( 0,0) 	node  	(a4) [elt] {}
			( 1,0) 	node 	(a5) [elt] {}
			( 2,0)	node 	(a6) [elt] {}
			( 3,0)	node 	(a7) [elt] {}
			( 0,1)	node 	(a0) [elt,fill] {};
			\draw [black,line width=1pt ] (a1) -- (a2) -- (a3) -- (a4) -- (a5) --  (a6) -- (a7) (a4) -- (a0);
			
			\node at ($(a1.north) + (0,0.2)$) 	{\small \color{blue} $\frac14$};
			\node at ($(a2.north) + (0,0.2)$) 	{\small \color{blue} $\frac24$};
			\node at ($(a3.north) + (0,0.2)$) 	{\small \color{blue} $\frac34$};
			\node at ($(a0.north) + (0,0.2)$) 	{\small \color{blue} $\frac12$};
			
			\node at ($(a5.north) + (0,0.2)$) 	{\small \color{dgreen} $0$};
			\node at ($(a6.north) + (0,0.2)$) 	{\small \color{dgreen} $0$};
			\node at ($(a7.north) + (0,0.2)$) 	{\small \color{dgreen} $0$};
			
			\node at ($(a4.south) + (0,-0.2)$) 	{\small \color{red} $\frac54$};

	\end{scope}

			\begin{scope}[yshift=-2.5cm]
			
			\path 	(-3,0) 	node 	(a1) [elt,fill] {}
			(-2,0) 	node 	(a2) [elt,fill] {}
			( -1,0) node  	(a3) [elt,fill] {}
			( 0,0) 	node  	(a4) [elt] {}
			( 1,0) 	node 	(a5) [elt,fill] {}
			( 2,0)	node 	(a6) [elt,fill] {}
			( 3,0)	node 	(a7) [elt,fill] {}
			( 0,1)	node 	(a0) [elt] {};
			\draw [black,line width=1pt ] (a1) -- (a2) -- (a3) -- (a4) -- (a5) --  (a6) -- (a7) (a4) -- (a0);

			\node at ($(a1.north) + (0,0.2)$) 	{\small \color{blue} $\frac14$};
			\node at ($(a2.north) + (0,0.2)$) 	{\small \color{blue} $\frac24$};
			\node at ($(a3.north) + (0,0.2)$) 	{\small \color{blue} $\frac34$};
			
			\node at ($(a5.north) + (0,0.2)$) 	{\small \color{blue} $\frac34$};
			\node at ($(a6.north) + (0,0.2)$) 	{\small \color{blue} $\frac64$};
			\node at ($(a7.north) + (0,0.2)$) 	{\small \color{blue} $\frac94$};
			
			\node at ($(a4.south) + (0,-0.2)$) 	{\small \color{red} $\frac32$};
			\node at ($(a0.north) + (0,0.2)$) 	{\small \color{dgreen} $0$};
	
	\end{scope}
		
			\begin{scope}[xshift=7.5cm, yshift=-2.5cm]
			
			\path 	(-3,0) 	node 	(a1) [elt,fill] {}
			(-2,0) 	node 	(a2) [elt,fill] {}
			( -1,0) node  	(a3) [elt,fill] {}
			( 0,0) 	node  	(a4) [elt] {}
			( 1,0) 	node 	(a5) [elt,fill] {}
			( 2,0)	node 	(a6) [elt,fill] {}
			( 3,0)	node 	(a7) [elt,fill] {}
			( 0,1)	node 	(a0) [elt] {};
			\draw [black,line width=1pt ] (a1) -- (a2) -- (a3) -- (a4) -- (a5) --  (a6) -- (a7) (a4) -- (a0);
			
			\node at ($(a1.south) + (0,-0.2)$) 	{\small $\ri$};
			\node at ($(a2.south) + (0,-0.2)$) 	{\small $\ri$};
			\node at ($(a3.south) + (0,-0.2)$) 	{\small $\ri$};
			
			\node at ($(a5.south) + (0,-0.2)$) 	{\small $\ri$};
			\node at ($(a6.south) + (0,-0.2)$) 	{\small $\ri$};
			\node at ($(a7.south) + (0,-0.2)$) 	{\small $\ri$};
			
			\node at ($(a1.north) + (0,0.2)$) 	{\small \color{blue} $\frac14$};
			\node at ($(a2.north) + (0,0.2)$) 	{\small \color{blue} $\frac24$};
			\node at ($(a3.north) + (0,0.2)$) 	{\small \color{blue} $\frac34$};
			
			\node at ($(a5.north) + (0,0.2)$) 	{\small \color{blue} $\frac14$};
			\node at ($(a6.north) + (0,0.2)$) 	{\small \color{blue} $\frac24$};
			\node at ($(a7.north) + (0,0.2)$) 	{\small \color{blue} $\frac34$};
			
			\node at ($(a4.south) + (0,-0.2)$) 	{\small \color{dgreen} $1$};
			\node at ($(a0.north) + (0,0.2)$) 	{\small \color{dgreen} $0$};
	
	\end{scope}

\end{tikzpicture}			
	
\caption{Example of different $E_7^{(1)}$ colorings \label{fig:excolor}}
		
\end{figure}
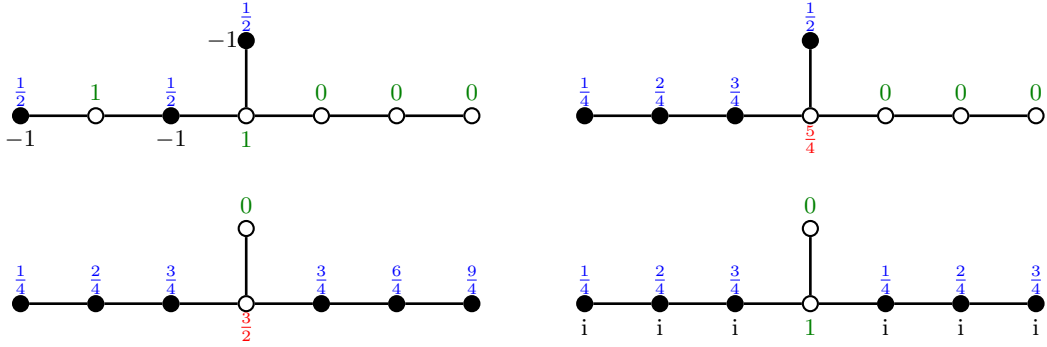

\end{Example}

\paragraph{Case 3.} 
Finally, let us consider intermediate case 
\(w= \pi \cdot \prod\nolimits_{j=1}^l \bar{\pi}_{n_j+1}^{k_j}\), where \(\pi \neq e \in \widehat{\Omega}_I\). Recall the \(\pi\)
acts on the set of connected components of \(I\), i.e. in the set of \(\Delta_j\), as permutation.
	
Let \(\pi\) has order \(d\). Each \(\Delta_j\) is either invariant under the action of \(\pi\) or belongs
to the orbit of the size \(d\). Indeed, since possible values of \(d\) for $\Phi=E_n^{(1)}$ are \(2,3,4,5,6\) 
the only cases to exclude is \(d=4\) and orbit of size 2 and $d=6$ and orbits of sizes $2$ and $3$.
This can possibly happen only for \(\Phi^a=D_5^{(1)},E_3^{(1)}\), but still does not happen. 
	
If component \(\Delta_j\) is invariant then \(\pi\) either acts on it trivially or by \(\sigma\) which reversing order.
In first case we can apply Lemma \ref{lemma:Omega_p1} and (possibly reducing \(\Delta_j\)) we can assume that corresponding
element of \(\Omega_{\Phi_j}\) is \(\bar{\pi}_{n_j+1}\). Otherwise we use 
\begin{lemma}\label{lemma:Omega vs pi 1}  
	Element \(w=\sigma \bar{\pi}_{n+1}^k \in C_2 \ltimes W_{A_n}\) can be conjugated by the elements of
	\(W_{A_n}\) to \(\sigma s_{(n+1)/2}\) for odd $n$ and $k$ or to $\sigma$ otherwise.
\end{lemma}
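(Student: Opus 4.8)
The plan is to reduce the statement to a cycle-type computation for involutions in the symmetric group $S_{n+1}=W_{A_n}$. Realize $W_{A_n}$ as the permutations of $\{1,\dots,n+1\}$ with $s_i=(i\ i{+}1)$, and let $r\in S_{n+1}$ be the longest element $j\mapsto n+2-j$. The order-reversing diagram automorphism $\sigma$ acts on $W_{A_n}$ by $s_i\mapsto s_{n+1-i}$, which is exactly conjugation by $r$ (indeed $r s_i r=s_{n+1-i}$ and $r^2=e$). Hence in $C_2\ltimes W_{A_n}$ every element of the nontrivial coset is $\sigma g$ with $g\in S_{n+1}$, and for $h\in W_{A_n}$ one computes $h\sigma = \sigma\,(rhr)$, so $h(\sigma g)h^{-1}=\sigma\,(rhr)\,g\,h^{-1}$. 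Multiplying by $r$ on the left turns the identity $(rhr)gh^{-1}=g'$ into $h(rg)h^{-1}=rg'$. Therefore the lemma is equivalent to: for $g=\bar{\pi}_{n+1}^k$, the element $r\bar{\pi}_{n+1}^k$ is $S_{n+1}$-conjugate to $r$ when $n$ or $k$ is even, and to $r\,s_{(n+1)/2}$ when $n$ and $k$ are both odd.

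Next I would compute $r\bar{\pi}_{n+1}^k$ explicitly. A short computation (compatible with the action on root variables recorded in Case 2 above) shows that $\bar{\pi}_{n+1}=s_n s_{n-1}\cdots s_1$ acts by $j\mapsto j-1$ modulo $n+1$ (using representatives $1,\dots,n+1$), hence $\bar{\pi}_{n+1}^k\colon j\mapsto j-k$ and $r\bar{\pi}_{n+1}^k\colon j\mapsto (k+1)-j \pmod{n+1}$. This affine map of $\mathbb{Z}/(n+1)$ is an involution, hence a product of disjoint transpositions together with its fixed points, and its $S_{n+1}$-conjugacy class is determined by the number of fixed points, i.e.\ by the number of solutions of $2x\equiv k+1 \pmod{n+1}$. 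If $n+1$ is odd there is exactly one solution; if $n+1$ is even there are two solutions when $k$ is odd and none when $k$ is even.

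Finally I would match these counts with the two model elements. Taking $k=0$ shows that $r$ has one fixed point if $n$ is even and none if $n$ is odd, agreeing with the count above in every case in which $n$ or $k$ is even. Taking $n$ odd and $m:=(n+1)/2$, note that $r$ already interchanges $m$ and $m+1$ and is otherwise fixed-point-free, so $r\,s_m$ fixes exactly $m$ and $m+1$ and thus has two fixed points, matching the case $n$ and $k$ odd. Since any two involutions in $S_{n+1}$ with the same number of fixed points are conjugate, in each case there is $h\in W_{A_n}$ with $h(r\bar{\pi}_{n+1}^k)h^{-1}$ equal to $r$, resp.\ to $r\,s_{(n+1)/2}$, and unwinding the first paragraph gives $hwh^{-1}=\sigma$, resp.\ $\sigma s_{(n+1)/2}$. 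The only steps requiring care are the identification of $\sigma$ with conjugation by $r$ together with the semidirect-product bookkeeping, and the elementary parity analysis of $2x\equiv k+1$; neither is a real obstacle. (The case $n=1$ is degenerate, with $\sigma$ trivial, and is immediate.)
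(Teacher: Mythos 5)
Your proof is correct, but it takes a genuinely different route from the one in the paper. The paper argues entirely inside the abstract group: conjugating $\sigma\bar{\pi}_{n+1}^k$ by $\bar{\pi}_{n+1}^m$ and using $\bar{\pi}_{n+1}\sigma=\sigma\bar{\pi}_{n+1}^{-1}$ reduces $k$ modulo $2$, and then the relation $s_{n+1-i}\sigma=\sigma s_i$ lets one peel simple reflections off both ends of $\sigma s_n\cdots s_1$ one pair at a time, terminating at $\sigma$ or at $\sigma s_{(n+1)/2}$ according to the parity of $n$. You instead identify $\sigma$ with conjugation by the longest element $r$ of $S_{n+1}$, observe that the bijection $\sigma g\mapsto rg$ intertwines $W_{A_n}$-conjugation on the coset $\sigma W_{A_n}$ with ordinary conjugation on $rS_{n+1}$, and then classify the resulting involutions $j\mapsto (k+1)-j \pmod{n+1}$ by their number of fixed points. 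Both arguments are sound; the paper's is shorter and produces the conjugating element explicitly (implicitly as a product of the $\bar{\pi}^m$ and the $s_i$ used in the peeling), while yours is more systematic --- it determines the conjugacy class of \emph{every} element of the twisted coset at once, and it is stylistically consistent with the cycle-type arguments the paper itself uses for $\Omega_{A_n}$ (Lemma \ref{lemma:Omega_p1}) and for the $D_n$ case of Lemma \ref{lemma:detoa}. The only points that needed care in your version --- that the diagram automorphism of $A_n$ is realized by conjugation with $r$, and the semidirect-product bookkeeping $h\sigma=\sigma(rhr)$ --- are handled correctly, as is the parity count of solutions of $2x\equiv k+1\pmod{n+1}$.
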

\begin{proof}
	Using relation \( \bar{\pi}_{n+1}  \sigma=\sigma\bar{\pi}_{n+1}^{-1}\) we can conjugate \(w\) to \(\sigma\) for even \(k\) and to \(\sigma \bar{\pi}_{n+1}\) for odd \(k\). Now using \(s_{n+1-i}\sigma =\sigma s_i\) we conjugate \(\sigma \bar{\pi}_{n+1}\) to \(\sigma\) or to  \(\sigma s_{(n+1)/2}\) depending on parity of \(n\).
\end{proof}
Hence we can consider that \(\pi\) acts trivially on any invariant component \(\Delta_j\).
    
Let us pick on component \(\Delta_j\) for each orbit of \(\pi\). 
We call such component distinguished and denote set of such components by \(D\). By $ I_D=\sqcup_{j\in D} \Delta_j$ we denote the set
of corresponding nodes.
\begin{lemma}\label{lemma:Omega vs pi 2}
	Element $w$ can be conjugated by element from $W_{\sqcup \Phi_j }$ to the form
	\begin{equation}
		w=\pi \prod_{j\in D} \bar{\pi}^{(j)}_{n_j+1} 
	\end{equation}
\end{lemma}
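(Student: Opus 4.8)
The plan is to reduce the statement, orbit by orbit, to an explicit conjugation inside a small wreath‑type subgroup of $W^{ae}$ and then to finish with Lemma~\ref{lemma:Omega_p1}. First I would use that $\pi$ partitions the components $\Delta_j$ of $I$ into the $\pi$‑invariant ones --- already brought to the standard generator by Lemma~\ref{lemma:Omega_p1}, and with $\pi$ acting trivially by Lemma~\ref{lemma:Omega vs pi 1} --- and into orbits of size $d=\mathrm{ord}(\pi)$, on each of which $\pi$ cyclically permutes $d$ mutually isomorphic type‑$A$ components. Since both $w$ and the conjugating group $W_{\sqcup\Phi_j}=\prod_j W_{\Phi_j}$ respect this partition, it suffices to treat a single size‑$d$ orbit $\Delta_1,\dots,\Delta_d\cong A_n$ and then multiply the conjugations together. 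For such an orbit the relevant subgroup of $W^{ae}$ is $C_d\ltimes W_{A_n}^{\times d}$, with the generator $c$ of $C_d$ realizing $\pi$ and acting on $W_{A_n}^{\times d}$ by a cyclic permutation $\tau$ of the factors (after re‑choosing the identifications of the $\Delta_i$, which is legitimate because $\pi^d=e$), and by Lemma~\ref{lemma:Omega_p1} the part of $w$ on the orbit has the form $w_0=c\cdot(\bar{\pi}_{n+1}^{k_1},\dots,\bar{\pi}_{n+1}^{k_d})$.

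Next I would carry out the conjugation. For $g=(g_1,\dots,g_d)\in W_{A_n}^{\times d}$ one finds
\[
	g\,w_0\,g^{-1}=c\cdot\bigl(g_{\tau(i)}\,\bar{\pi}_{n+1}^{k_i}\,g_i^{-1}\bigr)_{i=1}^{d}.
\]
Setting $g$ equal to the identity on one component and solving the remaining relations successively trivializes every factor except one, and on that surviving factor the value becomes the product $\prod_{i=1}^{d}\bar{\pi}_{n+1}^{k_i}=\bar{\pi}_{n+1}^{K}$ with $K=k_1+\dots+k_d$ --- here it is essential that the $\bar{\pi}_{n+1}^{k_i}$ are powers of a single element and hence commute. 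I would then declare this surviving component to be the distinguished component of the orbit, so that $w$ is now conjugate to $\pi\cdot\bar{\pi}_{n+1}^{K}$ on the distinguished $A_n$ and to $\pi$ alone on the remaining components of the orbit.

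To reach exactly the form in the statement I would conjugate once more, this time by a \emph{constant} tuple $h=(h_0,\dots,h_0)$, which commutes with $c$ and therefore only alters the twist $\bar{\pi}_{n+1}^{K}$, replacing it by an arbitrary $W_{A_n}$‑conjugate. By Lemma~\ref{lemma:Omega_p1}, if $\gcd(K,n+1)=d'$ and $m+1=(n+1)/d'$, then $\bar{\pi}_{n+1}^{K}$ is $W_{A_n}$‑conjugate to $\prod_{s=1}^{d'}\bar{\pi}_{m+1}^{(s)}$ on the sub‑diagram obtained by deleting every $(m+1)$‑st node; for $d'=1$ this is the standard generator $\bar{\pi}_{n+1}$ itself, and for $K\equiv 0\pmod{n+1}$ the twist disappears altogether and the orbit contributes only the permutation $\pi$. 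After this (possible) shrinking of the distinguished component one reads off $w=\pi\prod_{j\in D}\bar{\pi}^{(j)}_{n_j+1}$, and reassembling all orbits gives the lemma.

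The computation itself is short; the real care goes into the bookkeeping inside $C_d\ltimes W_{A_n}^{\times d}$ (and, globally, inside its product over all orbits): one must check that a conjugating element, once commuted past $c$, still acts only on the factors one intends --- which is exactly why the conjugator in the last step has to be the constant tuple rather than something supported on a single component. A secondary subtlety I would flag is that when $\gcd(K,n+1)>1$ the reduction of the last step changes the shape, and even the number, of components making up the distinguished orbit, so the underlying black‑vertex subset $I$ must be re‑read afterwards; this is the same phenomenon as in Case~2, where reflections that act trivially (cf.\ Lemma~\ref{lemma:reflection}) are discarded from $I$, and one verifies it is compatible with the required $\pi$‑invariance of the coloring.
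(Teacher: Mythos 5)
Your proposal is correct and follows essentially the same route as the paper: reduce to a single $\pi$-orbit, use the twisted conjugation formula $g\,w_0\,g^{-1}=c\cdot(g_{\tau(i)}\bar{\pi}_{n+1}^{k_i}g_i^{-1})_i$ to push all the twists onto one distinguished component (the paper phrases this as successive conjugation by the $\pi^{(i)}$, you solve the recursion for a general tuple), and then conjugate by a diagonal/$\pi$-equivariant tuple so that Lemma~\ref{lemma:Omega_p1} normalizes the accumulated power $\bar{\pi}_{n+1}^{K}$. Your added remarks on commutativity of the $\bar{\pi}_{n+1}^{k_i}$ and on re-reading $I$ when $\gcd(K,n+1)>1$ are consistent with what the paper does implicitly.
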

\begin{proof}
	It is sufficient to consider one \(\pi\)-orbit of length \(d\). Denote corresponding components
	by \(\Delta_{0},\Delta_{1},\dots \Delta_{d-1}\), assuming that distinguished component is \(\Delta_{0}\) and
	\(\pi(i)=i+1 \mod d\). Let factor of \(w\) corresponding to this orbit be \(\pi \cdot \prod_i (\bar{\pi}^{(i)})^{k_i}\).
	Conjugating $w$ by the elements  \(\pi^{(i}\) successively for $i=d-1\ldots 1$, we vanish all $k_i$ except probably \(i=0\).
	Now for any \(g \in W_{\Phi_0}\) we can conjugate \(w\) by the product $\prod_{i=0}^{d-1} \pi^i(g)$, this is equivalent to
	conjugation of the \(\Delta_0\) component of \(w\) by \(g\). 
%
%
Then, using Lemma \ref{lemma:Omega_p1}, we reduce to the case \(k_0=1\).
\end{proof}

Let us now find the condition on the invariant subset \(\mathcal{A}_w\). As in previous case, for the black vertices we have
\begin{equation}
	b\in \Delta_j: a_b=\exp \left(2 \pi \ri \frac{m_j}{n_j+1}\right) \label{a_b_c3}
\end{equation}
	with \(\gcd(m_j,n_j+1)=1\) and numbers \(m_j\) are \(\pi\) invariant \(m_j=m_{\pi(j)}\). The white vertex root variables transform under $w$ as follows
\begin{equation}
	w(a_{\pi(c)})=a_{c} \prod_{b \in I_D \cap N(c)}
	\exp \left( 2 \pi \ri \frac{i_b m_{j(b)}}{n_{j(b)}+1}\right),
\end{equation}
where \(i_{b}\) as before is the number of vertex \(b\) in component $j(b)$.
The \(w\)-invariance condition (on white vertices) reads that product in r.h.s. of the above formula equals $1$.

We summarize the discussion above as a set of selection rules on the coloring of the affine Dynkin diagram with automorphism
\(\pi\neq e \in \widehat{\Omega}_\Phi\). As before \(\{\Delta_j\}\) is a set of connected components of \(I\), by \(n_j\) we denote number of vertices in \(\Delta_j\)

\begin{itemize}
	\item Coloring is invariant with respect to $\pi$.
	\item Each \(\Delta_j\) is Dynkin diagram of type \(A\). 
	\item If some \(\Delta_j\) is $\pi$-invariant as a set then \(\pi\) acts trivially on it ($\forall b\in \Delta	_j, \pi j=j: \pi b=b$).
	\item In each $\pi$-orbit choose one distinguished component $\Delta_j$.
	Number vertices of each distinguished component $\Delta_j$ successively from $1$ to $n_j$, starting from either end. Then there is a
	choice of numbers \(m_j, \, 0{<}m_j{\leq}n_j,\, \gcd (m_j,n_j+1)=1\) for each distinguished component, such that for any white vertex $c$
	\begin{equation}
	\sum_{b\in I_D\cap N(c)} \frac{i_b m_{j(b)}}{n_{j(b)}+1}\in \mathbb{Z}, 
	\end{equation}
        where the sum runs over the set of black vertices from distinguished components adjacent to \(c\),
        $j(b)$ is such that \(b\in \Delta_{j(b)}\), and $i_b$ is number of $b$ in component $\Delta_j(b)$.
	\item Number of $\pi$-orbits on the set of vertices $\Delta^a \setminus I$ is at least $2$.
\end{itemize}
The last item is necessary to have at least two-dimensional \(\mathcal{A}_w\), namely to have a true folding.
Clearly these selection rules are generalization of selection rules in case 2. The general formula for the folding transformation is 
\begin{equation}
	w=\pi \prod_{j\in D} \bar{\pi}^{(j)},\quad \pi\in \widehat{\Omega}_{\Phi}, \quad \bar{\pi}^{(j)}\in \Omega_{\Phi_j=A_{n_j}}  \label{wcase3}
\end{equation}

We will list colorings that satisfy selection rules in Section \ref{ssec:answers algebraic}. 
It appears that there are only $2$ true folding transformations of case 3: one for $E_7^{(1)}$ and one for $D_5^{(1)}$.
There exist more case 3 folding transformations, if we drop the true folding selection rule (take one white orbit), they are listed
in Appendix \ref{sec:ntrf}.
We have the same comment about $m_j$ as for case $2$.

\paragraph{Non-cyclic groups.} 

It remains to classify folding groups \(H\) which are not cyclic. As was shown above any such \(H\) is a subgroup of
group of the form \(\widehat{\Omega}\ltimes \prod\nolimits_{j'=1}^{l'} \Omega_{\Phi'_{j}}\)  where all \(\Phi'_{j}\) are of \(A\) type. 
Any nontrivial element of the \(H\) should generate a cyclic folding subgroup, and such elements were classified above up to conjugation. 
For any such element \(w\) we find list of  subsets \(I \in \Delta^a\)  such that there exists element
\(\tilde{w}\in \widehat{\Omega}\ltimes \prod\nolimits_{j'=1}^{l'} \Omega_{\Phi'_{j}}\) conjugated to \(w\). In order to do this we reverse simplification performed in Lemmas \ref{lemma:Omega_p1}, \ref{lemma:Omega vs pi 1} and \ref{lemma:Omega vs pi 2}. Note that condition \eqref{a_b_c2} should hold
for any connected component of \(I\), and we can keep the assumption  \(\gcd(m_j,n_j+1)=1\).

Then we check which of such \(\tilde{w}\) can be combined into one group.  The list of not cyclic folding subgroups is given
in Section \ref{ssec:answers algebraic}. Here we just consider the most nontrivial example.

\begin{Example} 	We have \(6\) folding transformations for \(E_7^{(1)}\). 
	For each of them we list conjugated elements in form \(\widehat{\Omega}\ltimes \prod\nolimits_{j'=1}^{l'} \Omega_{\Phi'_{j}}\).
	We restrict ourselves to the case that $|\Delta^a\setminus I|\geq 2$, since otherwise we do not have dynamics. 
	Similarly in the last case we assumed that
	number of \(\pi\) orbits on \(\Delta^a \setminus I\) is at least~2. 
	By \([i,i+1,\dots, j]\) we denote connected component with nodes \(i,i+1,\dots,j\).
	\begin{itemize}
		\item \(s_{12}^{1,2}s_{40}^{1,2}s_{67}^{1,2}\) \(I=\{[1,2],[4,0],[6,7]\}\);
		\item \(s_{123}^{1,3}s_{567}^{1,3}\),\, \(I=\{[1,2,3],[5,6,7]\}\);
		\item \(\pi s_0 s_1 s_3\),\, \(\pi s_0 s_1 s_5\),\, \(\pi s_0 s_3 s_7\),\,\(\pi s_0 s_5 s_7\),\, \(\pi \ltimes I = \pi \ltimes \{0,1,3,5,7\}\);\; 
		\item \(s_1s_3s_5s_7\),\, \(I=\{1,3,5,7\}\);\; \(s_{123}^2s_5s_7\),\, \(I=\{[1,2,3],5,7\}\);\; \(s_1s_{345}^2s_7\),\, \(I=\{1,[3,4,5],7\}\);\;\\
		\(s_1s_{3}s_{567}^2\),\, \(I=\{1,3,[5,6,7]\}\);\;
		\(s_{12345}^3s_7\),\, \(I=\{[1,2,3,4,5],7\}\);\; 
		\(s_{123}^2s_{567}^2\),\, \(I=\{[1,2,3],[5,6,7]\}\);\;\\
		\(s_{1}s_{34567}^3\),\, \(I=\{1,[3,4,5,6,7]\}\);\;
		\item \(s_0s_1s_3\),\, \(I=\{0,1,3\}\);\; \(s_0s_5s_7\),\, \(I=\{0,5,7\}\);\; \(s_0s_{123}^2s_0\),\, \(I=\{0,[1,2,3]\}\);\; \\ \(s_{1}s_{340}^2\),\, \(I=\{1,[3,4,0]\}\);\; 
		\(s_{045}^2s_{7}\),\, \(I=\{[0,4,5],7\}\);\; \(s_{0}s_{567}^2\),\, \(I=\{0,[5,6,7]\}\);\; \\ \(s_{12340}^3\),\, \(I=\{[1,2,3,4,0]\}\);\;
		\(s_{04567}^3\),\, \(I=\{[0,4,5,6,7]\}\);\;
		\item \(\pi\), \, \(\pi \ltimes I = \pi \);\; \(\pi s_{04}^{1,2}\), \, \(\pi \ltimes I = \pi \ltimes \{[4.0]\}\);\; 
		\(\pi s_{17}\), \, \(\pi \ltimes I = \pi \ltimes\{1,7\}\);\; \\
		\(\pi s_{26}\), \, \(\pi \ltimes I = \pi \ltimes \{2,6\}\);\; 		
		\(\pi s_{35}\), \, \(\pi \ltimes I = \pi \ltimes \{3,5\}\);\; 
		\(\pi s_{345}^2\), \, \(\pi \ltimes I = \pi \ltimes \{[3,4,5]\}\);\; \\
		\(\pi s_{17}s_{35}\), \, \(\pi \ltimes I = \pi \ltimes \{1,3,5,7\}\);\;  \(\pi s_{1267}^{1,2}\), \, \(\pi \ltimes I = \pi \ltimes \{[1,2],[6,7]\}\);\; \\
		\(\pi s_{2356}^{1,2}\), \, \(\pi \ltimes I = \pi \ltimes \{[2,3],[5,6]\}\);\;
		\(\pi s_{04}^{1,2}s_{17}\), \, \(\pi \ltimes I = \pi \ltimes \{1,7,[4,0]\}\);\; \\ \(\pi s_{04}^{1,2}s_{26}\), \, \(\pi \ltimes I = \pi \ltimes \{2,6,[4,0]\}\);\;				 				
		\(\pi s_{23456}^{2,4}\), \, \(\pi \ltimes I = \pi \ltimes \{[2,3,4,5,6]\}\);\; \\ \(\pi s_{345}^{2}s_{17}\), \, \(\pi \ltimes I = \pi \ltimes \{1,[3,4,5],7\}\);\;			 				
		\(\pi s_{123567}^{1,2,3}\), \, \(\pi \ltimes I = \pi \ltimes \{[1,2,3],[5,6,7]\}\);\; 
		\end{itemize}
	We see that possible choices for noncyclic groups are
	
	\medskip 
	\begin{center}
		\begin{tabular}{|c|c|c|c|c|}
			\hline
			\(\pi \ltimes \{0,1,3,5,7\}\) & \(\{0,1,3,5,7\}\) & 		\(\pi \ltimes \{1,3,5,7\}\) & \(\pi \ltimes \{1,[3,4,5],7\}\) &		\(\pi \ltimes \{[1,2,3],[5,6,7]\}\)  	
			\\ \hline
			\(C_2\ltimes C_4\) & \(C_2\times C_2\) & \(C_2\times C_2\) & \(C_2\times C_2\) & 		\(C_2\ltimes C_4\) 
			\\ \hline
		\end{tabular}	
		\medskip 
		
	\end{center}

	Note that the groups in the third and fourth columns are conjugated.

\end{Example}

%
%
       
\subsection{Answers}\label{ssec:answers algebraic}
	
	\medskip
	
	\begin{tabular}{|M{1.5cm}|M{2cm}|M{1.5cm}|M{0.75cm}|M{3.5cm}|M{4.25cm}|M{0.5cm}|}
	    \hline 
	    Sym./surf. & Diagram & Name & Order & $w$ & $\mathcal{A}_w$ & Sec.\\ 
        \hline	
         & 
               \includegraphics[scale=1]{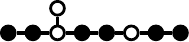}
	      & $3A_2$	& 3 &  $s_{21} s_{54} s_{87}$ & $(\zeta,\zeta,\zeta),\, (\zeta^{{-}1},\zeta^{{-}1},\zeta^{{-}1})$  & 
		\\
		\cline{2-6}
	 $E_8^{(1)}/A_0^{(1)}$	& 
      \includegraphics[scale=1]{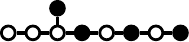}
       & $4A_1$ & 2 &  $s_0 s_4 s_6 s_8$ & $({-}1,{-}1,{-}1,{-}1)$ & \ref{e8_a}
		\\
		\cline{2-6}
		& 
\includegraphics[scale=1]{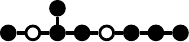}
        & $A_1{+}2A_3$ & 4 & \vspace{1mm}  \parbox{3.5cm}{\centering $s_1 s_{430} s_{876}$\\ {\small $w^2=s_{37} (s_0 s_4 s_6 s_8) s_{73}$}} &
        $({-}1,\ri,\ri), \, ({-}1,-\ri,-\ri)$ & 
		\\
 		\hline
		 &
			\includegraphics[scale=1]{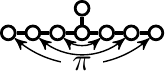}
		& $\pi$ & \vspace{5mm} 2 & $\pi$ & $(a_1{=}a_7,a_2{=}a_6,a_3{=}a_5)$ &  \\
		\cline{2-3} \cline{5-6}	
		& 
			\includegraphics[scale=1]{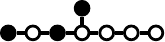}
		     & $3A_1$	&   & $s_0 s_1 s_3$ & $({-}1,{-}1,{-}1)$ & \\
		\cline{2-6} 
		$E_7^{(1)}/A_1^{(1)}$
		&
\includegraphics[scale=1]{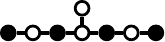}
		& $4A_1$ & 2 &   $s_1 s_3 s_5 s_7$ & $({-}1,{-}1,{-}1,{-}1)$ & \ref{e7_a} \\
		\cline{2-6}
		& 
\includegraphics[scale=1]{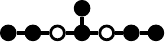}
	  & $3A_2$ & 3   & $s_{12}s_{40}s_{76}$ & $(\zeta,\zeta,\zeta), \, (\zeta^{{-}1},\zeta^{{-}1},\zeta^{{-}1})$  &\\
		\cline{2-6} 
		&
\includegraphics[scale=1]{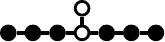}
	    & $2A_3$	&   & \vspace{1mm} \parbox{3.5cm}{\centering $s_{321}s_{765}$\\ {\small $w^2=s_{26} (s_1 s_3 s_5 s_7) s_{62}$}} 
		& $(\ri,\ri), \, (-\ri,-\ri)$ & \\
		\cline{2-3} \cline{5-6}
		 &
%
			\includegraphics[scale=1]{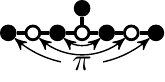}
	    & $\pi{\ltimes}5A_1$	& \vspace{-8mm} 4 & \parbox{3cm}{\centering $\pi s_0 s_1 s_3$\\{\small $w^2=s_1s_3s_5s_7$}}  
		& { $({-}1,{-}1,{-}1,{-}1, {-}1 ;a_2{=}a_6)$} &  \\
		\hline	
	     &
\includegraphics[scale=1]{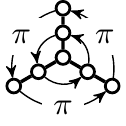}
		& $\pi$ & &   $\pi$ & $(a_0{=}a_1{=}a_5, a_2{=}a_4{=}a_6)$ & 
		\\
		\cline{2-3}  \cline{5-6}
		 
		 $E_6^{(1)}/A_2^{(1)}$ & 
\includegraphics[scale=1]{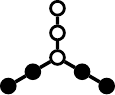}
	      & $2A_2$	& \vspace{-1cm} 3  & $s_{12}s_{45}$  &  $(\zeta,\zeta), \, (\zeta^{{-}1},\zeta^{{-}1})$ &  \ref{e6_a}
		\\	
		\cline{2-6}
		
	    & 
\includegraphics[scale=1]{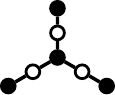}
	&    $4A_1$	& 2 &  $s_0s_1s_3s_5$ & $({-}1,{-}1,{-}1,{-}1)$ & 
		\\
		\hline
		
	 &
%
\includegraphics[scale=0.9]{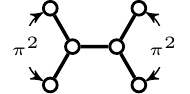}
	& $\pi^2$ & & $\pi^2$ & $(a_0{=}a_1,a_4{=}a_5)$  & \\
	\cline{2-3}  \cline{5-6}
	 
	& 
			\includegraphics[scale=1]{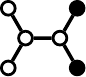}
	& $2A_1$ & \vspace{-8mm} 2 & $s_4s_5$  & $({-}1,{-}1)$ &  \\

	\cline{2-6}
	
	$D_5^{(1)}/A_3^{(1)}$
	
	&	
\includegraphics[scale=0.9]{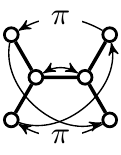}
	& $\pi$	& &  \parbox{3cm}{\centering $\pi$\\{\small $w^2=\pi^2$}} 
		
	    	& $(a_0{=}a_1{=}a_4{=}a_5,a_2{=}a_3)$ &  \ref{d5_a}\\
		\cline{2-3}  \cline{5-6}
		
		&
%
\includegraphics[scale=0.9]{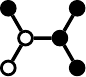}
		&   $A_1{+}A_3$	& 4  & \vspace{1mm} \parbox{3cm}{\centering $s_1 s_{534}$\\{\small $w^2=s_3 (s_4s_5) s_3$}} 
			
			& $({-}1,\ri), \, ({-}1,-\ri)$ & \\
		
		\cline{2-3} \cline{5-6}
		
		&	
%
\includegraphics[scale=0.9]{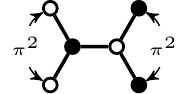}
	    & $\pi^2{\ltimes}3A_1$	&  & \parbox{2cm}{\centering $\pi^2 s_2 s_4$\\{\small $w^2=s_4s_5$}}  &  $({-}1,{-}1,{-}1;a_0{=}-a_1)$ &  \\

			\cline{2-6}
		& 
                        \includegraphics[scale=0.9]{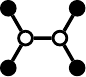}
			& $4A_1$ & 2 &  $s_0s_1s_4s_5$ & $({-}1,{-}1,{-}1,{-}1)$ & \\	
			\hline
	
	 &
%
%
%
%
%
%
%
%
\includegraphics[scale=1]{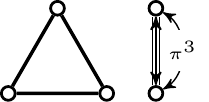}
	& $\pi^3$ & &  $\pi^3$ & $(a_3{=}a_4)$
		 & \\
	
	\cline{2-3} \cline{5-6}
	$E_3^{(1)}/A_5^{(1)}$ &
%
%
%
%
%
%
%
%
\includegraphics[scale=1]{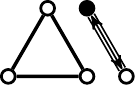}
 	   & $A_1$ & \vspace{-8mm} 2 & $s_4$  & $({-}1)$ & \ref{e3_a}\\
	
	\cline{2-6}

	 &
%
%
%
%
%
%
%
\includegraphics[scale=1]{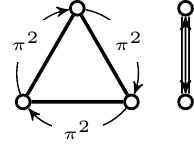}
	& $\pi^2$ & &   $\pi^2$ & $(a_0{=}a_1{=}a_2)$ & \\
	\cline{2-3} \cline{5-6}
	
	& 
%
%
%
%
%
%
\includegraphics[scale=1]{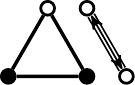}
	   & $A_2$ & \vspace{-8mm} 3 & $s_{12}$ &  $(\zeta), \, (\zeta^{{-}1})$ &  \\

	\hline
	$A_1^{(1)}/A_7^{(1)'}$ &
%
%
%
%
%
%
\includegraphics[scale=1]{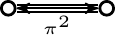}
     & $\pi^2$	& &  $\pi^2$ & - &  \ref{e1_a}\\
	
	\cline{2-3} \cline{5-6}
	
	& 
%
%
%
%
%
%
\includegraphics[scale=1]{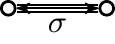}
	   & $\sigma$ & \vspace{-4mm} 2 &  $\sigma$  & - & \\
	    \hline
	\end{tabular}
	
	In the paper we \emph{name} folding transformations by element of $\widehat{\Omega}$ and components of $I$, like $\pi \ltimes A_{n_1}+ \ldots +A_{n_l}$.
	In this naming we will drop $e\in\widehat{\Omega}$ and write $\times$ instead of $\ltimes$, when all components of $I$ are $\pi$-invariant.  
%
%
%
	
	In the next table, we present non-cyclic folding subgroups. For each group, we write formulas for its generators. In the column ``Group'' we also note nontrivial embeddings between different non-cyclic folding subgroups.

%
	\medskip
	
	\begin{tabular}{|M{1.5cm}|M{2.5cm}|M{1.5cm}|M{4.25cm}|M{4.25cm}|}
	    \hline 
	    Sym./surf. & Diagram & Name & Group & $\mathcal{A}_w$ \\ 
        \hline	
         &
%
\includegraphics[scale=0.75]{e7_p11111.pdf}
		& $\pi\ltimes 5A_1$ & $\Dih_4^{(1)}=\underbrace{C_2}_{\pi} \ltimes \underbrace{C_4}_{\pi s_0s_1s_3}$  &
			$({-}1,{-}1,{-}1,{-}1,{-}1;a_2{=}a_6)$\\
        \cline{2-5}
        $E_7^{(1)}/A_1^{(1)}$ & 
        \includegraphics[scale=0.75]{e7_p33.pdf}
        & $\pi\ltimes 2A_3$  & $\Dih_4^{(2)}=\underbrace{C_2}_{\pi}\ltimes \underbrace{C_4}_{s_{321}s_{765}}$ & $(\ri,\ri), \, (-\ri,-\ri)$ \\
        \cline{2-5}
        &
        \includegraphics[scale=0.75]{e7_p1111.pdf}
        & $\pi\ltimes 4A_1$
       & $\underbrace{C_2}_{\pi} \times \underbrace{C_2}_{s_1s_3s_5s_7} \subset \Dih_4^{(1,2)} $ &
       $({-}1,{-}1,{-}1,{-}1; a_2=a_6)$ \\
        \cline{2-5}
         &
       \includegraphics[scale=0.75]{e7_11111.pdf}
			& $5A_1$
        & \vspace{1mm} $\underbrace{C_2}_{s_0s_1s_3}\times \underbrace{C_2}_{s_1s_3s_5s_7}\subset \Dih_4^{(1)}$ & $({-}1,{-}1,{-}1,{-}1,{-}1)$\\
        
        \hline
         &
        \includegraphics[scale=0.75]{d5_pp1111.pdf}
        & $\pi^2\ltimes 4A_1$
        & $\underbrace{C_2}_{\pi^2}\times \underbrace{C_2}_{s_0s_1s_4s_5} \times \underbrace{C_2}_{s_4s_5}$ & $({-}1,{-}1,{-}1,{-}1)$ \\
        \cline{2-5}
        
   $D_5^{(1)}/A_3^{(1)}$        & 
%
%
       \includegraphics[scale=0.75]{d5_1111.pdf}      
      & $4A_1$ & $\underbrace{C_2}_{s_0s_1s_4s_5} \times \underbrace{C_2}_{s_4s_5}  \subset C_2^3$ &  $({-}1,{-}1,{-}1,{-}1)$\\
	\cline{2-4}
	
	&
	\includegraphics[scale=0.75]{d5_pp1111.pdf}
	& $\pi^2\ltimes 4A_1$
	& $\underbrace{C_2}_{\pi^2}\times \underbrace{C_2}_{s_0s_1s_4s_5} \subset C_2^3$  &   $({-}1,{-}1,{-}1,{-}1)$\\
	\cline{2-5}
	
		& \includegraphics[scale=0.75]{d5_pp11.pdf}
	& $\pi^2\ltimes 2A_1$
	& $\underbrace{C_2}_{\pi^2}\times \underbrace{C_2}_{s_4s_5} \subset C_2^3$ & $({-}1,{-}1;a_0{=}a_1)$\\
	
         \hline
          $E_1^{(1)}/A_7^{(1)}$ &  \includegraphics[scale=1]{e1_ppc.pdf}
	    & $\pi^2\times \sigma$  & $\underbrace{C_2}_{\pi^2}\times \underbrace{C_2}_{\sigma}$ & - \\
          \hline
        \end{tabular}

\subsection{Invariant lattices and normalizers}
	\label{ssec:invlat}

\paragraph{Invariant and orthogonal lattices.}
	
For given $w\in W^{ae}$ of finite order $n$ introduce two sublattices of $Q^a$: 
\begin{enumerate}
	\item Invariant sublattice $Q^{a,w}=\mathrm{Ker}_{Q^a}(w-1)$.
	\item Orhtogonal sublattice $Q^{a,\perp w}=\mathrm{Ker}_{Q^a} \sum_{i=0}^{n-1} w^i$.
\end{enumerate}
        
\begin{lemma}\label{lemma:lattices}
	Sublattices $Q^{a,w},Q^{a,\perp w}$ have following properties
	\begin{enumerate}
		\item Lattices $Q^{a,w}$ and $Q^{a,\perp w}$
        are orthogonal, union of their bases is a basis in $Q^a \otimes_\mathbb{Z} \mathbb{Q} $, i.e. $Q^{a,w} \oplus Q^{a,\perp w}$ is a sublattice of full rank in $Q^a$.
        \item Invariant sublattice of $w=t_{\lambda} \bar{w} \in P\rtimes W$ depends only on finite part $\bar{w}\in W$: $Q^{a,w}=Q^{a,\bar{w}}$.
        \item Imaginary root belongs to invariant sublattice: $\delta\in Q^{a,w}, \, \delta\notin Q^{a,\perp w}$.
	\end{enumerate}
\end{lemma}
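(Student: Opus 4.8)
The plan is to prove the three properties in order, since later items lean on earlier ones. I work in the $\mathbb{Q}$-vector space $V=Q^a\otimes_{\mathbb Z}\mathbb{Q}$ with the (positive semidefinite) invariant bilinear form coming from $\Phi^a$; the only subtlety throughout is that this form is \emph{degenerate} on the affine lattice, with radical $\mathbb{Q}\delta$, so I must be careful not to invoke nondegeneracy of the form when decomposing $V$.

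\medskip

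\emph{Property 1.} Let $N=\sum_{i=0}^{n-1}w^i$ and $P=\tfrac1n N$, viewed as an operator on $V$. Since $w^n=1$, a standard computation gives $P^2=P$ and $wP=Pw=P$, so $P$ is the projector onto $\mathrm{Ker}(w-1)\otimes\mathbb{Q}$ along $\mathrm{Ker}(N)\otimes\mathbb{Q}=\mathrm{Im}(w-1)\otimes\mathbb{Q}$; hence $V=\bigl(Q^{a,w}\otimes\mathbb{Q}\bigr)\oplus\bigl(Q^{a,\perp w}\otimes\mathbb{Q}\bigr)$ as vector spaces, which gives the full-rank statement. For orthogonality: $w$ preserves the form, so for $x\in Q^{a,w}$ and $y\in Q^{a,\perp w}$ one has $(x,y)=(w^i x,w^i y)=(x,w^i y)$ for all $i$, whence $n(x,y)=(x,Ny)=(x,0)=0$. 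Thus the two lattices are orthogonal and together span $V$, so their direct sum sits inside $Q^a$ with finite index — exactly the assertion. (No appeal to nondegeneracy is needed here.)

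\medskip

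\emph{Property 2.} Write $w=t_\lambda\bar w$ with $\bar w\in W$, $\lambda\in P$, acting on $Q^a=Q\oplus\mathbb{Z}\delta$. Recall $t_\lambda$ acts by $\mu\mapsto\mu-(\mu,\lambda)\delta$ on $Q$ and fixes $\delta$ (here I use that $(\delta,\delta)=0$ and $(\delta,\mu)=0$), so on $Q^a$ the matrix of $w$ is block upper-triangular: it sends $\alpha\mapsto\bar w\alpha-(\alpha,\lambda')\delta$ for a suitable $\lambda'$ and fixes $\delta$. An element $\beta=\mu+k\delta$ (with $\mu\in Q$) is $w$-fixed iff $\bar w\mu=\mu$ and the $\delta$-component balances; but $\bar w\mu=\mu$ already forces the linear-functional term to vanish (since that term is built from $\mu$ and $\mu$ is $\bar w$-invariant, the relevant pairing is $\bar w$-antisymmetrized and dies), so the condition reduces to $\bar w\mu=\mu$ with $k$ arbitrary. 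Hence $Q^{a,w}=Q^{\bar w}\oplus\mathbb{Z}\delta=Q^{a,\bar w}$. The one routine point to check carefully is that the $\delta$-coordinate contribution genuinely cancels on $\mathrm{Ker}(\bar w-1)$; this is a short direct computation using $t_\lambda^{-1}$ and the explicit formula for translations, and I expect it to be the fiddliest part of the whole lemma.

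\medskip

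\emph{Property 3.} Since $\bar w\delta=\delta$ trivially (the finite Weyl group fixes $\delta$, equivalently $\delta$ is $W$-invariant), Property 2 gives $\delta\in Q^{a,w}$ immediately; alternatively $w\delta=\delta$ directly from the translation formula. For $\delta\notin Q^{a,\perp w}$: by Property 1 the two lattices are orthogonal, and $\delta$ lies in $Q^{a,w}$; if it also lay in $Q^{a,\perp w}$ it would be orthogonal to itself — which is automatic here since $(\delta,\delta)=0$ — so this argument is \emph{not} available, and instead I argue that $N\delta=n\delta\neq0$ directly from $w\delta=\delta$, so $\delta\notin\mathrm{Ker}\,N=Q^{a,\perp w}$. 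This last observation is the cleanest route and sidesteps the degeneracy issue entirely. Assembling the three parts completes the proof.
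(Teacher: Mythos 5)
Your treatments of Properties 1 and 3 are correct and essentially follow the paper's route, in places more carefully: the paper gets the full-rank decomposition by passing to $\mathbb{C}$ and identifying $Q^{a,w}\otimes\mathbb{C}$ with the eigenvalue-$1$ eigenspace and $Q^{a,\perp w}\otimes\mathbb{C}$ with the span of the remaining eigenspaces, while you use the averaging projector $\tfrac1n\sum_i w^i$ over $\mathbb{Q}$ --- equivalent content. Your explicit orthogonality computation $n(x,y)=(x,Ny)=0$ and your use of $N\delta=n\delta\neq 0$ to exclude $\delta$ from $Q^{a,\perp w}$ (correctly rejecting a self-orthogonality argument, which would fail since $(\delta,\delta)=0$) fill in steps the paper leaves implicit.

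The gap is in Property 2. You correctly reduce the claim to showing $(\mu,\lambda)=0$ whenever $\bar w\mu=\mu$, since $w(\mu+k\delta)=\bar w\mu-(\bar w\mu,\lambda)\delta+k\delta$. But your stated reason --- that the term is ``$\bar w$-antisymmetrized and dies'' --- is not a proof and cannot be repaired as stated, because the vanishing is false for arbitrary $\lambda$: taking $\bar w=e$ and any $\lambda\neq 0$, the fixed lattice of $t_\lambda$ is the proper sublattice $\{\mu:(\mu,\lambda)=0\}\oplus\mathbb{Z}\delta$ while $Q^{a,\bar w}=Q^a$. So any correct argument must use the hypothesis that $w$ has finite order, which your Property 2 never invokes. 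The missing ingredient is exactly this: $w^n=e$ forces $\sum_{i=0}^{n-1}\bar w^{\,i}\lambda=0$, i.e.\ $\lambda\in\mathrm{Ker}\,N_{\bar w}$, and then the same averaging identity you used in Property 1 gives $(\mu,\lambda)=\tfrac1n(\mu,N_{\bar w}\lambda)=0$ for every $\bar w$-invariant $\mu$. This is precisely the paper's one-line argument (finite order implies $\lambda\in\mathrm{Ker}_P\sum_i w^i\subset Q^{a,\perp w}\otimes_{\mathbb{Z}}\mathbb{Q}$, hence $\lambda$ is orthogonal to $Q^{a,w}$). With that line inserted, your proof is complete.
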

\begin{proof}
	\begin{enumerate}
     \item  
        Let us consider complexification $Q^a \otimes_\mathbb{Z} \mathbb{C}=\mathbb{C}^{r+1}$. 
        If $w$ has order $n$, eigenvalues of its natural action on this space
        will be $n$-th roots of unity.
		Note that the dimension of space of solutions of the equation $w(\sum_j c_j \alpha_j)=\sum_j c_j \alpha_j$ is the same for the base fields \(\mathbb{Q}\) and \(\mathbb{C}\). Hence the eigenspace of \(w\) with eigenvalue $1$ is complexification $Q^{a,w}\otimes_\mathbb{Z} \mathbb{C}$  of $Q^{a,w}$. Similarly, looking to the equation $\left(\sum_{i=0}^{n-1} w^i\right) \sum_j c_j \alpha_j=0$ we see that span of eigenspaces with eigenvalues $\neq1$ is a complexification $Q^{a,\perp,w}\otimes_\mathbb{Z} \mathbb{C}$  of $Q^{a,\perp,w}$.
		
%
         \item Because $w=t_{\lambda}\bar{w}$ has finite order, then $\lambda\in \mathrm{Ker}_{P} \sum_{i=0}^{n-1} w^i\subset
		 Q^{a,\perp w} \otimes_{\mathbb{Z}} \mathbb{Q}$.
         So $\lambda$ is orthogonal to $Q^{a,w}$, which give us the statement.
         \item This follows from $W^{ae}(\delta)=\delta$.
	\end{enumerate}       
\end{proof}
       
              These sublattices we will use for the calculation of the normalizer of folding transformations.
       Below we present some simplifications for calculation of such sublattices:
       
\begin{enumerate}
        \item Due to the Lemma \ref{lemma:detoa} 
        finitization $\bar{w}$ of folding group elements of case $1$
        is conjugated to folding group element of case $2$.
        A similar statement holds also for the case $3$.
        Hence, their invariant sublattices $Q^{a,w}$ are related by action
        of group elements, which are used for above conjugation.
        \item For case $2$ direct calculation gives $Q^{a,\perp w}=\langle \{\alpha_{i\in I}\} \rangle$.
        \item For case $1$ direct calculation gives $Q^{a,\perp w}=\langle \{\alpha_i-\pi(\alpha_i)\} \rangle$
        \item For case $3$ direct calculation gives $Q^{a,\perp w}=\langle \{\alpha_i-\pi(\alpha_i), \, i\in \Delta^a\setminus I \} \rangle
        \oplus \langle \{\alpha_{i\in I}\} \rangle$.
   \end{enumerate}

\paragraph{Normalizer and components.}
	
Normalizer of the cyclic group $\langle w \rangle$ is a semidirect product of group $N_{flip}$, changing its primitive generator and centralizer $C$:
\begin{equation}
	N(\langle w \rangle,W^{ae})=N_{flip}\ltimes C(w,W^{ae}), 
\end{equation}
$N_{flip}\subset C_2$ for $w$ of order $2,3,4,6$ and $C_4$ for $w$ of order $5$.  We write $\subset$, because we cannot guarantee that all elements of $N_{flip}$ are realized in $W^{ae}$. Indeed, for $w\in \widehat{\Omega}$ we cannot change primitive generator of outer automorphism group.
        
Normalizer acts on $A^w$ and it appears that centralizer preserves folding irreducible components $\mathcal{A}_w$ and $N_{flip}$ permutes them. Note that for true folding, if $t_{\lambda}\in N$, then it immediately belongs to centralizer $C$. 
	
\begin{prop}\label{prop:centr_lattice}
	Assume that $Q^{a,w}$ and $Q^{a,\perp w}$ are generated by root subsystems $\Phi^{a,w}$, $\Phi^{a,\perp w}$ of $\Phi^a$. Then
	\begin{equation}\label{centr_lattice}
		C(w,W^{ae})\subset \Aut\ltimes (W^a_{\Phi^{a,w}}\times W_{\Phi^{a,\perp w}})
	\end{equation}
	where $\Aut\subset \Aut(\Phi^{a,w}) \times \Aut(\Phi^{a,\perp w})$ denotes	subgroup of all automorphisms of both lattices, that is realized in $W^{ae}$.
\end{prop}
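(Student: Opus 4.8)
The plan is to show that every $g\in C(w,W^{ae})$ must preserve each of the canonical sublattices of $Q^a$ attached to $w$, and then to read off its shape from the orthogonal decomposition of $Q^a$ furnished by Lemma~\ref{lemma:lattices}. The starting observation is that $W^{ae}$ acts on the affine root lattice $Q^a$, and if $g$ commutes with $w$ then it commutes with the $\mathbb{Z}$-linear endomorphisms $w-1$ and $\sum_{i=0}^{n-1}w^i$ of $Q^a$; hence $g$ maps $Q^{a,w}=\mathrm{Ker}(w-1)$ and $Q^{a,\perp w}=\mathrm{Ker}\sum_i w^i$ onto themselves. By Lemma~\ref{lemma:lattices}(1) these sublattices are orthogonal and $Q^{a,w}\oplus Q^{a,\perp w}$ has full rank in $Q^a$, so $g$ is completely determined by the pair of restrictions $(g|_{Q^{a,w}},\,g|_{Q^{a,\perp w}})$.

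Next I would identify these restrictions as root-system automorphisms. Since $g\in W^{ae}$ permutes $\Phi^a$, and since by the hypothesis $\Phi^{a,w}=\Phi^a\cap Q^{a,w}$ and $\Phi^{a,\perp w}=\Phi^a\cap Q^{a,\perp w}$ generate the two sublattices, $g$ preserves $\Phi^{a,w}$ and $\Phi^{a,\perp w}$; here $\Phi^{a,w}$ is of affine type and $\Phi^{a,\perp w}$ of finite type, because by Lemma~\ref{lemma:lattices}(3) the imaginary root lies in $Q^{a,w}$ but not in $Q^{a,\perp w}$ (this is the source of the asymmetry $W^a$ versus $W$ in \eqref{centr_lattice}). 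Thus $g|_{Q^{a,w}}\in\Aut(\Phi^{a,w})$ and $g|_{Q^{a,\perp w}}\in\Aut(\Phi^{a,\perp w})$, and using the standard semidirect decomposition of the automorphism group of a (possibly reducible) root system into its Weyl group and the diagram automorphisms I would write $g|_{Q^{a,w}}=u_1\tau_1$ with $u_1\in W^a_{\Phi^{a,w}}$ and $g|_{Q^{a,\perp w}}=u_2\tau_2$ with $u_2\in W_{\Phi^{a,\perp w}}$, where $\tau_1,\tau_2$ are diagram automorphisms.

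The third step is to lift this back into $W^{ae}$ itself. The factors $u_1,u_2$ are products of reflections $s_\alpha$ in roots $\alpha$ of $\Phi^{a,w}\subset\Phi^a$, respectively of $\Phi^{a,\perp w}\subset\Phi^a$, hence lie in $W^{ae}$; because $\Phi^{a,w}\perp\Phi^{a,\perp w}$, the product $u_1u_2\in W^{ae}$ acts on $Q^a$ exactly as $(u_1,u_2)$ does on the full-rank sublattice $Q^{a,w}\oplus Q^{a,\perp w}$. Consequently $g(u_1u_2)^{-1}\in W^{ae}$ acts on $Q^a$ as the pure diagram automorphism $(\tau_1,\tau_2)$ of $\Phi^{a,w}\times\Phi^{a,\perp w}$; since the action of $W^{ae}$ on $Q^a$ is faithful, $g(u_1u_2)^{-1}$ \emph{is} such an automorphism and therefore belongs to the subgroup $\Aut\subset\Aut(\Phi^{a,w})\times\Aut(\Phi^{a,\perp w})$ of automorphisms realized inside $W^{ae}$ (taken as diagram automorphisms complementary to the Weyl factors). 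As $\Aut$ normalizes $W^a_{\Phi^{a,w}}\times W_{\Phi^{a,\perp w}}$, we obtain $g=(g(u_1u_2)^{-1})\cdot u_1u_2\in\Aut\ltimes(W^a_{\Phi^{a,w}}\times W_{\Phi^{a,\perp w}})$, which is \eqref{centr_lattice}.

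The step I expect to be the main obstacle is the faithfulness of $W^{ae}\to\Aut(Q^a)$ used in the lift: this is what converts ``$g$ acts on $Q^a$ like an element of $\Aut\ltimes(\cdots)$'' into ``$g$ equals such an element in $W^{ae}$''. For the large systems $E_r^{(1)}$ it holds because a nontrivial translation $t_\lambda$ moves some finite root by a nonzero multiple of $\delta$ while diagram automorphisms permute the basis $\{\alpha_i\}$ nontrivially, but the small cases $E_1^{(1)}$, $E_2^{(1)}$, whose symmetry groups are respectively enlarged and reduced (footnote~\ref{footnote:A1/A7}), must be examined separately; this is done case by case in Section~\ref{sec:class_answ}, where one also pins down exactly which diagram automorphisms constitute $\Aut$ in each concrete instance.
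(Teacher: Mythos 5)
Your argument is correct and follows essentially the same route as the paper's (much terser) proof: the centralizer preserves the canonically defined sublattices $Q^{a,w}$ and $Q^{a,\perp w}$, hence the root subsystems generating them, and therefore lands in $\Aut\ltimes (W^a_{\Phi^{a,w}}\times W_{\Phi^{a,\perp w}})$. The extra detail you supply (commutation with $w-1$ and $\sum_i w^i$, the role of Lemma~\ref{lemma:lattices}, and the lift of the diagram-automorphism part back into $W^{ae}$) is a faithful expansion of the same idea rather than a different approach.
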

	
\begin{proof}
	Sublattices $Q^{a,w}$ and $Q^{a,\perp w}$ are defined by $w$. 
	Hence centralizer $C(w,W^{ae})$ should preserve sublattices $Q^{a,w}$ and $Q^{a,\perp w}$, as
	well as corresponding root systems $\Phi^{a,w}$ and $\Phi^{a,\perp w}$.
\end{proof}
Subgroup $\Aut$ of elements that can be realized in $W^{ae}$
is usually much smaller than $\Aut(\Phi^{a,w}) \times \Aut(\Phi^{a,\perp w})$.
	
By definition of the invariant sublattice $W^a_{\Phi^{a,w}}$ commutes with $w$. 
We can decompose $\Aut=\Aut^{\parallel}\ltimes \Aut^{\perp}$, where $\Aut^{\parallel}$ acts effectively on the $\Phi^{a,w}$,
$\Aut^{\perp}\subset \Aut(\Phi^{a,\perp w})$.
If $\Aut^{\parallel}$ commutes with $w$, then
\begin{equation}\label{centr_lattice_eq}
	C(w,W^{ae})= \Aut^{\parallel}\ltimes (W^a_{\Phi^{a,w}}\times (\Aut^{\perp}\ltimes W_{\Phi^{a,\perp w}})^w), 
\end{equation}
where we take $w$-centralizer of last multiplier.
\begin{Remark} \label{rem:subsublattice}
	Sometimes in cases 1 and 3 sublattice $Q^{a,\perp w}$ is not generated by any root system.
	Namely, this happens for foldings $\pi\ltimes 5A_1\subset E_7^{(1)}$, $\pi \subset E_6^{(1)}$
	and $\pi \subset D_5^{(1)}$.
	But it appears that in such cases we can find $\widetilde{Q^{a,\perp w}}$ of finite index in $Q^{a,\perp w}$ which corresponds to some root system and invariant under the action of normalizer \(N(w,W^{ae})\). So the analog of the formula \eqref{centr_lattice} holds.
\end{Remark}

\begin{Example}\label{ex:sublattices}
	Let us consider folding $2A_3$ for $\Phi^a=E_7^{(1)}$ (second folding from Example \ref{ex:colorings})
	given by formula $w=s_{321}s_{765}$. We have, that $\Phi^{a, \perp w}=2A_3$ generated by $\alpha_1, \alpha_2,
	\alpha_3$ and $\alpha_5, \alpha_6, \alpha_7$. $Q^{a,w}$ obviously contain $\alpha_0$, so together with $\delta$
	we have $\Phi^{a,w}=A_1^{(1)}$.
	
	We choose generator $s_1s_3s_5s_7$ for $N_{flip}$, such generator permutes
	irreducible components $a_{1,2,3,5,6,7}=\ri$ and $a_{1,2,3,5,6,7}=-\ri$.
	
	Group $\Aut$ could be obtained by direct calculations.
	They give, that $\Aut^{\perp}=C_2$, generated by $\pi$.
	Another multiplier $\Aut^{\parallel}=C_2$,
	which generator is defined up to $\Aut^{\perp}$,
	we take $\pi s \pi s^{-1}$, where $s=s_{4354}s_{2132}$. $\Aut^{\perp}$ and
	$\Aut^{\parallel}$ commutes.
	Sublattices and automorphisms are pictured on Fig. \ref{fig:sublattices}
	\footnote{Here for simplicity we pictured action on orthogonal root subsystem of $s\pi s^{-1}$ instead of $\pi s \pi s^{-1}$.}.
	
	\begin{figure}[h]
	
		\begin{tikzpicture}[elt/.style={circle,draw=black!100,thick, inner sep=0pt,minimum size=2mm},scale=2.5]
			\path 	(-1,0) 	node 	(a1) [elt] {}
			(1,0) 	node 	(a2) [elt] {};

		    \draw [black,line width=2.5pt ] (a1) -- (a2);
		    
		     \draw [white,line width=1.5pt ] (a1) -- (a2);
		     
		     \draw [<->,black, line width=0.5pt]
		     (a1) -- (a2);
		     
		     \node at ($(a2.north) + (0,0.1)$) 	{\small$\underline{\delta-\alpha_{0}}$};	
		     	
		     \node at ($(a1.north) + (0,0.1)$) 	{\small $\alpha_{0}$};
		     
		       \draw [<->, dashed]
		     (a1) to[bend right=40] node[fill=white]{\small $\pi s \pi s^{-1}$}  (a2);

		     \node at (0,0.5) 	{$A_1^{(1)}$};
		     
			\begin{scope}[xshift=3cm]     
			 \path (-1,0.5) node (a3) [elt] {}
					(0,0.5) node  (a2) [elt] {}
					(1,0.5) node  (a1) [elt] {}
					(-1,-0.5) node (a5) [elt] {}
					(0,-0.5) node  (a6) [elt] {}
					(1,-0.5) node  (a7) [elt] {};
			
				\draw (a1) -- (a2) -- (a3) (a5) -- (a6) -- (a7);

		    	\node at ($(a1.north) + (0,0.1)$) 	{\small $\alpha_1$};	
		    	\node at ($(a2.north) + (0,0.1)$) 	{\small $\alpha_2$};	
		    	\node at ($(a3.north) + (0,0.1)$) 	{\small $\alpha_3$};	
		    	
		    	\node at ($(a5.south) + (0,-0.1)$) 	{\small $\alpha_5$};	
		    	\node at ($(a6.south) + (0,-0.1)$) 	{\small $\alpha_6$};	
		    	\node at ($(a7.south) + (0,-0.1)$) 	{\small $\alpha_7$};

		    	\draw[<->, dashed] (a3) to[bend left=40] node[fill=white] {$s \pi s^{-1}$} (a1);
		    	\draw[<->, dashed] (a5) to[bend right=40] node[fill=white] {$s \pi s^{-1}$} (a7);
		    	
		    	\draw[<->, dashed] (a5) to[bend right=0] node[fill=white] {$\pi$} (a3);
		    	\draw[<->, dashed] (a6) to[bend right=0] node[fill=white] {$\pi$} (a2);
		    	\draw[<->, dashed] (a7) to[bend right=0] node[fill=white] {$\pi$} (a1);
		    	
		    	\node at (-1.25,1) {$2A_3$};
		\end{scope}
	\end{tikzpicture}
	
		\caption{Root systems $\Phi^{a,w}$, $\Phi^{a,\perp w}$ and their automorphisms	\label{fig:sublattices} }	
	\end{figure}
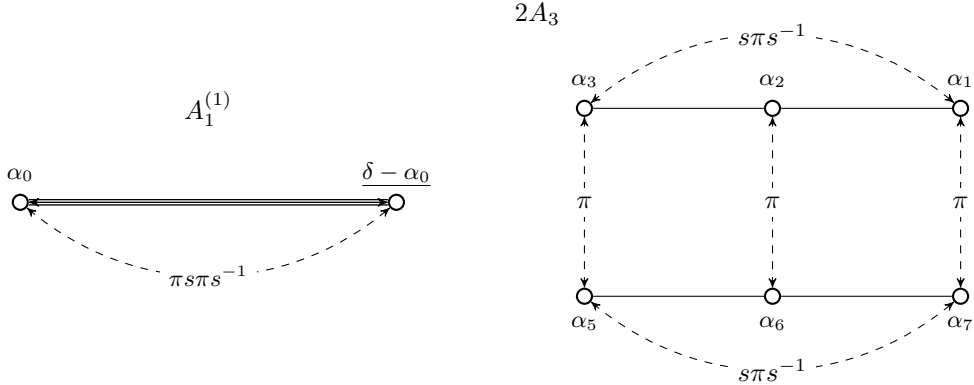
	
	We can check, that $\Aut^{\parallel}$ commutes with $w$,
	so due to \eqref{centr_lattice_eq}, we have
	\begin{equation}\label{eq:example:C}
		C(w,W^{ae}_{E_7})=\underbrace{C_2}_{\pi s \pi s^{-1}}\ltimes (W^a \times \underbrace{C_2}_{\pi s_1s_3s_5s_7}\ltimes
		\underbrace{C_4^2}_{\Omega_{A_3}^2})
		\simeq (\underbrace{C_2}_{\pi s_1s_3s_5s_7} \times W_{A_1}^{ae}) \ltimes C_4^2,
	\end{equation}
	where we already calculated centralizer $(\Aut^{\perp}\ltimes W_{\Phi^{a,\perp w}})^w$.
\end{Example}

\begin{Remark}
	Situation, presented in the Proposition \ref{prop:centr_lattice} occurs only for about half cases of our folding classification. In other cases one of $Q^{a,w}$, $Q^{a,\perp w}$ (or $\widetilde{Q^{a,\perp w}}$) is generated by long root  subsystem with roots of length $|\alpha|^2=2k$ for integer $k>1$. In this case we usually can find elements in \(W^{ae}\) which acts as reflections in this long root system. 
        But such an element could act by automorphisms on the other, $|\alpha|^2=2$ lattice, so direct product in the formula \eqref{centr_lattice} becomes semidirect.
\end{Remark}

Note that symmetry on the image of folding transformation is smaller then $C(w,W^{ae})$, because some of the elements of the centralizer (for example \(w\) itself) act trivially on the image.
In Sec.~\ref{ssec:nodal} we find the symmetry group on the image in each case by similar considerations.
Also we explain, how to identify elements of centralizer with symmetries, 
using explicit formulas for coordinates  \(\f=\f(F,G), \g=\g(F,G)\) found in Sec.~\ref{sec:geom_answ}.  
Now we only state that very often $W_{Q^{a,\perp w}}^w$ acts trivially on the image of the folding transformation. 
        
\paragraph{Projective reduction.} As was already mentioned in Remark \ref{rem:projective reduction} for special values of parameters there could exist group element $t^{1/n}\in W^{ae}$, such that 
\begin{equation}
	(t^{1/n})^n\in P, \quad t^{1/n} a_i=a_i q^{k_i/n},\, k_i\in\mathbb{Z}  
\end{equation}    
	
\begin{Example}
	We had already example of projective reduction in the Introduction (Example \ref{ex:Intr2}). The dynamics of the \(q\)-Painlev\'e equation acts on root variables as \eqref{eq:intro:bar a}, hence it is a translation  $t_{\omega_2-\omega_3}$, recall that \(\omega_i\in P\) denotes \(i\)-th fundamental weight. It is easy to see that 
	\begin{equation}
		t_{\omega_2-\omega_3}=\pi^2 s_{3453} s_{2012} =(\pi s_{2012})^2
	\end{equation}
%
%
	Let us denote  $t^{1/2}_{\omega_2-\omega_3}\equiv \pi s_{2012}$. This element acts as
	\begin{equation}
		t^{1/2}_{\omega_2-\omega_3} (a_0,a_1,a_2,a_3, a_4, a_5)=(a_5, a_4, a_2 a_{0123}, a_3 a_{0123}^{-1},a_1,a_0) 
	\end{equation}
    For example, if $a_0=a_5, a_4=a_1$, this become a projective reduction and $q=a_{0123}^2$.
    In the Example  \ref{ex:Intr2} we have $a_{0,1,4,5}=-1$ so the projective reduction condition is satisfied. This is dashed dashed dynamics as in \eqref{qPVIprr a}, \eqref{qPVIprr}.  
\end{Example}

To obtain a projective reduction for a special value of $\vec{a}$ it is necessary, that some root of $q$ is a Laurent monomial in free root variables. 

For the folding invariant subset projective reduction should act as a translation on the invariant lattice root system $\Phi^{a,w}$. This transformation could act on orthogonal lattice \(Q^{a,\perp w}\) as a nontrivial element of finite order. 
        
\begin{Example}[Continuation of the Example \ref{ex:sublattices}]
	\label{ex:projred_E7}
	
	For the folding transformation $2A_3 \subset E_7^{(1)}$ we have $q=a_0^2a_4^4=(a_0 a_4^2)^2$, so we can expect for projective
	reduction as a square root of elementary translation. The elementary translation in invariant lattice is given by \(\alpha_0=2\omega_0-\omega_4\). In terms of symmetry lattice \(\Phi^{a,w}\) this is translation by simple root, but one can expect existence of the translation by fundamental weight.
	
	According to the formula \eqref{eq:example:C} the elementary translation in \(W^{ae}_{A_1}\) is \(s_0 s \pi s^{-1} \pi\), where $s=s_{4354}s_{2132}$. And we have relation
%
	\begin{equation}
		t_{\omega_4-2\omega_0}=(s_0 s \pi s^{-1} \pi
		)^2=(t^{1/2}_{\omega_4-2\omega_0})^2
	\end{equation}
	This square root \(s_0 s \pi s^{-1} \pi\) is not a translation since it acts on orthogonal root system \(\Phi^{a,w}\) as a nontrivial automorphism of order to 2, see Fig. \ref{fig:sublattices}). Hence 	\(s_0 s \pi s^{-1} \pi\) is projective reduction.
\end{Example}

\begin{Remark} 
	For any \(\Phi^a\) and \(n>1\) we can consider the following coloring. It is defined by the set $\Delta^a\setminus I$ of \emph{white} nodes which is  $\Delta^a\setminus I=\{i\in \Delta^a| \mathfrak{m}_i \mod n=0\}$ where \(\mathfrak{m}_i\) are marks in affine Dynkin diagram, see formula \eqref{eq:marks def}. If we set root variabes corresponding to black nodes to be root of 1 (according to the case 2 condition \eqref{a_b_c2}) then \(q\) becomes the $n$ power of free root variables (possibly with some additional root of unity).
	
	It is interesting, that if all connected components of \(I\) is Dynkin diagram of $A$-type then such coloring correspond to folding transformation. Such constriction gives 7 folding transformations: $3A_2,4A_1,A_1+2A_3 \subset E_8^{(1)}$, $4A_1, 3A_2 \subset E_7^{(1)}$, $4A_1 \subset E_6^{(1)}$, $4A_1 \subset D_5^{(1)}$ (see Section \ref{ssec:answers algebraic}). It also gives some folding transformations without dynamics (see Appendix \ref{sec:ntrf}).
	Let us present the proof of this fact without case-by-case considerations.

	%
	The root marks for simply laced Dynkin diagram satisfy relation
	\begin{equation}\label{markrelation}
		\forall j \in \Delta^a:  \sum_{i\in N(j)} \mathfrak{m}_i=2\mathfrak{m}_j 
	\end{equation}
	which follow from orthogonality of \(\alpha_j\) and \(\delta\). Therefore for any white node \(c\) we have 
	\begin{equation}
		\sum_{b \in I \cap N(c)} \frac{\mathfrak{m}_b }n \in \mathbb{Z}.
	\end{equation}
	
	Let us number vertices on each connected component $\Delta_{j}$ from \(1\) to \(n_j\). We also number corresponding border (white) points by $0$ and \(n_j+1\), if the there is no border (white) point, we consider corresponding mark to be equal zero. Due to relation \eqref{markrelation} 
    for each connected component $\Delta_{j}$ difference $\mathfrak{m}_k-\mathfrak{m}_{k-1} \mod n$ does not depend on $k$. We have 
    \begin{equation} 
    	(\mathfrak{m}_k-\mathfrak{m}_{k-1})(n_j+1) =0 \mod n, \quad (\mathfrak{m}_k-\mathfrak{m}_{k-1}) l \neq 0 \mod n,\;\; \text{ for }l \leq n_j.
    \end{equation} 
	Hence there exist \(0 < m_j \leq n_j\), $\gcd (m_j,n_j+1)=1$ such that \( \frac{m_j}{n_j+1} = \frac{\mathfrak{m}_k-\mathfrak{m}_{k-1}}n \mod 1\). For such choice of \(m_j\) and any vertex \(k\) in \(\Delta_j\) we have \( \frac{k m_j}{n_j+1} = \frac{\mathfrak{m}_k}{n} \mod 1\). Hence the selection rule \eqref{selection_rule_c2} is satisfied.    
	%
\end{Remark} 
\newpage

\section{Geometric realization}	\label{sec:geom_gs}
\subsection{Picard lattice} \label{ssec:Picard}
	
In the previous section, we classified all possible folding transformations and folding subgroups.
Now for each folding, we will find to which Painlev\'e equation it maps. We will also show that folding transformations
found are nontrivial and pairwise non-conjugate. In this section, we explain the methods, illustrate them with certain examples and
give the table with answers. More details are given in Section \ref{sec:geom_answ}. 

Let us fix Painlev\'e family \(\mathcal{X}\). 
Recall that any surface \(\mathcal{X}_{\vec{a}}\) is blowup of $\mathbb{P}^1\times \mathbb{P}^1$ in $8$ points. 
We denote by \(H_F,H_G, E_1,\dots,E_8\) standard generators of the Picard group of \(\mathcal{X}_{\vec{a}}\), here
\(E_i\) is an exceptional divisor corresponding to \(i\)-th blowup, \(H_F\) and \(H_G\) are given by equations \(F=\text{const}\)
and \(G=\text{const}\) correspondingly, for generic constants. 

Let  \(K\) denotes canonical divisor of \(\mathcal{X}_{\vec{a}}\). 
The anti-canonical divisor can be represented by unique curve \(D\), in case of \(q\)-difference equation this curve is a closed
chain of \(\mathbb{P}^1\)-s, namely \((-K)=\delta_0+\delta_1+\dots+\delta_n\). Each \(\delta_i\) has self-intersection
\(\delta_i^2=-2\) and pairwise intersections are \(\delta_i\cdot \delta_j =1\) if \(|i-j|=1\) or \(|i-j|=n\) and zero otherwise.
Hence \(\delta_i\) generate lattice of the type \(A_n^{(1)}\) in the Picard group, this is so called \emph{surface lattice}
\(\Phi^{a}_{\text{surf}}\). The orthogonal complement to the surface lattice is called \emph{symmetry lattice} \(\Phi^{a}_{\text{sym}}=\Phi^{a}\).
The (affine extended) Weyl group of \(\Phi^{a}\) is the symmetry group of the family \(\mathcal{X}\) which we used in the previous section.
The Sakai's families are labeled by their symmetry/surface types.

\begin{Example}\label{Example:geom1}
	Take any numbers \(a_0,\dots, a_7 \in \mathbb{C}^*, \, a_0\neq 1\), let \(\mathcal{X}_{\vec a}\) be a blowup
	of \(\mathbb{P}^1\times \mathbb{P}^1\) at 8 points with coordinates as in the Fig. \ref{Fig:blowup E7}.
	
	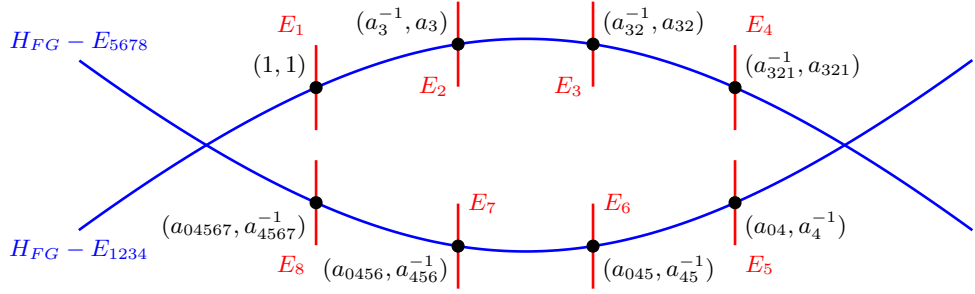
\begin{figure}[h]
	\begin{center}
		\begin{tikzpicture}[scale=4, rotate=45]
			\draw [domain=2.5:0.4,samples=800, blue,line width=1pt] plot (\x,1/\x) node[above] {\small $H_{FG}-E_{5678}$};
			\draw [domain=0.4:2.5,samples=800,blue,line width=1pt] plot 	(\a-\x,\a-1/\x) node[anchor = north] {\small $H_{FG}-E_{1234}$};
		
			\draw[red, line width=1pt]  (\a-1/\s^4.5-\l,\a-\s^4.5-\l) --  (\a-1/\s^4.5+\l,\a-\s^4.5+\l) node[above left] {\small $E_1$};
			\draw[fill] (\a-1/\s^4.5,\a-\s^4.5) circle (0.02cm) 	node[above left] {\small  $(1,1)$};
		
			\draw[red, line width=1pt]  (\a-1/\s^1.5+\l,\a-\s^1.5+\l) --  (\a-1/\s^1.5-\l,\a-\s^1.5-\l) node[left] {\small $E_2$};
			\draw[fill] (\a-1/\s^1.5,\a-\s^1.5) circle (0.02cm) node[above left] {\small $(a_3^{-1},a_3)$};
		
			\draw[red, line width=1pt]  (\a-\s^1.5+\l,\a-1/\s^1.5+\l) --  (\a-\s^1.5-\l,\a-1/\s^1.5-\l) node[left] {\small $E_3$};
			\draw[fill] (\a-\s^1.5,\a-1/\s^1.5) circle (0.02cm) node[above right] {\small  $(a_{32}^{-1},a_{32})$};
		
			\draw[red, line width=1pt]  (\a-\s^4.5-\l,\a-1/\s^4.5-\l) --  (\a-\s^4.5+\l,\a-1/\s^4.5+\l) node[above right] {\small $E_4$};
			\draw[fill] (\a-\s^4.5,\a-1/\s^4.5) circle (0.02cm) node[above right] {\small  $(a_{321}^{-1},a_{321})$};
		
			\draw[red, line width=1pt]  (1/\s^4.5+\l,\s^4.5+\l) --  (1/\s^4.5-\l,\s^4.5-\l) node[below right] {\small $E_5$};
			\draw[fill] (1/\s^4.5,\s^4.5) circle (0.02cm) node[below right] {\small  $(a_{04},a_4^{-1})$};
		
			\draw[red, line width=1pt]  (1/\s^1.5-\l,\s^1.5-\l) --  (1/\s^1.5+\l,\s^1.5+\l) node[right] {\small $E_6$};
			\draw[fill] (1/\s^1.5,\s^1.5) circle (0.02cm) node[below right] {\small  $(a_{045}, a_{45}^{-1})$};
		
			\draw[red, line width=1pt]  (\s^1.5-\l,1/\s^1.5-\l) --  (\s^1.5+\l,1/\s^1.5+\l) node[right] {\small $E_7$};
			\draw[fill] (\s^1.5,1/\s^1.5) circle (0.02cm) node[below left] {\small  $(a_{0456},a_{456}^{-1})$};
		
			\draw[red, line width=1pt]  (\s^4.5+\l,1/\s^4.5+\l) -- (\s^4.5-\l,1/\s^4.5-\l) node[below left] {\small $E_8$};
			\draw[fill] (\s^4.5,1/\s^4.5) circle (0.02cm) node[below left] {\small  $(a_{04567},a_{4567}^{-1})$};
		
		\end{tikzpicture}
	\end{center}
	\caption{Blowup scheme for \(E_7^{(1)}/A_1^{(1)}\) \label{Fig:blowup E7}}
	\end{figure}
	These points belong to two curves of degree \((1,1)\) on  \(\mathbb{P}^1\times \mathbb{P}^1\) given by equations \(F G=1\) and \(FG =a_0\).
	Therefore the meromorphic form \(\dfrac{dF\wedge dG}{(FG-1)(FG-a_0)}\) has no zeroes on \(\mathcal{X}_{\vec a}\) and only poles on
	the preimages of these curves. Hence \((-K)\) consists of two components
	\begin{equation}
		\delta_0=H_{FG}-E_{1234}, \quad 	\delta_1=H_{FG}-E_{5678},
	\end{equation}
	which generate surface lattice. 
	Its type is \(A_1^{(1)}\). We see that if $a_0=1$, then $-K$ degenerate, that's why this value is removed.
	The symmetry lattice is generated by classes 
	\begin{align}
		\alpha_0=H_F-H_G, \; \alpha_1=E_3-E_4,\; \alpha_2=E_2-E_3,\; \alpha_4=E_1-E_2,\\
		\alpha_5=H_G-E_{15}, \; \alpha_6=E_5-E_6,\; \alpha_7=E_6-E_7,\; \alpha_7=E_7-E_8.
	\end{align}
	The intersection pairing between these roots is (opposite to) intersection pairing in root lattice \(E_7^{(1)}\).
    \begin{center}
		\begin{tikzpicture}[elt/.style={circle,draw=black!100,thick, inner sep=2pt,minimum size=2mm},scale=1.25]
		\path 	(-3,0) 	node 	(a1) [elt] {\scriptsize $1$}
		(-2,0) 	node 	(a2) [elt] {\scriptsize $2$}
		( -1,0) node  	(a3) [elt] {\scriptsize $3$}
		( 0,0) 	node  	(a4) [elt] {\scriptsize $4$}
		( 1,0) 	node 	(a5) [elt] {\scriptsize $5$}
		( 2,0)	node 	(a6) [elt] {\scriptsize $6$}
		( 3,0)	node 	(a7) [elt] {\scriptsize $7$}
		( 0,1)	node 	(a0) [elt] {\scriptsize $0$};
		\draw [black,line width=1pt] (a1) -- (a2) -- (a3) -- (a4) -- (a5) --  (a6) -- (a7) (a4) -- (a0);
		\node at ($(a1.north) + (0,0.2)$) 	{\small $E_3-E_4$};
		\node at ($(a2.south) + (0,-0.2)$)  {\small  $E_2-E_3$};
		\node at ($(a3.north) + (0,0.2)$)  {\small  $E_1-E_2$};
		\node at ($(a4.south) + (0,-0.2)$)  {\small $H_G-E_{15}$};	
		\node at ($(a5.north) + (0,0.2)$)  {\small  $E_5-E_6$};		
		\node at ($(a6.south) + (0,-0.2)$) 	{\small $E_6-E_7$};	
		\node at ($(a7.north) + (0,0.2)$) 	{\small  $E_7-E_8$};	
		\node at ($(a0.north) + (0,0.2)$) 	{\small  $H_F-H_G$};		
	\end{tikzpicture}
	
	\end{center} 
	Hence we get symmetry/surface type \(E_7^{(1)}/A_1^{(1)}\).
\end{Example}

\begin{wraptable}{r}{5.5cm}	
	\begin{tabular}{|c|c|}
		\hline
		Self-intersection & Line \\
		\hline
		$\geq 0$ &  \begin{tikzpicture}
			\draw [black]  (0,0) -- 
			(2,0);
			\draw [black, line width=1pt]  (0,-\l) -- 
			(2,-\l);
		\end{tikzpicture} \\
		\hline
		-1 & \begin{tikzpicture}
			\draw [red, line width=1pt]  (0,0) -- 
			(2,0);
		\end{tikzpicture} \\
		\hline
		-2 & \begin{tikzpicture}
			\draw [blue, line width=1pt]  (0,0) -- 
			(2,0);
		\end{tikzpicture} \\
		\hline
		-3 & \begin{tikzpicture}
			\draw [cyan, line width=1pt]  (0,0) -- 
			(2,0);
		\end{tikzpicture} \\
		\hline
		-4 & \begin{tikzpicture}
			\draw [brown, line width=1pt]  (0,0) -- 
			(2,0);
		\end{tikzpicture} \\
		\hline
		-6 & \begin{tikzpicture}
			\draw [violet, line width=1pt]  (0,0) -- 
			(2,0);
		\end{tikzpicture} \\
		\hline
		-8 & \begin{tikzpicture}
			\draw [magenta, line width=1pt]  (0,0) -- 
			(2,0);
		\end{tikzpicture} \\
		\hline
		non-specified & \begin{tikzpicture}
			\draw [gray, line width=1.5pt]  (0,0) --
			(2,0);
		\end{tikzpicture} \\
		\hline
	\end{tabular}
\caption{Self-intersections in Figures   \label{Tab:selfintersections}}
\end{wraptable}

\begin{Remark}
	Here and below in figures we draw rational curves with self-intersection \((-1)\) in red and rational
	curves with self-intersection \((-2)\) in blue. Generally we will  distinguish curves with different self-intersections by different
	colors according to the Table \ref{Tab:selfintersections}		
\end{Remark}

	The parameters \(a_i\) used there are multiplicative root variables. They can be defined using period map
	\(\chi\colon Pic(\mathcal{X}_{\vec a})\rightarrow \mathbb{C}^*\), see \cite[Sec.~5]{Sakai01}
	\footnote{Note that in multiplicative case (on which we restrict in this paper) the image of the period map \(\chi\)
	is \(\mathbb{C} \mod \mathbb{Z}\). So we applied \(\exp(2 \pi \ri \cdot)\) to \(\chi\) in loc. cit}.
	Then \(a_i=\chi(\alpha_i)\), where \(\alpha_0,\dots,\alpha_r\) are simple roots in the symmetry system \(\Phi^a\).
	
 	Usually the classes of roots in \(\Phi^a\) are not effective, there are no curves representing such classes.
 	But for some special values of root variables some of these classes might become effective. Recall \cite[Sec.~3]{Sakai01}
 	that a smooth rational curve with self-intersection $(-2)$ is called a \emph{nodal curve}. According to this definition all
 	components \(\delta_0,\dots, \delta_n\) of \((-K)\) are nodal curves, but we will mostly interest in nodal curves in symmetry
 	lattice \(\Phi^a\), the set of such curves will be denoted as \(\Delta^{\nod}\). 
 	
 	\begin{prop}[{\cite[Prop. 22]{Sakai01}}]\label{prop:nodal}
 		Denote by \(W^{\nod}\) the subgroup of \(W_{\Phi^a}\)
 		generated by the reflections with respect to \(\alpha\in \Delta^{\nod}\). Then \(\chi(\alpha)=1\) if and
 		only if \(\alpha\in W^{\nod}\cdot \Delta^{\nod}\).
 	\end{prop}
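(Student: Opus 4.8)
The plan is to identify the set $\Psi:=\{\alpha\in\Phi^{a}:\ \alpha^{2}=-2,\ \chi(\alpha)=1\}$ of period-trivial real roots with a (possibly reducible) root subsystem of $\Phi^{a}$ whose simple roots are exactly the nodal curves $\Delta^{\nod}$; the biconditional then reduces to the elementary fact that in a simply-laced root system every root lies in the Weyl-group orbit of a simple root of its irreducible component.

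For the inclusion $W^{\nod}\cdot\Delta^{\nod}\subseteq\Psi$, the first step is to see that $\chi$ is $W^{\nod}$-invariant. If $\beta\in\Delta^{\nod}$ is represented by a smooth rational curve $C$ with $C^{2}=-2$, then adjunction gives $C\cdot K=0$, hence $C\cdot(-K)=0$; since $-K$ is represented by the effective anticanonical divisor $D$ and $C$ is not a component of $D$ (because $\beta\in\Phi^{a}$ is orthogonal to the surface lattice), $C$ is disjoint from $D$, so $\mathcal{O}(\beta)|_{D}\cong\mathcal{O}_{D}$ and $\chi(\beta)=1$. As $\chi$ is a homomorphism and $s_{\beta}(\gamma)=\gamma-(\gamma,\beta)\beta$, this gives $\chi(s_{\beta}\gamma)=\chi(\gamma)\,\chi(\beta)^{-(\gamma,\beta)}=\chi(\gamma)$ for all $\gamma$, and hence $\chi(w\cdot\delta)=\chi(\delta)=1$ for $w\in W^{\nod}$ and $\delta\in\Delta^{\nod}$.

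The reverse inclusion is the substantive part. The key lemma is: if $\alpha\in\Psi$, then $\alpha$ or $-\alpha$ is effective. I would prove this using the restriction sequence $0\to\mathcal{O}_{\mathcal{X}}(\alpha+K)\to\mathcal{O}_{\mathcal{X}}(\alpha)\to\mathcal{O}_{D}(\alpha)\to 0$ coming from $D\in|-K|$. Since $\alpha^{2}=-2$, $\alpha\cdot K=0$ and $K^{2}=0$, Riemann--Roch gives $\chi(\mathcal{O}(\pm\alpha))=0$, and Serre duality gives $H^{1}(\mathcal{O}(\alpha+K))\cong H^{1}(\mathcal{O}(-\alpha))^{\vee}$; assuming $\alpha$ is not effective forces $H^{0}(\mathcal{O}(\alpha))=H^{0}(\mathcal{O}(\alpha+K))=0$ (a section of $\mathcal{O}(\alpha+K)=\mathcal{O}(\alpha-D)$ would make $\alpha$ effective), whence Riemann--Roch for $-\alpha$ yields $h^{0}(\mathcal{O}(-\alpha))=h^{1}(\mathcal{O}(\alpha+K))$, while $\chi(\alpha)=1$ means $\mathcal{O}_{D}(\alpha)\cong\mathcal{O}_{D}$ has a nonzero global section, which (not lifting to $\mathcal{O}(\alpha)$) produces a nonzero class in $H^{1}(\mathcal{O}(\alpha+K))$, forcing $-\alpha$ effective. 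Granting this, for $\alpha\in\Psi$ effective I would write $\alpha=\sum_{i}n_{i}C_{i}$ with $C_{i}$ irreducible and $n_{i}>0$: each $C_{i}$ meets $D=-K$ non-negatively, so $C_{i}\cdot K\le 0$, and $\alpha\cdot K=0$ forces $C_{i}\cdot K=0$; a Hodge-index argument excludes any component with $C_{i}^{2}\ge 0$ apart from multiples of $\delta=[-K]$, which are in turn excluded by $\chi(\delta)=q\ne 1$; hence every $C_{i}$ is a nodal curve. So every element of $\Psi$ equals $\pm$ a non-negative integral combination of nodal curves, and since distinct irreducible curves meet non-negatively the elements of $\Delta^{\nod}$ have pairwise non-positive inner product; therefore $\Delta^{\nod}$ is a simple system for $\Psi$, with Weyl group $W^{\nod}$. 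As $\Phi^{a}$, hence $\Psi$, is simply-laced, every $\alpha\in\Psi$ is $W^{\nod}$-conjugate to an element of $\Delta^{\nod}$.

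The main obstacle is this effectivity dichotomy together with the component analysis: it is exactly here that the hypothesis $\chi(\alpha)=1$ enters, in conjunction with the anticanonical geometry of the Sakai surface ($-K$ effective, $K^{2}=0$, $\mathcal{X}$ rational, $\chi(\delta)=q$ generic). Everything else is formal manipulation with Riemann--Roch, adjunction, Hodge index, and the combinatorics of simply-laced root systems. (This is the content of \cite[Prop.~22 and the preceding lemmas]{Sakai01}.)
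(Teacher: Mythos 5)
The paper offers no proof of this proposition: it is imported verbatim from \cite[Prop.~22]{Sakai01}, so there is no in-paper argument to compare yours against; what you have written is a reconstruction of Sakai's own proof, and its skeleton is correct. The forward inclusion (disjointness of a nodal curve from the anticanonical cycle $D$, hence $\chi=1$ on $\Delta^{\nod}$, hence $W^{\nod}$-invariance of $\chi$), the effectivity dichotomy for a real root with trivial period via Riemann--Roch, Serre duality and the restriction sequence to $D$, and the final reduction to simply-laced root-system combinatorics are all the right steps and are carried out correctly. The one place you compress too far is the component analysis of the effective representative. Hodge index on $K^{\perp}$ does force any irreducible component with non-negative self-intersection into $\mathbb{Q}\delta$, but the phrase ``excluded by $\chi(\delta)=q\ne 1$'' does not work component by component: $\chi(\alpha)=1$ constrains only the total class, so a $q^{k}$ contribution from a $\delta$-multiple could a priori be cancelled by the other components; moreover a component could be one of the $(-2)$-curves $\delta_{j}$ of $D$ itself, which lies in the surface lattice and not in $\Delta^{\nod}$, so ``every $C_i$ is a nodal curve'' is not yet ``every $C_i$ is in $\Delta^{\nod}$.'' The repair is short: split $\alpha=\beta+\gamma$, where $\beta$ collects the components disjoint from $D$ (these have $C^2=-2$, $C\cdot K=0$, hence lie in $\Delta^{\nod}$, and $\chi(\beta)=1$ by your first paragraph) and $\gamma$ collects the rest, which for generic $q$ are necessarily components of $D$; then $\gamma=\alpha-\beta$ lies in the symmetry lattice while also lying in the surface lattice, so $\gamma=k\delta$, and $1=\chi(\alpha)=\chi(\beta)\,q^{k}=q^{k}$ with $q$ not a root of unity forces $k=0$, hence $\gamma=0$. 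With that patch your argument is complete and coincides with the cited one.
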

 	
	Let \(\alpha \in Q^a\) be a root (i.e. \(\alpha^2=-2\)), \(\chi(\alpha)\) be a corresponding root variable and
	\(s_\alpha\colon X_{\vec{a}} \rightarrow X_{s_\alpha(\vec{a})}\) corresponding reflection. For generic values of parameters
	we have \(\chi(s_\alpha(\alpha))=\chi(\alpha)^{-1}\). Now assume that \(\alpha\) is a nodal curve \(\alpha\in \Delta^{\nod}\), hence
	\(\chi(\alpha)=1\) and assume that other root variables are generic. We have \(\chi(s_\alpha(\alpha))=1\),
	hence \(s_\alpha(\alpha)=n\alpha\) for \(n \in \mathbb{Z}\), hence \(s_\alpha(\alpha)=\alpha\) since \((-\alpha)\) could not be effective.
	Moreover, for any \(\beta \in Q^a\)	we have \(\beta \cdot \alpha = s_\alpha(\beta) \cdot \alpha\) and \(\chi(s_\alpha(\beta))=\chi(\beta)\), hence \(s_\alpha(\beta)=\beta\). Since \(s_\alpha\) acts trivially on Picard lattice it should be a trivial transformation of \(\mathcal{X}_{\vec{a}}\). Clearly, the same property holds for divisors in \(\alpha \in W^{\nod}\cdot \Delta^{\nod}\), in other words for any divisor with \(\chi(\alpha)=1\) the corresponding
 	reflection should be trivial transformation of \(\mathcal{X}_{\vec{a}}\). This can be considered as a geometric proof of
 	Lemma~\ref{lemma:reflection}.

	For any two nodal curves \(\alpha,\beta \in \Delta_{\nod}\) we have either \(\alpha \cdot \beta =0\) or \(\alpha \cdot \beta =1\)
	since otherwise \((\alpha+\beta)^2 \geq 0\) and non proportional to \(\delta\) and this is forbidden in \(Q^a\).
	Therefore elements of \(\Delta_{\nod}\) are simple roots for some finite root system which we denote by \(\Phi_{\nod}\).
	We denote by \(Q_{\nod}\) the sublattice in \(Q^a\) generated by \(\Phi_{\nod}\) and
	\(\overline{Q}_{\nod}=(Q_{\nod} \otimes_\mathbb{Z} \mathbb{Q} ) \cap Q^a\). In general the lattices \(Q_{\nod}\) and
	\(\overline{Q}_{\nod}\) are different, but \(Q_{\nod}\) is a sublattice of finite index in \(\overline{Q}_{\nod}\).
	Usually the lattice \(\overline{Q}_{\nod}\) is generated by some root system, we denote this root system by 
	\(\overline{\Phi}_{\nod}\).

\subsection{Quotient}

\paragraph{Surface.} Let \(w \in W^{ae}\) be a true folding transformation. Recall that it means there exists \(\vec{a}\) such that
\(w (\vec{a})=\vec{a}\), but \(w\) acts on \(X=\mathcal{X}_{\vec{a}}\) nontrivially and there exists translation \(t_\lambda \in W^{ae}\)
which commutes with \(w\). Hence \(t_\lambda\) acts on the quotient \(Y'=X/\langle w \rangle\), where \(\langle w \rangle\) is
a group generated by \(w\). This action of \(t_\lambda\) is a new Painlev\'e dynamics, dynamics after folding transformation.
But in general \(Y'\) is not a rational surface corresponding to the Painlev\'e equation, one have to first resolve singularities
\(\widetilde{Y}\rightarrow Y'\) and then (possibly) perform additional blowdown \(\widetilde{Y}\rightarrow Y\).
 \begin{center}
 \includegraphics{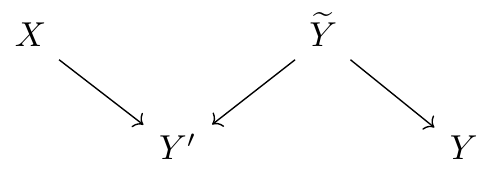}
 \end{center}

Let us discuss these steps in more detail. The singularities of \(Y'\) are images of points of \(p\in X\) which are
stabilized by nontrivial subgroups in \(\langle w \rangle\), denote generator of this subgroup by \(w^\sharp\).
The order of \(w\) can be \(2,3,4\) (this follows from classification performed in Section \ref{sec:class_gp}),
then \(w^\sharp=w\) or \(w^\sharp=w^2\) if order of \(w\) is 4. There exist some coordinates in the neighborhood of
\(p\) such that \(w^\sharp\) acts as \(F \mapsto F \zeta_m, G \mapsto G \zeta_{m}^d\), where \(\zeta_m\) is some primitive
root of unity of order \(m\), \(m\) is the order of \(w^\sharp\), and \(0<  d < m\). 

The minimal resolution of such singularities can be constructed using toric geometry as in \cite[Sec 2.6]{Fulton93}. After the resolution, we get an exceptional
divisor which is the sequence of \(\mathbb{P}^1\) with negative self-intersection.  In the table below we
give the structure of the exceptional divisor and intersection pairing on the components of the exceptional divisor.
We restrict to the case \(m\leq 4\) needed in the paper.  We also give formulas for the coordinates in the neighborhood
of the exceptional divisor. The last column stands for the contribution in self-intersection and will be explained below.
\begin{table}[h]
	\begin{center}
		\begin{tabular}{|c|c|c|c|c|c|}
			\hline
			Order  & Action 
			& Exceptional divisor & Intersection  & \(-d/m\)\\
			\hline 
			$2$ & $F\mapsto \zeta_2F,\, G\mapsto \zeta_2G$ 
			& \begin{tikzpicture}
				\draw [blue, line width=1pt]  (0,0) --
				(2,0);
				\draw (0,0) node[left] {\tiny $(F^2, G/F)$};
				\draw (2,0) node[right] {\tiny $(F/G, G^2)$};
			\end{tikzpicture} & $A_1$ & -1/2 \\
			\hline
			$3$ & $F\mapsto \zeta_3 F,\, G\mapsto \zeta_3 G$ 
			& \begin{tikzpicture}
				\draw [cyan, line width=1pt]  (0,0) -- 
				(2,0);
				\draw (0,0) node[left] {\tiny $(F^3, G/F)$};
				\draw (2,0) node[right] {\tiny $(F/G, G^3)$};
			\end{tikzpicture} & $(-3)$ & -1/3 \\
			\hline
			$3$ & $F\mapsto \zeta_3 F,\, G\mapsto \zeta_3^{-1}G$ 
			&
			\begin{tikzpicture}
				\draw [blue, line width=1pt]  (2*\l,0) --
				(-1.5,0.5);
				\draw [blue, line width=1pt]  (-2*\l,0) --
				(1.5,0.5);
				\draw (-1.5,0.5) node[left] {\tiny $(F^3, G/F^2)$};
				\draw (1.5,0.5) node[right] {\tiny $(F/G^2, G^3)$};
				\draw (0,0) node[below] {\tiny $(F^2/G, G^2/F)$};
			\end{tikzpicture}		
			& $A_2$ & -2/3 \\
			\hline
			$4$& $F\mapsto \zeta_4F,\, G\mapsto \zeta_4G$ 
			& \begin{tikzpicture}
				\draw [brown, line width=1pt]  (0,0) -- 
				(2,0);
				\draw (0,0) node[left] {\tiny $(F^4, G/F)$};
				\draw (2,0) node[right] {\tiny $(F/G, G^4)$};
			\end{tikzpicture} & $(-4)$ & -1/4\\
			\hline
			$4$& $F\mapsto \zeta_4F,\, G\mapsto \zeta_4^{-1}G$ 
			& 
			\begin{tikzpicture}
				\draw [blue, line width=1pt] (-0.5+\l,-\l) -- (-1.5,1);
				\draw [blue, line width=1pt]  (-0.5-2*\l,0) -- (0.5+2*\l,0);
				\draw [blue, line width=1pt]  (0.5-\l,-\l) -- (1.5,1);
				\draw (-0.5-\l,0) node[left] {\tiny $(F^3/G, F^2/G^2)$};
				\draw (0.5+\l,0) node[right] {\tiny $(G^2/F^2, G^3/F)$};
				\draw (-1.5,1) node[left]  {\tiny $(F^4, G/F^3)$};
				\draw (1.5,1) node[right] {\tiny $(F/G^3,G^4)$};
			\end{tikzpicture}
			& $A_3$ & -3/4 \\
			\hline
			$4$ & $F\mapsto -F, \, G\mapsto \zeta_4 G$ 
			&
			\begin{tikzpicture}
				\draw [blue, line width=1pt]  (0,0) -- (2,0);
				\draw (0,0) node[left] {\scriptsize $(F^2, G^2/F)$};
				\draw (2,0) node[right] {\scriptsize $(F/G^2, G^4)$};
			\end{tikzpicture}
			& $A_1$ & -1/2 \\
			\hline
		\end{tabular}
	\end{center}
	\caption{Singularities \label{Tab:singluar}}
\end{table}

It appears that in some cases the resolution \(\widetilde{Y}\) is not minimal, i.e. differs from representative of Painlev\'e family by 
additional blowup. In such case effective divisor \((-K)\) contains \((-1)\) curves. After blowdown of these curves, we get the surface \(Y\)
on which new Painlev\'e dynamics acts.

\begin{Example}[Continuation of Example \ref{Example:geom1}] \label{Example:geom2}
Consider folding transformation \(2 A_3 \subset E_7\) from the table in Section \ref{ssec:answers algebraic}. According to this
table the corresponding root variables should satisfy (for one of two irreducible components, see Remark \ref{rem:concomp})
\begin{equation}\label{eq:2A3 in E7 root variables}
	a_{1}=a_2=a_3=a_5=a_6=a_7=\ri.
\end{equation}

The action of the \(w\) is given by 
\begin{equation}
	w=s_{321765}: \quad F\mapsto -{\ri} F, \quad G\mapsto {\ri} G   
\end{equation}
where we used realization of the Weyl group given in Section \ref{ssec:E7/A1}. The order of \(w\) is 4. 
There are 4 fixed points in \(X\), we list them in the Table \ref{Tab:2A3 points}.
The meaning of the last column will be explained later, after identification with standard geometry.

\begin{table}[h]
	\begin{center}
		\begin{tabular}{|c|c|c|c|}
			\hline 
			sing. pt. & action & except. div. & Image\\
			\hline
			$(0,0)$ &  $F\mapsto -{\ri} F, \, G\mapsto {\ri} G$   & $A_3$ & $\e_5-\e_6, \, \e_6-\e_7, \, \e_7-\e_8$ \\
			\hline
			$(0,\infty)$ &  $F\mapsto -{\ri} F, \, G^{-1}\mapsto -{\ri} G^{-1} $
			& $(-4)$ & 
			$\h_{\g}-\e_{1234}$  \\
			\hline
			$(\infty,0)$ &  $F^{-1}\mapsto {\ri} F^{-1}, \, G\mapsto {\ri} G$ & $(-4)$ & 
			$\h_{\g}-\e_{5678}$ \\
			\hline
			$(\infty,\infty)$ & $F^{-1}\mapsto {\ri} F^{-1}, \, G^{-1}\mapsto -{\ri} G^{-1}$ & $A_3$ & $\e_1-\e_2, \, \e_2-\e_3, \, \e_3-\e_4$ \\
			\hline
		\end{tabular}
	\end{center}
\caption{Singular points for the folding \(2A_3 \subset E_7\)  \label{Tab:2A3 points}}
\end{table}

Now we proceed to geometry. It is easy to see that \(w\) permutes exceptional divisors \(E_1,E_2,E_3,E_4\) and also \(E_5,E_6,E_7,E_8\)
hence in the \(Y'\) they will go to two \((-1)\) curves.

Consider curves \(F G=1\), \(F G=a_0\) which are components of the \((-K)\).
Curves of the form $FG=const$ form a pencil with intersection \(2\) between each two curves. In the image of this pencil in \(\widetilde{Y}\) such intersections become 0, since we made resolution at points \((0,\infty)\) and \((\infty,0)\). Hence, the images of curves $FG=const$ have self-intersection 0, for generic \(const\). Therefore, the images of curves  \(F G=1\), \(F G=a_0\)  have  self-intersection \(-1\) due to one additional blowup mentioned above.

It appears that images in \(\widetilde{Y}\) of coordinate lines
\(F=0\), \(F=\infty\), \(G=0\), \(G=\infty\) are also \(-1\) divisors.
This (and also previous) self-intersection could be calculated, using standard intersection theory results, which we review below, see formula \eqref{eq:self-intersection}.


The first two pictures in  Fig. \ref{Fig:geom examp} represent \(X\) and \(\widetilde{Y}\).
Here and below we draw fixed point \(p \in X\) by black circles.
\begin{figure}[ht]
\begin{center}
\begin{tikzpicture}[scale=3]
	\draw [domain=0.4:2.5,samples=800, line width=1pt,blue] plot (\x,1/\x) node[below] {\small $H_{FG}-E_{5678}$};
	\draw [domain=2.5:0.4,samples=800,line width=1pt,blue] plot (\a-\x,\a-1/\x) node[left] {\small $H_{FG}-E_{1234}$};
	
	\draw [line width=1pt,red] (\s^3-\l,1/\s^3-\l) -- (\s^3+\l,1/\s^3+\l) node[above] {\small $E_8$};
	\draw [line width=1pt,red] (\s-\l,1/\s-\l) -- (\s+\l,1/\s+\l) node[pos=0, below left] {\small $E_7$};
	\draw [line width=1pt,red] (1/\s-\l,\s-\l) -- (1/\s+\l,\s+\l) node[pos=0, below left] {\small $E_6$};
	\draw [line width=1pt,red] (1/\s^3-\l,\s^3-\l) -- (1/\s^3+\l,\s^3+\l) node[right] {\small $E_5$};

	\draw[->, dashed] 
	(\s^3+\l,1/\s^3+\l) to
	[bend left=20] (\s+\l,1/\s+\l);
	
	\draw[->, dashed] 
	(\s+\l,1/\s+\l) to
	[bend left=20] (1/\s+\l,\s+\l);
	
	\draw[->, dashed] 
	(1/\s+\l,\s+\l) to
	[bend left=20] (1/\s^3+\l,\s^3+\l);
	
	\draw[<-, dashed] 
	(\s^3-\l,1/\s^3-\l) to
	[bend right=20] (1/\s^3-\l,\s^3-\l);
	
	\draw [line width=1pt,red] (\a-\s^3-\l,\a-1/\s^3-\l) -- (\a-\s^3+\l,\a-1/\s^3+\l) node[pos=0, below] {\small $E_1$};
	\draw [line width=1pt,red] (\a-\s-\l,\a-1/\s-\l) -- (\a-\s+\l,\a-1/\s+\l) node[above right] {\small $E_2$};
	\draw [line width=1pt,red] (\a-1/\s-\l,\a-\s-\l) -- (\a-1/\s+\l,\a-\s+\l) node[above right] {\small $E_3$};
	\draw [line width=1pt,red] (\a-1/\s^3-\l,\a-\s^3-\l) -- (\a-1/\s^3+\l,\a-\s^3+\l) node[pos=0, left] {\small $E_4$};
	
	\draw[->, dashed] 
	(\a-\s^3+\l,\a-1/\s^3+\l) to
	[bend right=20] (\a-\s+\l,\a-1/\s+\l);
	
	\draw[->, dashed] 
	(\a-\s+\l,\a-1/\s+\l) to
	[bend right=20] (\a-1/\s+\l,\a-\s+\l);
	
	\draw[->, dashed] 
	(\a-1/\s+\l,\a-\s+\l) to
	[bend right=20] (\a-1/\s^3+\l,\a-\s^3+\l);
	
	\draw[<-, dashed] 
	(\a-\s^3-\l,\a-1/\s^3-\l) to
	[bend left=20] (\a-1/\s^3-\l,\a-\s^3-\l);
	
	\draw (0.4,2) -- (2.5,2); 
	\draw (0.4,0.5) -- (2.5,0.5); 
	\draw (0.5,0.4) -- (0.5,2.5); 
	\draw (2,0.4) -- (2,2.5);

	
	\draw[fill] (0.5,0.5) circle (0.02cm);
	\draw[fill] (0.5,2) circle (0.02cm);
	\draw[fill] (2,0.5) circle (0.02cm);
	\draw[fill] (2,2) circle (0.02cm);
	
	\draw[->] (2.2,1.5) -- (2.8,1.5) node[pos=0.5, above] {$/w$};
	
	\begin{scope}[xshift=3cm,scale=1]
		\draw [line width=1pt,red] (0.5-2*\l,2-2*\l)  -- (2-2*\l,0.5-2*\l)
		node [right] {\tiny $\e^a$};
		\draw [line width=1pt,red]   (2+2*\l,0.5+2*\l) -- (0.5+2*\l,2+2*\l)
		node [left] {\tiny $\e^b$};
		
		\draw [line width=1pt,red]  (1.25+2*\l-\l,1.25+2*\l-\l) -- (1.25+2*\l+\l,1.25+2*\l+\l)
		;
		
		\draw [line width=1pt,red]  (1.25-2*\l-\l,1.25-2*\l-\l) -- (1.25-2*\l+\l,1.25-2*\l+\l)
		;
		
		\draw [line width=1pt,brown]  (0.5+\l-5*\l,2-\l-5*\l) -- (0.5+\l+5*\l,2-\l+5*\l)
				node [above] {\tiny $\h_{\f\g}-\e_{1234}-\e^{ab}$}
		;
		
		\draw [line width=1pt,brown]   (2-\l+5*\l,0.5+\l+5*\l) -- (2-\l-5*\l,0.5+\l-5*\l)
				node [below] {\tiny $\h_{\f\g}-\e_{5678}-\e^{ab}$}
		;
		
		\draw [line width=1pt,red]  (0.5-3*\l,0.5) -- 
		(0.5-3*\l,2)
		;
		
		\draw [line width=1pt,red]    
		(2+3*\l,2) -- (2+3*\l,0.5)
		;
		
		\draw [line width=1pt,red]  (0.5,0.5-3*\l) -- 
		(2,0.5-3*\l)
		;
		
		\draw [line width=1pt, red]   
		(2,2+3*\l) -- (0.5,2+3*\l)
		;
		
		\draw [line width=1pt,blue]  (5*\l,0.5+\l) -- (\l,0.5+\l)
		;
		
		\draw [line width=1pt,blue]  (0.5+\l,5*\l) -- (0.5+\l,\l)
		;
		
		\draw [line width=1pt,blue]  (5*\l-0.5*\l-\l,0.5+\l-0.5*\l+\l) -- (+0.5+\l-0.5*\l+\l,5*\l-0.5*\l-\l)
		;
		
		\draw [line width=1pt,blue]  (2,2-\l) -- (2+4*\l,2-\l)
		;
		
		\draw [line width=1pt,blue]  (2-\l,2) -- (2-\l,2+4*\l) 
		;
		
		\draw [line width=1pt,blue]  (2+0.5*\l+\l,2-\l+0.5*\l-\l) -- (2-\l+0.5*\l-\l,2+0.5*\l+\l)
		;
		
		\draw[->] (1.25,-0.25) -- (0.75,-0.75) node[pos=0.5, right] {bl. d. $\e^{ab}$};
		
	\end{scope}
 	\begin{scope}[rotate=-45, xshift=0.75cm, yshift=-1cm, scale=1.5]
		\draw [domain=0.4:2.5,samples=800, line width=1pt,blue] plot (2.5-\x,1/\x)
		node[above] {\tiny $\h_{\f\g}-\e_{5678}$};
		\draw [domain=0.4:2.5,samples=800,line width=1pt,blue] plot (2.5-\a+\x,\a-1/\x)
		node[below] {\tiny $\h_{\f\g}-\e_{1234}$};

		\draw [line width=1pt,red]
		(2.5-\s^2-1.5*\l,1/\s^2+1.5*\l) -- (2.5-\s^2+2*\l,1/\s^2-2*\l)
		node[right] {\tiny $\h_{\g}-\e_1$};
		
		\draw [line width=1pt,red]
		(2.5-1/\s^2-1.5*\l,\s^2+1.5*\l) -- (2.5-1/\s^2+2*\l,\s^2-2*\l)
		node[right] {\tiny $\e_8$};
		
		\draw [line width=1pt,red]
		(2.5-\a+\s^2-2*\l,\a-1/\s^2+2*\l) -- (2.5-\a+\s^2+1.5*\l,\a-1/\s^2-1.5*\l)
		node[pos=0,left] {\tiny $\h_{\g}-\e_5$};
		
		\draw [line width=1pt,red]
		(2.5-\a+1/\s^2-2*\l,\a-\s^2+2*\l) -- (2.5-\a+1/\s^2+1.5*\l,\a-\s^2-1.5*\l)
		node[pos=0,left] {\tiny $\e_4$};
		
		\draw [line width=1pt,blue]
		(2.5-\s^2-1.5*\l+\l+\l,1/\s^2+1.5*\l+\l-\l) -- (2.5-\a+1/\s^2+1.5*\l+\l-\l,\a-\s^2-1.5*\l+\l+\l)
		node[pos=0.5, right] {\tiny $\e_2-\e_3$};
		
		\draw [line width=1pt,blue]
		(2.5-1/\s^2-1.5*\l-\l+\l,\s^2+1.5*\l-\l-\l) -- (2.5-\a+\s^2+1.5*\l-\l-\l,\a-1/\s^2-1.5*\l-\l+\l)
		node[pos=0.5, left] {\tiny $\e_6-\e_7$};
		
		\draw [line width=1pt,blue]
		(2.5-\s^2-1.5*\l+\l+\l,1/\s^2+1.5*\l+\l-\l+\l) -- (2.5-\s^2-1.5*\l+\l+\l-2*\l,1/\s^2+1.5*\l+\l-\l+\l-2*\l)
		node[below left] {\tiny $\e_1-\e_2$};
		
		\draw [line width=1pt,blue]
		(2.5-\a+1/\s^2+1.5*\l+\l-\l+\l,\a-\s^2-1.5*\l+\l+\l) -- (2.5-\a+1/\s^2+1.5*\l+\l-\l+\l-2*\l,\a-\s^2-1.5*\l+\l+\l-2*\l)
		node[below left] {\tiny $\e_3-\e_4$};
		
		\draw [line width=1pt,blue]
		(2.5-1/\s^2-1.5*\l-\l+\l-\l,\s^2+1.5*\l-\l-\l) -- (2.5-1/\s^2-1.5*\l-\l+\l-\l+2*\l,\s^2+1.5*\l-\l-\l+2*\l)
		node[above right] {\tiny $\e_7-\e_8$};
		
		\draw [line width=1pt,blue]
		(2.5-\a+\s^2+1.5*\l-\l-\l,\a-1/\s^2-1.5*\l-\l+\l-\l) -- 
		(2.5-\a+\s^2+1.5*\l-\l-\l+2*\l,\a-1/\s^2-1.5*\l-\l+\l-\l+2*\l)
		node[above right] {\tiny $\e_5-\e_6$};
		
	\end{scope}
\end{tikzpicture}
\end{center}
\caption{Geometry for the folding \(2A_3 \subset E_7^{(1)}\) \label{Fig:geom examp}}
\end{figure}
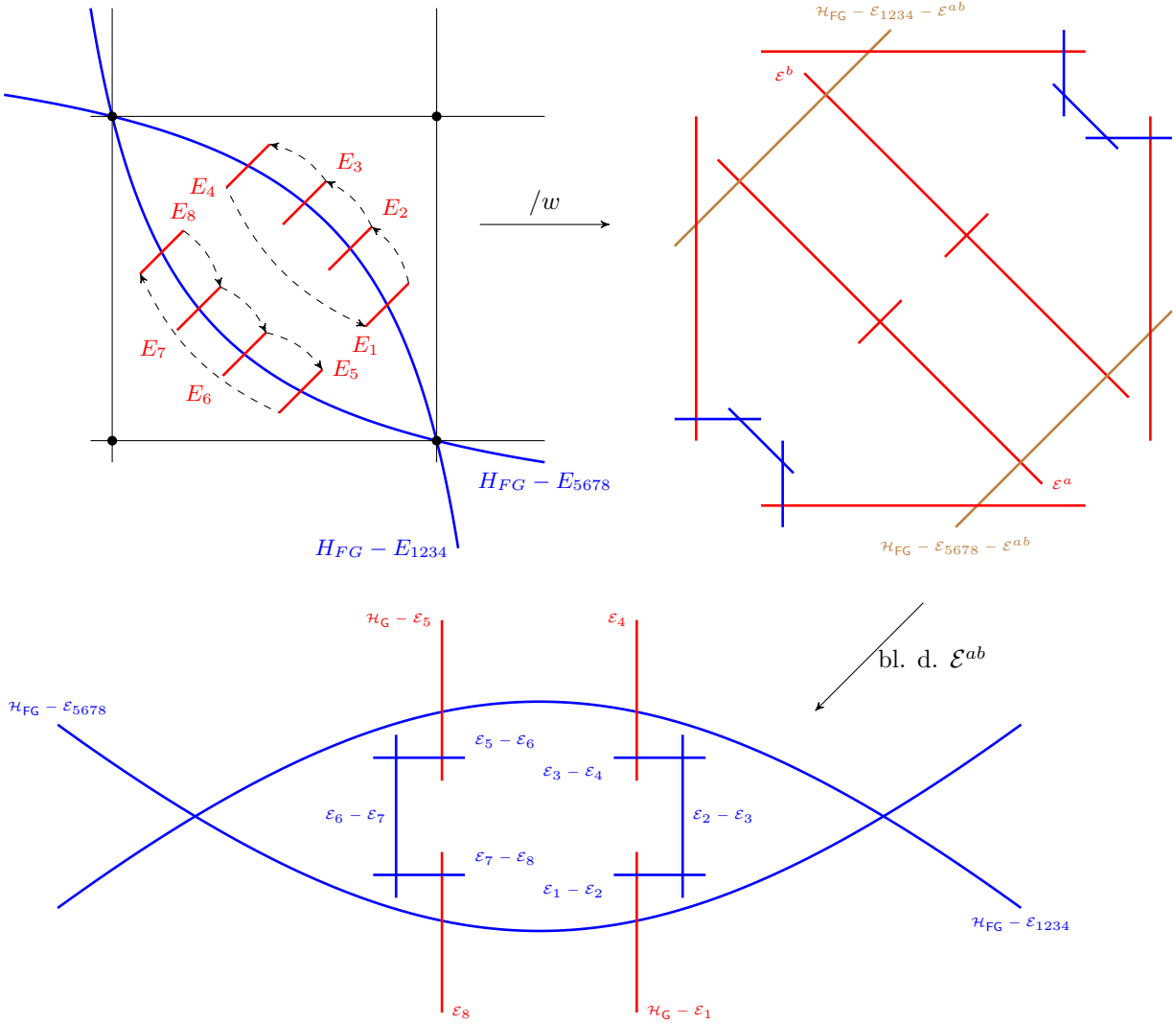

We see that anticanonical divisor on \(\widetilde{Y}\) consists of 4 components which are \((-4),(-1),(-4),(-1)\) curves.
The map \(\widetilde{Y}\rightarrow Y\) is blow down of these two \((-1)\) curves. The picture for \(Y\) is the third picture on
Fig. \ref{Fig:geom examp}. On the surface \(Y\) the anticanonical divisor consists of two \((-2)\) curves
and has zero self-intersection, this is \(A_1\) surface type.
Hence we have folding transformation from \(E_7^{(1)}/A_1^{(1)}\) \(q\)-Painlev\'e equation to itself.
\end{Example}

\paragraph{Intersections and Standard geometry.}
	
We have constructed above the birational map \(\psi \colon X \dashrightarrow Y\), such that \(\psi(w(p))=\psi(p)\) for any \(p \in X\). 
Now we have to identify \(Y\) with \(\mathcal{Y}_{\vec{\A}}\) which is the member of some Painlev\'e family \(\mathcal{Y}\). 
	
The surface \(\mathcal{Y}_{\vec{\A}}\) is defined through its blowup structure, namely choice of points
\(p_1,\dots, p_8 \in \mathbb{P}^1\times \mathbb{P}^1\). The configuration of these points could be quite degenerate,
including many infinitely near points. We follow \cite{KNY15} in the description of blowup structure on Painlev\'e family
and call the structure from loc. cit. a standard geometry. We recall necessary details in Section \ref{sec:geom_answ}. 

The choice of standard geometry leads to choice of the basis \(\e_1,\dots \e_8, \h_{\f}, \h_\g \), here \(\f,\g\)  are some coordinates on \(Y\),
choice of them is part of the choice of standard geometry. The standard geometry is not unique. In order to have formulas for \(\f,\g\)
simple it is natural to have \(\e_1,\dots \e_8, \h_{\f}, \h_\g \) as images of some natural divisors in \(X\).

Let us recall basic statements of intersection theory which we will use below. Let \(\psi' \colon X \rightarrow Y'\) denotes the quotient map.
This is a finite map of degree \(m\), where \(m\) is order of \(w\). Let \(C\) be a curve on \(X\) and \(D'\) be a curve on \(Y'\) such that
we have relation in Picard group \([C]=(\psi')^* ([D']) \Rightarrow \psi'_*([C])=m[D']\). Due to push-pull formula
(see e.g. \cite{Eisenbud:Harris:2016}) we have relation on self-intersection \((D',D')=(C,C)/m\).
Note that curve \(C\) should be \(w\)-invariant but not necessarily irreducible or reduced.

Let \(\widetilde{D}\) denotes the proper transform of \(D'\) in \(\widetilde{Y}\). The self-intersection of \(\widetilde{D}\) differs
from self-intersection \(D'\) by contribution of exceptional divisor in \(\widetilde{Y}\). The last contributions are local and can
be computed for example using toric description of resolution of singularities in  \cite[Sec 2.6]{Fulton93}. Here we present an answer,
in order to do this we need some notations.

Let \(x_i \in C\) be point which goes to singular point of \(Y'\). Assume first for simplicity that \(x_i\) is \(w\)-invariant and
action of \(w\) on \(C\) has order \(m\).  Then for certain primitive \(m\)-th root of unity \(\zeta_m\) and integer \(d_i\), \(0<d_i<m\)
the \(w\) acts in vicinity of the point \(p\) with eigenvalue \(\zeta_m\) in tangent to \(C\) direction and \(\zeta_m^{d_i}\)
in normal direction. Using these notations the contribution of the point \(x_i\) is \(-d_i/m\). 

In general \(x_i\) could be not invariant under the action. By \(w_i\) denote a power of \(w\) which preserves \(x_i\).
As was explained above \(w_i\) is either \(w\) or \(w^2\). Denote by \(m_i\) the order of \(w_i\) on \(C\) in vicinity of the point \(x_i\).
The number \(m_i\) could be smaller than the order of \(w_i\), we denote \(\mathrm{ord} (w_i)=m_i m_i^\perp\).
Then for certain primitive \(m_i\)-th root of unity \(\zeta_{m_i}\) and integer \(d_i\), \(0<d_i<m_i\) the \(w\) acts in vicinity
of the point \(x_i\) with eigenvalue \(\zeta_{m_i}\) in tangent to \(C\) direction and \((\zeta_{m_i}^{d_i/m_i^\perp})\) in normal direction.
Using these notations the contribution of the point \(x_i\) is \(-d_i/m_i\). Putting all this together we get the formula for self-intersection. 
\begin{equation} \label{eq:self-intersection}
	\widetilde{D}\cdot \widetilde{D} = \frac{1}{m} C\cdot C -\sum_{x_i} \frac{d_i}{m_i}.
\end{equation}
We present these local contributions in the last column of Table \ref{Tab:singluar}.

\begin{Example}[Continuation of Examples \ref{Example:geom1}, \ref{Example:geom2}]\label{Example:geom3}
Let \(C\) be one of the coordinate lines \(F=0\), It has self-intersection \(0\) and goes through two singular points, one of the is \(A_3\)
singularity and another one correponds to \((-4)\) curve. Hence the self-intersection of  \(\widetilde{D}\) is \((-1)\). 

Let \(C\) be one of the curves $F G=1$ or $FG=a_0$. The corresponding divisors on \(X\) are \(H_{FG}-E_{1234}\) and \(H_{FG}-E_{5678}\),
so \(C\) has self-intersection \((-2)\). It goes through two singular points, each of them is  \(A_3\) singularity.  Hence the self-intersection of
\(\widetilde{D}\) is \((-1)\).   

In Table \ref{Tab:2A3 curves} we list \(w\)-invariant curves which are drawn in the figure. For each curve, we specify fixed points
through which it goes and the corresponding tangent direction. As we computed above the images of these curves are \((-1)\) curves,
hence they are drawn red in Fig. \ref{Fig:geom examp}.

\begin{table}[h]
	\begin{center}
	\begin{tabular}{|c|c|c|c|c|c|}
		\hline
		Curve & $(0,0)$ & $(0,\infty)$ & $(\infty,0)$ & $(\infty,\infty)$ & Image \\
		\hline
		$F=0$ & $dF=0$ & $dF=0$ & - & - & $\h_{\g}-\e_5$\\
		\hline
		$F=\infty$ & - & - & $dF^{-1}=0$ & $dF^{-1}=0$ & $\h_{\g}-\e_1$ \\
		\hline
		$G=0$ & $dG=0$ & - & $dG=0$ & - & $\e_8$\\
		\hline
		$G=\infty$ & - & $dG^{-1}=0$ & - & $dG^{-1}=0$ & $\e_4$\\
		\hline
		$F G=1$ & - & $dF=dG^{-1}$  & $dF^{-1}=dG$ & - &  $(\infty,0)$ \\
		\hline
		$FG=a_0$ & - & $dF=a_0 dG^{-1}$ & $dF^{-1}=a_0^{-1}dG$ & - &  $(0,\infty)$ \\
		\hline 
	\end{tabular}	
	\end{center}
	\caption{Invariant curves for the folding \(2A_3 \subset E_7\) \label{Tab:2A3 curves}}
\end{table}

Now we can identify the standard geometry on \(Y\). According to standard geometry components of the anticanonical divisor
should be \(\h_{\f\g}-\e_{1234}\) and \(\h_{\f\g}-\e_{5678}\). We pick image of \(G=0\) to be \(\e_4\), image \(G=\infty\) to be \(\e_8\).
Other exceptional divisors are reducible, this is because of many nodal curves in the \(Y\), in the Fig. \ref{Fig:geom examp} we add names
of these curves. We also add names of the components of an anticanonical divisor on \(\widetilde{Y}\). In the last columns of the
Tables \ref{Tab:2A3 points}, \ref{Tab:2A3 curves} we write image in terms of the standard geometry on \(Y\). 
\end{Example}

\begin{Remark}
	In order to describe standard geometry on \(Y\) we use certain simple \(w\)-invariant curves on \(X\). The most typical example
	is the following. Let \(C\) be a rational curve with self-intersection by  \(k\). Assume that \(w\) acts on \(C\) as
	\(z \mapsto \zeta_n z\), where \(z\) is some coordinate on \(C\). In this case the formula \eqref{eq:self-intersection}
	can be reproduced by the following elementary calculation. 
	
	Let us choose local coordinates \((z,v_0)\) near \(0 \in C\) and \((z^{-1},v_\infty)\) near \(\infty \in C\).
	We can assume that \(v_\infty =v_0 z^{-k}\) since \(C^2=k\). Assume now that \(w\) acts on these coordinates as 
	\begin{equation}
		w \colon\; z\mapsto \zeta_n z,\; v_0\mapsto \zeta_n^{d_0} v_0,\; v_\infty\mapsto \zeta_n^{-d_\infty} v_\infty.
	\end{equation}
	After the quotient \(C\) becomes rational curve with coordinate \(z^n\). And after the resolution local coordinates near the
	images of fixed points are \((z^n,v_0 z^{-d_0})\) and \((z^{-n},v_\infty z^{d_\infty})\). Since 
	\begin{equation}
		v_\infty z^{d_\infty} = v_0 z^{-d_0} (z^n)^{-\frac{k-d_0-d_\infty}{n}}
	\end{equation}
	the self intersection of \(\widetilde{D}\) is $\frac{k-d_0-d_\infty}{n}$. This is consistent with formula \eqref{eq:self-intersection}.	
	\begin{center}
		\begin{tikzpicture}[scale=2]
			
			\begin{scope}
				\draw [gray, line width=1.5pt]  (0,0) --
				(2,0);
				\draw[->,line width=1.5pt] (0,0) -- (0.5,0);
				\draw[->,line width=1.5pt] (2,0) -- (1.5,0);
				\draw (0.5,0) node[below] {$z$};
				\draw (1.5,0) node[below] {$z^{-1}$};
				\draw (1,0) node[above] {$k$};
				\draw[fill] (0,0) circle (0.02cm);
				\draw[fill] (2,0) circle (0.02cm);
				
				\draw[->] (\l,0) -- (\l,3*\l);
				\draw[->] (2-\l,0) -- (2-\l,3*\l);
				\draw (\l,3*\l) node[right] {$v_0$};
				\draw (2-\l,3*\l) node[left] {$v_{\infty}$};
				
			\end{scope}
			
			\draw[->] (2.5,\l) -- (3.2,\l);
			\draw (2.8,\l) node[above] {$/w$};
			
			\begin{scope}[xshift=4cm]
				\draw [gray, line width=1.5pt]  (-\l,0) --
				(2+\l,0);
				
				
				\draw[->] (\l,0) -- (\l,3*\l);
				\draw[->] (2-\l,0) -- (2-\l,3*\l);
				\draw ( \l,3*\l) node[above] {$v_0 z^{-d_0}$};
				\draw (2-\l,3*\l) node[above] {$v_{\infty}z^{d_\infty}$};
				
				\draw[->,line width=1.5pt] (0,0) -- (0.5,0);
				\draw[->,line width=1.5pt] (2,0) -- (1.5,0);
				\draw (0.5,0) node[below] {$z^n$};
				\draw (1.5,0) node[below] {$z^{-n}$};
			\end{scope}
		\end{tikzpicture}
	\end{center}
	
	Let us mention another typical case. Let \(C^\sharp\) be rational curve with self-intersection \(k\) and assume that \(w\) acts
	on \(C^\sharp\) trivially. Then there is no singular points on \(D'\), hence \(\widetilde{D}=D'\). But \((\psi')^*([D'])=n [C^\sharp]=[C]\).
	Hence \((\widetilde{D})^2=C^2/n=nk\).
\end{Remark}

\begin{Remark} \label{rem:concomp}
	One can take another irreducible component, namely instead of \eqref{eq:2A3 in E7 root variables} we can set 
	\begin{equation}
		a'_{1}=a'_2=a'_3=a'_5=a'_6=a'_7=-\ri.
	\end{equation}
	But looking to the blowup scheme in Fig. \ref{Fig:blowup E7} we see that surfaces \(\mathcal{X}_{\vec{a}}\) and \(\mathcal{X}_{\vec{a'}}\) are naturally isomorphic if \(a'_0=a_0\) and \(a'_4=a_4\). Moreover, this isomorphism intertwines orbits of the group \(\langle w \rangle\) generated by \(w\). So the geometry of the quotients \(\mathcal{X}_{\vec{a}}/\langle w \rangle\) and \(\mathcal{X}_{\vec{a'}}/ \langle w \rangle\) coincide. 
	
	This can be stated more algebraically. Namely there is an element \(n_{flip}=s_{1357}\) which belong to normalizer of \(\langle w \rangle\), namely \( n_{flip} w n_{flip}^{-1}=w^3\). The isomorphism between \(\mathcal{X}_{\vec{a}}\) and \(\mathcal{X}_{\vec{a'}}\) is given by \( w n_{flip} \in W^{ae}\). 
	
	And this is always the case, if we have two different irreducible folding components in \(\mathcal{A}^w\) then there exists element \(n_{flip}\)
	in the normalizer of \(H\) which permutes these components. See the last column in tables in Sec. \ref{sec:class_answ}. So the folding transformation does
	not depend on an irreducible component in \(\mathcal{A}^w\).

%
%
\end{Remark}

\paragraph{Coordinates and parameters.}

The next goal is to write down the map \(\psi \colon X \dashrightarrow Y\) in coordinates. In other words we want to write down formulas 
\begin{equation}
	\f=\f(F,G),\quad  \g=\g(F,G).
\end{equation}
This is necessary for the study of explicit relation between solutions of the $q$-Painlev\'e equations. And also this provides some double
check of the geometric answer obtained above. 

It is convenient to find first the formulas for the pencils of the curves \(\f=\text{const}\) and \(\g=\text{const}\). Curves from such
pencils should be \(w\)-invariant and from the geometry, we can compute their degrees and properties which determines the pencils.
The structure of the pencil determines formulas for \(\f,\g\) up to M\"obius transformation,
as automorphisms of $\mathbb{P}^1\times \mathbb{P}^1$ on the image.

We find multiplicative root variables parameters \(\A_i\) on \(Y\) using formulas for coordinates \(\f,\g\). They always are 
products of some powers of the initial multiplicative root variables \(a_i\).

\begin{Example}[Continuation of Examples \ref{Example:geom1}, \ref{Example:geom2}, \ref{Example:geom3}]\label{Example:geom4}

	Consider first pencils of curves \(F=\text{const}\) and \(G=\text{const}\) on \(X\). We claim that
	\begin{equation}\label{eq:psi* HF HG}
		\psi_*(H_F)=\h_\f+5\h_\g-\e_{12345678},\quad 	\psi_*(H_G)=\h_\f+\h_\g.
	\end{equation}
	Indeed, it follows from the geometry (see image columns in Tables \ref{Tab:2A3 points}, \ref{Tab:2A3 curves}) that 
	\begin{align}
		&\psi_*(H_F)\cdot \e_4=	\psi_*(H_F)\cdot \e_8=1,\quad \psi_*(H_F)\cdot (\h_\g-\e_1)=\psi_*(H_F)\cdot (\h_\g-\e_5)=0\\ 
		&\psi_*(H_F)\cdot (\e_1-\e_2)=	\psi_*(H_F)\cdot (\e_2-\e_3)=\psi_*(H_F)\cdot (\e_3-\e_4)=0\\
		&\psi_*(H_F)\cdot (\e_5-\e_6)=\psi_*(H_F)\cdot (\e_6-\e_7)=	\psi_*(H_F)\cdot (\e_7-\e_8)=0,\\
		& \psi_*(H_F)\cdot (\h_\f+\h_\g-\e_{1234})=\psi_*(H_F)\cdot (\h_\f+\h_\g-\e_{5678})=2.
	\end{align}
	From this intersections we get the formula \eqref{eq:psi* HF HG}  for \(\psi_*(H_F)\). As a double-check note that by push-pull
	formula the self-intersection of \(\psi'_*(H_F)\) is zero and after blowdown of \(\e^{ab}\) (see Fig. \ref{Fig:geom examp}) the
	self-intersection of \(\psi_*(H_F)\) will be 2, this is consistent with formula \eqref{eq:psi* HF HG}.
	
	Similarly we have
	\begin{align}
		&\psi_*(H_G)\cdot (\h_\g-\e_1)=	\psi_*(H_G)\cdot (\h_\g-\e_5)=1,\quad \psi_*(H_G)\cdot \e_4=	\psi_*(H_G)\cdot \e_8=0,\\
		&\psi_*(H_G)\cdot (\e_1-\e_2)=	\psi_*(H_G)\cdot (\e_2-\e_3)=\psi_*(H_G)\cdot (\e_3-\e_4)=0\\
		&\psi_*(H_G)\cdot (\e_5-\e_6)=
		\psi_*(H_G)\cdot (\e_6-\e_7)=	\psi_*(H_G)\cdot (\e_7-\e_8)=0,\\
		&\psi_*(H_G)\cdot (\h_\f+\h_\g-\e_{1234})=\psi_*(H_G)\cdot (\h_\f+\h_\g-\e_{5678})=2.
	\end{align}
	This leads to the formula \eqref{eq:psi* HF HG} for \(\psi_*(H_G)\). As a double-check, we can compute that self-intersection of
	\(\psi_*(H_G)\) is 2, this is consistent with  formula \eqref{eq:psi* HF HG}.
	
	Now return to the curves given by \(\f=\text{const}\) and \(\g=\text{const}\) in \(X\). From the formula \eqref{eq:psi* HF HG}
	we know intersections of \(\psi_*(H_F),\psi_*(H_G)\) with \(\h_\f\) and \(\h_\g\). Using this we conclude that  \(\f=\text{const}\)
	should be \((1,5)\) curve and \(\g=\text{const}\) should be \((1,1)\) curve. We claim that  
	\begin{equation}\label{eq:psi* hf hg}
		\psi^*(\h_\f)=H_F+5H_G-E_{12345678},\quad \psi^*(\h_\g)=H_F+H_G
	\end{equation}
	Indeed, on the surface \(Y\) curves \(\g=\text{const}\) intersect each component of \(-K_Y\) in one point.
	Therefore curves \(\g=\text{const}\) on \(X\) intersect hyperbolas \( FG=1\) and \(FG=a_0\) in two points \((0,\infty)\)
	and \((\infty,0)\), hence do not contain exceptional divisors, so we get formula \eqref{eq:psi* hf hg} for \(\psi^*(\h_\g)\).
	Similarly, the curves \(\f=\text{const}\) on \(X\) intersect hyperbolas \( FG=1\) and \(FG=a_0\) in two points \((0,\infty)\)
	and \((\infty,0)\), but since they are \((1,5)\) curves they should contain exceptional divisors. Due to \(w\)-invariance the
	only choice is given by the formula~\eqref{eq:psi* hf hg}. 
	
	The formulas \eqref{eq:psi* hf hg} can be also deduced (or checked) computing self-intersection. Namely, note that the curves
	\(\f=\text{const}\) and \(\g=\text{const}\) on \(X\) goes through two singular points with \(d_i=1\) and in the image have
	self-intersection 0. Therefore on \(X\) such curves have self-intersection 2 (using \eqref{eq:self-intersection}).
	This and \(w\)-invariance give formulas~\eqref{eq:psi* hf hg}.
	
	Now we can write formula for the pencils \(\f=\text{const}\) and \(\g=\text{const}\). The first one consist of \((1,1)\) curves
	which are \(w\)-invariant and go through \((0,\infty)\) and \((\infty,0)\). This determines the 1-parameter family 
	\begin{equation}
		c_1 FG - c_2=0, 
	\end{equation}
	where \(c_1,c_2\) are complex constants. The second pencil consist of \((1,5)\) curves which go through  \((0,\infty)\)
	and \((\infty,0)\) and intersect all exceptional divisors \(E_1,\dots,E_8\). This determines the 1-parameter family 
	\begin{equation}
		c_1 (FG-a_0)(G^4-1) - c_2 (FG-1)(G^4-a_4^{-4})=0,
	\end{equation}
	where \(c_1,c_2\) are complex constants (recall that root variables are subject of \eqref{eq:2A3 in E7 root variables}).

	The final formulas for \(\f\) and \(\g\) are ratios of two equations from the families above. They are determined by the
	properties that curves \(FG-1\), \(FG-a_0\), \(G=\infty\) go to \((\infty,0)\), \((0,\infty)\) and \((1,1)\) correspondingly.
	We get formulas \eqref{eq:2A3 in E7 fg}.

	It remain to find root variables on the image. Since on the image we have effective divisors \(\e_1-\e_2\), \(\e_2-\e_3\), \(\e_3-\e_4\)
	the blowup points \(p_1,p_2,p_3,p_4\) on the image are infinitely near points. Hence \(\A_1=\A_2=\A_3=1\). Similarly \(\A_5=\A_6=\A_7=1\).
	Since curve \(G=0\) goes to \(\e_8\) we see from formulas \eqref{eq:2A3 in E7 fg} that \(p_8=(a_0 a_4^4,a_0^{-1})\).
	Hence \(\A_0=a_4^4\) and \(\A_4=a_0\). We put these root variables in the diagram below.
			    
	\begin{tabular}{m{7.5cm}m{7.5cm}}
		\begin{center}
			 \begin{tikzpicture}[elt/.style={circle,draw=black!100,thick, inner sep=0pt,minimum size=2mm},scale=0.75]
				\path 	(-3,0) 	node 	(a1) [elt] {}
				(-2,0) 	node 	(a2) [elt] {}
				( -1,0) node  	(a3) [elt] {}
				( 0,0) 	node  	(a4) [elt] {}
				( 1,0) 	node 	(a5) [elt] {}
				( 2,0)	node 	(a6) [elt] {}
				( 3,0)	node 	(a7) [elt] {}
				( 0,1)	node 	(a0) [elt] {};
				\draw [black,line width=1pt ] (a1) -- (a2) -- (a3) -- (a4) -- (a5) --  (a6) -- (a7) (a4) -- (a0);
				\node at ($(a1.south) + (0,-0.3)$) 	{\small $1$};
				\node at ($(a2.south) + (0,-0.3)$)  {\small $1$};
				\node at ($(a3.south) + (0,-0.3)$)  {\small $1$};
				\node at ($(a4.south) + (0,-0.3)$)  {\small $a_0$};	
				\node at ($(a5.south) + (0,-0.3)$)  {\small $1$};	
				\node at ($(a6.south) + (0,-0.3)$) 	{\small $1$};
				\node at ($(a7.south) + (0,-0.3)$) 	{\small $1$};
				\node at ($(a0.west) + (-0.3,0)$) 	{\small $a_4^4$};
			\end{tikzpicture}
		\end{center}	
		&
		\begin{center}
			\begin{equation}\label{eq:2A3 in E7 fg} 
				\begin{aligned}
			             &\f=\frac{FG-a_0}{FG-1}\, \frac{G^4-1}{G^4-a_4^{-4}}, \\
			             &\g=\frac{FG-1}{FG-a_0}.
			    \end{aligned}
		    \end{equation}
		\end{center}
	\end{tabular}   
\end{Example}			    

Finally, after defining geometry on the image for all foldings transformations, we should
check, that foldings with the same order and the same symmetry/surface type both
in the image and in the preimage are not equivalent.
Such groups of foldings appear iff their finitizations $\bar{w}$ are conjugated (see Remark \ref{rem:diffaff_sameim}).
It appears that these groups have also the same nodal root subsystem $\Phi_{\nod}$.
But $\overline{\Phi}_{\nod}$ appear to be different for them, so we take $\overline{\Phi}_{\nod}$ as distinguishing invariant.

\subsection{Nodal curves and symmetries on image}
\label{ssec:nodal}

\paragraph{Symmetries on the image from nodal root systems.}  
Let us define \emph{special symmetry lattice} --- root sublattice in root lattice $\Phi^a$,
orthogonal to the nodal curves $\Delta_{\nod}$: 
$Q^{\symm}=(\alpha \in Q^a| (\alpha, \Delta_{\nod})=0)$.

Assume that $Q^{\symm}$ is generated by certain (affine) root subsystem $\Phi^{\symm}$.
Then we can calculate symmetries of our special subfamily $Y_{\vec{\A}}$ analogously to Prop. \ref{prop:centr_lattice}.
Indeed, our special configuration is defined by $\Delta_{\nod}$, so we should
take only those elements from $W^{ae}$ in the image $Y$, that preserve $\Delta_{\nod}$, as well as its
orthogonal special symmetry lattice $\Phi^{\symm}$. 
So we have
\begin{equation}\label{symm_lattice_eq}
\mathrm{Sym}(Y_{\vec{a}})=\Auti\ltimes W_{\Phi^{\symm}}=
(\mathrm{Aut}^{\symm} \ltimes W_{\Phi^{\symm}})\ltimes \Auti^{\nod},
\end{equation}
where $\Auti=\Auti^{\symm}\ltimes \Auti^{\nod}$
for automorphism subgroups $\Auti^{\nod}\subset \Auti(\Delta_{\nod})$
and $\Auti^{\symm}\subset \Auti(\Phi^{\perp, \nod}) \times \Auti(\Delta_{\nod})$,
the latter acts effectively on $\Phi^{\symm}$. Note that automorphisms $\Auti(\Delta_{\nod})$ is only permutations.
As in the algebraic case, this is subgroups, that are realized in symmetry group $W^{ae}$ of subfamily $Y_{\vec{\A}}$. 

\begin{Remark}
There are some peculiarities about sublattice $Q^{\symm}$.
Namely, for $E_3^{(1)}$ and $E_1^{(1)}$ cases there sublattices could be realized only
by long root system $\Phi^{\perp, \nod}|_{|\alpha|^2=2n}$, where $n$ is the order of folding.
Then we should find reflections by hands and they could act on $\Delta_{\nod}$.
Even more problems we occur for
folding transformations $\pi^3$ and $\pi^2$ for $E_3^{(1)}$.
In this case elementary long root system simple reflections cannot be realized in $W^{ae}$,
that's why it is convenient to take instead of $Q^{\symm}$ its sublattice
with imaginary root $2\delta$ and $3\delta$ respectively.
Cf. Remark \ref{rem:subsublattice} from the preimage side.
\end{Remark}

\begin{Remark}
In the special configuration $\vec{\A}$ in the image we can also have projective reduction
and it could be found in the same way, as in the preimage. I.e., $\mathsf{q}$ should be
some full power of root variables $\A$ and projective reduction should be translation
in the special symmetry root subsystem $\Phi^{\symm}$ and non-trivial permutation of $\Delta^{\nod}$.
\end{Remark}

\begin{Example}[Continuation of Examples \ref{Example:geom1} - \ref{Example:geom4}]
\label{Example:geom5}

On the Fig. \ref{Fig:geom examp} we see 6 nodal curves, which form two $A_3$ root systems, consisting from
$\alpha_3=\e_1-\e_2, \, \alpha_2=\e_2-\e_3, \, \alpha_1=\e_3-\e_4$ and $\alpha_5=\e_5-\e_6, \, \alpha_6=\e_6-\e_7, \, \alpha_7=\e_7-\e_8$.
Root variables $\A_0$ and $\A_4$ are free, so, according to Prop. \ref{prop:nodal} there are no other irreducible nodal curves.
Extended nodal root subsystem equals to the nodal root subsystem \(\overline{\Phi}_{\nod}=\Phi_{\nod}=2A_3\).

It is easy to see that $\Phi^{\symm}=A_1^{(1)}$, generated by $\alpha_0$ and $\delta$.
Special symmetry and nodal root subsystems as well as its automorphisms are represented by left and right pictures
on Fig. \ref{fig:sublattices_im}. Calculation gives, that $\Auti^{\nod}$, $\Auti^{\symm}$ are commuting $C_2$, generated by $\pi$ and 
$\pi s \pi s^{-1}$ for $s=s_{4354} s_{2132}$ respectively\footnote{On the Fig. \ref{fig:sublattices_im} we
pictured action on nodal root subsystem of $s \pi s^{-1}$ instead of $s$ for simplicity.}. 
So symmetry on the image will be $C_2 \times W_{A_1}^{ae}$, this is written at the right side of Fig. \ref{fig:sublattices_im}.
	   
	   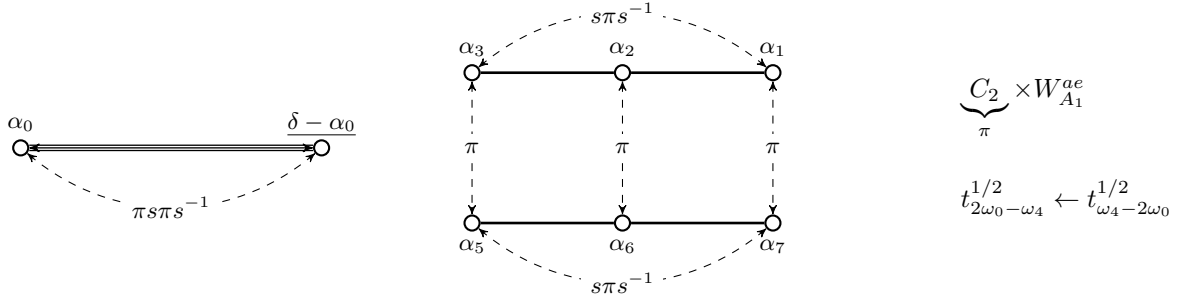
\begin{figure}[h]
	   \centering
	    \begin{tikzpicture}[elt/.style={circle,draw=black!100,thick, inner sep=0pt,minimum size=2mm},scale=2]
			\path 	(-1,0) 	node 	(a1) [elt] {}
			(1,0) 	node 	(a2) [elt] {};

		    \draw [black,line width=2.5pt ] (a1) -- (a2);
		    
		     \draw [white,line width=1.5pt ] (a1) -- (a2);
		     
		     \draw [<->,black, line width=0.5pt]
		     (a1) -- (a2);
		     
		     \node at ($(a2.north) + (0,0.1)$) 	{\small$\underline{\delta-\alpha_0}$};	
		     	
		     \node at ($(a1.north) + (0,0.1)$) 	{\small $\alpha_0$};
		     
		     \draw [<->, dashed]
		     (a1) to[bend right=40] node[fill=white]{\small $\pi s \pi s^{-1}$}  (a2);

		 \begin{scope}[xshift=3cm,scale=1]
			\path 	(1,0.5) 	node 	(a1) [elt] {}
			(0,0.5) 	node 	(a2) [elt] {}
			(-1,0.5) node  	(a3) [elt] {}
			
			( 1,-0.5) 	node 	(a7) [elt] {}
			( 0,-0.5)	node 	(a6) [elt] {}
			( -1,-0.5)	node 	(a5) [elt] {};
			
			\draw [black,line width=1pt ] (a1) -- (a2) -- (a3) (a5) --  (a6) -- (a7);
		\node at ($(a1.north) + (0,0.1)$)  {\small $\alpha_1$};
		\node at ($(a2.north) + (0,0.1)$)  {\small $\alpha_2$};
		\node at ($(a3.north) + (0,0.1)$)  {\small $\alpha_3$};
		
		\node at ($(a5.south) + (0,-0.1)$) {\small $\alpha_5$};	
		\node at ($(a6.south) + (0,-0.1)$) {\small $\alpha_6$};
		\node at ($(a7.south) + (0,-0.1)$) {\small $\alpha_7$};
		
		\draw [<->, dashed]
		     (a1) to[bend right=40] node[fill=white]{\small $s\pi s^{-1}$}  (a3);
		 \draw [<->, dashed]
		     (a5) to[bend right=40] node[fill=white]{\small $s\pi s^{-1}$}  (a7);
		 
		 \draw[<->,dashed] (a1) edge[bend right=0] node[fill=white]{\small $\pi$} (a7);
			\draw[<->,dashed] (a2) edge[bend right=0] node[fill=white]{\small $\pi$} (a6);
			\draw[<->,dashed] (a3) edge[bend right=0] node[fill=white]{\small $\pi$} (a5);
		
		\end{scope}
		
		 \node at (6,0) {\parbox{3cm}
		{$\underbrace{C_2}_{\pi} \times W_{A_1}^{ae}$\\[0.5cm]
                 $t^{1/2}_{2\omega_0-\omega_4} \leftarrow t^{1/2}_{\omega_4-2\omega_0}$ 
                 }};

			\end{tikzpicture}
			
\caption{Special symmetry and nodal root subsystems for the image of $2A_3\subset E_7^{(1)}$}
		
		\label{fig:sublattices_im}
		
	    \end{figure}

	We have $\mathrm{q}=\a_0^2\a_4^4=(\a_0 \a_4^2)^2$ and the elementary translation in special symmetry lattice is given by
	\(\alpha_0=2\omega_0-\omega_4\) so one can expect existence of the translation by fundamental weight.
	We have projective reduction $t^{1/2}_{2\omega_0-\omega_4}$:
	\begin{equation}
	t^{1/2}_{2\omega_0-\omega_4}=s^{-1} \pi s \pi s_0,
	\end{equation}
	which non-trivially permute nodal curves.
	We see that it corresponds to the projective reduction $t^{1/2}_{\omega_4-2\omega_0}$ in the preimage due
	to $\A_0=a_4^4, \A_4=a_0$ and $\mathsf{q}=q^2$. We will talk systematically about correspondence in the image 
	and preimage in the coming paragraph.
	Such correspondence is also written on the right side of Fig. \ref{fig:sublattices_im}.
	
	Note that configuration of root variables in the preimage and in the image are similar:
	$a_{1,2,3,5,6,7}=\ri$ vs. $\A_{1,2,3,5,6,7}=1$, that is why Fig. \ref{fig:sublattices} and \ref{fig:sublattices_im} coincide
	(cf. Remark \ref{rem:pairs}).
	\end{Example}

	\paragraph{Relation to centralizer in the preimage.}
	
	Finally, let us connect symmetries $\mathrm{Sym}(Y_{\vec{\A}})$ in the image and centralizer $C(w,W^{ae})$ in the preimage
	for folding transformation $w$ of order $n$. We have map $\psi_*=C(w,W^{ae}) \mapsto \mathrm{Sym}(Y_{\vec{\A}})$,
	which is not injective, because at least folding sungroup $\langle w \rangle$ becomes trivial.  
	It appears, that in several cases it is even not surjective, see below.
	Connection between centralizer and symmetry in the image is done by such steps:
	\begin{enumerate}
	 \item Let $\alpha\in \Phi^{\symm}$ be a root. Then we have $|\psi_*^{-1}(\alpha)|^2=n|\alpha|^2$,
	due to push-pull formula. So we take in the preimage ``long'' root subsystem
	$\Phi^{a,w}_{long}\subset Q^{a,w}$ with $|\alpha|^2=2n$ (or $|\alpha|^2=2ln$, where $2l$ is a length of special symmetry root subsystem
	in the image) instead of $\Phi^{a,w}$. Type of this long root subsystem should
	coincide with special symmetry root subsystem in the image. We choose simple roots of this long root subsystem in a way
	that root variables of $\Phi^{a,w}_{long}$ equal to the corresponding root variables of $\Phi^{\symm}$ in the image
	\footnote{For a few cases, up to exponentiation to a common power.}.
	 
	 \item Then we isomorphically rewrite $C(w,W)$ in the form of \eqref{centr_lattice_eq}, but for long $\Phi^{a,w}_{long}$.
	 A common situation is that reflections in short invariant root subsystem become outer automorphisms of long invariant root subsystem
	 and vice versa. Groups of automorphisms $\Aut^{\parallel}$ and $\Auti^{\symm}$ also should become the same.
	 
	 \item Check, what elements of $(\Aut^{\perp}\ltimes W_{\Phi^{a,\perp w}})^w$
	 act non-trivially on the image (their image belongs to $\Auti^{\nod}$). That could be done using explicit coordinates
	 $(\f(F,G),\g(F,G))$.
	 
	 \item If there are additional symmetries $\mathrm{Sym}(Y_{\vec{\A}})$, which do not come from $C(w,W)$, then
	 they act trivially on initial coordinates and root variables. From the point of view of coordinates, this means that
	 formulas for $\f(F,G), \g(F,G)$ contain fractional powers of root variables (or(and) roots of unity) and such symmetries
	 occur from their branching (or Galois group). Such symmetries belong to $\Auti^{\nod}$, so they permute nodal curves
	 and corresponding fixed points in the preimage.
	 
	 \item Finally, we can check the correspondence $\psi_*$ explicitly.
	 Namely, for $w\in \Aut^{\parallel}\ltimes W^a_{\Phi^{a,w}}$
	 and $\mathsf{w}=\psi_*(w)\in\Auti^{\symm} \ltimes W_{\Phi^{\perp, \nod}}$
	 we should have
	 \begin{equation}
	 \mathsf{w}(\vec{\A})(\vec{a})=\vec{\A}(w(\vec{a})), \qquad  \mathsf{w}(\f,\g)(\vec{a},F,G)=
	 (\f,\g)(w(\vec{a}),w(F),w(G)).\label{docking}
	 \end{equation}
	 Automorphism group $\Auti^{\symm}$ is defined up to $\Auti^{\nod}$,
	 the latter formula fix this freedom.
	 We also should relate nontrivial $(\Aut^{\perp}\ltimes W_{\Phi^{a,\perp w}})^w$
         to the subgroup of $\Auti^{\nod}$ in similar manner.
         
         In fact this is non-trivial double-check both for calculation of $C(w,W)$ and $\mathrm{Sym}(Y_{\vec{\a}})$.
	\end{enumerate}

\begin{Example}[Continuation of Examples \ref{Example:geom1} - \ref{Example:geom5}]
 
 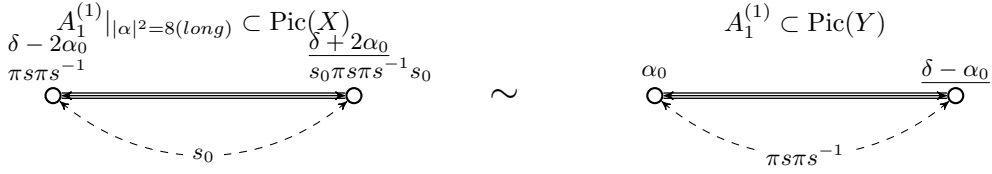
\begin{figure}[h]

\begin{tikzpicture}[elt/.style={circle,draw=black!100,thick, inner sep=0pt,minimum size=2mm},scale=2]
  
  \node at (0,0.5) 	{$A_1^{(1)}|_{|\alpha|^2=8 (long)} \subset \mathrm{Pic}(X)$};
			\path 	(-1,0) 	node 	(a1) [elt] {}
			(1,0) 	node 	(a2) [elt] {};

		    \draw [black,line width=2.5pt ] (a1) -- (a2);
		    
		     \draw [white,line width=1.5pt ] (a1) -- (a2);
		     
		     \draw [<->,black, line width=0.5pt]
		     (a1) -- (a2);
		     
		     \node at ($(a2.north) + (0.2,0.2)$) {\small \parbox{2cm}{$\underline{\delta+2\alpha_{0}}$\\$s_0 \pi s \pi s^{-1}s_0$}};	
		     	
		     \node at ($(a1.north) + (0.2,0.2)$) {\small \parbox{2cm}{$\delta-2\alpha_0$\\$\pi s \pi s^{-1}$}};
		     
		     \draw [<->, dashed]
		     (a1) to[bend right=40] node[fill=white]{\small $s_0$}  (a2);
		     
		     \node at (2,0) {\Large $\sim$};
		     
		     \begin{scope}[xshift=4cm]
		     
		      \node at (0,0.5) 	{$A_1^{(1)}\subset \mathrm{Pic}(Y)$};
		     
		     \path 	(-1,0) 	node 	(a1) [elt] {}
			(1,0) 	node 	(a2) [elt] {};

		    \draw [black,line width=2.5pt ] (a1) -- (a2);
		    
		     \draw [white,line width=1.5pt ] (a1) -- (a2);
		     
		     \draw [<->,black, line width=0.5pt]
		     (a1) -- (a2);
		     
		     \node at ($(a2.north) + (0,0.1)$) 	{\small$\underline{\delta-\alpha_0}$};	
		     	
		     \node at ($(a1.north) + (0,0.1)$) 	{\small $\alpha_0$};
		     
		     \draw [<->, dashed]
		     (a1) to[bend right=40] node[fill=white]{\small $\pi s \pi s^{-1}$}  (a2); 
		     \end{scope}

		     \end{tikzpicture} 

\caption{Relation between symmetries in image and preimage}

\label{fig:docking}
		     
\end{figure}
 
Here we compare symmetry from Fig. \ref{fig:sublattices_im} with centralizer \eqref{eq:example:C}:
 \begin{enumerate}
 \item We should take root subsystem in $Q^{a,w}$ with roots $|\alpha|^2=8$, so  $Q^{a,w}_{long} \subset 2\alpha_0, \delta$.
 To match root variables $\A_0=a_4^4$ and $\chi(\delta-\alpha_0)=a_{04}^4$ in the image,
 we should take simple roots $\delta-2\alpha_0$, $\delta+2\alpha_0$, see Fig. \ref{fig:docking}.
 Note that imaginary root become $2\delta$ instead of $\delta$.
 
 \item $W^{ae}_{\Phi^{a,w}}\simeq W^{ae}_{\Phi^{a,w}_{long}}$, $s_0$ becomes outer automorphism,
 $\pi s \pi s^{-1}$ becomes reflection.
 \item $\Omega_{A_3}^2$ act trivially on the image, $\pi s_1s_3s_5s_7$ --- non-trivially.
 
 \item We have no such additional symmetries.
 
 \item Changing root variables symmetry generators  in image and preimage correspond
 to each other by superimposition of left and right pictures of Fig. \ref{fig:docking}:
 \begin{equation}
 \psi_*(\pi s \pi s^{-1})=s_0, \quad \psi_*(s_0)=\pi s \pi s^{-1}.
 \end{equation}
 Also we have $\psi_*(\pi s_1s_3s_5s_7)=\pi$.
\end{enumerate}
 
 Generator $\pi$ is folding in the image and provide further extension of
 folding group $C_4$ to $C_2\ltimes C_4=\Dih_4^{(2)}$ 
 
\end{Example}

	\newpage

	\subsection{Answers} \label{ssec:geom answers}
	
	\begin{tabular}{|M{1.5cm}|M{2cm}|M{1.5cm}|M{1cm}|M{1.5cm}|M{1.5cm}|M{3cm}|M{1.5cm}|}
	    \hline 
	    Sym./surf. & Diagram & Name & Order & Goes to & $\overline{\Phi}_{\nod}$ & Symmetry & Sect.\\ 
        \hline	
          & 
\includegraphics[scale=1]{e8_222_s.pdf}
                &   
                $3A_2$
		& 3 &  $E_6^{(1)}/A_2^{(1)}$ & \(2A_2\) & $C_2 \ltimes W_{A_2}^a$ & \ref{e8_222}
		\\
		\cline{2-8}
		$E_8^{(1)}/A_0^{(1)}$ & 
                 \includegraphics[scale=1]{e8_1111_s.pdf}
  &  $4A_1$
        & 2 &  $E_7^{(1)}/A_1^{(1)}$ & \(3A_1\) & $S_3\ltimes W_{D_4}^a$ & \ref{e8_1111}
		\\
		\cline{2-8}
		
		& 
\includegraphics[scale=1]{e8_133_s.pdf}
        & 
        $A_1+ 2A_3$
        & 4 &  $D_5^{(1)}/A_3^{(1)}$ & \(A_3+A_1\) & $W_{A_1}^a$  & \ref{e8_133}
		\\
 		\hline
		
		 &
%
          \includegraphics[scale=1]{e7_p_s.pdf}
		&
		$\pi$
		&  &  & \(D_4\) & $ S_3 \ltimes (C_2^2 \times W_{D_4}^a)$ & \ref{e7_p} \\
		\cline{2-3} \cline{6-8}
		
		& 
			\includegraphics[scale=1]{e7_111_s.pdf}
			&
			$3A_1$
			& \vspace{-8mm} 2  & \vspace{-8mm} $E_8^{(1)}/A_0^{(1)}$ & \(4A_1\) & $S_4\ltimes W_{D_4}^a$ & \ref{e7_111} \\
		
		\cline{2-8}

$E_7^{(1)}/A_1^{(1)}$	&
\includegraphics[scale=1]{e7_1111_s.pdf}
                &
                $4A_1$
		& 2 &   $D_5^{(1)}/A_3^{(1)}$ & \(2A_1\) & $C_2^2 \ltimes W_{A_3}^a$ & \ref{e7_1111} \\
		\cline{2-8}
		& 
\includegraphics[scale=1]{e7_222_s.pdf}
                &
                $3A_2$ 
		& 3   & $E_3^{(1)}/A_5^{(1)}$ & \(A_2\) & $W_{A_1}^{ae}$ &\ref{e7_222}\\
		
		\cline{2-8}

		&
\includegraphics[scale=1]{e7_33_s.pdf}
                &
                $2A_3$
		& & & \(2A_3\) &  $ C_2 \times W_{A_1}^{ae}$ & \ref{e7_33}\\
		
		\cline{2-3} \cline{6-8}
		
		 &
%
\includegraphics[scale=1]{e7_p11111_s.pdf}
		&
		$\pi \ltimes 5A_1$
		& \vspace{-6mm} 4   &  \vspace{-6mm} $E_7^{(1)}/A_1^{(1)}$   & \(D_6\)& $C_2^2\times W_{A_1}^a$  & \ref{e7_p11111} \\

		\hline
			
	      &
\includegraphics[scale=1]{e6_p_s.pdf}
                &
                $\pi$
		& &   & \(E_6\) & $C_2\ltimes (S_3\times W_{A_2}^a)$ & \ref{e6_p}  
		\\
		\cline{2-3}  \cline{6-8}
		 
		$E_6^{(1)}/A_2^{(1)}$ &
%
\includegraphics[scale=1]{e6_22_s.pdf}
                &
                $2A_2$
		& \vspace{-1cm} 3  &  \vspace{-1cm}  $E_8^{(1)}/A_0^{(1)}$ & \(3A_2\)& $S_3\ltimes W_{A_2}^a $ & \ref{e6_22}
		\\
		
		\cline{2-8}
		&
\includegraphics[scale=1]{e6_1111_s.pdf}
                & $4A_1$
		& 2 &  $E_3^{(1)}/A_5^{(1)}$ & \(A_1\)& $W_{A_2}^{ae}$  & \ref{e6_1111}
		\\
		\hline
		
	 &
%
\includegraphics[scale=1]{d5_pp_s.pdf}
        & $\pi^2$ & &  &  \(D_4\)  & $\!\!S_3 {\ltimes} (C_2^2 {\times} (C_2 {\ltimes} W_{3A_1}^a))$  &\ref{d5_pp}\\
	 
	\cline{2-3}  \cline{6-8}
	& 
\includegraphics[scale=1]{d5_11_s.pdf}
       &
	$2A_1$		
	& \vspace{-8mm} 2 & \vspace{-8mm} $E_7^{(1)}/A_1^{(1)}$ &  \(4A_1\) & $(S_3\ltimes C_2^3) \ltimes W_{3A_1}^a$ & \ref{d5_11}	\\	
	\cline{2-7}
	
      $D_5^{(1)}/A_3^{(1)}$	&
               \includegraphics[scale=1]{d5_p_s.pdf}
               & $\pi$ 
		&  &    & \(E_7\)& $C_2^2 \times W_{A_1}^{a}$ & \ref{d5_p}\\
		\cline{2-3}  \cline{6-8}
		
		&
\includegraphics[scale=1]{d5_13_s.pdf}
                        & $A_1+A_3$
			& 4  & $E_8^{(1)}/A_0^{(1)}$ & \(2A_3+A_1\) & $W_{A_1}^{ae}$ & \ref{d5_13} \\
		
		\cline{2-3} \cline{6-8}
		
		&	
%
\includegraphics[scale=1]{d5_pp111_s.pdf}
                & $\pi^2\ltimes 3A_1$ 
		& &  &   \(D_6+A_1\) &  $ C_2^2\times W_{A_1}^a$ & \ref{d5_pp111}\\

			\cline{2-8}
		& 
\includegraphics[scale=1]{d5_1111_s.pdf}
                        & $4A_1$
			& 2 &  $A_1^{(1)}/A_7^{(1)'}$ & \(0\)& $C_2 \times (C_4\ltimes W_{A_1}^{a})$  & \ref{d5_1111}\\	
			\hline
	
	 &
%
%
%
%
%
\includegraphics[scale=1]{e3_ppp_s.pdf}
       &
     $\pi^3$
	& &   & \(D_4\)
		 & $C_2^2 \times W^{ae}_{A_2}\, (2\delta)$ & \ref{e3_ppp}\\
	
	\cline{2-3} \cline{6-8}
	$E_3^{(1)}/A_5^{(1)}$ & 
%
%
%
%
\includegraphics[scale=1]{e3_1_s.pdf}
            & $A_1$ 
	    & \vspace{-1cm} 2 & \vspace{-1cm} $E_6^{(1)}/A_2^{(1)}$  & \(4A_1\)& $W^{ae}_{A_2}$ & \ref{e3_1}\\
	
	\cline{2-8}

	 &
%
%
%
    \includegraphics[scale=1]{e3_pp_s.pdf}
    & $\pi^2$
	&  &    & \(E_6\)& $S_3\times W_{A_1}^{ae}, \, (3\delta)$ & \ref{e3_pp} \\
	\cline{2-3} \cline{6-8}
	
	& 
%
%
%
%
%
\includegraphics[scale=1]{e3_2_s.pdf}
& $A_2$
	    & \vspace{-1cm} 3 & \vspace{-1cm} $E_7^{(1)}/A_1^{(1)}$ &   \(3A_2\)& $W_{A_1}^{ae}$ & \ref{e3_2} \\

	\hline
	$A_1^{(1)}/A_7^{(1)'}$ &
%
%
%
%
%
%
\includegraphics[scale=1]{e1_pp_s.pdf}
         & $\pi^2$
	& &   & \(D_4\)& $C_2^2 \times W_{A_1}^a$ & \ref{e1_pp}\\
	\cline{2-3} \cline{6-8}
	& 
%
%
%
%
%
%
      \includegraphics[scale=1]{e1_c_s.pdf}
          & $\sigma$
	    & \vspace{-6mm} 2 & \vspace{-6mm} $D_5^{(1)}/A_3^{(1)}$   & \(4 A_1\) & $C_4 \ltimes W_{A_1}^a$ & \ref{e1_c}\\
	    \hline
	\end{tabular}

%
%
%
%
%
%
%
%

\bigskip
	
	Detailed calculations are given in Section \ref{sec:geom_answ}.
	Using calculations for cyclic foldings, we can calculate data for non-cyclic groups
	
	\medskip
	
	\begin{tabular}{|M{1.5cm}|M{2.5cm}|M{1.5cm}|M{3cm}|M{3cm}|M{2.5cm}|}
	    \hline 
	    Sym./surf. & Diagram & Name & Group & Goes to & $\overline{\Phi}_{\nod}$ \\ 
        \hline	
         &
%
 \includegraphics[scale=0.75]{e7_p11111.pdf}
			& $\pi \ltimes 5A_1$ & $\Dih_4^{(1)}$  & $E_8^{(1)}/A_0^{(1)}$ & $D_5+2A_1$
			\\
        \cline{2-6}
       $E_7^{(1)}/A_1^{(1)}$ & \includegraphics[scale=0.75]{e7_p33.pdf} &
       $\pi\ltimes 2A_3$ & $\Dih_4^{(2)}$ & $E_8^{(1)}/A_0^{(1)}$ & $D_4+A_3$\\
        \cline{2-6}
        &
        \includegraphics[scale=0.75]{e7_p1111.pdf}
       & $\pi \ltimes 4A_1$ & $C_2^2\subset \Dih_4^{(1,2)}  $ & $E_7^{(1)}/A_1^{(1)}$ & $D_4+A_1$ \\
        \cline{2-6}
         &\includegraphics[scale=0.75]{e7_11111.pdf}
        & $5A_1$ & $C_2^2\subset \Dih_4^{(1)}$ & $E_7^{(1)}/A_1^{(1)}$ & $A_3+2A_1$ \\
        
        \hline
         &
         \includegraphics[scale=0.75]{d5_pp1111.pdf}
         & $\pi^2 \ltimes 4A_1$ & $C_2^3$ &  $E_7^{(1)}/A_1^{(1)}$ & $D_4+2A_1$\\
         \cline{2-6}
         
        $D_5^{(1)}/A_3^{(1)}$  & \includegraphics[scale=0.75]{d5_1111.pdf}
	& $4A_1$ & $C_2^2 \subset C_2^3$ &  $D_5^{(1)}/A_3^{(1)}$ & $4A_1$\\
	\cline{2-6}
	
	&\includegraphics[scale=0.75]{d5_pp1111.pdf}
	
	& $\pi^2\ltimes 4A_1$ & $C_2^2 \subset C_2^3$  & $D_5^{(1)}/A_3^{(1)}$ & $D_4$\\
	\cline{2-6}
	
		&\includegraphics[scale=0.75]{d5_pp11.pdf}
	& $\pi^2\ltimes 2A_1$ & $C_2^2 \subset C_2^3$ & $E_8^{(1)}/A_0^{(1)}$ & $D_4+2A_1$\\
	
         \hline
          $E_1^{(1)}/A_7^{(1)}$ & \includegraphics[scale=1]{e1_ppc.pdf} & $\pi^2 \times \sigma$  & $C_2^2$ &
	    $E_7^{(1)}$ & $D_4+2A_1$\\ 
          \hline
        \end{tabular}

%
%

\medskip

\begin{Remark}\label{rem:diffaff_sameim}
It is remarkable, that folding transformations with conjugated finitizations $\bar{w}$ always
go to the same symmetry/surface type. We spotted that from answers above case by case, however, we have no
explanation of this phenomenon.
\end{Remark}

\begin{Remark} \label{rem:pairs}
	We have a lots of pairs 
	foldings transformations $w_1, w_2$ going in ``opposite direction'':
	\begin{equation}
		\begin{aligned}
			w_1:E_{m_1}^{(1)}\xrightarrow{n} E_{m_2}^{(1)}: \mathcal{A}_{w_1} (\zeta_n) \mapsto  \mathcal{A}_{w_2} (1)
			\\
			w_2:E_{m_2}^{(1)}\xrightarrow{n} E_{m_1}^{(1)}: \mathcal{A}_{w_2} (\zeta_n) \mapsto  \mathcal{A}_{w_1} (1).
		\end{aligned}
	\end{equation}
	Here $\mathcal{A}_{w_{1,2}}(\zeta_n)$ denotes irreducible component of the folding invariant subset. In particular $\mathcal{A}_{w_{1,2}}(1)$ is component on which corresponding folding acts trivially. 
	
	In particular in folding transformation in Example \ref{Example:geom5} is opposite to itself.
\end{Remark}

	\newpage

	\section{Folding transformation: lattices, normalizers and components}
	
	\label{sec:class_answ}
	
	\subsection{$E_8^{(1)}/A_0^{(1)}$}
	
	\label{e8_a}

	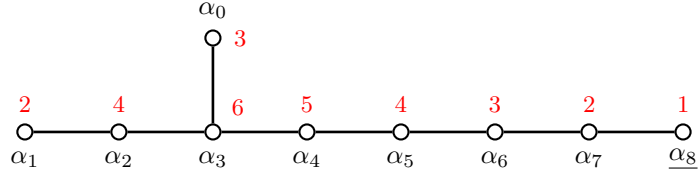
\begin{figure}[h]
	
	\begin{center}
	
 \begin{tikzpicture}[elt/.style={circle,draw=black!100,thick, inner sep=0pt,minimum size=2mm},scale=1.25]
			\path 	(-2,0) 	node 	(a1) [elt] {}
				(-1,0) 	node 	(a2) [elt] {}
				( 0,0) node  	(a3) [elt] {}
				( 1,0) 	node  	(a4) [elt] {}
				( 2,0) 	node 	(a5) [elt] {}
				( 3,0)	node 	(a6) [elt] {}
				( 4,0)	node 	(a7) [elt] {}
				( 5,0)	node 	(a8) [elt] {}
				( 0,1)	node 	(a0) [elt] {};
			\draw [black,line width=1pt ] (a1) -- (a2) -- (a3) -- (a4) -- (a5) --  (a6) -- (a7) --(a8) (a3) -- (a0);
			\node at ($(a1.south) + (0,-0.2)$) 	{$\alpha_{1}$};
			\node at ($(a2.south) + (0,-0.2)$)  {$\alpha_{2}$};
			\node at ($(a3.south) + (0,-0.2)$)  {$\alpha_{3}$};
			\node at ($(a4.south) + (0,-0.2)$)  {$\alpha_{4}$};	
			\node at ($(a5.south) + (0,-0.2)$)  {$\alpha_{5}$};		
			\node at ($(a6.south) + (0,-0.2)$) 	{$\alpha_{6}$};	
			\node at ($(a7.south) + (0,-0.2)$) 	{$\alpha_{7}$};	
			\node at ($(a8.south) + (0,-0.2)$) 	{$\underline{\alpha_{8}}$};	
			\node at ($(a0.north) + (0,0.2)$) 	{$\alpha_{0}$};		
			\node at ($(a0.east) + (0.2,0)$) 	{\color{red}\small$3$};
			\node at ($(a1.north) + (0,0.2)$) 	{\color{red}\small$2$};
			\node at ($(a2.north) + (0,0.2)$) 	{\color{red}\small$4$};
			\node at ($(a3.north  east) + (0.2,0.2)$) 	{\color{red}\small$6$};
			\node at ($(a4.north) + (0,0.2)$) {\color{red}\small$5$};
			\node at ($(a5.north) + (0,0.2)$) 	{\color{red}\small$4$};
			\node at ($(a6.north) + (0,0.2)$) 	{\color{red}\small$3$};
			\node at ($(a7.north) + (0,0.2)$) 	{\color{red}\small$2$};
			\node at ($(a8.north) + (0,0.2)$) 	{\color{red}\small$1$};
		\end{tikzpicture}
		
		\end{center}
		
		\caption{Root numbering and marks for $E_8^{(1)}$ Dynkin diagram}
		
		\label{fig:E8_rn}
		
		\end{figure}

		For $E_8^{(1)}$ we have only case $2$ foldings: of order $3$, of order $4$, and (conjugated to) its square of order $2$.
		We start from folding of order $3$ on which we explain our notations.

	\begin{center}
			\begin{tabular}{|m{3cm}|m{4.25cm}|m{2cm}|m{3cm}|m{3.25cm}|}
			\hline
			  \begin{center}
			  Folding  \end{center}
			      & \begin{center}
			     $Q^{a,w}$  
			   \end{center}  & \begin{center} $Q^{a, \perp w}$ \end{center}	                                                 
			     & \begin{center}  $C(w)$ \end{center} & \begin{center} $\mathcal{A}^w, \, N_{flip}$ \end{center}
			                                                  \\
			     \hline
			     
			     \begin{center}

%
                        \includegraphics[scale=0.8]{e8_222.pdf}

			\medskip
			
			$w=s_{215487}$

			\end{center}
			     
			    &
			    
			     \begin{center}

			      \begin{tikzpicture}[elt/.style={circle,draw=black!100,thick, inner sep=0pt,minimum size=1.4mm},scale=1]
			      \path 	
			(-1,0) 	node 	(a1) [elt] {}
			(1,0) 	node 	(a2) [elt] {}
			(0,1.73) 	node 	(a0) [elt] {};
			\draw [black,line width=1pt] (a0) -- (a1) -- (a2) -- (a0);
			
			 \node at ($(a0.north) + (0,0.2)$) 	{\small $\underline{\alpha_{01 2^2 3^3 4^3 5^3 6^3 7^2 8}}$};	
		     	
		     \node at ($(a1.west) + (-0.2,0)$) 	{\small $\alpha_0$};
		     
		      \node at ($(a2.east) + (0.7,-0.1)$) 	{\small $\alpha_{012^2 3^3 4^2 5}$};
			
			\draw[<->,dashed] (a1) to[bend right=40]
			node[fill=white]{\tiny $s_{324}s_{135}s_{423}$} (a2);
			
			\draw[<->,dashed] (a2) to[bend right=40]
			node[fill=white]{\tiny $s_{675} s_{468} s_{576}$} (a0);
			
			\node at (0,3) {$A_2^{(1)}$};
			
			\node at (0,-1.5) {$(A_2^{(1)})_{|\alpha|^2=6}$};

			     \begin{scope}[yshift=-4.5cm]
			      \path 	
			(0,1.73) node 	(a1) [elt] {}
			(1,0) node 	(a2) [elt] {}
			(-1,0) node (a0) [elt] {};
			\draw [black,line width=1pt] (a0) -- (a1) -- (a2) -- (a0);
			
			 \node at ($(a0.south) + (0,-0.4)$) 	{\small \parbox{1cm}{$\underline{\alpha_{45^2 6^3 7^2 8}}$\\ $s_{675} s_{468} s_{576}$}};	
		     	
		     \node at ($(a1.north) + (0,0.4)$) 	{\small \parbox{1cm}{$\alpha_{12^2 3^3 4^2 5}$\\ $s_{324} s_{135} s_{423}$}};
		     
		      \node at ($(a2.south) + (0.1,-0.5)$) 	{\small \parbox{1cm}{$\alpha_{0^3 1 2^2 3^3 4^2 5}$\\ $s_{0324} s_{135} s_{4230}$}};
			
			\draw[<->,dashed] (a1) edge[bend left=40] node[fill=white]{\small $s_0$} (a2);
			
			\end{scope}
			
			      \end{tikzpicture}
			      
			      \medskip
			      
			      $t_{\omega_1^{A_2}+\omega_{2}^{A_2}}\sim t^{1/3}_{\omega_3-2\omega_6}$
			      
			      \end{center} 
			    
			    & 
			    \begin{center}
			    
			    $3A_2$
			    
			     \medskip
			      
			      \begin{tikzpicture}[elt/.style={circle,draw=black!100,thick, inner sep=0pt,minimum size=2mm},scale=0.75]
			\path (0,0) node (a1) [elt] {}
			(0,1) node (a2) [elt] {}
			(0,3) node  (a4) [elt] {}
			(0,4) node  (a5) [elt] {}
			(0,6) node  (a7) [elt] {}
			(0,7) node  (a8) [elt] {};
			\draw [black,line width=1pt] (a1) -- (a2) (a4) -- (a5) (a7) -- (a8);
		    
		    	\node at ($(a1.south) + (0,-0.2)$) 	{\small $\alpha_1$};	
		    	
		    		\node at ($(a2.north) + (0,0.2)$) 	{\small $\alpha_2$};
		    		
		    			\node at ($(a4.south) + (0,-0.2)$) 	{\small $\alpha_4$};	
		    			
		    			\node at ($(a5.north) + (0,0.2)$) 	{\small $\alpha_5$};	
		    			
		    			\node at ($(a7.south) + (0,-0.2)$) 	{\small $\alpha_7$};	
		    			
		    			\node at ($(a8.north) + (0,0.2)$) 	{\small $\alpha_8$};

		    	\draw[<->, dashed] (0,0.5) to[bend left=40]
			node[fill=white]{\small $s_{324} s_{135} s_{423}$} (0,3.5);
			
			\draw[<->, dashed] (0,3.5) to[bend left=40]
			node[fill=white]{\small $s_{675} s_{468} s_{576}$} (0, 6.5);
			
			\end{tikzpicture} 
			    
			\end{center}

			    & 
			    
			    \begin{center}
			    
			    \begin{gather*}
			      S_3\ltimes (W_{A_2}^a \times \underbrace{C_3^3}_{triv})\\
			      \simeq C_2\ltimes (W_{A_2^l}^a \ltimes \underbrace{C_3^3}_{triv})\\
			      \downarrow\\
			      C_2\ltimes W_{A_2}^a\\
			      t^{1/3}_{\omega_3-2\omega_6}\rightarrow t_{\omega_4-2\omega_5}
			      \end{gather*}
			      
			      \medskip
			      
			      \eqref{symm_E8_E6}
			      
			      \end{center}
			    
			    &
			    \begin{tikzpicture}[elt/.style={circle,draw=black!100,thick, inner sep=0pt,minimum size=1.4mm},scale=0.4]
				\path 	(-2,0) 	node 	(a1) [elt,fill] {}
					(-1,0) 	node 	(a2) [elt,fill] {}
					( 0,0) node  	(a3) [elt] {}
					( 1,0) 	node  	(a4) [elt,fill] {}
					( 2,0) 	node 	(a5) [elt,fill] {}
					( 3,0)	node 	(a6) [elt] {}
					( 4,0)	node 	(a7) [elt,fill] {}
					( 5,0)	node 	(a8) [elt,fill] {}
					( 0,1)	node 	(a0) [elt] {};
				\draw [black,line width=1pt ] (a1) -- (a2) -- (a3) -- (a4) -- (a5) --  (a6) -- (a7) --(a8) (a3) -- (a0);
				\node at ($(a1.south) + (0,-0.3)$) 	{\tiny $\zeta$};
			\node at ($(a2.south) + (0,-0.3)$) 	{\tiny $\zeta$};
			\node at ($(a4.south) + (0,-0.3)$) 	{\tiny $\zeta$};
			\node at ($(a5.south) + (0,-0.3)$) 	{\tiny $\zeta$};
			\node at ($(a7.south) + (0,-0.3)$) 	{\tiny $\zeta$};
			\node at ($(a8.south) + (0,-0.3)$) 	{\tiny $\zeta$};
			
			\draw [<->, dashed] (0,-1) to node[fill=white] {\small $s_{147}$} (0,-4);

			\begin{scope}[yshift=-6cm]
			 \path 	(-2,0) 	node 	(a1) [elt,fill] {}
					(-1,0) 	node 	(a2) [elt,fill] {}
					( 0,0) node  	(a3) [elt] {}
					( 1,0) 	node  	(a4) [elt,fill] {}
					( 2,0) 	node 	(a5) [elt,fill] {}
					( 3,0)	node 	(a6) [elt] {}
					( 4,0)	node 	(a7) [elt,fill] {}
					( 5,0)	node 	(a8) [elt,fill] {}
					( 0,1)	node 	(a0) [elt] {};
				\draw [black,line width=1pt ] (a1) -- (a2) -- (a3) -- (a4) -- (a5) --  (a6) -- (a7) --(a8) (a3) -- (a0);
				\node at ($(a1.south) + (0,-0.3)$) 	{\tiny $\zeta^{-1}$};
			\node at ($(a2.south) + (0,-0.3)$) 	{\tiny $\zeta^{-1}$};
			\node at ($(a4.south) + (0,-0.3)$) 	{\tiny $\zeta^{-1}$};
			\node at ($(a5.south) + (0,-0.3)$) 	{\tiny $\zeta^{-1}$};
			\node at ($(a7.south) + (0,-0.3)$) 	{\tiny $\zeta^{-1}$};
			\node at ($(a8.south) + (0,-0.3)$) 	{\tiny $\zeta^{-1}$};
			 
			\end{scope}

			\end{tikzpicture}
			     \\
			     \hline
			     
			     \end{tabular}
			     \end{center}
	
	\medskip
	
	\begin{itemize}
	
	\item By dashed arrows we denote the action of $\Aut$. By underline we denote our choice of the affine root.
	
	\item In the second column for roots of length $|\alpha|^2=6$, under the
	decomposition of given root in terms of simple ones we write corresponding element from $W^{ae}$, which realizes reflection in such root.
        At the bottom of this column, we write projective reduction occurs and corresponding translation in Weyl group of 
        $\Phi^{a,w}|_{|\alpha|^2=6}$.
        
        \item In the fourth column we write two centralizers with respect
        to short and long (l) $\Phi^{a,w}$, write image of centralizer after folding matching elements, which go to trivial in the image.
        We also write the image of projective reduction after folding. Note that the image of the centralizer is found from the coordinates,
        found in Section \ref{sec:geom_answ}.
		
	\end{itemize}
		
			     \begin{center}
			\begin{tabular}{|m{3cm}|m{4cm}|m{2.5cm}|m{3cm}|m{3cm}|}
			\hline
			  \begin{center}
			  Folding  \end{center}
			      & \begin{center}
			     $Q^{a,w}$  
			   \end{center}  & \begin{center} $Q^{a, \perp w}$ \end{center}	                                                 
			     & \begin{center}  $C(w)$ \end{center} & \begin{center} $\mathcal{A}^w, \, N_{flip}$ \end{center}\\		                                                
			     \hline

			     \begin{center}


                        \includegraphics[scale=0.8]{e8_1111.pdf} 
			
			\medskip
			
			$w=s_{0468}$

			\end{center}

			     &
			   \begin{center}

			        \begin{tikzpicture}[elt/.style={circle,draw=black!100,thick, inner sep=0pt,minimum size=2mm},scale=1.25]
			\path 	(-1,0) 	node 	(a1) [elt] {}
			(0,1) 	node 	(a0) [elt] {}
			(0,0) node  	(a2) [elt] {}
			(0,-1) node  	(a3) [elt] {}
			(1,0) node  	(a4) [elt] {};

		    \draw [black,line width=1pt] (a0) -- (a2) -- (a4) (a1) -- (a2) -- (a3);

		     	\node at ($(a1.north) + (0,0.2)$) 	{\small$\alpha_2$};
		     	\node at ($(a0.north) + (0,0.2)$) 	{\small$\underline{\alpha_{0233445566778}}$};
		     	\node at ($(a2.west) + (-0.2,0.2)$) 	{\small$\alpha_1$};
		     	\node at ($(a3.south) + (0,-0.2)$) 	{\small$\alpha_{02334}$};
		     	\node at ($(a4.north) + (0,0.2)$) 	{\small$\alpha_{023344556}$};
		     	
		     	\draw[<->, dashed] (a1) to [bend right=40] node[fill=white]{\small $s_{3043}$} (a3); 
		     	\draw[<->, dashed] (a3) to [bend right=40] node[fill=white]{\small $s_{5465}$} (a4); 
		     	\draw[<->, dashed] (a4) to [bend right=40] node[fill=white]{\small $s_{7687}$} (a0); 
			     
			\node at (0,2) 	{$D_4^{(1)}$};
			
			\node at (0,-2) {$(D_4^{(1)})|_{|\alpha|^2=4}$};

			        \begin{scope}[yshift=-4cm]
			\path 	(1,0) 	node 	(a1) [elt] {}
			(-1,0) 	node 	(a0) [elt] {}
			(0,0) node  	(a2) [elt] {}
			(0,1) node  	(a3) [elt] {}
			(0,-1) node  	(a4) [elt] {};

		    \draw [black,line width=1pt] (a0) -- (a2) -- (a4) (a1) -- (a2) -- (a3);

		     	\node at ($(a1.west) + (-0.2,0)$) 	{\small \parbox{1cm}{$\alpha_{0334}$ \\ $s_{3043}$}};
		     	\node at ($(a0.north) + (0,0.3)$) 	{\small \parbox{1cm}{$\underline{\alpha_{6778}}$\\$s_{7687}$}};
		     	\node at ($(a2.west) + (-0.2,0.3)$) 	{\small \parbox{1cm}{$\alpha_{4556}$\\$s_{5465}$}};
		     	\node at ($(a3.north) + (0,0.3)$) 	{\small \parbox{1cm}{$\alpha_{022334}$\\$s_{230432}$}};
		     	\node at ($(a4.south) + (0,-0.3)$) 	{\small \parbox{1cm}{$\alpha_{01122334}$\\$s_{12304321}$}};
		     	
		     	\draw[<->, dashed] (a1) to [bend right=40] node[fill=white]{\small $s_2$} (a3); 
		     	\draw[<->, dashed] (a1) to [bend left=40] node[fill=white]{\small $s_{121}$} (a4); 
			     
			 \end{scope}    
			     
			 \end{tikzpicture}
			 
			 \medskip
			 
			{\small $t_{\omega_{\alpha_{02^23^24}}+\omega_{\alpha_{03^24}}+\omega_{\alpha_{01^22^23^24}}-\omega_{\alpha_{45^26}}}$}\\
			$\sim t^{1/2}_{\omega_3-\omega_{57}}$\\
			 $t_{\omega_{\alpha_{4556}}}\sim t^{1/2}_{\omega_5-2\omega_7}$

			 \end{center}

			     &
			     
			     \begin{center}
			     
			      $4A_1$
			      			      
			      \medskip
			      
			      \begin{tikzpicture}[elt/.style={circle,draw=black!100,thick, inner sep=0pt,minimum size=2mm},scale=2]
			\path (0,0) node (a0) [elt] {}
			(0,1) node (a4) [elt] {}
			(0,2) node  (a6) [elt] {}
			(0,3) node  (a8) [elt] {};
		    
		    	\node at ($(a0.west) + (-0.2,0)$) 	{\small $\alpha_0$};	
		    	
		    		\node at ($(a4.west) + (-0.2,0)$) 	{\small $\alpha_4$};
		    		
		    			\node at ($(a6.west) + (-0.2,0)$) 	{\small $\alpha_6$};	
		    			
		    			\node at ($(a8.west) + (-0.2,0)$) 	{\small $\alpha_8$};

		    	\draw[<->,dashed] (a0) to[bend right=0]
			node[fill=white]{\small $s_{3043}$} (a4);
			
			\draw[<->,dashed] (a4) to[bend right=0]
			node[fill=white]{\small $s_{5465}$} (a6);
			
			\draw[<->,dashed] (a6) to[bend right=0]
			node[fill=white]{\small $s_{7687}$} (a8);
		    
			\end{tikzpicture} 
			    
			    \end{center}
			     
			      &
			      
			      \begin{center}
			      
			      \begin{gather*}
			      S_4\ltimes (W_{D_4}^a \times \underbrace{C_2^4}_{triv})\\
			      \simeq S_3\ltimes (W_{D_4^l}^a \ltimes \underbrace{C_2^4}_{triv})\\
			      \downarrow\\
			      S_3\ltimes W_{D_4}^a\\
			      t^{1/2}_{\omega_3-\omega_{57}}\rightarrow t_{\omega_3-\omega_{12}}\\
			      t^{1/2}_{\omega_5-2\omega_7}\rightarrow t_{\omega_2-2\omega_1}
			      \end{gather*}
			      
			      \medskip
			      
			      \eqref{symm_E8_E7}
			      
			      \end{center}

			      &
			       \begin{tikzpicture}[elt/.style={circle,draw=black!100,thick, inner sep=0pt,minimum size=1.4mm},scale=0.4]
				\path 	(-2,0) 	node 	(a1) [elt] {}
					(-1,0) 	node 	(a2) [elt] {}
					( 0,0) node  	(a3) [elt] {}
					( 1,0) 	node  	(a4) [elt,fill] {}
					( 2,0) 	node 	(a5) [elt] {}
					( 3,0)	node 	(a6) [elt,fill] {}
					( 4,0)	node 	(a7) [elt] {}
					( 5,0)	node 	(a8) [elt,fill] {}
					( 0,1)	node 	(a0) [elt,fill] {};
				\draw [black,line width=1pt ] (a1) -- (a2) -- (a3) -- (a4) -- (a5) --  (a6) -- (a7) --(a8) (a3) -- (a0);
				\node at ($(a0.north) + (0,0.3)$) 	{\tiny $-1$};
			\node at ($(a4.south) + (0,-0.3)$) 	{\tiny $-1$};
			\node at ($(a6.south) + (0,-0.3)$) 	{\tiny $-1$};
			\node at ($(a8.south) + (0,-0.3)$) 	{\tiny $-1$};

			\end{tikzpicture}
			      
			      \\
			     \hline
			     
			     \begin{center}

%

                       \includegraphics[scale=0.8]{e8_133.pdf}
			
			\medskip
			
			$w=s_{1430876}$
			
			\end{center}
			     
			     &
			     
			     \begin{center}
			
			\begin{tikzpicture}[elt/.style={circle,draw=black!100,thick, inner sep=0pt,minimum size=2mm},scale=1]
			\path 	(-1,0) 	node 	(a1) [elt] {}
			(1,0) 	node 	(a2) [elt] {};

		    \draw [black,line width=2.5pt ] (a1) -- (a2);
		    
		     \draw [white,line width=1.5pt ] (a1) -- (a2);
		     
		     \draw [<->,black, line width=0.5pt]
		     (a1) -- (a2);
		     
		     \node at ($(a2.north) + (0,0.2)$) 	{\small$\underline{\delta-\alpha_{0122334}}$};	
		     	
		     \node at ($(a1.north) + (0,0.2)$) 	{\small $\alpha_{0122334}$};
		     
		      \draw [<->, dashed]
		     (a1) to[bend right=40] node[fill=white]{\small $s_{567435}s_{0468}s_{534765}$} (a2);
		     
		     \node at (0,1) 	{$A_1^{(1)}$};
		     
		      \node at (0,-1.5) { $(A_1^{(1)})|_{|\alpha|^2=8}$};

			\begin{scope}[yshift=-2.5cm]
			\path 	(1,0) 	node 	(a1) [elt] {}
			(-1,0) 	node 	(a2) [elt] {};

		    \draw [black,line width=2.5pt ] (a1) -- (a2);
		    
		     \draw [white,line width=1.5pt ] (a1) -- (a2);
		     
		     \draw [<->,black, line width=0.5pt]
		     (a1) -- (a2);
		     
		     \node at ($(a2.south) + (0.2, -0.2)$) 	{\tiny \parbox{2cm}{$\underline{\delta-2\alpha_{0122334}}$\\$s_{567435}s_{0468}s_{534765}$ }};	
		     	
		     \node at ($(a1.north) + (0.3, 0.3)$) 	{\tiny \parbox{2cm}{$2\alpha_{0122334}$ \\ $s_{\alpha_{0122334}}$}};
		     
		     \end{scope}
		     
		     \end{tikzpicture}
		     
		     \medskip
		     
		     $t_{\omega_{2\alpha_{0122334}}}\sim t^{1/2}_{\omega_2-\omega_5}$
		     
		     \end{center}
			     
			     & 
			     
			     \begin{center}
			     
			     $A_1+ 2A_3$

			      \medskip
			      
			      \begin{tikzpicture}[elt/.style={circle,draw=black!100,thick, inner sep=0pt,minimum size=2mm},scale=1]
			\path (0,2) node (a1) [elt] {}
			(-1,1) node (a0) [elt] {}
			(0,1) node  (a3) [elt] {}
			(1,1) node  (a4) [elt] {}
			(-1,-1) node (a6) [elt] {}
			(0,-1) node  (a7) [elt] {}
			(1,-1) node  (a8) [elt] {};
			
			\draw (a0) -- (a3) -- (a4) (a6) -- (a7) -- (a8);
		    
		    	\node at ($(a1.north) + (0,0.2)$) 	{\small $\alpha_1$};	
		    	
		    	\node at ($(a0.north) + (0,0.2)$) 	{\small $\alpha_0$};	
		    	\node at ($(a3.north) + (0,0.2)$) 	{\small $\alpha_3$};	
		    	\node at ($(a4.north) + (0,0.2)$) 	{\small $\alpha_4$};	
		    	
		    	\node at ($(a6.south) + (0,-0.2)$) 	{\small $\alpha_6$};	
		    	\node at ($(a7.south) + (0,-0.2)$) 	{\small $\alpha_7$};	
		    	\node at ($(a8.south) + (0,-0.2)$) 	{\small $\alpha_8$};
		    	
		    	\draw[<->, dashed] (a0) -- (a6);
		    	\draw[<->, dashed] (a3) to[] node[fill=white] {\small $s_{567435}s_{0468}s_{534765}$} (a7);
		    	\draw[<->, dashed] (a4) -- (a8);
		  
			\end{tikzpicture} 
			     
			     \end{center}
   
			     &
			     
			     \begin{center}
			     
			     \begin{gather*}
			      C_2{\ltimes}(W_{A_1}^a{\times}(\underbrace{C_2{\times}C_4^2}_{triv}))\\
			      \simeq  W_{A_1}^a \ltimes (\underbrace{C_2\times C_4^2}_{triv})\\
			      \downarrow\\
			      W_{A_1}^a\\
			      t^{1/2}_{\omega_2-\omega_5}\rightarrow t_{\omega_2-\omega_0}
			      \end{gather*}
			      
			      \medskip
			      
			      \eqref{symm_E8_D5}
			      
			      \end{center}
			     
			     &
			     \begin{tikzpicture}[elt/.style={circle,draw=black!100,thick, inner sep=0pt,minimum size=1.4mm},scale=0.4]
				\path 	(-2,0) 	node 	(a1) [elt,fill] {}
					(-1,0) 	node 	(a2) [elt] {}
					( 0,0) node  	(a3) [elt,fill] {}
					( 1,0) 	node  	(a4) [elt,fill] {}
					( 2,0) 	node 	(a5) [elt] {}
					( 3,0)	node 	(a6) [elt,fill] {}
					( 4,0)	node 	(a7) [elt,fill] {}
					( 5,0)	node 	(a8) [elt,fill] {}
					( 0,1)	node 	(a0) [elt,fill] {};
				\draw [black,line width=1pt ] (a1) -- (a2) -- (a3) -- (a4) -- (a5) --  (a6) -- (a7) --(a8) (a3) -- (a0);
				\node at ($(a1.south) + (0,-0.3)$) 	{\tiny $-1$};
			\node at ($(a0.north) + (0,0.3)$) 	{\tiny $\ri$};
			\node at ($(a3.south) + (0,-0.3)$) 	{\tiny $\ri$};
			\node at ($(a4.south) + (0,-0.3)$) 	{\tiny $\ri$};
			\node at ($(a6.south) + (0,-0.3)$) 	{\tiny $\ri$};
			\node at ($(a7.south) + (0,-0.3)$) 	{\tiny $\ri$};
			\node at ($(a8.south) + (0,-0.3)$) 	{\tiny $\ri$};
			
			\draw [<->,dashed] (0,-1) to node[fill=white] {\small $s_{0468}$} (0,-4); 
			
			\begin{scope}[yshift=-6cm]
			 \path 	(-2,0) 	node 	(a1) [elt,fill] {}
					(-1,0) 	node 	(a2) [elt] {}
					( 0,0) node  	(a3) [elt,fill] {}
					( 1,0) 	node  	(a4) [elt,fill] {}
					( 2,0) 	node 	(a5) [elt] {}
					( 3,0)	node 	(a6) [elt,fill] {}
					( 4,0)	node 	(a7) [elt,fill] {}
					( 5,0)	node 	(a8) [elt,fill] {}
					( 0,1)	node 	(a0) [elt,fill] {};
				\draw [black,line width=1pt ] (a1) -- (a2) -- (a3) -- (a4) -- (a5) --  (a6) -- (a7) --(a8) (a3) -- (a0);
				\node at ($(a1.south) + (0,-0.3)$) 	{\tiny $-1$};
			\node at ($(a0.north) + (0,0.3)$) 	{\tiny $-\ri$};
			\node at ($(a3.south) + (0,-0.3)$) 	{\tiny $-\ri$};
			\node at ($(a4.south) + (0,-0.3)$) 	{\tiny $-\ri$};
			\node at ($(a6.south) + (0,-0.3)$) 	{\tiny $-\ri$};
			\node at ($(a7.south) + (0,-0.3)$) 	{\tiny $-\ri$};
			\node at ($(a8.south) + (0,-0.3)$) 	{\tiny $-\ri$};
			\end{scope}

			\end{tikzpicture}
			     
			     \\
			     \hline
	\end{tabular}
	\end{center}

	\newpage

	\subsection{$E_7^{(1)}/A_1^{(1)}$}
	
	\label{e7_a}
	
	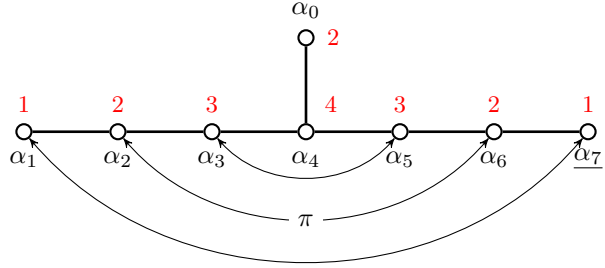
\begin{figure}[h]
	
	\begin{center}
	\begin{tikzpicture}[elt/.style={circle,draw=black!100,thick, inner sep=0pt,minimum size=2mm},scale=1.25]
			\path 	(-3,0) 	node 	(a1) [elt] {}
				(-2,0) 	node 	(a2) [elt] {}
				( -1,0) node  	(a3) [elt] {}
				( 0,0) 	node  	(a4) [elt] {}
				( 1,0) 	node 	(a5) [elt] {}
				( 2,0)	node 	(a6) [elt] {}
				( 3,0)	node 	(a7) [elt] {}
				( 0,1)	node 	(a0) [elt] {};
			\draw [black,line width=1pt ] (a1) -- (a2) -- (a3) -- (a4) -- (a5) --  (a6) -- (a7) (a4) -- (a0);
			\node at ($(a1.south) + (0,-0.2)$) 	{$\alpha_{1}$};
			\node at ($(a2.south) + (0,-0.2)$)  {$\alpha_{2}$};
			\node at ($(a3.south) + (0,-0.2)$)  {$\alpha_{3}$};
			\node at ($(a4.south) + (0,-0.2)$)  {$\alpha_{4}$};	
			\node at ($(a5.south) + (0,-0.2)$)  {$\alpha_{5}$};		
			\node at ($(a6.south) + (0,-0.2)$) 	{$\alpha_{6}$};	
			\node at ($(a7.south) + (0,-0.2)$) 	{$ \underline{\alpha_{7}}$};	
			\node at ($(a0.north) + (0,0.2)$) 	{$\alpha_{0}$};		
			\node at ($(a0.east) + (0.2,0)$) 	{\color{red}\small$2$};
			\node at ($(a1.north) + (0,0.2)$) 	{\color{red}\small$1$};
			\node at ($(a2.north) + (0,0.2)$) 	{\color{red}\small$2$};
			\node at ($(a3.north) + (0,0.2)$) 	{\color{red}\small$3$};
			\node at ($(a4.north east) + (0.2,0.23)$) {\color{red}\small$4$};
			\node at ($(a5.north) + (0,0.2)$) 	{\color{red}\small$3$};
			\node at ($(a6.north) + (0,0.2)$) 	{\color{red}\small$2$};
			\node at ($(a7.north) + (0,0.2)$) 	{\color{red}\small$1$};
			
			\draw[<->] (a1) edge[bend right=50] node[]{} (a7);
			\draw[<->] (a2) edge[bend right=50] node[fill=white]{$\pi$} (a6);
			\draw[<->] (a3) edge[bend right=50] node[]{} (a5);
			
		\end{tikzpicture}
	        \end{center} 
	      
	        \caption{Root numbering and marks for $E_7^{(1)}$ Dynkin diagram}
	        
	        \label{fig:E7_rn}
                
		\end{figure}
			
		Affine Weyl group $W(E_7^{(1)})$ (and corresponding Dynkin diagram) has outer automorphism of order $2$\footnote{It is absent in \cite{KNY15}.}.
			
		For $E_7^{(1)}$ we have $6$ foldings: 
		\begin{itemize}
		\item Two foldings of order $2$, of case $1$ and $2$, which have common (up to conjugation)
		$\bar{w}$.
		\item Two foldings of order $4$, of case $2$ and $3$, which have common (up to conjugation)
		$\bar{w}$.
		\item Folding of order $2$, of case $2$, which is (up to conjugation) square of both foldings of order $4$.
		\item Folding of order $3$, of case $2$.
		\end{itemize}
			
	        Several new notations and peculiarities appear here in comparison with $E_8^{(1)}$ case.
	        
	        \begin{itemize}
	        \item Here appear foldings which have conjugated finitization $\bar{w}$.
	        This is reflected in the first column, where we wrote decomposition of $w$
	        into translational and finite parts and write conjugation of $\bar{w}$
	        explicitly. Due to Lemma \ref{lemma:lattices} invariant lattices $Q^{a,w}$ are of the same type.
	     
		\item We denote by $w_0(S_n)$ longest element of Weyl group $S_n$
		
		\item In the $\mathcal{A}_w$ arrow with ``='' means, that corresponding root variables equal (we write one ``='' for
		several arrows for simplicity)
		
		\item In case $3A_2$ we have no short invariant lattice, we immediately obtain long lattice. 
		For the next cases, one finds more such examples.
		
		\item In case $\pi\ltimes 5A_1$ we have $Q^{a,\perp w}$ generated
		by $\alpha_0, \alpha_1, \alpha_3, \alpha_5, \alpha_7$ and $\alpha_2-\alpha_6$.
		This sublattice is not generated by any root subsystem, this
		is inconvenient, so instead we take its sublattice of index $2$, taking doubled generator $2(\alpha_2-\alpha_6)$.  
		Such sublattice already have root subsystem description, we denote it $\widetilde{Q^{a,\perp w}}$.
		\end{itemize}
			
			\begin{center}
			\begin{tabular}{|m{2.5cm}|m{4cm}|m{2.5cm}|m{3cm}|m{2.5cm}|}
			\hline
			  \begin{center} Folding \end{center}    & \begin{center}
			     $Q^{a,w}$  
			   \end{center}  & \begin{center}  $Q^{a,\perp w}$ \end{center}		                                                  
			     & \begin{center}  $C(w)$ \end{center} & \begin{center} $\mathcal{A}^w$ \end{center}
			                                                           \\
			    \hline
			    
			    \begin{center}
 
%

                         \includegraphics[scale=0.8]{e7_p.pdf}
			
			\medskip
			
			 $w=\pi$
			{\small $=t_{\omega_1}\bar{\pi}$
			 $\bar{\pi}=s s_{013} s^{-1}$,
			    $s=s_{123405645342}$}

			\end{center}

			    &

			\begin{center}
			
			   \begin{tikzpicture}[elt/.style={circle,draw=black!100,thick, inner sep=0pt,minimum size=2mm},scale=1.25]
			\path 	(-1,0) 	node 	(a0) [elt] {}
			(0,1) 	node 	(a1) [elt] {}
			(0,0) node  	(a2) [elt] {}
			(0,-1) node  	(a3) [elt] {}
			(1,0) node  	(a4) [elt] {};

		    \draw [black,line width=1pt] (a0) -- (a2) -- (a4) (a1) -- (a2) -- (a3);

		     	\node at ($(a0.north) + (0,0.2)$) 	{\small$\alpha_4$};
		     	\node at ($(a1.north) + (0,0.2)$) 	{\small$\underline{\alpha_{1234567}}$};
		     	\node at ($(a2.west) + (-0.2,0.2)$) 	{\small$\alpha_0$};
		     	\node at ($(a3.south) + (0,-0.2)$) 	{\small$\alpha_{345}$};
		     	\node at ($(a4.east) + (0.4,0)$) 	{\small$\alpha_{23456}$};
		     	
		     \draw[<->, dashed] (a4) to [bend right=40] node[fill=white]{\small $s_{17}$} (a1);  	
		     
 		    \draw[<->, dashed] (a3) to [bend right=40] node[fill=white]{\small $s_{26}$} (a4); 
 		    
		    \draw[<->, dashed] (a3) to [bend left=40] node[fill=white]{\small $s_{35}$} (a0); 
 		  
			\node at (0,1.75) 	{$D_4^{(1)}$};
			
			\node at (0,-2) {$(D_4^{(1)})|_{|\alpha|^2=4}$};

                       \begin{scope}[yshift=-4cm]
			\path 	(-1,0) 	node 	(a0) [elt] {}
			(0,1) 	node 	(a1) [elt] {}
			(0,0) node  	(a2) [elt] {}
			(0,-1) node  	(a3) [elt] {}
			(1,0) node  	(a4) [elt] {};

		    \draw [black,line width=1pt] (a0) -- (a2) -- (a4) (a1) -- (a2) -- (a3);

		     	\node at ($(a0.north) + (-0.1,0)$) 	{\small \parbox{1cm}{$\alpha_{35}$\\$s_{35}$}};
		     	\node at ($(a1.north) + (0,0.3)$) 	{\small \parbox{1cm}{$\underline{\alpha_{17}}$\\$s_{17}$}};
		     	\node at ($(a2.west) + (-0.1,0.3)$) 	{\small \parbox{1cm}{$\alpha_{26}$\\$s_{26}$}};
		     	\node at ($(a3.south) + (0,-0.2)$) 	{\small$\alpha_{003445}$};
		     	\node at ($(a4.north) + (0,0.2)$) 	{\small$\alpha_{3445}$};
		     	
 		    \draw[<->, dashed] (a3) to [bend right=40] node[fill=white]{\small $s_0$} (a4); 
 		    
		    \draw[<->, dashed] (a3) to [bend left=40] node[fill=white]{\small $s_{040}$} (a0); 
 		     
 		     \end{scope}
 		     
			\end{tikzpicture}  
			
			\medskip
			
			 $t_{\omega_{\alpha_{35}}+\omega_{\alpha_{3445}}+\omega_{\alpha_{003445}}-\omega_{\alpha_{26}}}$\\
			$\sim t_{\omega_{35}-\omega_{126}}^{1/2}$\\
			$t_{\omega_{\alpha_{26}}} \sim t_{\omega_{26}-2\omega_1}^{1/2}$\\

			\end{center}

			    &

		\begin{center}
		
		$(A_3)_{|\alpha|^2=4}$
		
		\medskip
			
                   \begin{tikzpicture}[elt/.style={circle,draw=black!100,thick, inner sep=0pt,minimum size=2mm},scale=2]
			\path (0,-1) node (a1) [elt] {}
			(0,0) node (a2) [elt] {}
			(0,1) node  (a3) [elt] {};
		    
		        \draw (a1) -- (a2) -- (a3); 
		    
		    	\node at ($(a1.east) + (0.3,0)$) 	{\small \parbox{1cm}{$\alpha_1{-}\alpha_7$\\$s_{17}$}};	
		    	
		    		\node at ($(a2.east) + (0.3,0)$) 	{\small \parbox{1cm}{$\alpha_2{-}\alpha_6$\\$s_{26}$}};
		    		
		    			\node at ($(a3.east) + (0.3,0)$) 	{\small \parbox{1cm}{$\alpha_3{-}\alpha_5$\\ $s_{35}$}};

 		    	\draw[<->, dashed] (a1) to[bend left=45]
 			node[fill=white]{\small $\pi w_0(S_4)$} (a3);

			\end{tikzpicture} 	
			    
			\end{center}

			    &
			    \begin{center}
			    
			    {\small
			    \begin{gather*}
			    (S_4{\ltimes} W_{D_4}^a) \times \underbrace{C_2}_{triv}\\
			    \simeq(S_3{\ltimes} W_{D_4}^a) \times \underbrace{C_2}_{triv}\\
			    \downarrow\\
			    S_3{\ltimes} (C_2^2 \times W_{D_4}^a)\\
			    t_{\omega_{35}-\omega_{126}}^{1/2}\rightarrow t_{\omega_6-\omega_7}\\
			    t_{\omega_{26}-2\omega_1}^{1/2}\rightarrow t_{\omega_7}
			    \end{gather*}
                            }  
			    \eqref{symm_E7_E8_c1}  
			    
			    \end{center}

			    &

			    \begin{center}
			     \begin{tikzpicture}[elt/.style={circle,draw=black!100,thick, inner sep=0pt,minimum size=1.4mm},scale=0.4]
			\path 	(-3,0) 	node 	(a1) [elt] {}
			(-2,0) 	node 	(a2) [elt] {}
			( -1,0) node  	(a3) [elt] {}
			( 0,0) 	node  	(a4) [elt] {}
			( 1,0) 	node 	(a5) [elt] {}
			( 2,0)	node 	(a6) [elt] {}
			( 3,0)	node 	(a7) [elt] {}
			( 0,1)	node 	(a0) [elt] {};
			\draw [black,line width=1pt ] (a1) -- (a2) -- (a3) -- (a4) -- (a5) --  (a6) -- (a7) (a4) -- (a0);
			\draw[<->] (a1) edge[bend right=50] node[]{} (a7);
			\draw[<->] (a2) edge[bend right=50] node[fill=white]{$=$} (a6);
			\draw[<->] (a3) edge[bend right=50] node[]{} (a5);
			
			\end{tikzpicture}
			\end{center}

			    \\
			     \cline{1-1} \cline{3-5}

			    \begin{center}
			
%

                        \includegraphics[scale=0.8]{e7_111.pdf}
			
			\medskip
			
			$w=s_{013}=\bar{w}$
			
			\end{center}
			
			& 
			
			\begin{center}
			
			   \begin{tikzpicture}[elt/.style={circle,draw=black!100,thick, inner sep=0pt,minimum size=2mm},scale=1.25]
			\path 	(-1,0) 	node 	(a0) [elt] {}
			(0,1) 	node 	(a1) [elt] {}
			(0,0) node  	(a2) [elt] {}
			(0,-1) node  	(a3) [elt] {}
			(1,0) node  	(a4) [elt] {};

		    \draw [black,line width=1pt] (a0) -- (a2) -- (a4) (a1) -- (a2) -- (a3);

		     	\node at ($(a0.north) + (0,0.2)$) 	{\small$\alpha_{5}$};
		     	\node at ($(a1.north) + (0,0.2)$) 	{\small$\underline{\alpha_{7}}$};
		     	\node at ($(a2.west) + (-0.2,0.2)$) 	{\small$\alpha_{6}$};
		     	\node at ($(a3.south) + (0,-0.2)$) 	{\small$\alpha_{03445}$};
		     	\node at ($(a4.north) + (0,0.2)$) 	{\small$\alpha_{012233445}$};

 		    \draw[<->, dashed] (a3) to [bend right=40] node[fill=white]{\small $s_{2132}$} (a4); 
 		    
		    \draw[<->, dashed] (a3) to [bend left=40] node[fill=white]{\small $s_{4304}$} (a0); 
 		  
			\node at (0,1.75) 	{$D_4^{(1)}$};
			
			\node at (0,-2) {$(D_4^{(1)})|_{|\alpha|^2=4}$};
                     
                       \begin{scope}[yshift=-4cm]
			\path 	(-1,0) 	node 	(a0) [elt] {}
			(0,1) 	node 	(a1) [elt] {}
			(0,0) node  	(a2) [elt] {}
			(0,-1) node  	(a3) [elt] {}
			(1,0) node  	(a4) [elt] {};

		    \draw [black,line width=1pt] (a0) -- (a2) -- (a4) (a1) -- (a2) -- (a3);

		     	\node at ($(a0.north) + (0,0.3)$) 	{\small \parbox{1cm}{$\alpha_{0344}$\\$s_{4034}$}};
		     	\node at ($(a1.north) + (0,0.3)$) 	{\small \parbox{1cm}{$\underline{\alpha_{0344556677}}$\\$s_{7654034567}$}};
		     	\node at ($(a2.west) + (-0.2,0.3)$) 	{\small \parbox{1cm}{$\alpha_{1223}$\\$s_{2132}$}};
		     	\node at ($(a3.south) + (0,-0.3)$) 	{\small \parbox{1cm}{$\alpha_{034455}$\\$s_{540345}$}};
		     	\node at ($(a4.north) + (0,-0.05)$) 	{\small \parbox{1cm}{$\alpha_{03445566}$\\$s_{65403456}$}};
		     	
 		    \draw[<->, dashed] (a3) to [bend right=40] node[fill=white]{\small $s_6$} (a4); 
 		    
		    \draw[<->, dashed] (a3) to [bend left=40] node[fill=white]{\small $s_5$} (a0); 
 		    
 		    \draw[<->, dashed] (a4) to [bend right=40] node[fill=white]{\small $s_7$} (a1); 
 		     
 		     \end{scope}
 		     
			\end{tikzpicture}  
			
			\end{center}

			&
			
			\begin{center}

			$3A_1$
			
			\medskip
			
			\begin{tikzpicture}[elt/.style={circle,draw=black!100,thick, inner sep=0pt,minimum size=2mm},scale=2]
			\path (0,-1) node (a1) [elt] {}
			(0,0) node (a2) [elt] {}
			(0,1) node  (a3) [elt] {};
		    
		    	\node at ($(a1.west) + (-0.2,0)$) 	{\small $\alpha_{1}$};	
		    	
		    		\node at ($(a2.west) + (-0.2,0)$) 	{\small $\alpha_{3}$};
		    		
		    			\node at ($(a3.west) + (-0.2,0)$) 	{\small $\alpha_{0}$};

		    	\draw[<->, dashed] (a1) to[bend right=0]
			node[fill=white]{\small $s_{2132}$} (a2);
			
			\draw[<->, dashed] (a2) to[bend right=0]
			node[fill=white]{\small $s_{4304}$} (a3);
		    
			\end{tikzpicture} 
			
			\end{center} 
			
			&
			
			\begin{center}
			
			 \begin{gather*}
			      S_3 \ltimes (W_{D_4}^a \times \underbrace{C_2^3}_{triv})\\
			      \simeq S_4 \ltimes (W_{D_4^l}^a \ltimes \underbrace{C_2^3}_{triv})\\
			      \downarrow\\
			      S_4 \ltimes W_{D_4}^a\\
			       \textrm{no proj. red.}\\
			      \textrm{in the preimage}
			      \end{gather*}
			      
			      \medskip
			      
			      \eqref{symm_E7_E8_c2}
			      
			      \end{center}

			& 
			
			\begin{center}

			  \hspace*{-0.3cm}\begin{tikzpicture}[elt/.style={circle,draw=black!100,thick, inner sep=0pt,minimum size=1.4mm},scale=0.4]
			\path 	(-3,0) 	node 	(a1) [elt,fill] {}
			(-2,0) 	node 	(a2) [elt] {}
			( -1,0) node  	(a3) [elt,fill] {}
			( 0,0) 	node  	(a4) [elt] {}
			( 1,0) 	node 	(a5) [elt] {}
			( 2,0)	node 	(a6) [elt] {}
			( 3,0)	node 	(a7) [elt] {}
			( 0,1)	node 	(a0) [elt,fill] {};
			\draw [black,line width=1pt ] (a1) -- (a2) -- (a3) -- (a4) -- (a5) --  (a6) -- (a7) (a4) -- (a0);
			
			\node at ($(a0.north) + (0,0.3)$) 	{\tiny $-1$};
			\node at ($(a1.south) + (0,-0.3)$) 	{\tiny $-1$};
			\node at ($(a3.south) + (0,-0.3)$) 	{\tiny $-1$};
			
			\end{tikzpicture} 
			\end{center}

			\\
			    
			    \hline
			    \end{tabular}
			    \end{center}

			    \begin{center}
			    \begin{tabular}{|m{2.5cm}|m{4cm}|m{2.75cm}|m{3.25cm}|m{2.5cm}|}
			    \hline
			   \begin{center} Folding \end{center}  & \begin{center}
			     $Q^{a,w}$  
			   \end{center}  & \begin{center} $Q^{a,\perp w}$ \end{center}
			     & \begin{center} $C(w)$  \end{center} & \begin{center} $\mathcal{A}^w, \, N_{flip}$ \end{center}  \\
			    \hline
			    
			    \begin{center}


                    \includegraphics[scale=0.8]{e7_1111.pdf}
			
		   \medskip
		
			{$w{=}s_1 s_3 s_5 s_7$}

         \end{center}		
		
		    &

		    \begin{center}

		    \begin{tikzpicture}[elt/.style={circle,draw=black!100,thick, inner sep=0pt,minimum size=2mm},scale=1]
			\path 	(-1,0) 	node 	(a1) [elt] {}
			(1,0) 	node 	(a2) [elt] {}
			(-1,2) node  	(a3) [elt] {}
			(1,2) 	node  	(a4) [elt] {}
			(-1,4) 	node  	(a5) [elt] {}
			(1,4) 	node  	(a6) [elt] {};
			
		    \draw [black,line width=2.5pt ] (a1) -- (a2);
		    
		     \draw [white,line width=1.5pt ] (a1) -- (a2);
		     
		     \draw [<->,black, line width=0.5pt]
		     (a1) -- (a2);
		     
		      \draw [black,line width=2.5pt ] (a3) -- (a4);
		    
		     \draw [white,line width=1.5pt ] (a3) -- (a4);
		     
		     \draw [<->,black, line width=0.5pt]
		     (a3) -- (a4);
		     
		      \draw [black,line width=2.5pt ] (a5) -- (a6);
		    
		     \draw [white,line width=1.5pt ] (a5) -- (a6);
		     
		     \draw [<->,black, line width=0.5pt]
		     (a5) -- (a6);
		     
		     	\node at ($(a1.west) + (-0.2,0)$) 	{\small $\alpha_0$};	
		     	
		     		\node at ($(a2.east) + (0.5,0)$) 	{\small $\underline{\delta-\alpha_0}$};	
		     		
		     	\node at ($(a3.south) + (0,-0.3)$) 	{\small $\alpha_{03445}$};	
		     	
		     		\node at ($(a4.south) + (0,-0.3)$) 	{\small $\underline{\delta-\alpha_{03445}}$};	
		     		
		     		\node at ($(a5.north) + (0,0.2)$) 	{\small$\alpha_{012233445}$};	
		     		
		     		\node at ($(a6.north) + (0,0.2)$) 	{\small$\underline{\alpha_{034455667}}$};

		    \draw[<->, dashed] (a1) to [bend left=0] node[fill=white]{\small $s_{4354}$} (a3); 
		    
		     \draw[<->, dashed] (a2) to [bend left=0] node[fill=white]{\small $s_{4354}$} (a4); 
		    
		     \draw[<->, dashed] (a3) to [bend left=0] node[fill=white]{\small $s_{2132}$} (a5); 
		     
		      \draw[<->, dashed] (a4) to [bend left=0] node[fill=white]{\small $s_{2132}$} (a6);

		        \draw[<->, dashed] (a5) to [bend right=40] node[fill=white]{\small $\pi$} (a6); 
		      
		        \draw[<->, dashed] (a3) to [bend left=40] node[fill=white]{\small $s_{2132}\pi s_{2132}$} (a4); 
		     
		      \draw[<->, dashed] (a1) to [bend right=40] node[fill=white]{\small $s_{4354}s_{2132}\pi s_{2132}s_{4354}$} (a2); 
		    
			\node at (0,5) 	{$(3A_1)^{(1)}$};
			
			\node at (0,-1.5) {$(A_3^{(1)})|_{|\alpha|^2=4}$};

			 \begin{scope}[yshift=-3.75cm]
			      \path 	
			 (-1,1) 	node 	(a0) [elt] {}    
			(-1,-1) 	node 	(a1) [elt] {}
			(1,-1) 	node 	(a2) [elt] {}
			(1,1) 	node 	(a3) [elt] {}; 
			
			\draw [black,line width=1pt] (a0) -- (a1) -- (a2) -- (a3) -- (a0);
			
			 \node at ($(a0.north) + (0,0.3)$) 	{\small \parbox{1cm}{$\alpha_{003445}$\\$s_{043540}$}};	
		     	
		     \node at ($(a1.south) + (0,-0.3)$) 	{\small \parbox{1cm}{$\alpha_{1223}$\\$s_{2132}$}};
		     
		      \node at ($(a2.south) + (0.1,-0.3)$) 	{\small \parbox{1cm}{$\alpha_{3445}$\\$s_{4354}$}};
		      
		      \node at ($(a3.north) + (0.1,0.3)$) 	{\small \parbox{1cm}{$\alpha_{5667}$\\$s_{6576}$}};
		      
		      \draw[<->, dashed] (a1) to[bend right=20] node[fill=white]{\small $\pi$}  (a3);
		      
		       \draw[<->, dashed] (a2) to[bend right=20] node[fill=white]{\small $s_0$}  (a0);
			
			\end{scope}
			
			      \end{tikzpicture}
			      
			      \medskip
			      
			      $t_{\omega_{\alpha_{1223}}-\omega_{\alpha_{5667}}}\sim t_{\omega_2-\omega_6}^{1/2}$\\
			      $t_{\omega_{\alpha_{3445}}-\omega_{\alpha_{003445}}}\sim t_{\omega_4-2\omega_0}^{1/2}$

			\end{center}

			&

		 \begin{center}
			     
			      $4A_1$
			      			      
			      \medskip
			      
			      \begin{tikzpicture}[elt/.style={circle,draw=black!100,thick, inner sep=0pt,minimum size=2mm},scale=2]
			\path (0,0) node (a1) [elt] {}
			(0,1) node (a3) [elt] {}
			(0,2) node  (a5) [elt] {}
			(0,3) node  (a7) [elt] {};
		    
		    	\node at ($(a1.west) + (-0.2,0)$) 	{\small $\alpha_1$};	
		    	
		    		\node at ($(a3.west) + (-0.2,0)$) 	{\small $\alpha_3$};
		    		
		    			\node at ($(a5.west) + (-0.2,0)$) 	{\small $\alpha_5$};	
		    			
		    			\node at ($(a7.west) + (-0.2,0)$) 	{\small $\alpha_7$};

		    	\draw[<->,dashed] (a1) to[bend right=0]
			node[fill=white]{\small $s_{2132}$} (a3);
			
			\draw[<->,dashed] (a3) to[bend right=0]
			node[fill=white]{\small $s_{4354}$} (a5);
			
			\draw[<->,dashed] (a3) to[bend left=60]
			node[fill=white]{\small $\pi$} (a5);
			
			\draw[<->,dashed] (a1) to[bend right=20]
			node[fill=white]{\small $\pi$} (a7);
		    
			\end{tikzpicture} 
			    
			    \end{center}

		& 
		
		\begin{center}
		{\small
		\begin{gather*}
		 (S_3{\ltimes}C_2^3){\ltimes}(W_{3A_1}^a{\times}\underbrace{C_2^4}_{triv})\\
		 \simeq\\
		 (C_2^2{\ltimes}W_{A_3}^a){\ltimes}\underbrace{C_2^4}_{triv})\\
		 \downarrow\\
		 C_2^2{\ltimes}W_{A_3}^a\\
		 t_{\omega_2-\omega_6}^{1/2} \rightarrow t_{\omega_3-\omega_2}\\
		 t_{\omega_4-2\omega_0}^{1/2} \rightarrow t_{\omega_4-\omega_5}
		\end{gather*}
		}
		\eqref{symm_E7_D5}
		\end{center}
			
			&

			\begin{center}
			
			\hspace*{-0.3cm}\begin{tikzpicture}[elt/.style={circle,draw=black!100,thick, inner sep=0pt,minimum size=1.4mm},scale=0.4]
			\path 	(-3,0) 	node 	(a1) [elt,fill] {}
			(-2,0) 	node 	(a2) [elt] {}
			( -1,0) node  	(a3) [elt,fill] {}
			( 0,0) 	node  	(a4) [elt] {}
			( 1,0) 	node 	(a5) [elt,fill] {}
			( 2,0)	node 	(a6) [elt] {}
			( 3,0)	node 	(a7) [elt,fill] {}
			( 0,1)	node 	(a0) [elt] {};
			\draw [black,line width=1pt ] (a1) -- (a2) -- (a3) -- (a4) -- (a5) --  (a6) -- (a7) (a4) -- (a0);
			
 			\node at ($(a1.south) + (0.2,-0.3)$) 	{\tiny $-1$};	
 			\node at ($(a3.south) + (0.2,-0.3)$) 	{\tiny $-1$};	
 			\node at ($(a5.south) + (0.2,-0.3)$) 	{\tiny $-1$};	
			\node at ($(a7.south) + (0.2,-0.3)$) 	{\tiny $-1$};	
		
			\end{tikzpicture}  
			\end{center}

			\\

			     \hline
			     
			   \begin{center}


                       \includegraphics[scale=0.8]{e7_222.pdf}
			
			\medskip
			
			 $w{=}s_{12}s_{40}s_{76}$

			 \end{center}

			&
			
			\begin{center}
			
			\begin{tikzpicture}[elt/.style={circle,draw=black!100,thick, inner sep=0pt,minimum size=2mm},scale=1.25]
			\path 	(1,0) 	node 	(a1) [elt] {}
			(-1,0) 	node 	(a2) [elt] {};

		    \draw [black,line width=2.5pt ] (a1) -- (a2);
		    
		     \draw [white,line width=1.5pt ] (a1) -- (a2);
		     
		     \draw [<->,black, line width=0.5pt]
		     (a1) -- (a2);
		     
		     \node at ($(a2.south) + (-0.2,-0.3)$) 	{\small 
		     \parbox{1cm}{$\underline{\alpha_{044555667}}$\\ $s_{546}s_{057}s_{645}$}};	
		     	
		     \node at ($(a1.south) + (-0.2,-0.3)$) 	{\small 
		     \parbox{1cm}{$\alpha_{012233344}$\\ $s_{342} s_{013} s_{243}$}};
		     
		      \draw [<->, dashed]
		     (a1) to[bend right=30] node[fill=white] {\small $\pi$} (a2);
		     
		     \node at (0,1) {$A_1^{(1)}|_{|\alpha|^2=6}$};
		     
		     \end{tikzpicture}
		     
		     \medskip
                     
                     $t_{\omega_{\alpha_{012233344}}}\sim t^{1/3}_{\omega_3-\omega_5}$
		     
		     \end{center}
			
			&
			
			\begin{center}
			$3A_2$ 
			
			\medskip
			      
			      \begin{tikzpicture}[elt/.style={circle,draw=black!100,thick, inner sep=0pt,minimum size=2mm},scale=0.75]
			\path (0,0) node (a1) [elt] {}
			(0,1) node (a2) [elt] {}
			(0,3) node  (a4) [elt] {}
			(0,4) node  (a0) [elt] {}
			(0,7) node  (a6) [elt] {}
			(0,6) node  (a7) [elt] {};
			\draw [black,line width=1pt] (a1) -- (a2) (a4) -- (a0) (a6) -- (a7);
		    
		    	\node at ($(a1.south) + (0,-0.2)$) 	{\small $\alpha_1$};	
		    	
		    		\node at ($(a2.north) + (0,0.2)$) 	{\small $\alpha_2$};
		    		
		    			\node at ($(a4.south) + (0,-0.2)$) 	{\small $\alpha_4$};	
		    			
		    			\node at ($(a0.north) + (0,0.2)$) 	{\small $\alpha_0$};	
		    			
		    			\node at ($(a6.north) + (0,0.2)$) 	{\small $\alpha_6$};	
		    			
		    			\node at ($(a7.south) + (0,-0.2)$) 	{\small $\alpha_7$};

			\draw[<->, dashed] (0,3.5) to[bend left=40]
			node[fill=white]{\small $s_{546}s_{057}s_{645}$} (0,6.5);
			
			\draw[<->, dashed] (0,0.5) to[bend left=40]
			node[fill=white]{\small $s_{342} s_{013} s_{243}$} (0,3.5);
			
			\draw[<->, dashed] (0,0.5) to[bend right=40]
			node[fill=white]{$\pi$} (0,6.5);
			
			\end{tikzpicture} 
			
			\end{center}

			  &
			  
			  \begin{center}
			  
			 \begin{gather*}
			      W_{A_1}^{ae} \ltimes \underbrace{C_3^3}_{triv}\\
			      \downarrow\\
			      W_{A_1}^{ae}\\
			      t^{1/3}_{\omega_3-\omega_5}\rightarrow t_{\omega_4}
			      \end{gather*}
			      
			      \medskip
			      
			      \eqref{symm_E7_E3}
			  
			  \end{center}
			  
			  &

			  \begin{center}
			
			\hspace*{-0.3cm}\begin{tikzpicture}[elt/.style={circle,draw=black!100,thick, inner sep=0pt,minimum size=1.4mm},scale=0.4]
			\path 	(-3,0) 	node 	(a1) [elt,fill] {}
			(-2,0) 	node 	(a2) [elt,fill] {}
			( -1,0) node  	(a3) [elt] {}
			( 0,0) 	node  	(a4) [elt,fill] {}
			( 1,0) 	node 	(a5) [elt] {}
			( 2,0)	node 	(a6) [elt,fill] {}
			( 3,0)	node 	(a7) [elt,fill] {}
			( 0,1)	node 	(a0) [elt,fill] {};
			\draw [black,line width=1pt ] (a1) -- (a2) -- (a3) -- (a4) -- (a5) --  (a6) -- (a7) (a4) -- (a0);
			
 			\node at ($(a1.south) + (0.2,-0.3)$) 	{\tiny $\zeta$};	
 			\node at ($(a2.south) + (0.2,-0.3)$) 	{\tiny $\zeta$};	
 			\node at ($(a6.south) + (0.2,-0.3)$) 	{\tiny $\zeta$};	
			\node at ($(a7.south) + (0.2,-0.3)$) 	{\tiny $\zeta$};
			\node at ($(a4.south) + (0.2,-0.3)$) 	{\tiny $\zeta$};	
			\node at ($(a0.north) + (0.2,0.3)$) 	{\tiny $\zeta$};

			\draw [<->, dashed] (0,-1) to node[fill=white] {\small $s_{147}$} (0,-4);
			
			\begin{scope}[yshift=-6cm]
			\path 	(-3,0) 	node 	(a1) [elt,fill] {}
			(-2,0) 	node 	(a2) [elt,fill] {}
			( -1,0) node  	(a3) [elt] {}
			( 0,0) 	node  	(a4) [elt,fill] {}
			( 1,0) 	node 	(a5) [elt] {}
			( 2,0)	node 	(a6) [elt,fill] {}
			( 3,0)	node 	(a7) [elt,fill] {}
			( 0,1)	node 	(a0) [elt,fill] {};
			\draw [black,line width=1pt ] (a1) -- (a2) -- (a3) -- (a4) -- (a5) --  (a6) -- (a7) (a4) -- (a0);
			
 			\node at ($(a1.south) + (0.2,-0.3)$) 	{\tiny $\zeta^{-1}$};	
 			\node at ($(a2.south) + (0.2,-0.3)$) 	{\tiny $\zeta^{-1}$};	
 			\node at ($(a6.south) + (0.2,-0.3)$) 	{\tiny $\zeta^{-1}$};	
			\node at ($(a7.south) + (0.2,-0.3)$) 	{\tiny $\zeta^{-1}$};
			\node at ($(a4.south) + (0.2,-0.3)$) 	{\tiny $\zeta^{-1}$};	
			\node at ($(a0.north) + (0.2,0.3)$) 	{\tiny $\zeta^{-1}$}; 
			\end{scope}

			\end{tikzpicture}  
			\end{center}

			  \\
			     \hline
			\end{tabular}
			\end{center}


        \begin{center}
			    \begin{tabular}{|m{2.5cm}|m{4cm}|m{3.5cm}|m{3cm}|m{2.5cm}|}
			    \hline
			   \begin{center} Folding  \end{center}   & \begin{center}
			     $Q^{a,w}_{E_7^{(1)}}$  
			   \end{center}  & \begin{center} $Q^{a,\perp w}$ \end{center}
			     & \begin{center} $C(w)$  \end{center} & \begin{center} $\mathcal{A}^w, \, N_{flip}$ \end{center}  \\
			    \hline


                        \includegraphics[scale=0.8]{e7_33.pdf}
			
			\medskip
			
			$w{=}s_{321} s_{765}$
			
			{\small $=t_{\omega_6}\bar{w}$
			
			$\bar{w}=s_{321} s_{\theta 65}$}
			
			&

			\begin{center}
			 
			\begin{tikzpicture}[elt/.style={circle,draw=black!100,thick, inner sep=0pt,minimum size=2mm},scale=1.25]
			\path 	(-1,0) 	node 	(a1) [elt] {}
			(1,0) 	node 	(a2) [elt] {};

		    \draw [black,line width=2.5pt ] (a1) -- (a2);
		    
		     \draw [white,line width=1.5pt ] (a1) -- (a2);
		     
		     \draw [<->,black, line width=0.5pt]
		     (a1) -- (a2);
		     
		     \node at ($(a2.north) + (0,0.1)$) 	{\small$\underline{\delta-\alpha_{0}}$};	
		     	
		     \node at ($(a1.north) + (0,0.1)$) 	{\small $\alpha_{0}$};
		     
		       \draw [<->, dashed]
		     (a1) to[bend right=40] node[fill=white]{\small $\pi s_{4354} s_{2132} \pi s_{2132} s_{4354}$}  (a2);

		     \node at (0,1) 	{$A_1^{(1)}$};
		     
		     \node at (0,-1.5) 	{$A_1^{(1)}|_{|\alpha|^2=8}$};
		     
			\begin{scope}[yshift=-2.5cm]
			\path 	(-1,0) 	node 	(a1) [elt] {}
			(1,0) 	node 	(a2) [elt] {};

		    \draw [black,line width=2.5pt ] (a1) -- (a2);
		    
		     \draw [white,line width=1.5pt ] (a1) -- (a2);
		     
		     \draw [<->,black, line width=0.5pt]
		     (a1) -- (a2);
		     
		     \node at ($(a2.south) + (0.1,-0.4)$) {\small \parbox{2cm}{$\underline{\delta+2\alpha_{0}}$\\$s_0 \pi s_{\#} \pi s_{\#}^{-1}s_0$}};	
		     	
		     \node at ($(a1.south) + (0.3,-0.4)$) {\small \parbox{2cm}{$\delta-2\alpha_{0}$\\$\pi s_{\#} \pi s_{\#}^{-1}$}};
		     
		     \draw [<->, dashed]
		     (a1) to[bend left=40] node[fill=white]{\small $s_0$}  (a2);
		     
		     \end{scope}
		     
		     \end{tikzpicture}
		     
		      \medskip
		     
		     $t_{\delta-2\alpha_0}\sim t^{1/2}_{\omega_4-2\omega_0}$

		     \end{center}

			&
			
			 \begin{center}
			     
			     $2A_3$

			      \medskip
			      
			      \begin{tikzpicture}[elt/.style={circle,draw=black!100,thick, inner sep=0pt,minimum size=2mm},scale=1]
			\path (-1.5,1) node (a3) [elt] {}
			(0,1) node  (a2) [elt] {}
			(1.5,1) node  (a1) [elt] {}
			(-1.5,-1) node (a5) [elt] {}
			(0,-1) node  (a6) [elt] {}
			(1.5,-1) node  (a7) [elt] {};
			
			\draw (a1) -- (a2) -- (a3) (a5) -- (a6) -- (a7);

		    	\node at ($(a1.north) + (0,0.3)$) 	{\small $\alpha_1$};	
		    	\node at ($(a2.south) + (0.3,-0.2)$) 	{\small $\alpha_2$};	
		    	\node at ($(a3.north) + (0,0.3)$) 	{\small $\alpha_3$};	
		    	
		    	\node at ($(a5.south) + (0,-0.3)$) 	{\small $\alpha_5$};	
		    	\node at ($(a6.north) + (0.3,0.2)$) 	{\small $\alpha_6$};	
		    	\node at ($(a7.south) + (0,-0.3)$) 	{\small $\alpha_7$};

		    	\draw[<->, dashed] (a3) to[bend left=60] node[fill=white] {\small $s_{\#} \pi s_{\#}^{-1}$} (a1);
		    	\draw[<->, dashed] (a5) to[bend right=60] node[fill=white] {\small $s_{\#} \pi s_{\#}^{-1}$} (a7);
		    	
		    	\draw[<->, dashed] (a5) to[bend right=0] node[fill=white] {\small $\pi$} (a3);
		    	\draw[<->, dashed] (a6) to[bend right=0] node[fill=white] {\small $\pi$} (a2);
		    	\draw[<->, dashed] (a7) to[bend right=0] node[fill=white] {\small $\pi$} (a1);

			\end{tikzpicture} 
			     
			     \end{center}

			& 
			
			\begin{center}
			
			\begin{gather*}
			(\!\!\!\underbrace{C_2}_{\pi s_1s_3s_5s_7}\!\!\!\!{\times}W_{A_1}^{ae}){\ltimes}\underbrace{C_4^2}_{triv} \\
			\simeq(C_2{\times}W_{A_1}^{ae}){\ltimes}\underbrace{C_4^2}_{triv}\\
			\downarrow\\
			C_2\times W_{A_1}^{ae}\\
			t^{1/2}_{\omega_4-2\omega_0}\rightarrow t^{1/2}_{-\omega_4+2\omega_0}
			\end{gather*}

			Fig. \ref{fig:sublattices_im}
			
			\end{center}
			
			& 
			
			\begin{center}
			
			\hspace*{-0.3cm}\begin{tikzpicture}[elt/.style={circle,draw=black!100,thick, inner sep=0pt,minimum size=1.4mm},scale=0.4]
			\path 	(-3,0) 	node 	(a1) [elt,fill] {}
			(-2,0) 	node 	(a2) [elt,fill] {}
			( -1,0) node  	(a3) [elt,fill] {}
			( 0,0) 	node  	(a4) [elt] {}
			( 1,0) 	node 	(a5) [elt,fill] {}
			( 2,0)	node 	(a6) [elt,fill] {}
			( 3,0)	node 	(a7) [elt,fill] {}
			( 0,1)	node 	(a0) [elt] {};
			\draw [black,line width=1pt ] (a1) -- (a2) -- (a3) -- (a4) -- (a5) --  (a6) -- (a7) (a4) -- (a0);
			
			\node at ($(a1.south) + (0,-0.4)$) 	{\tiny $\ri$};
			\node at ($(a2.south) + (0,-0.4)$) 	{\tiny $\ri$};
			\node at ($(a3.south) + (0,-0.4)$) 	{\tiny $\ri$};
			
			\node at ($(a5.south) + (0,-0.4)$) 	{\tiny $\ri$};
			\node at ($(a6.south) + (0,-0.4)$) 	{\tiny $\ri$};
			\node at ($(a7.south) + (0,-0.4)$) 	{\tiny $\ri$};
			
			\draw[<->, dashed] (0,-0.5) to
			node[fill=white]{\small $s_{1357}$} (0,-3.5);

                         \begin{scope}[yshift=-5cm]
                         
                         \path 	(-3,0) 	node 	(a1) [elt,fill] {}
			(-2,0) 	node 	(a2) [elt,fill] {}
			( -1,0) node  	(a3) [elt,fill] {}
			( 0,0) 	node  	(a4) [elt] {}
			( 1,0) 	node 	(a5) [elt,fill] {}
			( 2,0)	node 	(a6) [elt,fill] {}
			( 3,0)	node 	(a7) [elt,fill] {}
			( 0,1)	node 	(a0) [elt] {};
			\draw [black,line width=1pt ] (a1) -- (a2) -- (a3) -- (a4) -- (a5) --  (a6) -- (a7) (a4) -- (a0);
			
			\node at ($(a1.south) + (0,-0.4)$) 	{\tiny $-\ri$};
			\node at ($(a2.south) + (0,-0.4)$) 	{\tiny $-\ri$};
			\node at ($(a3.south) + (0,-0.4)$) 	{\tiny $-\ri$};
			
			\node at ($(a5.south) + (0,-0.4)$) 	{\tiny $-\ri$};
			\node at ($(a6.south) + (0,-0.4)$) 	{\tiny $-\ri$};
			\node at ($(a7.south) + (0,-0.4)$) 	{\tiny $-\ri$};
                         \end{scope}

			\end{tikzpicture}
			
			\end{center}
			
			\\
			\cline{1-1} \cline{3-5}

%

                        \includegraphics[scale=0.8]{e7_p11111.pdf}
			
			\medskip
			
			$w{=}\pi s_0 s_1 s_3$
			
			{\small $=t_{\omega_1}\bar{w}$
			
			$t_{\omega_4{-}\omega_3}sw s^{-1}$
			$=s_{321}s_{765}$
			
			$s{=}s_{430456745230423}$}

			&
			
			\begin{center}
			 
			\begin{tikzpicture}[elt/.style={circle,draw=black!100,thick, inner sep=0pt,minimum size=2mm},scale=1.25]
			\path 	(-1,0) 	node 	(a1) [elt] {}
			(1,0) 	node 	(a2) [elt] {};

		    \draw [black,line width=2.5pt ] (a1) -- (a2);
		    
		     \draw [white,line width=1.5pt ] (a1) -- (a2);
		     
		     \draw [<->,black, line width=0.5pt]
		     (a1) -- (a2);
		     
		     \node at ($(a2.north) + (0,0.1)$) 	{\small $\underline{\delta-\alpha_{03445}}$};	
		     	
		     \node at ($(a1.north) + (0,0.1)$) 	{\small $\alpha_{03445}$};
		     
		       \draw [<->, dashed]
		     (a1) to[bend right=40] node[fill=white]{\small $\pi s_{2132} s_{6576}$}  (a2);
		     
		     \node at (0,1) 	{$A_1^{(1)}$};
		     
		      \node at (0,-1.5) 	{$A_1^{(1)}|_{|\alpha|^2=8}$};

			\begin{scope}[yshift=-2cm]
			\path 	(-1,0) 	node 	(a1) [elt] {}
			(1,0) 	node 	(a2) [elt] {};

		    \draw [black,line width=2.5pt ] (a1) -- (a2);
		    
		     \draw [white,line width=1.5pt ] (a1) -- (a2);
		     
		     \draw [<->,black, line width=0.5pt]
		     (a1) -- (a2);
		     
		     \node at ($(a1.south) + (0.1,-0.3)$) 	{\small \parbox{2cm}{$\underline{\delta-2\alpha_{03445}}$\\$s_{2132}s_{6576}$}};	
		     	
		     \node at ($(a2.south) + (0,-0.3)$) 	{\small \parbox{1cm}{$2\alpha_{03445}$\\$s_0s_{\alpha_{03445}}$}};
		     
		     \end{scope}
		     
		     \end{tikzpicture}
		     
		     \medskip
		     
		     $t_{2\omega_{2\alpha_{03445}}}\sim t^{1/2}_{\omega_4-\omega_{26}}$
		     
		     \end{center}
			
			&
			
			 \begin{center}
			     
			     $\widetilde{Q^{a,\perp w}}=(A_1)_{|\alpha|^2=8} + 5A_1$

			      \medskip
			      
			      \begin{tikzpicture}[elt/.style={circle,draw=black!100,thick, inner sep=0pt,minimum size=2mm},scale=1.25]
			\path 	(0,3) 	node 	(a0) [elt] {}
		(-1,2) 	node 	(a1) [elt] {}
		(-1,1) node  	(a3) [elt] {}
		( 1,1) 	node 	(a5) [elt] {}
		( 1,2) 	node  	(a7) [elt] {}
		(0,0) 	node  	(a26) [elt] {};
		
		\node at ($(a0.north) + (0,0.2)$) 	{\small $\alpha_0$};
		\node at ($(a1.west) + (-0.2,0)$) 	{\small $\alpha_1$};	
		\node at ($(a3.west) + (-0.2,0)$) {\small $\alpha_3$};	
		\node at ($(a5.east) + (0.2,0)$) 	{\small $\alpha_5$};	
		\node at ($(a7.east) + (0.2,0)$) 	{\small $\alpha_7$};
		\node at ($(a26.south) + (0,-0.25)$) 	{\small \parbox{2cm}{$\alpha_{1223}{-}\alpha_{5667}$\\$\pi$}};

		    	\draw[<->, dashed] (a1) to[bend left=0] node[fill=white] {\small } (a5);
		    	\draw[<->, dashed] (a3) to[bend left=0] node[fill=white] {\small $\pi s_{2132}s_{6576}$} (a7);
		    	
		    	\draw[<->, dashed] (a1) to[bend left=0] node[fill=white] {\small $\pi$} (a7);
		    	\draw[<->, dashed] (a3) to[bend left=0] node[fill=white] {\small $\pi$} (a5);

			\end{tikzpicture} 
			     
			     \end{center}

			&
			
			\begin{center}
			
			\begin{gather*}
			C_2^2\ltimes (W_{A_1}\times C_2^5)\\
			\simeq (\underbrace{C_2^2}_{\pi s_0, \pi}\!\!{\times}W_{A_1}){\ltimes}\!\!\underbrace{C_2^4}_{triv}\\
			\downarrow\\
			\underbrace{C_2^2}_{\pi s_0, \pi} \times W_{A_1}\\
			t^{1/2}_{\omega_4-\omega_{26}}\rightarrow t_{\omega_6}
			\end{gather*}

			\eqref{symm_E7_E7_c3}
			
			\end{center}
			
			& 
			
			\begin{center}
			
			\hspace*{-0.3cm}\begin{tikzpicture}[elt/.style={circle,draw=black!100,thick, inner sep=0pt,minimum size=1.4mm},scale=0.4]

                         \path 	(-3,0) 	node 	(a1) [elt,fill] {}
			(-2,0) 	node 	(a2) [elt,fill] {}
			( -1,0) node  	(a3) [elt,fill] {}
			( 0,0) 	node  	(a4) [elt] {}
			( 1,0) 	node 	(a5) [elt,fill] {}
			( 2,0)	node 	(a6) [elt,fill] {}
			( 3,0)	node 	(a7) [elt,fill] {}
			( 0,1)	node 	(a0) [elt] {};
			\draw [black,line width=1pt ] (a1) -- (a2) -- (a3) -- (a4) -- (a5) --  (a6) -- (a7) (a4) -- (a0);
			
			\node at ($(a1.south) + (0,-0.4)$) 	{\tiny $-1$};
			\node at ($(a3.north) + (0,0.4)$) 	{\tiny $-1$};
			\node at ($(a0.north) + (0,0.4)$) 	{\tiny $-1$};
			\node at ($(a5.north) + (0,0.4)$) 	{\tiny $-1$};
			\node at ($(a7.south) + (0,-0.4)$) 	{\tiny $-1$};
                        
                        \draw[<->] (a2) edge[bend right=50] node[fill=white]{$=$} (a6);

			\end{tikzpicture}
			
			\end{center}
			
			\\
			 
			 \hline
        
        \end{tabular}
        
        \end{center}

	\newpage			
	
	\subsection{$E_6^{(1)}/A_2^{(1)}$}
	
	\label{e6_a}
	
	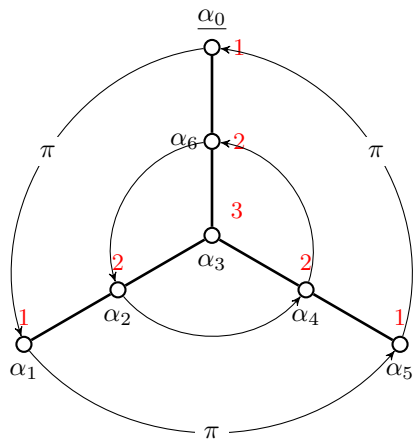
\begin{figure}[h]
	
	\begin{center}
	\begin{tikzpicture}[elt/.style={circle,draw=black!100,thick, inner sep=0pt,minimum size=2mm},scale=1.25]
		\path 	(-2,-1.16) 	node 	(a1) [elt] {}
		(-1,-0.58) 	node 	(a2) [elt] {}
		( 0,0) node  	(a3) [elt] {}
		( 1,-0.58) 	node  	(a4) [elt] {}
		( 2,-1.16) 	node 	(a5) [elt] {}
		( 0,1)	node 	(a6) [elt] {}
		( 0,2)	node 	(a0) [elt] {};
		\draw [black,line width=1pt ] (a1) -- (a2) -- (a3) -- (a4) -- (a5)   (a3) -- (a6) -- (a0);
		\node at ($(a1.south) + (0,-0.2)$) 	{$\alpha_{1}$};
		\node at ($(a2.south) + (0,-0.2)$)  {$\alpha_{2}$};
		\node at ($(a3.south) + (0,-0.2)$)  {$\alpha_{3}$};
		\node at ($(a4.south) + (0,-0.2)$)  {$\alpha_{4}$};	
		\node at ($(a5.south) + (0,-0.2)$)  {$\alpha_{5}$};		
		\node at ($(a6.west) + (-0.2,0)$) 	{$\alpha_{6}$};	
		\node at ($(a0.north) + (0,0.2)$) 	{$\underline{\alpha_{0}}$};		
		\node at ($(a0.east) + (0.2,0)$) 	{\color{red}\small$1$};
		\node at ($(a1.north) + (0,0.2)$) 	{\color{red}\small$1$};
		\node at ($(a2.north) + (0,0.2)$) 	{\color{red}\small$2$};
		\node at ($(a3.north east) + (0.2,0.2)$) 	{\color{red}\small$3$};
		\node at ($(a4.north) + (0,0.2)$) {\color{red}\small$2$};
		\node at ($(a5.north) + (0,0.2)$) 	{\color{red}\small$1$};
		\node at ($(a6.east) + (0.2,0)$) 	{\color{red}\small$2$};
		
		\draw[->] (a1) edge[bend right=50] node[fill=white]{$\pi$} (a5);
				\draw[->] (a2) edge[bend right=50] node[]{} (a4);
				\draw[->] (a5) edge[bend right=50] node[fill=white]{$\pi$} (a0);
				\draw[->] (a4) edge[bend right=50] node[]{} (a6);
				\draw[->] (a0) edge[bend right=50] node[fill=white]{$\pi$} (a1);
				\draw[->] (a6) edge[bend right=50] node[]{} (a2);
		
		\end{tikzpicture}
	
	\end{center}
	
	\caption{Root numbering and marks for $E_6^{(1)}$ Dynkin diagram}
	        
	        \label{fig:E6_rn}
                
		\end{figure}

	Affine Weyl group $W(E_6^{(1)})$ (and corresponding Dynkin diagram) has outer automorphism of order $3$.
	Diagram has also outer automorphism of order $2$ (vertical reflection), which is absent in $W(E_6^{(1)})$. 
			
		For $E_6^{(1)}$ we have $3$ foldings: 
		\begin{itemize}
		\item Two foldings of order $3$, of case $1$ and $2$, which have common (up to conjugation)
		$\bar{w}$.
		\item Folding of order $2$, of case $2$.
		\end{itemize}
	
	For folding $\pi$ we have two peculiarities:
	\begin{itemize}
	 \item 
	 First, to obtain root subsystem realization, we take index $3$ sublattice $\widetilde{Q^{a,\perp w}}$ of $Q^{a, \perp w}$
	 defined as follows.
	 $Q^{a, \perp w}$ are defined as $(\alpha=x_0 \alpha_0+x_1 \alpha_1+x_5\alpha_5+y_2\alpha_2+y_4\alpha_4+y_6\alpha_6\in Q|
	 \sum x_i=0, \, \sum y_i=0)$. 
	 Then $\widetilde{Q^{a,\perp w}}$ are defined as $(\alpha \in Q^{a,\perp w}|x_1+x_2-x_4-x_5=0 \mod 3)$.
	
	 \item After we find convenient sublattice $\widetilde{Q^{a,\perp w}}=2A_2$,
	 we discover, that its simple reflections cannot be realized inside $W^{ae}$, 
	 from $W_{A_2^2}$ we can realize only $\Omega_{A_2}^2$.
	 This realization is written at the table.
	 
	\end{itemize}

	\begin{center}
			\begin{tabular}{|m{2.5cm}|m{3.5cm}|m{3cm}|m{3cm}|m{2.5cm}|}
			\hline
			  \begin{center} Folding \end{center}    & \begin{center}
			     $Q^{a,w}$  
			   \end{center}  & \begin{center}  $Q^{a,\perp w}$ \end{center}		                                                  
			     & \begin{center} $C(w)$ \end{center} & \begin{center} $\mathcal{A}^w, \, N_{flip}$ \end{center}\\
			    \hline
			    
			    \begin{center}
		
		    \includegraphics[scale=1]{e6_p.pdf}
		
			\medskip
		
		$\pi=t_{\omega_5}\bar{\pi}$
		
		$\bar{\pi}=s s_{1245} s^{-1}$,
		
		$s=s_{123543263}$
		
		\end{center}

			    &
			    
			      \begin{center}

			      \begin{tikzpicture}[elt/.style={circle,draw=black!100,thick, inner sep=0pt,minimum size=1.4mm},scale=1]
			      \path 	
			(-1,0) 	node 	(a1) [elt] {}
			(1,0) 	node 	(a2) [elt] {}
			(0,1.73) 	node 	(a0) [elt] {};
			\draw [black,line width=1pt] (a0) -- (a1) -- (a2) -- (a0);
			
			 \node at ($(a0.north) + (0,0.3)$) 	{\small $\underline{\alpha_{0123456}}$ };	
		     	
		     \node at ($(a1.west) + (-0.2,0)$) {\small $\alpha_3$};
		     
		      \node at ($(a2.east) + (0.4,0)$) {\small $\alpha_{2346}$};
			
			\draw[<->,dashed] (a1) to[bend right=60]
			node[fill=white]{\small $s_{246}$} (a2);
			
			\draw[<->,dashed] (a2) to[bend right=60]
			node[fill=white]{\small $s_{015}$} (a0);
			
			     \node at (0, 2.75) {$A_2^{(1)}$};
			     
			     \node at (0,-1.5) {$(A_2^{(1)})_{|\alpha|^2=6}$};
			      
			      \begin{scope}[yshift=-4.5cm]
			      \path 	
			(-1,0) 	node 	(a1) [elt] {}
			(1,0) 	node 	(a2) [elt] {}
			(0,1.73) 	node 	(a0) [elt] {};
			\draw [black,line width=1pt] (a0) -- (a1) -- (a2) -- (a0);
			
			 \node at ($(a0.north) + (0.1,0.3)$) 	{\small \parbox{1cm}{$\alpha_{015}$\\$s_{015}$} };	
		     	
		     \node at ($(a1.south) + (0,-0.3)$) {\small \parbox{1cm}{$\alpha_{233346}$\\$s_{32463}$}};
		     
		      \node at ($(a2.south) + (0.1,-0.3)$) {\small \parbox{1cm}{$\alpha_{246}$\\$s_{246}$}};
			
			\draw[<->,dashed] (a1) to[bend left=30]
			node[fill=white]{\small $s_3$} (a2);
			
			    \end{scope}
			
			      \end{tikzpicture}
			      
			      \medskip
			      
			      $t_{\omega_{\alpha_{2346}}}\sim t^{1/3}_{\omega_{246}-2\omega_{15}}$
			      
			      \end{center}

			    &
			    
			   \begin{center}
			    
			    $\widetilde{Q^{a,\perp w}}=(2A_2)|_{|\alpha|^2=6}$
			    
			    	      \begin{tikzpicture}[elt/.style={circle,draw=black!100,thick, inner sep=0pt,minimum size=2mm},scale=1]
			\path (0,0) node (a1) [elt] {}
			(0,2) node (a2) [elt] {}
			(0,3) node  (a4) [elt] {}
			(0,5) node  (a5) [elt] {};
			\draw [black,line width=1pt] (a1) -- (a2) (a4) -- (a5);
		    
		    	\node at ($(a1.south) + (0,-0.2)$) 	{\small $\alpha_{14}{-}\alpha_{06}$};	
		    	
		    		\node at ($(a2.north) + (0.5,0.2)$) 	{\small $\alpha_{56}{-}\alpha_{12}$};
		    		
		    			\node at ($(a4.south) + (0.5,-0.2)$) 	{\small $\alpha_{12}{-}\alpha_{04}$};	
		    			
		    			\node at ($(a5.north) + (0,0.2)$) 	{\small $\alpha_{45}{-}\alpha_{16}$};

			\draw[<->, dashed] (0,1) to[bend left=40]
			node[fill=white]{\small $s_{246}$} (0,4);
			
			\node at (0.5,1-\l) 	{\small $\Omega_{A_2}=\langle \pi s_{015} s_{246} \rangle$};	
			
			\node at (0.5,4+\l) 	{\small $\Omega_{A_2}=\langle \pi s_{246} s_{015} \rangle$};	
			
			\end{tikzpicture}

			    \end{center}
			    
			    &
			    
			    \begin{center}
			    
			     \begin{gather*}
			     W^{ae}_{A_2}\times \underbrace{C_3}_{triv}\\
			     \simeq (C_2{\ltimes}W^a_{A_2}){\times}\underbrace{C_3}_{triv}\\
			     \downarrow\\
			     C_2 \ltimes W^a_{A_2}\\
			     t^{1/3}_{\omega_{246}-2\omega_{15}}\rightarrow t_{\omega_7}
			     \end{gather*}

			     \eqref{symm_E6_E8_c1}
			     
			     \end{center}
			    
			    & 
			    
  			      \begin{center}
  			    \begin{tikzpicture}[elt/.style={circle,draw=black!100,thick, inner sep=0pt,minimum size=1.4mm},scale=0.4]
  			\path 	(-2,-1.16) 	node 	(a1) [elt] {}
  			(-1,-0.58) 	node 	(a2) [elt] {}
  			( 0,0) node  	(a3) [elt] {}
  			( 1,-0.58) 	node  	(a4) [elt] {}
  			( 2,-1.16) 	node 	(a5) [elt] {}
  			( 0,1)	node 	(a6) [elt] {}
  			( 0,2)	node 	(a0) [elt] {};
  			\draw [black,line width=1pt ] (a1) -- (a2) -- (a3) -- (a4) -- (a5)   (a3) -- (a6) -- (a0);	
  				\draw[->] (a1) edge[bend right=40] node[fill=white]{$=$} (a5);
   				\draw[->] (a2) edge[bend right=40] node[]{} (a4);
   				\draw[->] (a5) edge[bend right=40] node[fill=white]{$=$} (a0);
   				\draw[->] (a4) edge[bend right=40] node[]{} (a6);
   				\draw[->] (a0) edge[bend right=40] node[fill=white]{$=$} (a1);
   				\draw[->] (a6) edge[bend right=40] node[]{} (a2);
  		\end{tikzpicture}
  		\end{center}
			    
			    \\
			    
			    \cline{1-1} \cline{3-5}
			     
			    \begin{center}

%

               \includegraphics[scale=1]{e6_22.pdf}
		
		\medskip
		
		$w=s_{1245}$
		
		\end{center}
			     
			    &
			    
			     \begin{center}

			      \begin{tikzpicture}[elt/.style={circle,draw=black!100,thick, inner sep=0pt,minimum size=1.4mm},scale=1]
			      \path 	
			(-1,0) 	node 	(a1) [elt] {}
			(1,0) 	node 	(a2) [elt] {}
			(0,1.73) 	node 	(a0) [elt] {};
			\draw [black,line width=1pt] (a0) -- (a1) -- (a2) -- (a0);
			
			 \node at ($(a0.north) + (0,0.3)$) 	{\small $\underline{\alpha_0}$ };	
		     	
		     \node at ($(a1.north) + (-0.2,0.3)$) 	{\small $\alpha_6$};
		     
		      \node at ($(a2.north) + (0,0.3)$) 	{\small $\alpha_{1223334456}$};

			\draw[<->,dashed] (a1) to[bend right=60]
			node[fill=white]{\small $s=s_{324} s_{135} s_{423}$} (a2);
			      
			      \node at (0, 2.75) {$A_2^{(1)}$};
			     
			     \node at (0,-1.5) {$(A_2^{(1)})_{|\alpha|^2=6}$};

			      \begin{scope}[yshift=-4.5cm]
			      \path 	
			(-1,0) 	node 	(a1) [elt] {}
			(1,0) 	node 	(a2) [elt] {}
			(0,1.73) 	node 	(a0) [elt] {};
			\draw [black,line width=1pt] (a0) -- (a1) -- (a2) -- (a0);
			
			 \node at ($(a0.north) + (0.4,0.2)$) 	{\small \parbox{2cm}{$\delta{-}\alpha_{066}$\\$s$}};	
		     	
		     \node at ($(a1.south) + (0.3,-0.3)$) 	{\small \parbox{2cm}{$\delta{-}\alpha_0{+}\alpha_6$\\$s_6 s s_6$}};
		     
		      \node at ($(a2.south) + (0.4,-0.3)$) 	{\small \parbox{2cm}{$\delta{+}\alpha_{006}$\\$s_{06} s s_{60}$}};

			\draw[<->,dashed] (a0) to[bend left=60]
			node[fill=white]{\small $s_{060}$} (a2);
			
			\draw[<->,dashed] (a1) to[bend left=40]
			node[fill=white]{\small $s_0$} (a2);
			
			   \end{scope}
			
			      \end{tikzpicture}
			      
			      \end{center}

			    &
			    
			  \begin{center}
			$2A_2$ 
			
			\medskip
			      
			      \begin{tikzpicture}[elt/.style={circle,draw=black!100,thick, inner sep=0pt,minimum size=2mm},scale=0.75]
			\path (0,0) node (a1) [elt] {}
			(0,1) node (a2) [elt] {}
			(0,3) node  (a4) [elt] {}
			(0,4) node  (a5) [elt] {};
			\draw [black,line width=1pt] (a1) -- (a2) (a4) -- (a5);
		    
		    	\node at ($(a1.south) + (0,-0.2)$) 	{\small $\alpha_1$};	
		    	
		    		\node at ($(a2.north) + (0,0.2)$) 	{\small $\alpha_2$};
		    		
		    			\node at ($(a4.south) + (0,-0.2)$) 	{\small $\alpha_4$};	
		    			
		    			\node at ($(a5.north) + (0,0.2)$) 	{\small $\alpha_5$};

			\draw[<->, dashed] (0,0.5) to[bend left=40]
			node[fill=white]{\small $s_{324} s_{135} s_{423}$} (0,3.5);
			
			\end{tikzpicture} 
			
			\end{center}

			    & 
			    
			    \begin{center}
			    
			    \begin{gather*}
			    C_2 \ltimes (W^a_{A_2} \times \underbrace{C_2^2}_{triv}) \\
			    \simeq (S_3{\ltimes}W^a_{A_2}){\ltimes}\underbrace{C_2^2}_{triv})\\
			    \downarrow\\
 			    S_3 \ltimes W^a_{A_2}\\
 			     \textrm{no proj. red.}\\
			      \textrm{in the preimage}
			    \end{gather*}

			    \eqref{symm_E6_E8_c2}
			    
			    \end{center}
			    
			    & 
			    
			    \begin{center}
			    
			      \begin{tikzpicture}[elt/.style={circle,draw=black!100,thick, inner sep=0pt,minimum size=1.4mm},scale=0.4]
			\path 	(-2,-1.16) 	node 	(a1) [elt,fill] {}
			(-1,-0.58) 	node 	(a2) [elt,fill] {}
			( 0,0) node  	(a3) [elt] {}
			( 1,-0.58) 	node  	(a4) [elt,fill] {}
			( 2,-1.16) 	node 	(a5) [elt,fill] {}
			( 0,1)	node 	(a6) [elt] {}
			( 0,2)	node 	(a0) [elt] {};
			\draw [black,line width=1pt ] (a1) -- (a2) -- (a3) -- (a4) -- (a5)   (a3) -- (a6) -- (a0);
			
			\node at ($(a1.south) + (0.2,-0.3)$) 	{\tiny $\zeta$};	
 			\node at ($(a2.south) + (0.2,-0.3)$) 	{\tiny $\zeta$};	
 			\node at ($(a4.south) + (0.2,-0.3)$) 	{\tiny $\zeta$};	
			\node at ($(a5.south) + (0.2,-0.3)$) 	{\tiny $\zeta$};

			\draw [<->,dashed] (0,-2) to node[fill=white] {\small $s_{15}$} (0,-5);
			
			\begin{scope}[yshift=-8cm]
			\path 	(-2,-1.16) 	node 	(a1) [elt,fill] {}
			(-1,-0.58) 	node 	(a2) [elt,fill] {}
			( 0,0) node  	(a3) [elt] {}
			( 1,-0.58) 	node  	(a4) [elt,fill] {}
			( 2,-1.16) 	node 	(a5) [elt,fill] {}
			( 0,1)	node 	(a6) [elt] {}
			( 0,2)	node 	(a0) [elt] {};
			\draw [black,line width=1pt ] (a1) -- (a2) -- (a3) -- (a4) -- (a5)   (a3) -- (a6) -- (a0);
			
			\node at ($(a1.south) + (0.2,-0.3)$) 	{\tiny $\zeta^{-1}$};	
 			\node at ($(a2.north) + (0.2,0.3)$) 	{\tiny $\zeta^{-1}$};	
 			\node at ($(a4.north) + (0.2,0.3)$) 	{\tiny $\zeta^{-1}$};	
			\node at ($(a5.south) + (0.2,-0.3)$) 	{\tiny $\zeta^{-1}$};	 
			\end{scope}

		\end{tikzpicture} 
		
		\end{center}

			    \\
			    
			    \hline
			    
	                 \end{tabular}
	        \end{center}

	\begin{center}
			\begin{tabular}{|m{2.5cm}|m{3.5cm}|m{4cm}|m{2.5cm}|m{2.5cm}|}
			\hline
			  \begin{center} Folding \end{center}    & \begin{center}
			     $Q^{a,w}$  
			   \end{center}  & \begin{center}  $Q^{a,\perp w}$ \end{center}
			     & \begin{center} $C(w)$ \end{center}  & \begin{center} $\mathcal{A}^w$ \end{center}\\

			     \hline
			     
			     \begin{center}


                           \includegraphics[scale=1]{e6_1111.pdf}
			     
			     \medskip
			     
			      $w=s_{0135}$
			   			      
			      \end{center}

			      &

			     \begin{center}

			      \begin{tikzpicture}[elt/.style={circle,draw=black!100,thick, inner sep=0pt,minimum size=1.4mm},scale=1]
			      \path 	
			(-1,0) 	node 	(a1) [elt] {}
			(1,0) 	node 	(a2) [elt] {}
			(0,1.73) 	node 	(a0) [elt] {};
			\draw [black,line width=1pt] (a0) -- (a1) -- (a2) -- (a0);
			
			 \node at ($(a1.south) + (0,-0.4)$) 	{\small \parbox{1cm}{$\underline{\alpha_{3660}}$\\ $s_{6036}$} };	
		     	
		     \node at ($(a0.north) + (0,0.4)$) 	{\small \parbox{1cm}{$\alpha_{1223}$\\ $s_{2132}$}};
		     
		      \node at ($(a2.south) + (0.2,-0.4)$) 	{\small \parbox{1cm}{$\alpha_{3445}$\\ $s_{4354}$}};

			\draw[<-,dashed] (a2) edge[bend right=40] node[fill=white]{\small $\pi$} (a0);
			\draw[<-,dashed] (a0) edge[bend right=40] node[fill=white]{\small $\pi$} (a1);
			\draw[<-,dashed] (a1) edge[bend right=40] node[fill=white]{\small $\pi$} (a2);

			   \node at (0,3) 	{$(A_2^{(1)})_{|\alpha|^2=4}$};
			  
			      \end{tikzpicture}
			      
			      \medskip
			      
			      $t_{\omega_{\alpha_{1223}}-\omega_{\alpha_{3445}}}\sim t^{1/2}_{\omega_2-\omega_4}$\\
			      $t_{\omega_{\alpha_{3445}}}\sim t^{1/2}_{\omega_4-\omega_6}$
			      
			      \end{center}
			      &
			      
			     \begin{center}
			     
			      $4A_1$
			      			      
			      \medskip
			      
			      \begin{tikzpicture}[elt/.style={circle,draw=black!100,thick, inner sep=0pt,minimum size=2mm},scale=0.75]
			\path (0,0) node (a3) [elt] {}
			(0,2) node (a0) [elt] {}
			(-2,-1.16) node  (a1) [elt] {}
			(2,-1.16) node  (a5) [elt] {};

		    		\node at ($(a3.west) + (-0.2,0.1)$) 	{\small $\alpha_3$};
		    		
		    		\node at ($(a0.west) + (-0.2,0.1)$) 	{\small $\alpha_0$};	
		    		
		    		\node at ($(a1.west) + (-0.3,0)$) 	{\small $\alpha_1$};	
		    		
		    			\node at ($(a5.east) + (0.3,0)$) 	{\small $\alpha_5$};	
		    			
		    	\draw[<->,dashed] (a1) to[bend right=0]
			node[fill=white]{\small $s_{2132}$} (a3);
			
			\draw[<->,dashed] (a3) to[bend right=0]
			node[fill=white]{\small $s_{4354}$} (a5);
			
			\draw[<->,dashed] (a3) to[bend right=0]
			node[fill=white]{\small $s_{6036}$} (a0);
			
			\draw[->,dashed] (a5) to[bend right=40]
			node[fill=white]{\small $\pi$} (a0);
			
			\draw[->,dashed] (a0) to[bend right=40]
			node[fill=white]{\small $\pi$} (a1);
			
			\draw[->,dashed] (a1) to[bend right=40]
			node[fill=white]{\small $\pi$} (a5);
		    
			\end{tikzpicture} 
			    
			    \end{center}
			      
			      & 
			      \begin{center}

			      \begin{gather*}
			      W_{A_2}^{ae} \ltimes \underbrace{C_2^4}_{triv}\\
			      \downarrow\\ 
			      W_{A_2}^{ae}\\
			      t^{1/2}_{\omega_2-\omega_4}\rightarrow t_{-\omega_2}\\
			      t^{1/2}_{\omega_4-\omega_6}\rightarrow t_{\omega_2-\omega_1}
			      \end{gather*}

			      \eqref{symm_E6_E3}
			      
			      \end{center}
			      
			      & 
			      
			      \begin{center}
			        \begin{tikzpicture}[elt/.style={circle,draw=black!100,thick, inner sep=0pt,minimum size=1.4mm},scale=0.4]
			\path 	(-2,-1.16) 	node 	(a1) [elt,fill] {}
			(-1,-0.58) 	node 	(a2) [elt] {}
			( 0,0) node  	(a3) [elt,fill] {}
			( 1,-0.58) 	node  	(a4) [elt] {}
			( 2,-1.16) 	node 	(a5) [elt,fill] {}
			( 0,1)	node 	(a6) [elt] {}
			( 0,2)	node 	(a0) [elt,fill] {};
			\draw [black,line width=1pt ] (a1) -- (a2) -- (a3) -- (a4) -- (a5)   (a3) -- (a6) -- (a0);	
			
			\node at ($(a1.south) + (0.2,-0.3)$) 	{\tiny $-1$};	
 			\node at ($(a3.south) + (0.1,-0.4)$) 	{\tiny $-1$};	
 			\node at ($(a0.north) + (0,0.2)$) 	{\tiny $-1$};	
			\node at ($(a5.south) + (0.2,-0.3)$) 	{\tiny $-1$};	
			
		\end{tikzpicture} 
			      \end{center}  \\
			      \hline

	\end{tabular}
	\end{center}

	\subsection{$D_5^{(1)}/A_3^{(1)}$}
	
	\label{d5_a}
	
	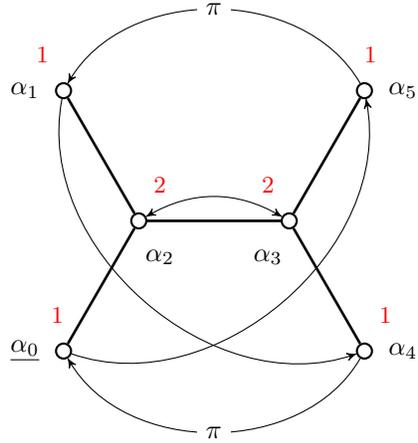
\begin{figure}[h]
	
	\begin{center}
	\begin{tikzpicture}[elt/.style={circle,draw=black!100,thick, inner sep=0pt,minimum size=2mm},scale=2]
		\path 	(-cos{60},sin{60}) 	node 	(a1) [elt] {}
		(-cos{60},-sin{60}) 	node 	(a0) [elt] {}
		( 0,0) node  	(a2) [elt] {}
		( 1,0) 	node  	(a3) [elt] {}
		( 1+cos{60},-sin{60}) 	node 	(a4) [elt] {}
		( 1+cos{60},sin{60})	node 	(a5) [elt] {};
		\draw [black,line width=1pt] (a1) -- (a2) -- (a3) -- (a4) (a3) -- (a5) (a2) -- (a0);
		\node at ($(a0.west) + (-0.2,0)$) 	{$\underline{\alpha_{0}}$};			    
		\node at ($(a1.west) + (-0.2,0)$) 	{$\alpha_{1}$};
		\node at ($(a2.south east) + (0.1,-0.2)$)  {$\alpha_{2}$};
		\node at ($(a3.south west) + (-0.1,-0.2)$)  {$\alpha_{3}$};
		\node at ($(a4.east) + (0.2,0)$)  {$\alpha_{4}$};	
		\node at ($(a5.east) + (0.2,0)$)  {$\alpha_{5}$};		
		\node at ($(a0.north west) + (-0,0.2)$) 	{\color{red}\small$1$};
		\node at ($(a1.north west) + (-0.1,0.2)$) 	{\color{red}\small$1$};
		\node at ($(a2.north east) + (0.1,0.2)$) 	{\color{red}\small$2$};
		\node at ($(a3.north west) + (-0.1,0.2)$) 	{\color{red}\small$2$};
		\node at ($(a4.north east) + (0.1,0.2)$)   {\color{red}\small$1$};
		\node at ($(a5.north east) + (0,0.2)$) 	{\color{red}\small$1$};
		\draw[->] (a0) edge[bend right=60] node[]{} (a5);
			\draw[->] (a1) edge[bend right=60] node[]{} (a4);
			\draw[->] (a4) edge[bend left=60] node[fill=white]{$\pi$} (a0);
			\draw[->] (a5) edge[bend right=60] node[fill=white]{$\pi$} (a1);
			\draw[<->] (a2) edge[bend left=30] node[]{} (a3);
		
		\end{tikzpicture}
		\end{center}
		
	  \caption{Root numbering and marks for $D_5^{(1)}$ Dynkin diagram}
	        
	        \label{fig:D5_rn}
                
		\end{figure}
			
		Affine Weyl group $W(D_5^{(1)})$ (and corresponding Dynkin diagram) has outer automorphism of order $4$.
		Diagram has also outer automorphism of order $2$ (transposition of $\alpha_0$ and $\alpha_1$), which is absent in $W(D_5^{(1)})$.

		For $D_5^{(1)}$ we have $6$ foldings: 
		\begin{itemize}
		\item Three foldings of order $4$, of case $1$, $2$ and $3$, which have common (up to conjugation)
		$\bar{w}$.
		\item Two foldings of order $2$, of case $1$, which is square of case $1$ folding of order $4$,
		and of case $2$, which is (up to conjugation) square of both other foldings of order $4$.
		$\bar{w}$.
		\item Another folding of order $2$, of case $2$.
		\end{itemize}

	\begin{center}
	\begin{tabular}{|m{2.5cm}|m{5cm}|m{2.5cm}|m{3cm}|m{2.5cm}|}
			\hline
			  \begin{center} Folding  \end{center}   & \begin{center}
			     $Q^{a,w}$  
			   \end{center}  & \begin{center}  $Q^{a,\perp w}$ \end{center}		                                                  
			     & \begin{center} $C(w)$ \end{center} & \begin{center} $\mathcal{A}^w$ \end{center}
			                                                           \\
			    \hline
			    
			    \begin{center}
%

                         \includegraphics[scale=0.8]{d5_pp.pdf}
			
			$\pi^2=t_{\omega_1}\bar{\pi}^2$
			
			{\small $\bar{\pi}^2{=}s_{123} s_4s_5 s_{123}^{-1}$}

			    \end{center}

			    &
			    
                            \begin{center}
			  
			    \begin{tikzpicture}[elt/.style={circle,draw=black!100,thick, inner sep=0pt,minimum size=1.4mm},scale=1]
			      \path 	
			 (-1,1) 	node 	(a0) [elt] {}    
			(-1,-1) 	node 	(a1) [elt] {}
			(1,-1) 	node 	(a2) [elt] {}
			(1,1) 	node 	(a3) [elt] {}; 
			
			\draw [black,line width=1pt] (a0) -- (a1) -- (a2) -- (a3) -- (a0);
			
			 \node at ($(a0.north) + (0,0.2)$) 	{\small $\underline{\alpha_{012}}$};	
		     	
		     \node at ($(a1.south) + (0,-0.2)$) 	{\small $\alpha_{3}$};
		     
		      \node at ($(a2.south) + (0,-0.2)$) 	{\small $\alpha_{2}$};
		      
		      \node at ($(a3.north) + (0,0.2)$) 	{\small $\alpha_{345}$};
		      
		       \draw[->,dashed] (a0) edge[bend left=40] node[fill=white]{\small $\pi s_0s_1$} (a3);
			  
			   \draw[->,dashed] (a3) edge[bend left=40] node[fill=white]{\small $\pi s_0s_1$} (a2);
			   
			   \draw[->,dashed] (a2) edge[bend left=40] node[fill=white]{\small $\pi s_0s_1$} (a1);
			   
			   \draw[->,dashed] (a1) edge[bend left=40] node[fill=white]{\small $\pi s_0s_1$} (a0);
			   
			    \draw[<->,dashed] (a2) edge[bend right=0] node[fill=white]{\small $s_0s_1$} (a0);
			
			      \node at (0,2) {$A_3^{(1)}$};
			      
			      \node at (0,-2.5) { $((3A_1)^{(1)})|_{|\alpha|^2=4}$};
		
		    \begin{scope}[yshift=-4.5cm]
			\path 	(2,-1) 	node 	(a1) [elt] {}
			(2,1) 	node 	(a2) [elt] {}
			(0,-1) node  	(a3) [elt] {}
			(0,1) 	node  	(a4) [elt] {}
			(-2,-1) 	node  	(a5) [elt] {}
			(-2,1) 	node  	(a6) [elt] {};
			
		    \draw [black,line width=2.5pt ] (a1) -- (a2);
		    
		     \draw [white,line width=1.5pt ] (a1) -- (a2);
		     
		     \draw [<->,black, line width=0.5pt]
		     (a1) -- (a2);
		     
		      \draw [black,line width=2.5pt ] (a3) -- (a4);
		    
		     \draw [white,line width=1.5pt ] (a3) -- (a4);
		     
		     \draw [<->,black, line width=0.5pt]
		     (a3) -- (a4);
		     
		      \draw [black,line width=2.5pt ] (a5) -- (a6);
		    
		     \draw [white,line width=1.5pt ] (a5) -- (a6);
		     
		     \draw [<->,black, line width=0.5pt]
		     (a5) -- (a6);
		     
		     	\node at ($(a1.south) + (0.1,-0.2)$) 	{\small \parbox{1cm}{$\alpha_{01}$\\$s_0s_1$}};	
		     	
		     		\node at ($(a2.north) + (0,0.1)$) 	{\small $\underline{\alpha_{223345}}$};	
		     		
		     	\node at ($(a3.south) + (0,-0.1)$) 	{\small $\alpha_{0122}$};	
		     	
		     		\node at ($(a4.north) + (0,0.1)$) 	{\small $\underline{\alpha_{3345}}$};	
		     		
		     		\node at ($(a5.south) + (0,-0.1)$) 	{\small $\underline{\alpha_{012233}}$};	
		     		
		     		\node at ($(a6.north) + (0,0.1)$) 	{\small $\alpha_{45}$};

		    \draw[<->, dashed] (a1) to [bend left=0] node[fill=white]{\small $s_2$} (a3); 
		    
		     \draw[<->, dashed] (a2) to [bend left=0] node[fill=white]{\small $s_2$} (a4); 
		    
		     \draw[<->, dashed] (a3) to [bend left=0] node[fill=white]{\small $s_3$} (a5); 
		     
		      \draw[<->, dashed] (a4) to [bend left=0] node[fill=white]{\small $s_3$} (a6);

		        \draw[<->, dashed] (a5) to [bend right=40] node[fill=white]{\small $(\pi s_3)^3$} (a6); 
		      
		        \draw[<->, dashed] (a3) to [bend left=40] node[fill=white]{\small $(\pi s_3)^3$} (a4); 
		     
		      \draw[<->, dashed] (a1) to [bend left=40] node[fill=white]{\small $(\pi s_3)^3$} (a2); 
		    
		        \end{scope}
		    
			\end{tikzpicture} 
			    
			    $t_{\omega_{\alpha_{45}}}\sim t^{1/2}_{\omega_{45}-\omega_1}$
			
			    \end{center}

			    &
			    $(2A_1)|_{|\alpha|^2=4}$
			    
			    \medskip
			    
			    \begin{center}
			      \begin{tikzpicture}[elt/.style={circle,draw=black!100,thick, inner sep=0pt,minimum size=1.4mm},scale=1]
			       \path 	
			 (0,-1) 	node 	(a4) [elt] {}    
			(0,1) 	node 	(a5) [elt] {};

			    \node at ($(a4.south)+(0,-0.4)$) {\small \parbox{1cm}{$\alpha_4-\alpha_5$\\$s_4s_5$} };
			    
			    \node at ($(a5.north)+(0,0.4)$) {\small \parbox{1cm}{$\alpha_0-\alpha_1$\\$s_0s_1$} };
			      
			    \draw[<->,dashed] (a4) edge[bend right=0] node[fill=white]{\small $\pi s_0 s_1$} (a5);

			    \end{tikzpicture}
			    
			    \medskip
			    
			    $\pi^2$ is central $-1$.
			    
			    \end{center}
						    
			    & 
			    \begin{center}
			    
			    \begin{gather*}
			     C_2\times W_{A_3}^{ae}\times \underbrace{C_2}_{triv}\\
			     \simeq {\scriptstyle ((S_3{\ltimes}C_2^3){\ltimes}W_{3A_1}^a){\times}\!\!\underbrace{\scriptstyle C_2}_{triv}}\\
			     \downarrow\\
			     (S_3{\ltimes}C_2^3){\ltimes}W_{3A_1}^a\\
			      t^{1/2}_{\omega_{45}-\omega_1}\rightarrow t_{\omega_1}
			    \end{gather*}

			    \eqref{symm_D5_E7_c1}
                             
                            \end{center}

			    & 
			       \begin{tikzpicture}[elt/.style={circle,draw=black!100,thick, inner sep=0pt,minimum size=1.4mm},scale=0.75]
			\path 	(-cos{60},-sin{60}) 	node 	(a0) [elt] {}
			(-cos{60},sin{60}) 	node 	(a1) [elt] {}
			( 0,0) node  	(a2) [elt] {}
			( 1,0) 	node  	(a3) [elt] {}
			( 1+cos{60},-sin{60}) 	node 	(a4) [elt] {}
			( 1+cos{60},sin{60})	node 	(a5) [elt] {};
			\draw [black,line width=1pt] (a1) -- (a2) -- (a3) -- (a4) (a3) -- (a5) (a2) -- (a0);
				
				\draw[<->] (a4) edge[bend right=60] node[fill=white]{\small $=$} (a5);
				\draw[<->] (a0) edge[bend left=60] node[fill=white]{\small $=$} (a1);
			\end{tikzpicture}
			    
			    \\
			    
			    \cline{1-1} \cline{3-5}
			    
			    \begin{center}

                           \includegraphics[scale=0.8]{d5_11.pdf}
			    
			    $w=\bar{w}=s_4s_5$
			    
			    \end{center}

			    &

			    \begin{center}

			    \begin{tikzpicture}[elt/.style={circle,draw=black!100,thick, inner sep=0pt,minimum size=1.4mm},scale=1]
			      \path 	
			 (-1,1) 	node 	(a0) [elt] {}    
			(-1,-1) 	node 	(a1) [elt] {}
			(1,-1) 	node 	(a2) [elt] {}
			(1,1) 	node 	(a3) [elt] {}; 
			
			\draw [black,line width=1pt] (a0) -- (a1) -- (a2) -- (a3) -- (a0);
			
			 \node at ($(a0.north) + (0,0.2)$) 	{\small $\underline{\alpha_{0}}$};	
		     	
		     \node at ($(a1.south) + (0,-0.2)$) 	{\small $\alpha_{2}$};
		     
		      \node at ($(a2.south) + (0,-0.2)$) 	{\small $\alpha_{1}$};
		      
		      \node at ($(a3.north) + (0,0.2)$) 	{\small $\alpha_{23345}$};
			
			  \draw[<->,dashed] (a1) edge[bend right=40] node[fill=white]{\small $s_{3453}$} (a3);
			  
			    \draw[<->,dashed] (a0) edge[bend left=40] node[fill=white]{\small $\pi^2$} (a2);

			     \node at (0,1.5) 	{$A_3^{(1)}$};
			     
			     \node at (0,-2) 	{$((3A_1)^{(1)})|_{|\alpha|^2=4}$};

		    \begin{scope}[yshift=-4cm]
			\path 	(2,1) 	node 	(a1) [elt] {}
			(2,-1) 	node 	(a2) [elt] {}
			(0,1) node  	(a3) [elt] {}
			(0,-1) 	node  	(a4) [elt] {}
			(-2,1) 	node  	(a5) [elt] {}
			(-2,-1) 	node  	(a6) [elt] {};
			
		    \draw [black,line width=2.5pt ] (a1) -- (a2);
		    
		     \draw [white,line width=1.5pt ] (a1) -- (a2);
		     
		     \draw [<->,black, line width=0.5pt]
		     (a1) -- (a2);
		     
		      \draw [black,line width=2.5pt ] (a3) -- (a4);
		    
		     \draw [white,line width=1.5pt ] (a3) -- (a4);
		     
		     \draw [<->,black, line width=0.5pt]
		     (a3) -- (a4);
		     
		      \draw [black,line width=2.5pt ] (a5) -- (a6);
		    
		     \draw [white,line width=1.5pt ] (a5) -- (a6);
		     
		     \draw [<->,black, line width=0.5pt]
		     (a5) -- (a6);
		     
		     	\node at ($(a1.north) + (-0.2,0.1)$) 	{\small $\alpha_{00223345}$};	
		     	
		     		\node at ($(a2.south) + (-0.2,-0.1)$) 	{\small $\alpha_{11223345}$};	
		     		
		     	\node at ($(a3.north) + (0,0.1)$) 	{\small $2\delta-\alpha_{223345}$};	
		     	
		     		\node at ($(a4.south) + (0,-0.1)$) 	{\small $\alpha_{223345}$};	
		     		
		     		\node at ($(a5.north) + (0,0.1)$) 	{\small $2\delta-\alpha_{3345}$};	
		     		
		     		\node at ($(a6.south) + (0,-0.2)$) 	{\small \parbox{1cm}{$\alpha_{3345}$\\$s_{3453}$}};

		    \draw[<->, dashed] (a1) to [bend left=0] node[fill=white]{\small $s_1$} (a3); 
		    
		     \draw[<->, dashed] (a2) to [bend left=0] node[fill=white]{\small $s_1$} (a4); 
		    
		     \draw[<->, dashed] (a3) to [bend left=0] node[fill=white]{\small $s_2$} (a5); 
		     
		      \draw[<->, dashed] (a4) to [bend left=0] node[fill=white]{\small $s_2$} (a6); 
		      
		       \draw[<->,dashed] (a1) edge[bend right=20] node[fill=white]{\small $\pi^2$} (a2);
		       
		        \draw[<->,dashed] (a3) edge[bend left=20] node[fill=white]{\small $s_1\pi^2s_1$} (a4);
		        
		         \draw[<->,dashed] (a5) edge[bend left=20] node[fill=white]{\small $s_2s_1\pi^2s_1s_2$} (a6);
		     
		       \end{scope}
		    
			\end{tikzpicture}

			    \end{center}
			    
			    &
			    
			    \begin{center}
			    
			    $2A_1$

			    \medskip
			    
			   \begin{tikzpicture}[elt/.style={circle,draw=black!100,thick, inner sep=0pt,minimum size=1.4mm},scale=1]
			       \path 	
			 (0,-1) 	node 	(a4) [elt] {}    
			(0,1) 	node 	(a5) [elt] {};

			    \node at ($(a4.south)+(0,-0.2)$) {\small $\alpha_4$ };
			    
			    \node at ($(a5.north)+(0,0.2)$) {\small $\alpha_5$ };
			      
			    \draw[<->,dashed] (a4) edge[bend right=40] node[fill=white]{\small $s_{3453}$} (a5);
			      \draw[<->,dashed] (a4) edge[bend left=40] node[fill=white]{\small $\pi^2$} (a5);

			    \end{tikzpicture}
			    
			    \end{center}
			    
			    &
			    
			    \begin{center}
			    
			    \begin{gather*}
			   C_2^2\ltimes (W_{A_3}^a \times\underbrace{C_2^2}_{triv}) \\
			   {\simeq \scriptstyle ((S_3\ltimes C_2^3){\ltimes}W_{3A_1}^a){\ltimes}\!\!\underbrace{\scriptstyle C_2^2}_{triv}} \\
			   \downarrow\\
			  (S_3\ltimes C_2^3)\ltimes W_{3A_1}^a\\
			     \textrm{no proj. red.}\\
			   \textrm{in the preimage}
			    \end{gather*}

			   \eqref{symm_D5_E7_c2} 
			   
			   \end{center}
			    
			    & 
			    
			    \begin{center}
			    \begin{tikzpicture}[elt/.style={circle,draw=black!100,thick, inner sep=0pt,minimum size=1.4mm},scale=0.75]
			\path 	(-cos{60},-sin{60}) 	node 	(a0) [elt] {}
			(-cos{60},sin{60}) 	node 	(a1) [elt] {}
			( 0,0) node  	(a2) [elt] {}
			( 1,0) 	node  	(a3) [elt] {}
			( 1+cos{60},-sin{60}) 	node 	(a4) [elt,fill] {}
			( 1+cos{60},sin{60})	node 	(a5) [elt,fill] {};
			\draw [black,line width=1pt] (a1) -- (a2) -- (a3) -- (a4) (a3) -- (a5) (a2) -- (a0);
			
			\node at ($(a4.south)+(0,-0.2)$) {\small $-1$ };
			\node at ($(a5.north)+(0,0.2)$) {\small $-1$ };
			\end{tikzpicture}  
			    \end{center}
			    
			    \\

			    \hline
			    
			    \end{tabular}
			    
			    \medskip
			    
			    For folding $\pi$ of order $4$ we have the same problems, as for case $1$ folding in $E_6^{(1)}$.
			    Namely, $Q^{a, \perp w}$ is generated by $\alpha_0-\alpha_4, \alpha_5-\alpha_0$, $\alpha_1-\alpha_5$
			    and $\alpha_2-\alpha_3$. We take index $2$ sublattice $\widetilde{Q^{a,\perp w}}$, doubling latter generator.
			    In obtained $(A_1)|_{|\alpha|^2=16}+A_3|_{|\alpha|^2=4}$ we can realize only few group elements, see below.

			    \begin{tabular}{|m{2.5cm}|m{3.5cm}|m{3.5cm}|m{3cm}|m{2.5cm}|}
			\hline
			  \begin{center} Folding \end{center}  & \begin{center}
			     $Q^{a,w}$  
			   \end{center}  &  \begin{center}  $Q^{a,\perp w}$ \end{center}
                            & \begin{center} $C(w)$ \end{center} & \begin{center} $\mathcal{A}^w, \, N_{flip}$ \end{center} \\
			     \hline
			    
			    \begin{center}

                         \medskip
                         
                         \includegraphics[scale=0.8]{d5_p.pdf}

			$w=\pi=t_{\omega_4} \bar{\pi}$
			
			{\small $\bar{\pi}=s s_{1534} s^{-1}$
			
			$s=s_{54312}$}

			    \end{center}

			    &
			    
			    \begin{center}

			\begin{tikzpicture}[elt/.style={circle,draw=black!100,thick, inner sep=0pt,minimum size=2mm},scale=1]
			\path 	(-1,0) 	node 	(a1) [elt] {}
			(1,0) 	node 	(a2) [elt] {};

		    \draw [black,line width=2.5pt ] (a1) -- (a2);
		    
		     \draw [white,line width=1.5pt ] (a1) -- (a2);
		     
		     \draw [<->,black, line width=0.5pt]
		     (a1) -- (a2);
		     
		     \node at ($(a1.south) + (0,-0.2)$) 	{\small$\alpha_{23}$};	
		     	
		     \node at ($(a2.south) + (0,-0.2)$) 	{\small $\underline{\alpha_{012345}}$};
		     
		     \draw[<->,dashed] (a1) to[bend left=40] node[fill=white]{\small $s_0s_1s_4s_5$} (a2);
		     
		      \node at (0, 1) {$A_1^{(1)}$};
		      
		       \node at (0, -1.5) {$(A_1^{(1)})_{|\alpha|^2=8}$};

		     \begin{scope}[yshift=-2.25cm]
			\path 	(-1,0) 	node 	(a1) [elt] {}
			(1,0) 	node 	(a2) [elt] {};

		    \draw [black,line width=2.5pt ] (a1) -- (a2);
		    
		     \draw [white,line width=1.5pt ] (a1) -- (a2);
		     
		     \draw [<->,black, line width=0.5pt]
		     (a1) -- (a2);
		     
		     \node at ($(a2.south) + (0.5,-0.3)$) 	{\small\parbox{2cm}{$\alpha_{0145}$\\ $s_0s_1s_4s_5$}};	
		     	
		     \node at ($(a1.south) + (0,-0.1)$) 	{\small $2\alpha_{23}$};
		     
		     \end{scope}
		     
		     \end{tikzpicture}
		     
		     \medskip
		     
		     $t_{2\omega_{2\alpha_{23}}}\sim t^{1/2}_{\omega_{23}-\omega_{145}}$

		     \end{center}
			    
			    & 
			    
			     \begin{center}   
			    
			  $\widetilde{Q^{a, \perp w}}=(A_1)_{|\alpha|^2=16}+ (A_3)_{|\alpha|^2=4}$

			      \medskip
			      
			      \begin{tikzpicture}[elt/.style={circle,draw=black!100,thick, inner sep=0pt,minimum size=2mm},scale=1.25]
			\path (0,2) node (a1) [elt] {}
			(-1,1) node (a5) [elt] {}
			(0,1) node  (a3) [elt] {}
			(1,1) node  (a4) [elt] {};

			\draw (a5) -- (a3) -- (a4);
		    
		    	\node at ($(a1.north) + (0.1,0.1)$) 	{\small \parbox{2cm}{$\alpha_{0122}{-}\alpha_{3345}$\\ $\pi$}};	
		    	
		    	\node at ($(a5.north) + (0,0.2)$) 	{\small $\alpha_0{-}\alpha_4$};	
		    	\node at ($(a3.north) + (0,0.2)$) 	{\small $\alpha_5{-}\alpha_0$};	
		    	\node at ($(a4.north) + (0,0.2)$) 	{\small $\alpha_1{-}\alpha_5$};

			\end{tikzpicture} 
			
			\medskip
			
			$\Omega_{A_3}= \langle \pi \rangle$
			
			$Z(A_3^e)=\langle s_0s_1s_4s_5\rangle$

			\end{center}

			    &
			    \begin{center}
			    
			    \begin{gather*}
			    W_{A_1}^{ae} \times \underbrace{C_4}_{triv} \\
			    \simeq W_{A_1}^a  \times \underbrace{C_4}_{triv} \\
			    \downarrow\\
			    W_{A_1}^a\\
			    t^{1/2}_{\omega_{23}-\omega_{0145}}\rightarrow t_{\omega_7}
			    \end{gather*}

			    \eqref{symm_D5_E8_c1}
			    
			    \end{center}

			    & 
			    
			      \begin{center}
			    \begin{tikzpicture}[elt/.style={circle,draw=black!100,thick, inner sep=0pt,minimum size=1.4mm},scale=0.75]
			\path 	(-cos{60},-sin{60}) 	node 	(a0) [elt] {}
			(-cos{60},sin{60}) 	node 	(a1) [elt] {}
			( 0,0) node  	(a2) [elt] {}
			( 1,0) 	node  	(a3) [elt] {}
			( 1+cos{60},-sin{60}) 	node 	(a4) [elt] {}
			( 1+cos{60},sin{60})	node 	(a5) [elt] {};
			\draw [black,line width=1pt] (a1) -- (a2) -- (a3) -- (a4) (a3) -- (a5) (a2) -- (a0);
			\draw[->] (a0) edge[bend right=60] node[]{} (a5);
			\draw[->] (a1) edge[bend right=60] node[]{} (a4);
			\draw[->] (a4) edge[bend left=60] node[fill=white]{\small $=$} (a0);
			\draw[->] (a5) edge[bend right=60] node[fill=white]{\small $=$} (a1);
			\draw[<->] (a2) edge[bend left=30] node[]{} (a3);	
			\end{tikzpicture} 
			    \end{center}
			    
			    \\
			    
			    \cline{1-1} \cline{3-5}
			    
			    \begin{center}

                         \includegraphics[scale=0.8]{d5_13.pdf}

                         \medskip
			
			$w=s_1s_{534}$
			    \end{center}

			    &

			    \begin{center}

			 \medskip

			\begin{tikzpicture}[elt/.style={circle,draw=black!100,thick, inner sep=0pt,minimum size=2mm},scale=1]

			\path 	(-1,0) 	node 	(a1) [elt] {}
			(1,0) 	node 	(a2) [elt] {};

		    \draw [black,line width=2.5pt ] (a1) -- (a2);
		    
		     \draw [white,line width=1.5pt ] (a1) -- (a2);
		     
		     \draw [<->,black, line width=0.5pt]
		     (a1) -- (a2);
		     
		     \node at ($(a1.south) + (0,-0.2)$) 	{\small$\delta-\alpha_0$};	
		     	
		     \node at ($(a2.south) + (0,-0.2)$) 	{\small $\underline{\alpha_{0}}$};
		    
		       \node at (0,0.5) 	{$A_1^{(1)}$};
			 
			\node at (0,-1.25) 	{ $(A_1^{(1)})|_{|\alpha|^2=8}$};

			\begin{scope}[yshift=-2.25cm]
			\path 	(-1,0) 	node 	(a1) [elt] {}
			(1,0) 	node 	(a2) [elt] {};

		    \draw [black,line width=2.5pt ] (a1) -- (a2);
		    
		     \draw [white,line width=1.5pt ] (a1) -- (a2);
		     
		     \draw [<->,black, line width=0.5pt]
		     (a1) -- (a2);
		     
		     \node at ($(a2.south) + (0,-0.2)$) 	{\small$2(\delta+\alpha_0)$};	
		     	
		     \node at ($(a1.south) + (0,-0.2)$) 	{\small $2(\delta-\alpha_0)$};
		     
		     \draw [<->, dashed]
		     (a1) to[bend left=40] node[fill=white] {\small $s_0$} (a2);
		     
		     \end{scope}
		     
		      \end{tikzpicture}
		     
		     \end{center}
			    
			    & 
			    
			 \begin{center}   
			    
			  $A_1+ A_3$

			      \medskip
			      
			      \begin{tikzpicture}[elt/.style={circle,draw=black!100,thick, inner sep=0pt,minimum size=2mm},scale=1.5]
			\path (0,2) node (a1) [elt] {}
			(-1,1) node (a5) [elt] {}
			(0,1) node  (a3) [elt] {}
			(1,1) node  (a4) [elt] {};

			\draw (a5) -- (a3) -- (a4);
		    
		    	\node at ($(a1.north) + (0,0.2)$) 	{\small $\alpha_1$};	
		    	
		    	\node at ($(a5.north) + (0,0.2)$) 	{\small $\alpha_5$};	
		    	\node at ($(a3.north) + (0,0.2)$) 	{\small $\alpha_3$};	
		    	\node at ($(a4.north) + (0,0.2)$) 	{\small $\alpha_4$};

			\end{tikzpicture} 
			
			\end{center} 
			    
			    &
			    
			    \begin{center}
			    \begin{gather*}
			   W_{A_1}^a \times C_2\times C_4\\ 
			   \simeq W_{A_1}^{ae}\times\underbrace{C_2\times C_4}_{triv} \\
			   \downarrow\\
			   W_{A_1}^{ae}\\
			   \textrm{no proj. red.}\\
			   \textrm{in the preimage}
			    \end{gather*}

			    \eqref{symm_D5_E8_c2}
			    
			    \end{center}
			    
			    & 
			    
			      \begin{center}
			    \begin{tikzpicture}[elt/.style={circle,draw=black!100,thick, inner sep=0pt,minimum size=1.4mm},scale=0.75]
			\path 	(-cos{60},-sin{60}) 	node 	(a0) [elt] {}
			(-cos{60},sin{60}) 	node 	(a1) [elt,fill] {}
			( 0,0) node  	(a2) [elt] {}
			( 1,0) 	node  	(a3) [elt,fill] {}
			( 1+cos{60},-sin{60}) 	node 	(a4) [elt,fill] {}
			( 1+cos{60},sin{60})	node 	(a5) [elt,fill] {};
			\draw [black,line width=1pt] (a1) -- (a2) -- (a3) -- (a4) (a3) -- (a5) (a2) -- (a0);
			
			 \node at ($(a1.west) + (-0.3,0)$) 	{\small $-1$};
			  \node at ($(a3.east) + (0.2,0)$) 	{\small $\ri$};
			   \node at ($(a4.east) + (0.2,0)$) 	{\small $\ri$};
			 \node at ($(a5.east) + (0.2,0)$) 	{\small $\ri$};
			
			\draw [<->, dashed] (0.5,-1) to[] node[fill=white] {\small $s_{45}$} (0.5,-3);

			\begin{scope}[yshift=-4cm]
			 \path 	(-cos{60},-sin{60}) 	node 	(a0) [elt] {}
			(-cos{60},sin{60}) 	node 	(a1) [elt,fill] {}
			( 0,0) node  	(a2) [elt] {}
			( 1,0) 	node  	(a3) [elt,fill] {}
			( 1+cos{60},-sin{60}) 	node 	(a4) [elt,fill] {}
			( 1+cos{60},sin{60})	node 	(a5) [elt,fill] {};
			\draw [black,line width=1pt] (a1) -- (a2) -- (a3) -- (a4) (a3) -- (a5) (a2) -- (a0);
			
			 \node at ($(a1.west) + (-0.3,0)$) 	{\small $-1$};
			  \node at ($(a3.east) + (0.2,0)$) 	{\small $-\ri$};
			   \node at ($(a4.east) + (0.2,0)$) 	{\small $-\ri$};
			 \node at ($(a5.east) + (0.2,0)$) 	{\small $-\ri$};
			\end{scope}

			\end{tikzpicture} 
			    \end{center}
			    
			    \\
			    \cline{1-1} \cline{3-5}
			    
			    \begin{center}
%
%
%

                           \includegraphics[scale=0.8]{d5_pp111.pdf}
                           
                           \medskip
	                    
	                    $w=\pi^2 s_2s_4$
	                    
	                    $\bar{w}{=}s_{123} s_1 s_{534} s_{123}^{-1}$
	                    
	                    \end{center}
	                    & 
	                    
	                      \begin{center}
			
			\begin{tikzpicture}[elt/.style={circle,draw=black!100,thick, inner sep=0pt,minimum size=2mm},scale=1]
			\path 	(-1,0) 	node 	(a1) [elt] {}
			(1,0) 	node 	(a2) [elt] {};

		    \draw [black,line width=2.5pt ] (a1) -- (a2);
		    
		     \draw [white,line width=1.5pt ] (a1) -- (a2);
		     
		     \draw [<->,black, line width=0.5pt]
		     (a1) -- (a2);
		     
		     \node at ($(a1.south) + (0,-0.1)$) 	{\small $\alpha_{23345}$};	
		     	
		     \node at ($(a2.south) + (0,-0.1)$) 	{\small $\alpha_{012}$};
		     
		     \node at (0,0.5) 	{$A_1^{(1)}$};
		     
		      \node at (0,-1.25) 	{$(A_1^{(1)})|_{|\alpha|^2=8}$};
		     
		     \begin{scope}[yshift=-2cm]
			\path 	(-1,0) 	node 	(a1) [elt] {}
			(1,0) 	node 	(a2) [elt] {};

		    \draw [black,line width=2.5pt ] (a1) -- (a2);
		    
		     \draw [white,line width=1.5pt ] (a1) -- (a2);
		     
		     \draw [<->,black, line width=0.5pt]
		     (a1) -- (a2);
		     
		     \node at ($(a1.south) + (0.2,-0.3)$) 	{\small \parbox{2cm}{$2\alpha_{23345}$\\$s_2s_{\alpha_{23345}}$}};	
		     	
		     \node at ($(a2.south) + (0,-0.1)$) 	{\small $2\alpha_{012}$};
		     
		     \end{scope}
		     
		     \end{tikzpicture}
		     
		     \end{center}
	                    
	                    &
	                    
	                     \begin{center}   
			    
			  $(A_1)|_{|\alpha|^2=4}+ 3A_1$

			      \medskip
			      
			      \begin{tikzpicture}[elt/.style={circle,draw=black!100,thick, inner sep=0pt,minimum size=2mm},scale=1]
			\path (0,-1) node (a0) [elt] {}
			(-1,0) node (a4) [elt] {}
			(1,0) node  (a5) [elt] {}
			(0,0.865) node  (a2) [elt] {};

		    	\node at ($(a0.south) + (0,-0.3)$) 	{\small \parbox{1cm}{$\alpha_0-\alpha_1$\\$\pi^2$}};	
		    	
		    	\node at ($(a4.south) + (0,-0.2)$) 	{\small $\alpha_4$};	
		    	\node at ($(a5.south) + (0,-0.2)$) 	{\small $\alpha_5$};	
		    	\node at ($(a2.north) + (0,0.2)$) 	{\small $\alpha_2$};	
		    	
		    	\draw[<->, dashed] (a4) to[] node[fill=white] {\small $\pi^2$} (a5);
		    	
			\end{tikzpicture} 
			
			\end{center} 
	                    
	                    &
	                    
	                    \begin{center}
	                    
	                    \begin{gather*}
	                     W_{A_1}^a{\times}((\underbrace{C_2{\ltimes}C_4}_{triv})\!\!{\times}\!\!\underbrace{C_2}_{s_2})\\
	                    \downarrow\\
	                     C_2 \times W_{A_1}^a\\
	                    \textrm{no proj. red.} 
	                    \end{gather*}
	                    
	                    \eqref{symm_D5_E8_c3}
	                    
	                    \end{center}

	                    &
	                    
	                     \begin{center}
			     \begin{tikzpicture}[elt/.style={circle,draw=black!100,thick, inner sep=0pt,minimum size=1.4mm},scale=0.75]
			
			 \path 	(-cos{60},sin{60}) 	node 	(a0) [elt] {}
			(-cos{60},-sin{60}) 	node 	(a1) [elt] {}
			( 0,0) node  	(a2) [elt,fill] {}
			( 1,0) 	node  	(a3) [elt] {}
			( 1+cos{60},-sin{60}) 	node 	(a4) [elt,fill] {}
			( 1+cos{60},sin{60})	node 	(a5) [elt,fill] {};
			\draw [black,line width=1pt] (a1) -- (a2) -- (a3) -- (a4) (a3) -- (a5) (a2) -- (a0);

				\draw[<->] (a0) edge[bend right=60] node[fill=white]{\small $\times -1$} (a1);
				
				\node at ($(a2.north) + (0,0.2)$) 	{\small $-1$};	
				\node at ($(a4.south) + (0,-0.2)$) 	{\small $-1$};	
				\node at ($(a5.north) + (0,0.2)$) 	{\small $-1$};	
	   
	                    \end{tikzpicture}
	                    \end{center}
	                    \\
			    
			    \hline
			    \begin{center}

                       \includegraphics[scale=0.8]{d5_1111.pdf}
                       
                       \medskip
			
			$w=s_0s_1s_4s_5$
			\end{center}
			      &
			      
			      \begin{center}
			
			\begin{tikzpicture}[elt/.style={circle,draw=black!100,thick, inner sep=0pt,minimum size=2mm},scale=1]
			\path 	(-1,0) 	node 	(a1) [elt] {}
			(1,0) 	node 	(a2) [elt] {};

		    \draw [black,line width=2.5pt ] (a1) -- (a2);
		    
		     \draw [white,line width=1.5pt ] (a1) -- (a2);
		     
		     \draw [<->,black, line width=0.5pt]
		     (a1) -- (a2);
		     
		     \node at ($(a2.south) + (0,-0.3)$) 	{\small 
		     \parbox{1cm}{$\underline{\alpha_{0221}}$\\ $s_0s_1s_{2012}$}};	
		     	
		     \node at ($(a1.south) + (0,-0.3)$) 	{\small 
		     \parbox{1cm}{$\alpha_{4335}$\\ $s_4s_5 s_{3453}$}};
		     
		     \draw[<->,dashed] (a1) to[bend left=40]
			node[fill=white]{\small $s_4s_5\pi^{-1}$} (a2);
		     
		      \node at (0,1.25) 	{$A_1^{(1)}|_{|\alpha|^2=4}$};
		     
		     \end{tikzpicture}
		     
		     \medskip
		     
		     $t_{\omega_{\alpha_{4335}}}\sim t_{\omega_3-\omega_2}^{1/2}$

		     \end{center}
			      
			      &
			      
			      \begin{center}
		
		$4A_1$

		\medskip
		   
		  \begin{tikzpicture}[elt/.style={circle,draw=black!100,thick, inner sep=0pt,minimum size=2mm},scale=1]
			\path 	(-1,-1) 	node 	(a0) [elt] {}
			(-1,1) 	node 	(a1) [elt] {}
			(1,-1) node  	(a4) [elt] {}
			(1,1) 	node  	(a5) [elt] {};
			
		        \node at ($(a0.south) + (0,-0.1)$) 	{\small $\alpha_0$};	
		        \node at ($(a1.north) + (0,0.1)$) 	{\small $\alpha_1$};	
		        \node at ($(a4.south) + (0,-0.1)$) 	{\small $\alpha_4$};	
		        \node at ($(a5.north) + (0,0.1)$) 	{\small $\alpha_5$};	
		    	
			\draw[->,dashed] (a0) to[bend right=0]
			node[fill=white]{\small $\pi$} (a5);
			\draw[->,dashed] (a5) to[bend right=0]
			node[fill=white]{\small $\pi$} (a1);
			\draw[->,dashed] (a1) to[bend right=0]
			node[fill=white]{\small $\pi$} (a4);
			\draw[->,dashed] (a4) to[bend right=0]
			node[fill=white]{\small $\pi$} (a0);
		    
		        \draw[<->,dashed] (a4) to[bend right=0]
			node[fill=white]{\small $s_{3453}$} (a5);
			\draw[<->,dashed] (a0) to[bend right=0]
			node[fill=white]{\small $s_{2012}$} (a1);
		    
			\end{tikzpicture} 
			
			\end{center}
			      
			      &
			      
			      \begin{center}
			      \begin{gather*}
			      {\scriptstyle C_4{\ltimes} (W_{A_1}^a{\ltimes} (\underbrace{\scriptstyle C_2}_{s_4s_5}\!\!{\times}
			      \!\!\underbrace{\scriptstyle C_2^3}_{triv}))} \\
			      \downarrow\\
			      (C_4\ltimes W_{A_1}^a)\times C_2\\
			      t_{\omega_3-\omega_2}^{1/2}\rightarrow t_{\omega_1}
			      \end{gather*}

			      \eqref{symm_D5_E1}
			      
			      \end{center}
			      
			      &
			      
			       \begin{center}
			      \begin{tikzpicture}[elt/.style={circle,draw=black!100,thick, inner sep=0pt,minimum size=1.4mm},scale=0.75]
			\path 	(-cos{60},sin{60}) 	node 	(a0) [elt,fill] {}
			(-cos{60},-sin{60}) 	node 	(a1) [elt,fill] {}
			( 0,0) node  	(a2) [elt] {}
			( 1,0) 	node  	(a3) [elt] {}
			( 1+cos{60},-sin{60}) 	node 	(a4) [elt,fill] {}
			( 1+cos{60},sin{60})	node 	(a5) [elt,fill] {};
			\draw [black,line width=1pt] (a1) -- (a2) -- (a3) -- (a4) (a3) -- (a5) (a2) -- (a0);
			
			\node at ($(a0.north) + (0,0.2)$) 	{\small $-1$};	
			\node at ($(a1.south) + (0,-0.2)$) 	{\small $-1$};	
			\node at ($(a4.south) + (0,-0.2)$) 	{\small $-1$};	
			\node at ($(a5.north) + (0,0.2)$) 	{\small $-1$};	
			\end{tikzpicture}
			\end{center}
			      
			      \\
			      \hline

	\end{tabular}
	\end{center}

	\subsection{$A_4^{(1)}/A_4^{(1)}$}

		\begin{figure}[h]

		\begin{center}
		\begin{tikzpicture}[elt/.style={circle,draw=black!100,thick, inner sep=0pt,minimum size=2mm},scale=2]
		\path 	(0.5,sin{72}+sin{36}) 	node 	(a0) [elt] {}
		(-cos{72},sin{72}) 	node 	(a1) [elt] {}
		( 0,0) node  	(a2) [elt] {}
		( 1,0) 	node 	(a3) [elt] {}
		( 1+cos{72},sin{72}) 	node  	(a4) [elt] {};
		\draw [black,line width=1pt] (a1) -- (a2) -- (a3) -- (a4) -- (a0) -- (a1);
		\node at ($(a0.south) + (0,-0.2)$) 	{$\underline{\alpha_{0}}$};			    
		\node at ($(a1.south west) + (-0.2,-0.2)$) 	{$\alpha_{1}$};
		\node at ($(a2.south west) + (-0.2,-0.2)$)  {$\alpha_{2}$};
		\node at ($(a3.south east) + (0.2,-0.2)$)  {$\alpha	_{3}$};
		\node at ($(a4.south east) + (0.2,-0.2)$)  {$\alpha_{4}$};	
		\node at ($(a0.north) + (-0,0.2)$) 	{\color{red}\small$1$};
		\node at ($(a1.north west) + (-0.1,0.2)$) 	{\color{red}\small$1$};
		\node at ($(a2.north east) + (0.1,0.2)$) 	{\color{red}\small$1$};
		\node at ($(a3.north west) + (-0.1,0.2)$) 	{\color{red}\small$1$};
		\node at ($(a4.north east) + (0.1,0.2)$)   {\color{red}\small$1$};
		
		\draw[->] (a4) edge[bend right=40] node[fill=white]{$\pi$} (a0);
		\draw[->] (a0) edge[bend right=40] node[fill=white]{$\pi$} (a1);
		\draw[->] (a1) edge[bend right=40] node[fill=white]{$\pi$} (a2);
		\draw[->] (a2) edge[bend right=40] node[fill=white]{$\pi$} (a3);
		\draw[->] (a3) edge[bend right=40] node[fill=white]{$\pi$} (a4);
		
		\end{tikzpicture}
		\end{center}
		
		\caption{Root numbering and marks for $E_3^{(1)}$ Dynkin diagram}
		
		\end{figure}
		

	       No true folding transformation.

	\subsection{$E_3^{(1)}/A_5^{(1)}$}
	\label{e3_a}
	
	\begin{figure}[h]
	
	\begin{center}
	\begin{tikzpicture}[elt/.style={circle,draw=black!100,thick, inner sep=0pt,minimum size=2mm},scale=1]
			\path 
			(-3,-1) 	node 	(a1) [elt] {}
			(1,-1) 	node 	(a2) [elt] {}
			(-1,2.46) 	node 	(a0) [elt] {}
			(2,0) 	node 	(a3) [elt] {}
			(6,0)	node 	(a4) [elt] {};
			\draw [black,line width=1pt] (a0) -- (a1) -- (a2) -- (a0);
			\draw[->] (a0) edge[bend right=40] node[fill=white]{$\pi$} (a1);
			\draw[->] (a1) edge[bend right=40] node[fill=white]{$\pi$} (a2);
			\draw[->] (a2) edge[bend right=40] node[fill=white]{$\pi$} (a0);
			\draw [<->,black,line width=1pt ] (a3) -- (a4);
			\draw [<->] (a3) edge[bend right=40] node[fill=white]{$\pi$} (a4);
			
			\node at ($(a0.south) + (0,-0.4)$) 	{$\underline{\alpha_{0}}$};			    
		\node at ($(a1.south west) + (-0.2,-0.2)$) 	{$\alpha_{1}$};
		\node at ($(a2.south east) + (0.2,-0.2)$)  {$\alpha_{2}$};
		\node at ($(a3.north) + (0,0.2)$)  {$\underline{\alpha_{3}}$};
		\node at ($(a4.north) + (0,0.2)$)  {$\alpha_{4}$};	
		\node at ($(a0.north) + (-0,0.2)$) 	{\color{red}\small$1$};
		\node at ($(a1.north east) + (0.3,0.2)$) 	{\color{red}\small$1$};
		\node at ($(a2.north west) + (-0.3,0.2)$) 	{\color{red}\small$1$};
		\node at ($(a3.south) + (0,-0.2)$) 	{\color{red}\small$1$};
		\node at ($(a4.south) + (0,-0.2)$)   {\color{red}\small$1$};
			
			\end{tikzpicture}
	\end{center}


	\caption{Root numbering and marks for $E_3^{(1)}$ Dynkin diagram}
	        
	        \label{fig:E3_rn}
                
		\end{figure}

	Affine Weyl group $W(E_3^{(1)})$ (and corresponding Dynkin diagram) has outer automorphism of order $6$.
	Diagram has also outer automorphism of order $2$ (vertical reflection), which is absent in $W(E_3^{(1)})$. 
			
		For $E_3^{(1)}$ we have $4$ foldings: 
		\begin{itemize}
		\item Two foldings of order $2$, of case $1$ and $2$, which have common (up to conjugation)
		$\bar{w}$.
		\item Two foldings of order $3$, of case $1$ and $2$, which have common (up to conjugation)
		$\bar{w}$.
		\end{itemize}

	\begin{center}
			\begin{tabular}{|m{3cm}|m{3cm}|m{3cm}|m{3cm}|m{2.5cm}|}
			\hline
			 \begin{center}  Folding \end{center}    & \begin{center}
			     $Q^{a,w}$  
			   \end{center}  & \begin{center}  $Q^{a,\perp w}$ \end{center}		                                                  
			     & \begin{center} $C(w)$ \end{center} & \begin{center} $\mathcal{A}^w, \, N_{flip}$ \end{center}\\
			    \hline
			    
			    		\begin{center}
%
%
%
%
%
%
%
%

             \includegraphics[scale=1]{e3_ppp.pdf}
             
             \medskip
	    
	    $w=\pi^3$
	    
	    {\small $\bar{w}=s_4$}
	    
			\end{center}

			     &
			     
			     &
			     \begin{center}
			      
			      $(A_1)|_{|\alpha|^2=8}$

			     \medskip
			     
			     \begin{tikzpicture}[elt/.style={circle,draw=black!100,thick, inner sep=0pt,minimum size=2mm},scale=1]
			\path 	
			(0,0) 	node 	(a0) [elt] {};
			
			\node at ($(a0.north) + (0,0.3)$) 	{\small \parbox{1cm}{$\alpha_3-\alpha_4$\\$\pi^3$}};
			
		       \end{tikzpicture}
			      
			     \end{center}

			      &
			      \begin{center}
			      
			      \begin{gather*}
			      W^{ae}_{A_2}\times \underbrace{C_2}_{triv}\\
			      \downarrow\\
			      W^{ae}_{A_2}\\
			       \textrm{no proj. red.}\\
			      \textrm{in the preimage}
			      \end{gather*}

			      \eqref{symm_E3_E6_c1}
			      
			      \end{center}
			      
			      & \begin{center}
			\begin{tikzpicture}[elt/.style={circle,draw=black!100,thick, inner sep=0pt,minimum size=1.4mm},scale=0.75]
			
			\begin{scope}[scale=1]
			\path 	
			(-1,0) 	node 	(a0) [elt] {}
			(1,0) 	node 	(a1) [elt] {}
			(0,1.73) 	node 	(a2) [elt] {};
			\draw [black,line width=1pt] (a0) -- (a1) -- (a2) -- (a0);
			\end{scope}
			
			\begin{scope}[xshift=0cm,yshift=-1.5cm]
			\path
			(-1,1) 	node 	(a3) [elt] {}
			(1,1)	node 	(a4) [elt] {};

			 \draw [black,line width=2.5pt ] (a3) -- (a4);
		    
		     \draw [white,line width=1.5pt ] (a3) -- (a4);
		     
		     \draw [<->,black, line width=0.5pt]
		     (a3) -- (a4);
		     
			\draw[<->] (a3) edge[bend right=50] node[fill=white]{\small $=$} (a4);
			
			\end{scope}
	    \end{tikzpicture} 
			\end{center} \\
			    
			    \cline{1-1} \cline{3-5}
			    
			    \begin{center}
%
%
%
%
%
%
%
%

        \includegraphics[scale=1]{e3_1.pdf}
        
        \medskip
	    
	    $w=s_4$
			    \end{center}

			    &

			    \vspace{-5cm}
			    
			    \begin{center}

			      \begin{tikzpicture}[elt/.style={circle,draw=black!100,thick, inner sep=0pt,minimum size=1.4mm},scale=1]
			      \path 	
			(-1,0) 	node 	(a1) [elt] {}
			(1,0) 	node 	(a2) [elt] {}
			(0,1.73) 	node 	(a0) [elt] {};
			\draw [black,line width=1pt] (a0) -- (a1) -- (a2) -- (a0);
			
			 \node at ($(a0.north) + (0,0.4)$) 	{\small $\underline{\alpha_{0}}$ };	
		     	
		     \node at ($(a1.south) + (0,-0.4)$) 	{\small $\alpha_1$};
		     
		      \node at ($(a2.south) + (0,-0.4)$) 	{\small $\alpha_2$};
			
			\draw[<-,dashed] (a2) edge[bend right=40] node[fill=white]{\small $\pi^2$} (a0);
			\draw[<-,dashed] (a0) edge[bend right=40] node[fill=white]{\small $\pi^2$} (a1);
			\draw[<-,dashed] (a1) edge[bend right=40] node[fill=white]{\small $\pi^2$} (a2);
			
			      \node at (0, 3) 	{$A_2^{(1)}$};
			       \node at (0,-1) 	{$(A_2^{(1)})_{|\alpha|^2=8}$};

			      \begin{scope}[yshift=-4cm]
			      \path 	
			(-1,0) 	node 	(a1) [elt] {}
			(1,0) 	node 	(a2) [elt] {}
			(0,1.73) 	node 	(a0) [elt] {};
			\draw [black,line width=1pt] (a0) -- (a1) -- (a2) -- (a0);
			
			 \node at ($(a0.north) + (0,0.4)$) 	{\small $\underline{2\alpha_{0}}$ };	
		     	
		     \node at ($(a1.south) + (0,-0.4)$) 	{\small $2\alpha_1$};
		     
		      \node at ($(a2.south) + (0,-0.4)$) 	{\small $2\alpha_2$};
			
			\draw[<-,dashed] (a2) edge[bend right=40] node[fill=white]{\small $\pi^2$} (a0);
			\draw[<-,dashed] (a0) edge[bend right=40] node[fill=white]{\small $\pi^2$} (a1);
			\draw[<-,dashed] (a1) edge[bend right=40] node[fill=white]{\small $\pi^2$} (a2);
			
			\end{scope}
			
			      \end{tikzpicture}

			      \end{center}
			    
			    &
			  
			  \begin{center}
			  $A_1$ 
			  
			  \medskip
			  
			     \begin{tikzpicture}[elt/.style={circle,draw=black!100,thick, inner sep=0pt,minimum size=2mm},scale=1]
			\path 	
			(0,0) 	node 	(a0) [elt] {};
			
			\node at ($(a0.north) + (0,0.2)$) 	{\small $\alpha_4$};
			
		       \end{tikzpicture}
		       \end{center}
			    
			    &
			    \begin{center}
			    
			    \begin{gather*}
			    W_{A_2}^{ae}\times \underbrace{C_2}_{triv}\\
			    \downarrow\\
			    W_{A_2}^{ae}\\
			     \textrm{no proj. red.}\\
			      \textrm{in the preimage}
			    \end{gather*}
			   
			    \eqref{symm_E3_E6_c2}
			    
			    \end{center}
			    
			    & 
			     \begin{center}
			      \begin{tikzpicture}[elt/.style={circle,draw=black!100,thick, inner sep=0pt,minimum size=1.4mm},scale=0.75]
			
			\begin{scope}[scale=1]
			\path 	
			(-1,0) 	node 	(a0) [elt] {}
			(1,0) 	node 	(a1) [elt] {}
			(0,1.73) 	node 	(a2) [elt] {};
			\draw [black,line width=1pt] (a0) -- (a1) -- (a2) -- (a0);
			\end{scope}

			\begin{scope}[xshift=0cm,yshift=-1.5cm,scale=1]
			\path
			(-1,1) 	node 	(a3) [elt] {}
			(1,1)	node 	(a4) [elt,fill] {};

			 \draw [black,line width=2.5pt ] (a3) -- (a4);
		    
		     \draw [white,line width=1.5pt ] (a3) -- (a4);
		     
		     \draw [<->,black, line width=0.5pt]
		     (a3) -- (a4);			
			
			\node at ($(a4.east) + (0.3,0)$) 	{\small $-1$};
			
			\end{scope}
	    \end{tikzpicture} 
			    \end{center}
			    
			    \\
			    
			    \hline

			    \begin{center}
%
%
%
%
%
%
%

\includegraphics[scale=1]{e3_pp.pdf}

\medskip
			
			$w=\pi^2$
			
			{\small $\bar{\pi}^2=s_{12}$
			
			$\pi^2=t_{\omega_1} \bar{\pi}^2$}
			
			\end{center}
			
			&

			&
			
			 \begin{center}
			      
			      $(A_2)_{|\alpha|^2=6}$

			     \medskip
			     
			     \begin{tikzpicture}[elt/.style={circle,draw=black!100,thick, inner sep=0pt,minimum size=2mm},scale=1]
			\path 	
			(-1,0) 	node 	(a1) [elt] {}
			(1,0) 	node 	(a2) [elt] {};
			
			\draw (a1) -- (a2);
			
			\node at ($(a1.north) + (0,0.2)$) 	{\small $\alpha_0-\alpha_2$};
			\node at ($(a2.north) + (0,0.2)$) 	{\small $\alpha_2-\alpha_1$};
			
		       \end{tikzpicture}
			      
		        \medskip	      
			      
			$\Omega_{A_2}=\langle \pi^2 \rangle$      
			      
			     \end{center}

			&
			
			\begin{center}
			
			\begin{gather*}
			W_{A_1}^{ae}\times \underbrace{C_3}_{triv}\\
			\downarrow\\
			W_{A_1}^{ae}\\
			 \textrm{no proj. red.}\\
			 \textrm{in the preimage}
			\end{gather*}

			\eqref{symm_E3_E7_c1}

			\end{center}
			
			& 
			  \begin{center}
			    \begin{tikzpicture}[elt/.style={circle,draw=black!100,thick, inner sep=0pt,minimum size=1.4mm},scale=0.75]
			\path 
			(-1,0) 	node 	(a0) [elt] {}
			(1,0) 	node 	(a1) [elt] {}
			(0,1.73) 	node 	(a2) [elt] {};
			\draw [black,line width=1pt] (a0) -- (a1) -- (a2) -- (a0);
			\draw[<-] (a0) edge[bend right=40] node[fill=white]{\small $=$} (a1);
			\draw[<-] (a1) edge[bend right=40] node[fill=white]{\small $=$} (a2);
			\draw[<-] (a2) edge[bend right=40] node[fill=white]{\small $=$} (a0);

			\begin{scope}[yshift=-2cm, scale=1]
			
 			\path
			(-1,1) 	node 	(a3) [elt] {}
			(1,1)	node 	(a4) [elt] {};
			
			 \draw [black,line width=2.5pt ] (a3) -- (a4);
		    
		     \draw [white,line width=1.5pt ] (a3) -- (a4);
		     
		     \draw [<->,black, line width=0.5pt]
		     (a3) -- (a4);
		     
		     \end{scope}
			\end{tikzpicture}
			\end{center}

			\\
			    
			    \cline{1-1} \cline{3-5}
			
			\begin{center}
%
%
%
%
%
%
	    
	    \includegraphics[scale=1]{e3_2.pdf}
	    
	    \medskip
	    
	    $w=s_{12}$
	    
			\end{center}

			    &
			    
			    \vspace{-5cm}
			    
			       \begin{center}
			
			\begin{tikzpicture}[elt/.style={circle,draw=black!100,thick, inner sep=0pt,minimum size=2mm},scale=1]
			\path 	(-1,0) 	node 	(a1) [elt] {}
			(1,0) 	node 	(a2) [elt] {};

		    \draw [black,line width=2.5pt ] (a1) -- (a2);
		    
		     \draw [white,line width=1.5pt ] (a1) -- (a2);
		     
		     \draw [<->,black, line width=0.5pt]
		     (a1) -- (a2);
		     
		     \node at ($(a1.south) + (0,-0.2)$) 	{\small$\underline{\delta-\alpha_{4}}$};	
		     	
		     \node at ($(a2.south) + (0,-0.2)$) 	{\small $\alpha_{4}$};

		     \draw[<->,dashed] (a1) to[bend left=40] node[fill=white] {\small $\pi^3$} (a2);
		     
		     \node at (0, 1) 	{$A_1^{(1)}$};
		      \node at (0, -1) 	{ $(A_1^{(1)})_{|\alpha|^2=18}$};

			\begin{scope}[yshift=-2.5cm]
			\path 	(-1,0) 	node 	(a1) [elt] {}
			(1,0) 	node 	(a2) [elt] {};

		    \draw [black,line width=2.5pt ] (a1) -- (a2);
		    
		     \draw [white,line width=1.5pt ] (a1) -- (a2);
		     
		     \draw [<->,black, line width=0.5pt]
		     (a1) -- (a2);
		     
		     \node at ($(a1.south) + (0,-0.2)$) 	{\small$\underline{3\delta-3\alpha_{4}}$};	
		     	
		     \node at ($(a2.south) + (0,-0.2)$) 	{\small $3\alpha_{4}$};
		     
		      \draw[<->,dashed] (a1) to[bend left=40] node[fill=white] {\small $\pi^3$} (a2);
		     
		     \end{scope}
		     
		     \end{tikzpicture}
		     
		     \end{center}

			  & 
			  
			   \begin{center}
			      
			      $A_2$

			     \medskip
			     
			     \begin{tikzpicture}[elt/.style={circle,draw=black!100,thick, inner sep=0pt,minimum size=2mm},scale=1]
			\path 	
			(-1,0) 	node 	(a1) [elt] {}
			(1,0) 	node 	(a2) [elt] {};
			
			\draw (a1) -- (a2);
			
			\node at ($(a1.north) + (0,0.2)$) 	{\small $\alpha_1$};
			\node at ($(a2.north) + (0,0.2)$) 	{\small $\alpha_2$};
			
		       \end{tikzpicture}
			      
			     \end{center}

			    &
			    
			    \begin{center}
			    
			    \begin{gather*}
			    W_{A_1}^{ae}\times \underbrace{C_3}_{triv}\\
			    \downarrow\\
			    W_{A_1}^{ae}\\
			     \textrm{no proj. red.}\\
			      \textrm{in the preimage}
			    \end{gather*}
                           
			    \eqref{symm_E3_E7_c2}
			    
			    \end{center}
			    
			    & 
			    	\begin{center}
			 \begin{tikzpicture}[elt/.style={circle,draw=black!100,thick, inner sep=0pt,minimum size=1.4mm},scale=0.5]
			\begin{scope}
			\path 	
			(-1,0) 	node 	(a1) [elt,fill] {}
			(1,0) 	node 	(a2) [elt,fill] {}
			(0,1.73) 	node 	(a0) [elt] {};
			\draw [black,line width=1pt] (a0) -- (a1) -- (a2) -- (a0);
			\end{scope}
			
			 \node at ($(a1.west) + (-0.3,0)$) 	{\small $\zeta$};
			  \node at ($(a2.east) + (0.3,0)$) 	{\small $\zeta$};
			
			\begin{scope}[yshift=-1.5cm,scale=1]
			\path
			(-1,1) 	node 	(a3) [elt] {}
			(1,1)	node 	(a4) [elt] {};

			 \draw [black,line width=2.5pt ] (a3) -- (a4);
		    
		     \draw [white,line width=1.5pt ] (a3) -- (a4);
		     
		     \draw [<->,black, line width=0.5pt]
		     (a3) -- (a4);			
			
			\end{scope}
			
			\draw [<->, dashed] (0,-1) -- (0,-2) 
			 node[pos=0.5,left] {\small $s_1$};
			
			\begin{scope}[yshift=-4cm]
			 \begin{scope}[scale=1]
			\path 	
			(-1,0) 	node 	(a1) [elt,fill] {}
			(1,0) 	node 	(a2) [elt,fill] {}
			(0,1.73) 	node 	(a0) [elt] {};
			\draw [black,line width=1pt] (a0) -- (a1) -- (a2) -- (a0);
			\end{scope}
			
			 \node at ($(a1.west) + (-0.4,0)$) 	{\small $\zeta^{-1}$};
			  \node at ($(a2.east) + (0.5,0)$) 	{\small $\zeta^{-1}$};
			
			\begin{scope}[yshift=-1.5cm,scale=1]
			\path
			(-1,1) 	node 	(a3) [elt] {}
			(1,1)	node 	(a4) [elt] {};

			 \draw [black,line width=2.5pt ] (a3) -- (a4);
		    
		     \draw [white,line width=1.5pt ] (a3) -- (a4);
		     
		     \draw [<->,black, line width=0.5pt]
		     (a3) -- (a4);			
			
			\end{scope}
			\end{scope}

	    \end{tikzpicture} 
			 \end{center}   
			    
			    \\
			    
			    \hline

			 \end{tabular}
	\end{center}

	\subsection{$E_2^{(1)}/A_6^{(1)}$} 
	
	\begin{figure}[h]
	
	  \begin{tikzpicture}[elt/.style={circle,draw=black!100,thick, inner sep=0pt,minimum size=1.4mm},scale=2]
			\path 	
			(-1,0) 	node 	(a0) [elt] {}
			(1,0) 	node 	(a1) [elt] {};

			 \draw [black,line width=2.5pt ] (a0) -- (a1);
		    
		     \draw [white,line width=1.5pt ] (a0) -- (a1);
		     
		     \draw [<->,black, line width=0.5pt]
		     (a0) -- (a1);
		     
		     
			\draw[<->] (a0) edge[bend right=20] node[fill=white]{\small $\pi$} (a1);
			
			 \node at ($(a0.north) + (0,0.1)$)   {$\delta-\alpha_1$};
		         \node at ($(a1.north) + (0,0.1)$)   {$\alpha_1$};
			\node at ($(a0.south) + (0,-0.2)$)   {\color{red}\small$1$};
			\node at ($(a1.south) + (0,-0.2)$)   {\color{red}\small$1$};
		
		\begin{scope}[xshift=3cm]
			
		\path 	
			(-1,0) 	node 	(a2) [elt] {}
			(1,0) 	node 	(a3) [elt] {};

			 \draw [black,line width=2.5pt ] (a2) -- (a3);
		    
		     \draw [white,line width=1.5pt ] (a2) -- (a3);
		     
		     \draw [<->,black, line width=0.5pt]
		     (a2) -- (a3);
		     
		     \node at (0,-0.25) {$\pi=t_{\omega_2}$};
		     	
		     \node at ($(a2.north) + (0,0.1)$)   {$\alpha_2$};
		     \node at ($(a3.north) + (0,0.1)$)   {$\delta-\alpha_2$};
		     \node at ($(a2.south) + (0,-0.2)$)   {\color{red}\small$1$};
		     \node at ($(a3.south) + (0,-0.2)$)   {\color{red}\small$1$};
		     	
		\end{scope}
			
	    \end{tikzpicture}
	
	\caption{Root numbering and marks for $E_2^{(1)}$ Dynkin diagram \label{Fig:geom E2}}
	
	\end{figure}
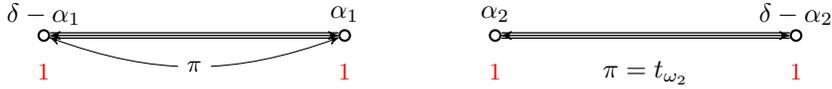
	
	It appears the symmetry lattice \(Q^a\) in this case is larger then \((A_1+A_1)^{(1)}\). Namely, if we denote by \(\alpha_1\) and \(\alpha_2\) simple roots for this 2 summands, then first \(\alpha_2^2=-8\) and second \((\alpha_1-\alpha_2)/2\in Q^a\). Due to this reason symmetry group is smaller, it is generated by \(\pi\) and \(s_1\), where \(\pi\) acts on first summand as automotphism and on second summand as translation. See Fig. \ref{Fig:geom E2}. Geometrically, this is corresponds to the fact that the surface $\mathcal{X}$ for $E_2^{(1)}/A_6^{(1)}$ depends
	on square roots of root variables, namely on $b=(a_1/a_2)^{1/2}$. See section \ref{ssec:alg E2} below.
	
	The only possibility for the folding transformation is simple reflection, say $s_1$ with \(a_1=-1\). But since \(s_1(b)=ba_1^{-1}\) this is not the case. Therefore we have no folding transformations in this case.

	\subsection{$A_1^{(1)}/A_7^{(1)'}$}
	\label{e1_a}
	
	   \begin{figure}[h]
	    
	  \begin{center}
	   \begin{tikzpicture}[elt/.style={circle,draw=black!100,thick, inner sep=0pt,minimum size=1.4mm},scale=3]
			\path 	
			(-1,0) 	node 	(a0) [elt] {}
			(1,0) 	node 	(a1) [elt] {};

			 \draw [black,line width=2.5pt ] (a0) -- (a1);
		    
		     \draw [white,line width=1.5pt ] (a0) -- (a1);
		     
		     \draw [<->,black, line width=0.5pt]
		     (a0) -- (a1);
		     
			\draw[<->] (a0) edge[bend right=20] node[fill=white]{$\pi$} (a1);
			
			\node at ($(a0.north) + (0,0.1)$) 	{$\alpha_0$};
			\node at ($(a1.north) + (0,0.1)$) 	{$\alpha_1$};
			\node at ($(a0.south) + (0,-0.2)$)   {\color{red}\small$1$};
			\node at ($(a1.south) + (0,-0.2)$)   {\color{red}\small$1$};

	    \end{tikzpicture}
	    \end{center}
	    
	    \caption{Root numbering and marks for $E_1^{(1)}$ Dynkin diagram}
	        
	        \label{fig:E1_rn}
	    
	    \end{figure}
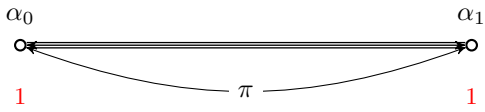
	    
	    Affine Weyl group $W(E_1^{(1)})$ has outer automorphism $\pi$ of order $2$,
	    which in $q$-Painlev\'e symmetry is extended to $C_4$ ($\pi^2$ become $\neq 1$).
	    There is also $q$-Painlev\'e (gauge) symmetry $\sigma$ of order $2$
	    \footnote{It is absent in \cite{KNY15}}.
	    These extensions act trivially on root variables, but nontriviallly on coordinates, see Subsection \ref{ssec:geom_E1}
			
		For $E_1^{(1)}$ we have $2$ foldings of order $2$, given by $\pi^2$ and $\sigma$.
		
	\begin{center}


%
%

	  Symmetry lattice is $A_1^{(1)}$, nodal curves and lattice is $\overline{\Phi}_{\nod}=\Phi_{\nod}=A_1+2A_3$,
           symmetry group is as follows:
	  


        \newpage

        \section{Further questions} \label{Sec:Further}

        \subsection{Continuous limit}
        It is interesting to study the continuous limits of the folding transformations obtained here. These limits should be compared with
        the classification of differential Painlev\'e equation foldings \cite{TOS05}.
	    This is not straightforward since 
	    there are various limits from $q$-Painlev\'e equations to differential Painlev\'e ones (especially when parameters
	    of \(q\)-equations are special). 
        
        For example, let us consider folding transformation discussed in Example \ref{ex:Intr2}. 
%
        This is folding from $D_5^{(1)}/A_3^{(1)}$ to $A_1^{(1)}/A_7^{(1)'}$, which have standard degenerations to $D_4^{(1)}/D_4^{(1)}$
        and $A_0^{(1)}/D_8^{(1)}$ correspondingly. Thus, naively one could expect that continuous limit corresponds to the folding
        between Painlev\'e VI and Painlev\'e III($D_8^{(1)}$) equations. But, according to \cite{TOS05} (see Fig. 1 in loc. cit.)
        there is no such folding.
        
        In the Introduction we obtained, that for the invariant set $a_0=a_1=a_4=a_5=-1$ of given folding $w=s_0s_1s_4s_5$
        standard Painlev\'e $D_5^{(1)}/A_3^{(1)}$ dynamics could be rewritten as ``half-shift'' equation on single function $F$ (\eqref{qPVIprr}):
        \begin{equation}
        F(q z)F(q^{-1}z)=-\frac{F(z)^2-z}{F(z)^2-1}, \label{qPVIprrz}
        \end{equation}
        where we denoted $z=a_2^2$.
        From the other hand, from Painlev\'e $A_1^{(1)}/A_7^{(1)'}$ equation \eqref{qPIIID8}, we obtain
        \begin{equation}\label{qPIIID8_f}
        \f(qz)\f(q^{-1}z)= \left(\frac{\f-z}{\f-1}\right)^2.
        \end{equation}
       
        Let us make standard limit, rescaling
        \begin{equation}
        z\mapsto \hbar^4 z, \quad F \mapsto \hbar F \Rightarrow \f \mapsto \hbar^2\f,  \quad  \textrm{where} \, \hbar=\log q     
        \end{equation}
        and tending $\hbar\rightarrow -0$.
        Namely, after such limit from \eqref{qPIIID8_f} we obtain
        \begin{equation}
        \f''=\frac{\f'^2}{\f}-\frac{\f'}z+\frac{2\f^2}{z^2}-\frac2z,
        \end{equation}
        which is Painlev\'e III($D_8^{(1)}$) equation.
        This limit of \eqref{qPVIprrz} will be 
        \begin{equation}
        F'' =\frac{F'^2}{F}-\frac{F'}z+\frac{F^3}{z^2}-\frac1{zF}
        \end{equation}
        Under substitution $z=t^2/16$, $F\mapsto F/2$
        we obtain
        \begin{equation}
        \ddot{F} =\frac{\dot{F}^2}{F}-\frac{\dot{F}}t+\frac{F^3}{t^2}-\frac1{F},
        \end{equation}
        which is Painlev\'e III ($D_6^{(1)}$) equation with $\theta_{\star}=0,\,\theta_{\ast}=1/2$.
        Finally, we have that after the limit folding $w=s_0s_1s_4s_5:\, D_5^{(1)}/A_3^{(1)} \rightarrow A_1^{(1)}/A_7^{(1)'}$
        goes to the folding 5.2 from \cite{TOS05} from $A_2^{(1)}/D_6^{(1)}$ to $A_0^{(1)}/D_8^{(1)}$
        with $\f=F^2/4, \, z=t^2/16$.
        
\subsection{Relation to known foldings of $q$-Painlev\'e equations}

        Few foldings of $q$-Painlev\'e equations were found in the paper \cite{RGT00}. The constructions in loc. cit. were based
        on two possibilities: either right side of equation becomes perfect square or depend in some variable \(x\) through its square \(x^2\).
        The relation between formulas in \cite{RGT00} and  classification, obtained in this paper is the following
        \cite{RGT00}. Here we present the relation of these results to the classification of folding transformation, obtained in our paper:        
        \begin{enumerate}
          \item Section 3.1 of loc. cit. in perfect square case corresponds to $2A_1:\, D_5^{(1)}/A_3^{(1)} \rightarrow E_7^{(1)}/A_1^{(1)}$.

	\item Section 3.1 of loc. cit. in square dependence case corresponds to $4A_1:\, E_7^{(1)}/A_1^{(1)} \rightarrow D_5^{(1)}/A_3^{(1)}$.
         
         \item Section 3.3 of loc. cit. in perfect square case corresponds to $A_1: E_3^{(1)}/A_5^{(1)} \rightarrow E_6^{(1)}/A_2^{(1)}$.

         \item Section 3.3 of loc. cit. in square dependence case corresponds to $4A_1:\, E_6^{(1)}/A_2^{(1)} \rightarrow E_3^{(1)}/A_5^{(1)}$.
         
\end{enumerate}

Note that foldings from perfect square cases and square dependence cases form pairs of opposite foldings, see Remark \ref{rem:pairs}.

	\appendix

	\newpage
	
	\section{Foldings, leaving no dynamics}
 
 \label{sec:ntrf}
 
 \begin{center}
 
 \begin{tabular}{|M{1.5cm}|M{2cm}|M{2cm}|M{0.75cm}|M{3.75cm}|M{5cm}|}
	    \hline 
	    Sym./surf. & Diagram & Name & Order & $w$ & $A_w, \, N_{flip}$ \\ 
        \hline	
 $E_8^{(1)}/A_0^{(1)}$ & 
 \begin{tikzpicture}[elt/.style={circle,draw=black!100,thick, inner sep=0pt,minimum size=1.4mm},scale=0.25]
				\path 	(-2,0) 	node 	(a1) [elt,fill] {}
					(-1,0) 	node 	(a2) [elt,fill] {}
					( 0,0) node  	(a3) [elt] {}
					( 1,0) 	node  	(a4) [elt,fill] {}
					( 2,0) 	node 	(a5) [elt,fill] {}
					( 3,0)	node 	(a6) [elt,fill] {}
					( 4,0)	node 	(a7) [elt,fill] {}
					( 5,0)	node 	(a8) [elt,fill] {}
					( 0,1)	node 	(a0) [elt,fill] {};
				\draw [black,line width=1pt ] (a1) -- (a2) -- (a3) -- (a4) -- (a5) --  (a6) -- (a7) --(a8) (a3) -- (a0);
\end{tikzpicture}
& $A_1{+}A_2{+}A_5$
& 6 & 
\vspace{1mm}
\parbox{4cm}{\centering $s_0 s_{12} s_{87654}$\\
{\small $w^2=s_{576} (s_{21}s_{54}s_{87}) s_{675}$\\
$w^3=s_{675} (s_0s_4s_6s_8) s_{576}$}}
&  $(-1,\zeta, \zeta_6), (-1,\zeta^{-1}, \zeta_6^{-1})$\\

\cline{2-6}

 &\begin{tikzpicture}[elt/.style={circle,draw=black!100,thick, inner sep=0pt,minimum size=1.4mm},scale=0.25]
				\path 	(-2,0) 	node 	(a1) [elt,fill] {}
					(-1,0) 	node 	(a2) [elt,fill] {}
					( 0,0) node  	(a3) [elt,fill] {}
					( 1,0) 	node  	(a4) [elt] {}
					( 2,0) 	node 	(a5) [elt,fill] {}
					( 3,0)	node 	(a6) [elt,fill] {}
					( 4,0)	node 	(a7) [elt,fill] {}
					( 5,0)	node 	(a8) [elt,fill] {}
					( 0,1)	node 	(a0) [elt,fill] {};
				\draw [black,line width=1pt ] (a1) -- (a2) -- (a3) -- (a4) -- (a5) --  (a6) -- (a7) --(a8) (a3) -- (a0);
\end{tikzpicture}
& $2A_4$ & 5 & $s_{0321} s_{5678}$ & $(\zeta_5^i, \zeta_5^{3i}),\, i=1,2,3,4$ \\
 \hline
 
 & 
\begin{tikzpicture}[elt/.style={circle,draw=black!100,thick, inner sep=0pt,minimum size=1.4mm},scale=0.25]

	                \path 	(-3,0) 	node 	(a1) [elt,fill] {}
			(-2,0) 	node 	(a2) [elt,fill] {}
			( -1,0) node  	(a3) [elt,fill] {}
			( 0,0) 	node  	(a4) [elt] {}
			( 1,0) 	node 	(a5) [elt,fill] {}
			( 2,0)	node 	(a6) [elt,fill] {}
			( 3,0)	node 	(a7) [elt,fill] {}
			( 0,1)	node 	(a0) [elt,fill] {};
			\draw [black,line width=1pt ] (a1) -- (a2) -- (a3) -- (a4) -- (a5) --  (a6) -- (a7) (a4) -- (a0);
 \end{tikzpicture}
&  $A_1{+}2A_3$
& 4 & \vspace{1mm} \parbox{4cm}{\centering $s_0 s_{765} s_{123}$\\{\small $w^2=s_{26} (s_1s_3s_5s_7) s_{62}$}} & $(-1,\ri,\ri), \, (-1,-\ri,-\ri)$\\ 
 \cline{2-6}
 
$E_7^{(1)}/A_1^{(1)}$ &
 \begin{tikzpicture}[elt/.style={circle,draw=black!100,thick, inner sep=0pt,minimum size=1.4mm},scale=0.25]
			
	                \path 	(-3,0) 	node 	(a1) [elt,fill] {}
			(-2,0) 	node 	(a2) [elt,fill] {}
			( -1,0) node  	(a3) [elt,fill] {}
			( 0,0) 	node  	(a4) [elt,fill] {}
			( 1,0) 	node 	(a5) [elt] {}
			( 2,0)	node 	(a6) [elt,fill] {}
			( 3,0)	node 	(a7) [elt,fill] {}
			( 0,1)	node 	(a0) [elt,fill] {};
			\draw [black,line width=1pt ] (a1) -- (a2) -- (a3) -- (a4) -- (a5) --  (a6) -- (a7) (a4) -- (a0);
\end{tikzpicture}
& $A_2{+}A_5$
& \vspace{6mm} & \vspace{1mm} \parbox{4cm}{\centering $s_{04321} s_{87}$\\{\small $w^2=s_{423} (s_{04} s_{21} s_{76}) s_{324}$\\ $w^3=s_{324} s_0s_1s_3 s_{423}$}}& $(\zeta_6,\zeta), \, (\zeta_6^{-1},\zeta^{-1})$\\
\cline{2-3} \cline{5-6}

&
\begin{tikzpicture}[elt/.style={circle,draw=black!100,thick, inner sep=0pt,minimum size=1.4mm},scale=0.25]
 \path 	(-3,0) 	node 	(a1) [elt,fill] {}
			(-2,0) 	node 	(a2) [elt,fill] {}
			( -1,0) node  	(a3) [elt] {}
			( 0,0) 	node  	(a4) [elt,fill] {}
			( 1,0) 	node 	(a5) [elt] {}
			( 2,0)	node 	(a6) [elt,fill] {}
			( 3,0)	node 	(a7) [elt,fill] {}
			( 0,1)	node 	(a0) [elt,fill] {};
			\draw [black,line width=1pt ] (a1) -- (a2) -- (a3) -- (a4) -- (a5) --  (a6) -- (a7) (a4) -- (a0);
			\draw[<->] (a1) edge[bend right=40] node[fill=white]{\small $\pi$} (a7);
			\draw[<->] (a2) edge[bend right=40] node[]{} (a6);
			\draw[<->] (a3) edge[bend right=40] node[]{} (a5); 
\end{tikzpicture}
& $\pi{\ltimes}3A_2$
& \vspace{-1cm} 6 & \vspace{1mm} \parbox{4cm}{\centering $\pi s_{12} s_{04}$\\ {\small $w^2=s_{12} s_{40} s_{76}$\\ $w^3=s_{67} \pi s_{76}$}} &
$(\zeta,\zeta, a_5{=}\zeta a_3)\, (\zeta^{-1},\zeta^{-1}, a_3{=}\zeta a_5)$ \\
        
\hline 

&
\begin{tikzpicture}[elt/.style={circle,draw=black!100,thick, inner sep=0pt,minimum size=1.4mm},scale=0.25]
  \path 	(-2,-1.16) 	node 	(a1) [elt,fill] {}
			(-1,-0.58) 	node 	(a2) [elt,fill] {}
			( 0,0) node  	(a3) [elt] {}
			( 1,-0.58) 	node  	(a4) [elt,fill] {}
			( 2,-1.16) 	node 	(a5) [elt,fill] {}
			( 0,1)	node 	(a6) [elt,fill] {}
			( 0,2)	node 	(a0) [elt,fill] {};
			\draw [black,line width=1pt ] (a1) -- (a2) -- (a3) -- (a4) -- (a5)   (a3) -- (a6) -- (a0);	
\end{tikzpicture}
& $3A_2$			
& 3 & $s_{06} s_{12} s_{54}$ & $(\zeta,\zeta,\zeta)$, \, $(\zeta^{-1},\zeta^{-1},\zeta^{-1})$ \\
\cline{2-6}

$E_6^{(1)}/A_2^{(1)}$  &
\begin{tikzpicture}[elt/.style={circle,draw=black!100,thick, inner sep=0pt,minimum size=1.4mm},scale=0.25]
   \path 	(-2,-1.16) 	node 	(a1) [elt,fill] {}
			(-1,-0.58) 	node 	(a2) [elt,fill] {}
			( 0,0) node  	(a3) [elt,fill] {}
			( 1,-0.58) 	node  	(a4) [elt,fill] {}
			( 2,-1.16) 	node 	(a5) [elt,fill] {}
			( 0,1)	node 	(a6) [elt] {}
			( 0,2)	node 	(a0) [elt,fill] {};
			\draw [black,line width=1pt ] (a1) -- (a2) -- (a3) -- (a4) -- (a5)   (a3) -- (a6) -- (a0);
\end{tikzpicture}			
& $A_1{+}A_5$		
& & \vspace{1mm}\parbox{4cm}{\centering $s_0 s_{12345}$\\{\small $w^2=s_{243} (s_{12} s_{45}) s_{342}$ \\ $w^3=s_{342} (s_0s_1s_3s_5) s_{243}$}} & $(-1,\zeta_6)\, (-1,\zeta_6^{-1})$\\
\cline{2-3} \cline{5-6}
&
\begin{tikzpicture}[elt/.style={circle,draw=black!100,thick, inner sep=0pt,minimum size=1.4mm},scale=0.25]
 \path 	(-2,-1.16) 	node 	(a1) [elt,fill] {}
			(-1,-0.58) 	node 	(a2) [elt] {}
			( 0,0) node  	(a3) [elt,fill] {}
			( 1,-0.58) 	node  	(a4) [elt] {}
			( 2,-1.16) 	node 	(a5) [elt,fill] {}
			( 0,1)	node 	(a6) [elt] {}
			( 0,2)	node 	(a0) [elt,fill] {};
			\draw [black,line width=1pt ] (a1) -- (a2) -- (a3) -- (a4) -- (a5)   (a3) -- (a6) -- (a0);	
			\draw[->] (a1) edge[bend right=40] node[fill=white]{\small $\pi$} (a5);
             		\draw[->] (a2) edge[bend right=40] node[]{} (a4);
			\draw[->] (a5) edge[bend right=40] node[fill=white]{\small $\pi$} (a0);
			\draw[->] (a4) edge[bend right=40] node[]{} (a6);
			\draw[->] (a0) edge[bend right=40] node[fill=white]{\small $\pi$} (a1);
		        \draw[->] (a6) edge[bend right=40] node[]{} (a2);
\end{tikzpicture}				
& $\pi{\ltimes}4A_1$			
& \vspace{-1cm} 6 & \vspace{1mm} \parbox{4cm}{\centering $\pi s_1 s_3$\\ {\small $w^2=s_5 (\pi^2) s_5$ \\ $w^3=s_0s_1s_3s_5$}} & $(-1,-1,-1,-1;a_4{=}a_2{=}-a_6)$\\

\hline
$A_4^{(1)}/A_4^{(1)}$ &
\begin{tikzpicture}[elt/.style={circle,draw=black!100,thick, inner sep=0pt,minimum size=1.4mm},scale=0.75]
 \path 	(0.5,sin{72}+sin{36}) 	node 	(a0) [elt] {}
		(-cos{72},sin{72}) 	node 	(a1) [elt] {}
		( 0,0) node  	(a2) [elt] {}
		( 1,0) 	node 	(a3) [elt] {}
		( 1+cos{72},sin{72}) 	node  	(a4) [elt] {}; 
		
		\draw[->] (a4) edge[bend right=50] node[fill=white]{\tiny $\pi$} (a0);
		\draw[->] (a0) edge[bend right=50] node[fill=white]{\tiny $\pi$} (a1);
		\draw[->] (a1) edge[bend right=50] node[fill=white]{\tiny $\pi$} (a2);
		\draw[->] (a2) edge[bend right=50] node[fill=white]{\tiny $\pi$} (a3);
		\draw[->] (a3) edge[bend right=50] node[fill=white]{\tiny $\pi$} (a4);
		
		\draw [black,line width=1pt] (a1) -- (a2) -- (a3) -- (a4) -- (a0) -- (a1);
		
\end{tikzpicture}
& $\pi$
& & $\pi$ & $(a_0{=}a_1{=}a_2{=}a_3{=}a_4)$ \\
\cline{2-3} \cline{5-6}
&
\begin{tikzpicture}[elt/.style={circle,draw=black!100,thick, inner sep=0pt,minimum size=1.4mm},scale=0.6]
 \path 	(0.5,sin{72}+sin{36}) 	node 	(a0) [elt] {}
		(-cos{72},sin{72}) 	node 	(a1) [elt,fill] {}
		( 0,0) node  	(a2) [elt,fill] {}
		( 1,0) 	node 	(a3) [elt,fill] {}
		( 1+cos{72},sin{72}) 	node  	(a4) [elt,fill] {}; 
		\draw [black,line width=1pt] (a1) -- (a2) -- (a3) -- (a4) -- (a0) -- (a1);
\end{tikzpicture}		
& $A_4$
& \vspace{-1cm} 5 & $s_{1234}$ & $(\zeta_5^i), \, i=1,2,3,4$ \\
\hline
 &
\begin{tikzpicture}[elt/.style={circle,draw=black!100,thick, inner sep=0pt,minimum size=1.4mm},scale=0.5]
\path 
			(-1,0) 	node 	(a0) [elt] {}
			(1,0) 	node 	(a1) [elt] {}
			(0,1.73) 	node 	(a2) [elt] {};
			
			\draw[->] (a0) edge[bend right=40] node[fill=white]{\small $\pi$} (a1);
			\draw[->] (a1) edge[bend right=40] node[fill=white]{\small $\pi$} (a2);
			\draw[->] (a2) edge[bend right=40] node[fill=white]{\small $\pi$} (a0);
			
			\draw [black,line width=1pt] (a0) -- (a1) -- (a2) -- (a0);
 			\path
			(2,0) 	node 	(a3) [elt] {}
			(2,1.73)	node 	(a4) [elt] {};
			
			 \draw [black,line width=2.5pt ] (a3) -- (a4);
		    
		     \draw [white,line width=1.5pt ] (a3) -- (a4);
		     
		     \draw [<->,black, line width=0.5pt]
		     (a3) -- (a4);
		     
		     \draw[<->] (a3) edge[bend right=60] node[fill=white]{\small $\pi$} (a4);
		   
\end{tikzpicture}		     
& $\pi$
& & \vspace{1mm} \parbox{2cm}{\centering $\pi$\\ {\small $w^2=\pi^2$\\ $w^3=\pi^3$}} & $(a_0{=}a_1{=}a_2, a_3{=}a_4)$ \\
\cline{2-3} \cline{5-6}
$E_3^{(1)}/A_5^{(1)}$ &
\begin{tikzpicture}[elt/.style={circle,draw=black!100,thick, inner sep=0pt,minimum size=1.4mm},scale=0.4]
 \path 	
			(-1,0) 	node 	(a0) [elt,fill] {}
			(1,0) 	node 	(a1) [elt,fill] {}
			(0,1.73) 	node 	(a2) [elt] {};
			\draw [black,line width=1pt] (a0) -- (a1) -- (a2) -- (a0);

        \path
			(2,0) 	node 	(a3) [elt] {}
			(1,1.73)	node 	(a4) [elt,fill] {};

			 \draw [black,line width=2.5pt ] (a3) -- (a4);
		    
		     \draw [white,line width=1.5pt ] (a3) -- (a4);
		     
		     \draw [<->,black, line width=0.5pt]
		     (a3) -- (a4); 
       
\end{tikzpicture}
& $A_2{+}A_1$        
& 6 & \vspace{1mm} \parbox{2cm}{\centering $s_{21} s_4$\\ \small{ $w^2= s_{12}$\\ $w^3=s_4$}} & $(\zeta,-1),\, (\zeta^{-1},-1)$ \\
\cline{2-3} \cline{5-6}
&
\begin{tikzpicture}[elt/.style={circle,draw=black!100,thick, inner sep=0pt,minimum size=1.4mm},scale=0.5]

\begin{scope}[scale=0.8]
\path 	
			(-1,0) 	node 	(a0) [elt,fill] {}
			(1,0) 	node 	(a1) [elt,fill] {}
			(0,1.73) 	node 	(a2) [elt] {};
			\draw [black,line width=1pt] (a0) -- (a1) -- (a2) -- (a0);
\end{scope}

        \path
			(1.6,0) 	node 	(a3) [elt] {}
			(1.6,1.73)	node 	(a4) [elt] {};

			 \draw [black,line width=2.5pt ] (a3) -- (a4);
		    
		     \draw [white,line width=1.5pt ] (a3) -- (a4);
		     
		     \draw [<->,black, line width=0.5pt]
		     (a3) -- (a4); 
		      \draw[<->] (a3) edge[bend right=80] node[fill=white]{\small $\pi^3$} (a4);
       
\end{tikzpicture}       
& $\pi^3\times A_2$
& & \vspace{1mm} \parbox{2cm}{\centering $\pi^3 s_1s_2$\\ {\small $w^2=s_2s_1$ \\ $w^3=\pi^3$}} & $(\zeta;a_3{=}a_4), \, (\zeta^{-1};a_3{=}a_4)$  \\
\cline{2-3} \cline{5-6}
&
\begin{tikzpicture}[elt/.style={circle,draw=black!100,thick, inner sep=0pt,minimum size=1.4mm},scale=0.5]
 \path 	
			(-1,0) 	node 	(a0) [elt] {}
			(1,0) 	node 	(a1) [elt] {}
			(0,1.73) 	node 	(a2) [elt] {};

			\draw[<-] (a0) edge[bend right=40] node[fill=white]{\small $\pi$} (a1);
			\draw[<-] (a1) edge[bend right=40] node[fill=white]{\small $\pi$} (a2);
			\draw[<-] (a2) edge[bend right=40] node[fill=white]{\small $\pi$} (a0);
			
			\draw [black,line width=1pt] (a0) -- (a1) -- (a2) -- (a0);
        
        \path
			(2,0) 	node 	(a3) [elt] {}
			(2,1.73)	node 	(a4) [elt,fill] {};

			 \draw [black,line width=2.5pt ] (a3) -- (a4);
		    
		     \draw [white,line width=1.5pt ] (a3) -- (a4);
		     
		     \draw [<->,black, line width=0.5pt]
		     (a3) -- (a4); 
\end{tikzpicture}        
        & $\pi^2\times A_1$ & & \vspace{1mm} \parbox{2cm}{\centering $\pi^2 s_4$\\ {\small $w^2=\pi^4$\\ $w^3=s_4$}} & $(-1;a_0{=}a_1{=}a_2)$ \\
\hline

 &
\begin{tikzpicture}[elt/.style={circle,draw=black!100,thick, inner sep=0pt,minimum size=1.4mm},scale=0.75]
\path 	
			(-1,0) 	node 	(a0) [elt] {}
			(1,0) 	node 	(a1) [elt] {};

			 \draw [black,line width=2.5pt ] (a0) -- (a1);
		    
		     \draw [white,line width=1.5pt ] (a0) -- (a1);
		     
		     \draw [<->,black, line width=0.5pt]
		     (a0) -- (a1);

	             \draw[<->] (a0) edge[bend right=50] node[fill=white]{\small $\pi$} (a1); 
\end{tikzpicture}	             
& $\pi$             
& 4 & $\pi$ & $(a_0{=}a_1)$ \\
\cline{2-6}
$E_1^{(1)}/A_7^{(1)}$ &
\begin{tikzpicture}[elt/.style={circle,draw=black!100,thick, inner sep=0pt,minimum size=1.4mm},scale=0.75]
 \path
			(-1,1) 	node 	(a3) [elt] {}
			(1,1)	node 	(a4) [elt,fill] {};

			 \draw [black,line width=2.5pt ] (a3) -- (a4);
		    
		     \draw [white,line width=1.5pt ] (a3) -- (a4);
		     
		     \draw [<->,black, line width=0.5pt]
		     (a3) -- (a4); 
\end{tikzpicture}
& $A_1$
& & $s_1$ & $(-1)$ \\
\cline{2-3} \cline{5-6}
&
\begin{tikzpicture}[elt/.style={circle,draw=black!100,thick, inner sep=0pt,minimum size=1.4mm},scale=0.75]
\path 	
			(-1,0) 	node 	(a0) [elt] {}
			(1,0) 	node 	(a1) [elt,fill] {};

			 \draw [black,line width=2.5pt ] (a0) -- (a1);
		    
		     \draw [white,line width=1.5pt ] (a0) -- (a1);
		     
		     \draw [<->,black, line width=0.5pt]
		     (a0) -- (a1);
		     
		     \draw (0,0.5)  node[fill=white]{\small $\pi^2$};
\end{tikzpicture}		     
& $\pi^2\times A_1$	     
& \vspace{-8mm} 2 & $\pi^2 s_1$ & $(-1)$ \\
\hline

 \end{tabular}       
 
 \end{center}
 

 \begin{tabular}{|M{1.5cm}|M{2.5cm}|M{2cm}|M{4.5cm}|M{4cm}|}
	    \hline 
	    Sym./surf. & Diagram & Name & Group & $\mathcal{A}_w$ \\ 
        \hline
      $E_7^{(1)}/A_1^{(1)}$ & \begin{tikzpicture}[elt/.style={circle,draw=black!100,thick, inner sep=0pt,minimum size=1.4mm},scale=0.375]
			\path 	(-3,0) 	node 	(a1) [elt,fill] {}
			(-2,0) 	node 	(a2) [elt,fill] {}
			( -1,0) node  	(a3) [elt,fill] {}
			( 0,0) 	node  	(a4) [elt] {}
			( 1,0) 	node 	(a5) [elt,fill] {}
			( 2,0)	node 	(a6) [elt,fill] {}
			( 3,0)	node 	(a7) [elt,fill] {}
			( 0,1)	node 	(a0) [elt,fill] {};
			\draw [black,line width=1pt ] (a1) -- (a2) -- (a3) -- (a4) -- (a5) --  (a6) -- (a7) (a4) -- (a0);
			\draw[<->] (a1) edge[bend right=40] node[fill=white]{\small $\pi$} (a7);
			\draw[<->] (a2) edge[bend right=40] node[]{} (a6);
			\draw[<->] (a3) edge[bend right=40] node[]{} (a5);
			
			\end{tikzpicture}
			& $\pi{\ltimes}(A_1{+}A_3)$
			&$\underbrace{C_2}_{\pi}\ltimes (\underbrace{C_2}_{s_0 s_{123}^2} \times \underbrace{C_4}_{s_0 s_{123}s_{765}})$ & $(-1,\ri,\ri), \, (-1,-\ri,-\ri)$\\
			
        \hline	   
        
  $E_7^{(1)}/A_1^{(1)}$ & \begin{tikzpicture}[elt/.style={circle,draw=black!100,thick, inner sep=0pt,minimum size=1.4mm},scale=0.375]
			\path 	(-3,0) 	node 	(a1) [elt,fill] {}
			(-2,0) 	node 	(a2) [elt,fill] {}
			( -1,0) node  	(a3) [elt,fill] {}
			( 0,0) 	node  	(a4) [elt] {}
			( 1,0) 	node 	(a5) [elt,fill] {}
			( 2,0)	node 	(a6) [elt,fill] {}
			( 3,0)	node 	(a7) [elt,fill] {}
			( 0,1)	node 	(a0) [elt,fill] {};
			\draw [black,line width=1pt ] (a1) -- (a2) -- (a3) -- (a4) -- (a5) --  (a6) -- (a7) (a4) -- (a0);
			\draw[<->] (a1) edge[bend right=40] node[fill=white]{\small $\pi$} (a7);
			\draw[<->] (a2) edge[bend right=40] node[]{} (a6);
			\draw[<->] (a3) edge[bend right=40] node[]{} (a5);
			
			\end{tikzpicture}
			& $\pi{\ltimes}(A_1{+}A_3)$
			&$\underbrace{C_2}_{\pi}\times \underbrace{C_4}_{s_0 s_{123}s_{765}}\subset C_2\ltimes (C_2\times C_4)$ & $(-1,\ri,\ri), \, (-1,-\ri,-\ri)$\\
			
        \hline	
   &
 \begin{tikzpicture}[elt/.style={circle,draw=black!100,thick, inner sep=0pt,minimum size=1.4mm},scale=0.375]
 \path 	(-2,-1.16) 	node 	(a1) [elt,fill] {}
			(-1,-0.58) 	node 	(a2) [elt,fill] {}
			( 0,0) node  	(a3) [elt] {}
			( 1,-0.58) 	node  	(a4) [elt,fill] {}
			( 2,-1.16) 	node 	(a5) [elt,fill] {}
			( 0,1)	node 	(a6) [elt,fill] {}
			( 0,2)	node 	(a0) [elt,fill] {};
			\draw [black,line width=1pt ] (a1) -- (a2) -- (a3) -- (a4) -- (a5)   (a3) -- (a6) -- (a0);	
				\draw[->] (a1) edge[bend right=40] node[fill=white]{\small $\pi$} (a5);
				\draw[->] (a2) edge[bend right=40] node[]{} (a4);
				\draw[->] (a5) edge[bend right=40] node[fill=white]{\small $\pi$} (a0);
				\draw[->] (a4) edge[bend right=40] node[]{} (a6);
				\draw[->] (a0) edge[bend right=40] node[fill=white]{\small $\pi$} (a1);
				\draw[->] (a6) edge[bend right=40] node[]{} (a2);
\end{tikzpicture}	 
  & $\pi{\ltimes}3A_2$
  & $\underbrace{C_3}_{\pi} \ltimes (\underbrace{C_3}_{s_{12}s_{45}}\times \underbrace{C_3}_{s_{06}s_{45}})$ &
  $(\zeta,\zeta,\zeta), \, (\zeta^{-1},\zeta^{-1},\zeta^{-1})$ \\
  \cline{2-5}
 $E_6^{(1)}/A_2^{(1)}$ &
  \begin{tikzpicture}[elt/.style={circle,draw=black!100,thick, inner sep=0pt,minimum size=1.4mm},scale=0.375]
 \path 	(-2,-1.16) 	node 	(a1) [elt,fill] {}
			(-1,-0.58) 	node 	(a2) [elt,fill] {}
			( 0,0) node  	(a3) [elt] {}
			( 1,-0.58) 	node  	(a4) [elt,fill] {}
			( 2,-1.16) 	node 	(a5) [elt,fill] {}
			( 0,1)	node 	(a6) [elt,fill] {}
			( 0,2)	node 	(a0) [elt,fill] {};
			\draw [black,line width=1pt ] (a1) -- (a2) -- (a3) -- (a4) -- (a5)   (a3) -- (a6) -- (a0);	
				\draw[->] (a1) edge[bend right=40] node[fill=white]{\small $\pi$} (a5);
				\draw[->] (a2) edge[bend right=40] node[]{} (a4);
				\draw[->] (a5) edge[bend right=40] node[fill=white]{\small $\pi$} (a0);
				\draw[->] (a4) edge[bend right=40] node[]{} (a6);
				\draw[->] (a0) edge[bend right=40] node[fill=white]{\small $\pi$} (a1);
				\draw[->] (a6) edge[bend right=40] node[]{} (a2);
\end{tikzpicture}
& $\pi{\ltimes}3A_2$
  & $\underbrace{C_3}_{\pi}\times \underbrace{C_3}_{s_{06} s_{12} s_{54}}\subset C_3\ltimes C_3^2$  & $(\zeta,\zeta,\zeta), \, (\zeta^{-1},\zeta^{-1},\zeta^{-1})$\\
  \cline{2-5}
  &
   \begin{tikzpicture}[elt/.style={circle,draw=black!100,thick, inner sep=0pt,minimum size=1.4mm},scale=0.375]
 \path 	(-2,-1.16) 	node 	(a1) [elt,fill] {}
			(-1,-0.58) 	node 	(a2) [elt,fill] {}
			( 0,0) node  	(a3) [elt] {}
			( 1,-0.58) 	node  	(a4) [elt,fill] {}
			( 2,-1.16) 	node 	(a5) [elt,fill] {}
			( 0,1)	node 	(a6) [elt,fill] {}
			( 0,2)	node 	(a0) [elt,fill] {};
			\draw [black,line width=1pt ] (a1) -- (a2) -- (a3) -- (a4) -- (a5)   (a3) -- (a6) -- (a0);	
				
\end{tikzpicture}
  & $3A_2$
  &  $\underbrace{C_3}_{s_{12}s_{45}}\times \underbrace{C_3}_{s_{06} s_{12} s_{54}}\subset C_3^3$  & $(\zeta,\zeta,\zeta), \, (\zeta^{-1},\zeta^{-1},\zeta^{-1})$\\
  \hline
  &
  \begin{tikzpicture}[elt/.style={circle,draw=black!100,thick, inner sep=0pt,minimum size=1.4mm},scale=0.75]
			\path 	(-cos{60},-sin{60}) 	node 	(a0) [elt,fill] {}
			(-cos{60},sin{60}) 	node 	(a1) [elt,fill] {}
			( 0,0) node  	(a2) [elt] {}
			( 1,0) 	node  	(a3) [elt] {}
			( 1+cos{60},-sin{60}) 	node 	(a4) [elt,fill] {}
			( 1+cos{60},sin{60})	node 	(a5) [elt,fill] {};
			\draw [black,line width=1pt] (a1) -- (a2) -- (a3) -- (a4) (a3) -- (a5) (a2) -- (a0);
			\draw[->] (a4) edge[bend left=30] node[fill=white]{\small $\pi$} (a0);
			\draw[->] (a5) edge[bend right=30] node[fill=white]{\small $\pi$} (a1);
			\draw[->] (a0) edge[bend right=60] node[]{} (a5);
			\draw[->] (a1) edge[bend right=60] node[]{} (a4);
			\draw[<->] (a2) edge[bend left=30] node[]{} (a3);	
			\end{tikzpicture}
&  $\pi{\ltimes}4A_1$			
  & $\underbrace{C_4}_{\pi}\ltimes (\underbrace{C_2}_{s_0s_1}\times \underbrace{C_2}_{s_4s_5})$ & $(-1,-1,-1,-1;a_2=a_3)$\\
  \cline{2-5}
  $D_5^{(1)}/A_3^{(1)}$ & \begin{tikzpicture}[elt/.style={circle,draw=black!100,thick, inner sep=0pt,minimum size=1.4mm},scale=0.7]
			\path 	(-cos{60},-sin{60}) 	node 	(a0) [elt,fill] {}
			(-cos{60},sin{60}) 	node 	(a1) [elt,fill] {}
			( 0,0) node  	(a2) [elt] {}
			( 1,0) 	node  	(a3) [elt,fill] {}
			( 1+cos{60},-sin{60}) 	node 	(a4) [elt,fill] {}
			( 1+cos{60},sin{60})	node 	(a5) [elt,fill] {};
			\draw [black,line width=1pt] (a1) -- (a2) -- (a3) -- (a4) (a3) -- (a5) (a2) -- (a0);
				
				\draw[<->] (a4) edge[bend right=60] node[fill=white]{\small $\pi^2$} (a5);
				\draw[<->] (a0) edge[bend left=60] node[fill=white]{\small $\pi^2$} (a1);
			\end{tikzpicture}
	& $\pi^2{\ltimes}(2A_1{+}A_3)$		
   & $\underbrace{C_2}_{\pi^2}\ltimes (\underbrace{C_2}_{s_0s_1}\times  \underbrace{C_4}_{s_1s_{534}})$ & $(-1,-1,\ri),\, (-1,-1,-\ri)$  \\	
   \cline{2-5}
   &
    \begin{tikzpicture}[elt/.style={circle,draw=black!100,thick, inner sep=0pt,minimum size=1.4mm},scale=0.75]
			\path 	(-cos{60},-sin{60}) 	node 	(a0) [elt,fill] {}
			(-cos{60},sin{60}) 	node 	(a1) [elt,fill] {}
			( 0,0) node  	(a2) [elt] {}
			( 1,0) 	node  	(a3) [elt] {}
			( 1+cos{60},-sin{60}) 	node 	(a4) [elt,fill] {}
			( 1+cos{60},sin{60})	node 	(a5) [elt,fill] {};
			\draw [black,line width=1pt] (a1) -- (a2) -- (a3) -- (a4) (a3) -- (a5) (a2) -- (a0);
			\draw[->] (a4) edge[bend left=30] node[fill=white]{\small $\pi$} (a0);
			\draw[->] (a5) edge[bend right=30] node[fill=white]{\small $\pi$} (a1);
			\draw[->] (a0) edge[bend right=60] node[]{} (a5);
			\draw[->] (a1) edge[bend right=60] node[]{} (a4);
			\draw[<->] (a2) edge[bend left=30] node[]{} (a3);	
			\end{tikzpicture}   
	& $\pi{\ltimes}4A_1$	
   & $\underbrace{C_4}_{\pi}\times\!\!\underbrace{C_2}_{s_0s_1s_4s_5}\!\!\subset C_4\ltimes C_2^2$ & $(-1,-1,-1,-1;a_2=a_3)$ \\
   \cline{2-5}
    & \begin{tikzpicture}[elt/.style={circle,draw=black!100,thick, inner sep=0pt,minimum size=1.4mm},scale=0.55]
			\path 	(-cos{60},-sin{60}) 	node 	(a0) [elt,fill] {}
			(-cos{60},sin{60}) 	node 	(a1) [elt,fill] {}
			( 0,0) node  	(a2) [elt] {}
			( 1,0) 	node  	(a3) [elt,fill] {}
			( 1+cos{60},-sin{60}) 	node 	(a4) [elt,fill] {}
			( 1+cos{60},sin{60})	node 	(a5) [elt,fill] {};
			\draw [black,line width=1pt] (a1) -- (a2) -- (a3) -- (a4) (a3) -- (a5) (a2) -- (a0);
				
			\end{tikzpicture}
	& $2A_1{+}A_3$		
   & $\underbrace{C_2}_{s_0s_1}\times \underbrace{C_4}_{s_1s_{534}}\!\!\subset C_2{\ltimes}(C_2{\times}C_4)$ & $(-1,-1,\ri),\, (-1,-1,-\ri)$  \\
\hline
 & \begin{tikzpicture}[elt/.style={circle,draw=black!100,thick, inner sep=0pt,minimum size=1.4mm},scale=0.75]
\path 	
			(-1,0) 	node 	(a0) [elt] {}
			(1,0) 	node 	(a1) [elt,fill] {};

			 \draw [black,line width=2.5pt ] (a0) -- (a1);
		    
		     \draw [white,line width=1.5pt ] (a0) -- (a1);
		     
		     \draw [<->,black, line width=0.5pt]
		     (a0) -- (a1);
		     
		     \draw (0,-0.5)  node[fill=white]{\small $\pi^2$};
\end{tikzpicture}
& $\pi^2{\times}A_1$  & 
$\underbrace{C_2}_{\pi^2}\times \underbrace{C_2}_{s_1}\times \underbrace{C_2}_{\sigma}$
& $(-1)$\\
\cline{2-5}
$E_1^{(1)}/A_7^{(1)}$ & \begin{tikzpicture}[elt/.style={circle,draw=black!100,thick, inner sep=0pt,minimum size=1.4mm},scale=0.75]
\path 	
			(-1,0) 	node 	(a0) [elt] {}
			(1,0) 	node 	(a1) [elt] {};

			 \draw [black,line width=2.5pt ] (a0) -- (a1);
		    
		     \draw [white,line width=1.5pt ] (a0) -- (a1);
		     
		     \draw [<->,black, line width=0.5pt]
		     (a0) -- (a1);

	             \draw[<->] (a0) edge[bend right=50] node[fill=white]{\small $\pi$} (a1); 
	             \node at (0,0.25) {\small $\sigma$};
\end{tikzpicture}
& $\pi{\times}\sigma$
& $\underbrace{C_4}_{\pi}\times \underbrace{C_2}_{\sigma}$ &  $(a_0=a_1)$\\
\cline{2-5}
& \begin{tikzpicture}[elt/.style={circle,draw=black!100,thick, inner sep=0pt,minimum size=1.4mm},scale=0.75]
\path 	
			(-1,0) 	node 	(a0) [elt] {}
			(1,0) 	node 	(a1) [elt,fill] {};

			 \draw [black,line width=2.5pt ] (a0) -- (a1);
		    
		     \draw [white,line width=1.5pt ] (a0) -- (a1);
		     
		     \draw [<->,black, line width=0.5pt]
		     (a0) -- (a1);
		     
		    \draw (0,-0.5)  node[fill=white]{\small $\pi^2$};
\end{tikzpicture} &
$\pi^2{\times}A_1$ & $\underbrace{C_2}_{\pi^2}\times \underbrace{C_2}_{s_1}$ & $(-1)$\\
\cline{2-5}
& \begin{tikzpicture}[elt/.style={circle,draw=black!100,thick, inner sep=0pt,minimum size=1.4mm},scale=0.75]
\path 	
			(-1,0) 	node 	(a0) [elt] {}
			(1,0) 	node 	(a1) [elt,fill] {};

			 \draw [black,line width=2.5pt ] (a0) -- (a1);
		    
		     \draw [white,line width=1.5pt ] (a0) -- (a1);
		     
		     \draw [<->,black, line width=0.5pt]
		     (a0) -- (a1);
		     
		     \node at (0,-0.25) {\small $\sigma$};

\end{tikzpicture} & $\sigma{\times} A_1$ & $\underbrace{C_2}_{s_1}\times \underbrace{C_2}_{\sigma}$ & $(-1)$\\
\hline

  \end{tabular} 
	
\newpage
	
	\bibliographystyle{alpha}
	\bibliography{bibtex}

\end{document}